\documentclass[11pt,a4paper]{article}
\usepackage{a4wide}
\usepackage[page,header]{appendix}
\usepackage{titletoc}
\usepackage{lipsum}
\usepackage{comment}

\usepackage{indentfirst}
\usepackage{setspace}
\usepackage[font=small,  margin=2cm]{caption}
\usepackage{amsthm,amsmath,amssymb}
\usepackage{natbib}
\usepackage{multirow}
\usepackage[pdftex]{graphicx}
\usepackage{subfigure}
\usepackage{makecell}
\usepackage{booktabs}
\usepackage{array}
\usepackage{pifont}
\usepackage{dsfont}
\usepackage{url}
\usepackage{ulem}
\usepackage{algorithm}
\usepackage{algpseudocode}
\usepackage{bm}
\usepackage{smile}
\usepackage{mathtools}
\usepackage{wrapfig}
\usepackage{lipsum}
\usepackage{mathrsfs}
\usepackage{dsfont}
\usepackage{rotating} 
\usepackage{color}
\usepackage{natbib}
\usepackage{multirow}
\usepackage{apalike}
\usepackage{appendix}
\usepackage{tabularx}
\usepackage{tikz}
\usetikzlibrary{arrows.meta,positioning,decorations.pathreplacing}
\usepackage{adjustbox}

\def\reals{{\mathbb R}}
\def\P{{\mathbb P}}
\def\E{{\mathbb E}}

\def\supp{\mathop{\text{supp}\kern.2ex}}

\def\argmax{\mathop{\text{\rm arg\,max}}}

\def\R{{\mathcal{R}}}
\def\sign{\mathrm{sign}}
\def\supp{\mathop{\text{supp}}}

\newcommand{\nb}[1]{{\bf\color{blue} [#1]}}

\usepackage[usenames,dvipsnames,svgnames,table]{xcolor}
\usepackage[colorlinks=true,
            linkcolor=blue,
            urlcolor=blue,
            citecolor=blue]{hyperref}

\numberwithin{equation}{section}
\numberwithin{theorem}{section}
\numberwithin{corollary}{section}
\numberwithin{asmp}{section}
\numberwithin{definition}{section}

\begin{document}
	


\title{A General U-Statistic Framework for High-Dimensional Multiple Change-Point Analysis}

\author{
	Bin Liu\\
	Department of Statistics and Data Science, School of Management, Fudan University, China\\
	\emph{email:} \texttt{liubin0145@gmail.com}
	\and
	Yufeng Liu\\
	Department of Statistics, University of Michigan, U.S.A\\
	\emph{email:} \texttt{yufliu@umich.edu}
}

\date{}

\maketitle
\vspace{-0.4in}
\begin{abstract}

High-dimensional change-point analysis is essential in modern statistical inference. However, existing methods are often designed either for specific parameters (e.g., mean or variance) or for particular tasks (e.g., testing or estimation), making them difficult to generalize. Moreover, they typically rely on restrictive distributional assumptions, limiting their robustness to heavy-tailed data. We propose a unified framework for testing, estimating, and inferring multiple change points in high-dimensional data. Our approach leverages a two-sample U-statistic within a moving window, allowing flexible kernel function selection to accommodate structural changes in general parameters such as variance changes or robust statistics. For testing, we develop an $\ell_{\infty}$-norm-based statistic with a high-dimensional multiplier bootstrap procedure, achieving minimax-optimal power under sparse alternatives. For estimation, we construct an initial estimator for the change-point number and locations and refine it using the U-statistic Projection Refinement Algorithm (U-PRA), attaining minimax-optimal localization rates. We further derive the asymptotic distribution of refined estimators, enabling valid confidence interval construction. Extensive numerical experiments demonstrate the better performance of our method across various settings, including heavy-tailed distributions. Applications to genomic copy number variation  data highlight its practical utility. An R package implementing the proposed method, \texttt{U-PRA}, is publicly available at \url{https://github.com/liubin0145/R-codes-UPRA/}.

\end{abstract}
\noindent {\bf Keyword:}
{Bootstrap}; {Efficient computation}; {Heavy tail}; {Structure change}; U-statistics
\setcounter{page}{0}

\section{Introduction} 


Modern statistical applications frequently involve high-dimensional data sequences, where the dimensionality substantially exceeds the sample size. Due to the complex nature of data generation processes, heterogeneity is a prevalent characteristic in such datasets. These structural changes, occurring at unknown locations within the observed data sequence, are referred to as change points.
Detecting and identifying change points is crucial in many real-world problems. In biological and biomedical studies, for instance, change-point analysis has been widely used to uncover structural shifts in gene expression profiles (\cite{zhang2010detecting}), detect abrupt transitions in neural activity patterns (\cite{safikhani2022joint}), and identify critical stages in disease progression (\cite{ma2024disentangling}). Such insights are essential for understanding complex biological processes and improving early diagnosis strategies.

To formalize the problem, let $\bX=(X_1,\cdots,X_d)^\top$ be a $d$-dimensional random vector, and let $\bX_1,\cdots,\bX_n$ be an independent sequence of observations drawn from $\bX$, where $\bX_t=(X_{t,1},\cdots,X_{t,d})^\top$; $t=1,\ldots, n$. Suppose $\btheta \in \RR^d$ represents an underlying parameter of interest, and consider the following multiple change-point model:
\vspace{-0.2in}
\begin{equation}\label{equ: cpt hypothesis}
	\vspace{-0.2in}
	\begin{array}{l}
		\btheta_1=\cdots=\btheta_{\gamma_1}\neq \btheta_{\gamma_{1}+1}=\cdots=\btheta_{\gamma_{2}}=\cdots=\btheta_{\gamma_{M_0}}\neq \btheta_{\gamma_{M_0+1}}=\cdots=\btheta_n,
	\end{array}
\end{equation}
where $0<\gamma_1<\cdots<\gamma_{M_0}<n$ are the unknown change-point locations such that $\btheta_t=\btheta^{(m)}$ for $\gamma_{m-1}<t\leq \gamma_{m}$ with $1\leq m\leq M_0+1$. For notational convenience, we define $\gamma_{0}=0$ and $\gamma_{M_0+1}=n$. Given this model, high-dimensional change-point analysis typically involves three primary research questions: (1) detecting the presence of  change points, (2) estimating multiple change points, and (3) conducting statistical inference for the estimated change-point locations.

The first problem concerns single change-point detection with \(M_0\leq 1\), which has been extensively studied for high-dimensional mean changes with \(\btheta_t=\E(\bX_t)\). Existing methods typically aggregate componentwise signals through sparse, dense, or adaptive statistics; see \cite{Jirak2015Uniform, YuChen2021, wang2019inference, Liu2020Unified, zhang2021adaptive, wang2023computationally}. See the review in \cite{liu2022high}. However, these methods are mostly mean-based and often require sub-Gaussian, sub-exponential, or other restrictive moment conditions. They may therefore become unreliable under heavy-tailed distributions or outliers.

Beyond single change-point detection, multiple change-point estimation aims to recover both the number and locations of structural breaks. Existing approaches are mainly based on segmentation or penalized optimization, including binary segmentation and its variants and the fused LASSO-type estimators \cite{Liu2020Unified, zhang2021adaptive, safikhani2022joint}. While effective in many settings, these methods may require specific signal-strength conditions, assume the existence of change points, or lack a formal testing procedure. Moreover, they are mostly designed for mean shifts and are not readily extendable to variance changes or robust statistics.


The third challenge is statistical inference for estimated change points. Confidence interval construction remains difficult because it requires optimal localization rates and tractable limiting distributions. Existing results are mostly limited to univariate settings or high-dimensional mean-change models \cite{bai1998estimating, eichinger2018mosum, chen2022InferenceBreakpointsHighdimensionalb}. These methodologies are inherently difficult to generalize to broader parameter settings and typically assume that the data follow a light-tailed distribution. Consequently, when applied to heavy-tailed data, these methods often yield unreliable confidence intervals.

As discussed above, despite substantial progress in high-dimensional change-point analysis, several important challenges remain open. Existing methods are mainly mean-based, task-specific, and dependent on restrictive tail assumptions. These limitations make them difficult to apply to general structural changes, such as variance shifts or robust distributional contrasts, especially under heavy-tailed data or outliers. To address these challenges, we propose a unified U-statistic-based framework for testing, estimating, and inferring multiple change points in high-dimensional data. Our contributions can be summarized as follows.


1. \textbf{Multiple Change-Point Testing}: We propose a moving-window two-sample U-statistic based test with \(\ell_{\infty}\)-norm aggregation for sparse high-dimensional alternatives. By choosing different kernels, the framework accommodates mean shifts, variance changes, and robust distributional contrasts. A high-dimensional multiplier bootstrap procedure is developed to approximate the null distribution while preserving the dependence induced by overlapping windows. We prove that the test controls the significance level and achieves asymptotic power one, allowing \(d\) to grow exponentially with \(n\). With a proper bandwidth choice, it attains minimax-optimal detection rates under sparse alternatives and remains applicable to heavy-tailed data through robust kernel choices.


2. \textbf{Optimal Estimation of Change-Point Locations and Numbers}: 
We construct an initial estimator for the number and locations of multiple change points in high-dimensional settings, extending the moving-window idea of \cite{eichinger2018mosum, Zhou2025BootstrapMOSUM}. The estimator consistently recovers the true number of change points and achieves near-optimal localization rates. We further propose the U-statistic Projection Refinement Algorithm (U-PRA), which refines the initial estimators by aggregating information along the estimated signal direction and attains  optimal localization rates. This procedure applies to mean changes, variance changes, and robust statistics, and performs well under heavy-tailed distributions.

3. \textbf{Confidence Interval Construction for Multiple Change Points}: 
Building on the refined estimators, we derive their asymptotic distributions within the two-sample U-statistic framework. The limiting distribution is characterized as the \(\arg\max\) of a drifted weighted Brownian motion process, which provides the basis for constructing valid confidence intervals for multiple change points. Due to the U-statistic structure, the limit involves weighted sums of independent Brownian motions and differs from existing mean-based results. Numerical studies show that the proposed confidence intervals achieve empirical coverage close to the nominal level across a range of distributional settings even in heavy-tailed settings. Finally, we conduct extensive numerical studies validating both our theoretical results and methodological effectiveness. Our analysis on genomic copy number variation  further highlights the practical utility of our proposed framework.

The rest of this paper is organized as follows. 
Section~\ref{sec: method} provides the methodology includeing multiple change points testing, estimation and inference. 
Section~\ref{sec: Theoretical guarantees} establishes the theoretical guarantees, including size and power analysis for testing, consistency of the initial estimators, optimal localization of the refined estimators, and their limiting distributions for confidence interval construction. 
Section~\ref{section: empirical study} reports numerical studies.
Section~\ref{sec: real data analysis} applies the proposed method to genomic copy number variation data. 
Section~\ref{section: summary and discussion} concludes the paper.


We end this section with some notations. Let $\cX=\{\bX_1,\ldots,\bX_n\}$. We define the $\ell_p$ norm as $\|\bv\|_p=(\sum_{j=1}^d|v_j|^p)^{1/p}$ for $\bv=(v_1,v_2,\cdots,v_d)^\top \in \mathbb{R}^d$. For $p=\infty$, $\|\bv\|_\infty=\max_{1\leq j\leq d}|v_j|$. For two real numbered sequences $a_n$ and $b_n$, we set $a_n=O(b_n)$ if there exits a constant $C$ such that $|a_n|\leq C|b_n|$ for a sufficiently large $n$; $a_n=o(b_n)$ if $a_n/b_n\rightarrow0$ as $n\rightarrow\infty$; $a_n\asymp b_n$ if there exists constants $c$ and $C$ such that $c|b_n|\leq|a_n|\leq C|b_n|$ for a sufficiently large $n$. For a sequence of random variables (r.v.s) $\{\xi_1,\xi_2,\cdots\}$, we set $\xi_n\rightarrow \xi$ if $\xi_n$ converges to $\xi$ in probability as $n\rightarrow\infty$. We also denote $\xi_n=o_p(1)$ if $\xi_n\rightarrow 0$. 
\vspace{0cm}

\section{Methodology}\label{sec: method}
\vspace{0cm}

In this section, we introduce a kernel-based local two-sample contrast for detecting 
multiple change points. Let \(h(x,y)\in\RR\) be a user-specified two-sample kernel 
satisfying the antisymmetry condition $h(y,x)=-h(x,y)$.
For \(\bx=(x_1,\ldots,x_d)\) and \(\by=(y_1,\ldots,y_d)\), we apply the kernel 
coordinatewise and write $\bh(\bx,\by)
:=
\big(h(x_1,y_1),\ldots,h(x_d,y_d)\big)^\top
\in \RR^d$ .

The choice of \(h\) determines the structural feature targeted by the procedure. 
For example, a linear kernel $h(x,y)=y-x$ targets changes in mean levels, whereas a rank-based 
kernel $h(x,y)=\text{sign}(y-x)$ can be used to construct a more robust contrast. More examples of kernel choices for different structural changes are provided in Appendix~\ref{sec: examples}.

Our first goal is to test whether the sequence contains at least one structural break. 
To this end, for a generic time point \(t\), define the kernel-induced signal jump
\begin{equation}
	\btheta_t
	:=
	\mathbb{E}\bh(\bX_t,\bX_{t+1}),
	\qquad t=1,\ldots,n-1 .
\end{equation}
This quantity characterizes the population-level contrast between two adjacent 
distributions, as measured by the kernel \(h\). Its interpretation depends on the 
choice of the kernel. For example, if \(h(x,y)=y-x\), then $\btheta_t
=\mathbb{E}(\bX_{t+1}-\bX_t)$
so that \(\btheta_t\) represents the mean difference across adjacent segments. 
If \(h(x,y)=\operatorname{sign}(y-x)\), then
$\btheta_t
=
\mathbb{E}\{\operatorname{sign}(\bX_{t+1}-\bX_t)\}$,
where the sign function is applied coordinatewise. This gives a rank-based contrast 
between adjacent distributions and is less sensitive to heavy-tailed observations or 
outliers.

Based on \(\btheta_t\), we formulate the multiple change-point testing problem as
\begin{equation} 
	\label{equ: cpt hypothesis}
	\begin{aligned}
		\Hb_0 &: \btheta_t = \mathbf{0}, \quad \forall t = 1,\ldots,n-1, \\
		\Hb_1 &: \exists~ \gamma_1< \gamma_2<\cdots <\gamma_{M_0}, 
		\text{ such that } 
		\btheta_{\gamma_m} \neq \mathbf{0}, \quad m=1,\ldots,M_0.
	\end{aligned}
\end{equation}

Under \(\Hb_0\), the observations on the two sides of any time point have the same distribution this yields \(\btheta_t=\mathbb{E}\bh(\bX_t,\bX_{t+1})=\mathbf{0}\) for all 
\(t\). Under \(\Hb_1\), there are one or more locations \(\gamma_m\) where the 
kernel-induced contrast becomes nonzero with $\mathbb{E}\bh(\bX_{\gamma_m},\bX_{\gamma_m+1})\neq \mathbf{0}$, 
indicating structural changes in the parameter captured by the chosen kernel. For the \(m\)-th change point \(\gamma_m\), we write
$\btheta^{(m)}
:=
\btheta_{\gamma_m}
=
\mathbb{E}\bh(\bX_{\gamma_m},\bX_{\gamma_m+1})\in\RR^d$
and denote its \(j\)-th component by
$\theta_j^{(m)}
:=
\mathbb{E}h(X_{\gamma_m,j},X_{\gamma_m+1,j}),
j=1,\ldots,d .$
We further define the active set
$\bPi_m
=
\{j\in\{1,\ldots,d\}:\theta_j^{(m)}\neq 0\}$
with $s_m=|\Pi_m|$  being its cardinality.
Thus, \(\bPi_m\) collects the coordinates affected by the \(m\)-th change point.

\subsection{\textbf{Moving-window U-statistic for local two-sample comparison}}
\label{section: cpt detection}

Note that the above  testing problem allows the number of change points \(M_0\)
to scale with the sample size \(n\), and the single change-point testing problem is
included as a special case when \(M_0=1\). A natural idea for change-point testing is
to use a CUSUM-type statistic, which compares the observations before and after a
candidate split point \(k\).  However, in the presence of multiple change points,
the two samples formed by \(\{1,\ldots,k\}\) and \(\{k+1,\ldots,n\}\) may each contain
observations from several different regimes. As a result, the signals from different
change points can be mixed  or even cancelled out. 

{The above difficulty motivates}  a moving-window local two-sample comparison, in the
spirit of \cite{eichinger2018mosum}. Instead of comparing the whole left part of the
sequence with the whole right part, we  compare observations around each candidate location \(k\). Specifically, for a candidate location \(k=G,\ldots,n-G\) and for each coordinate \(j=1,\ldots,d\), we compare all cross-window pairs
\((X_{t_1,j},X_{t_2,j})\), with \(t_1\in \{k-G+1,\ldots,k\}\) and \(t_2\in \{k+1,\ldots,k+G\}\), through a
user-specified  kernel \(h\), where  \(G\) is the bandwidth parameter controlling the size of the local comparison. This leads
to the following moving-window two-sample U-statistic:
\begin{equation}\label{equation: wilcoxon each coordinate}
	T_j(k)
	:=
	\frac{1}{G^{3/2}}
	\sum_{t_1=k-G+1}^{k}
	\sum_{t_2=k+1}^{k+G}
	h(X_{t_1,j},X_{t_2,j}),
	\qquad j=1,\ldots,d .
\end{equation}


The intuition behind \eqref{equation: wilcoxon each coordinate} is transparent for several common kernels. 
For the linear kernel \(h(x,y)=y-x\), we have
\[
T_j(k)
=
\frac{1}{G^{3/2}}
\sum_{t_1=k-G+1}^{k}
\sum_{t_2=k+1}^{k+G}
\left(X_{t_2,j}-X_{t_1,j}\right).
\]
Thus, \(T_j(k)\) aggregates all pairwise differences between the right and left windows. 
Under \(\Hb_0\), these differences have no systematic direction, so their accumulated contribution is small. 
Under \(\Hb_1\), when \(k\) is close to a true change point, the two windows mainly come from two neighboring regimes, and the \(G^2\) pairwise differences tend to shift in the same direction. 
Their signals therefore reinforce each other, producing a peak of \(T_j(k)\) near the change point, as illustrated in Figure~\ref{fig:Tjk-H1}. 
For the sign kernel \(h(x,y)=\operatorname{sign}(y-x)\), the statistic instead aggregates the signs of the pairwise differences and provides a rank-based contrast that is less sensitive to heavy-tailed observations or outliers.

\subsection{\textbf{Theoretical intuition for detection and localization}}


	We next explain why \(T_j(k)\) can detect and localize change points. 
	Unlike the usual mean-based CUSUM statistic, which admits a direct signal-noise decomposition, \(T_j(k)\) is a two-sample U-statistic.   {This motivate the Hoeffding's decomposition which is
		used to obtain such a signal-noise separation.} To illustrate the idea, consider a generic two-sample kernel \(h(X,Y)\), where \(X\in \RR^1\) and
	\(Y\in \RR^1\) are independent observations from two possibly different distributions. Let
	$\theta=\mathbb E\{h(X,Y)\}$.
	The Hoeffding's decomposition gives 
	\[
	h(X,Y)
	=
	\theta
	+
	h_1(X)
	+
	h_2(Y)
	+
	g(X,Y),
	\]
	where
	$h_1(X)=\mathbb E\{h(X,Y)\mid X\}-\theta$,
	$h_2(Y)=\mathbb E\{h(X,Y)\mid Y\}-\theta$,
	and $g(X,Y)=h(X,Y)-\theta-h_1(X)-h_2(Y)$.
	The terms \(h_1(X)\) and \(h_2(Y)\) are
	the first-order random fluctuations and \(g(X,Y)\) 
	is the remaining higher-order fluctuation which is usually  negligible.

	As for the moving-window statistic in
	\eqref{equation: wilcoxon each coordinate}, suppose that the bandwidth \(G\) is chosen
	such that each scanning window contains at most one change point, say \(\gamma_m\). See Figure \ref{fig:moving-window-local}. 
	\vspace{0cm}
	\begin{figure}[H]
		\centering
		\begin{tikzpicture}[
			x=0.9cm,y=0.9cm,
			>=Latex,
			every node/.style={font=\small},
			obsM/.style={circle,draw=blue!70!black,fill=blue!15,inner sep=1.5pt},
			obsP/.style={circle,draw=red!70!black,fill=red!15,inner sep=1.5pt},
			braceL/.style={decorate,decoration={brace,amplitude=4pt},blue!70!black,thick},
			braceR/.style={decorate,decoration={brace,amplitude=4pt,mirror},red!70!black,thick}
			]
			
			\draw[->,thick] (0,0) -- (13.5,0) node[right] {time};
			
			\foreach \x in {0.6,1.1,1.6,2.1,2.6,3.1,3.6,4.1,4.6,5.1,5.6,6.1}{
				\node[obsM] at (\x,0) {};
			}
			
			\foreach \x in {6.9,7.4,7.9,8.4,8.9,9.4,9.9,10.4,10.9,11.4,11.9,12.4}{
				\node[obsP] at (\x,0) {};
			}
			
			\draw[dashed,thick] (2.4,-0.55) -- (2.4,0.55);
			\draw[dashed,thick,red!75!black] (6.6,-0.65) -- (6.6,0.75);
			\draw[dashed,thick] (10.8,-0.55) -- (10.8,0.55);
			
			\node[below] at (2.4,-0.55) {$\gamma_{m-1}$};
			\node[below,red!75!black] at (6.6,-0.65) {$\gamma_m$};
			\node[below] at (10.8,-0.55) {$\gamma_{m+1}$};
			
			\draw[thick] (6.15,-0.4) -- (6.15,0.9);
			\node[above] at (6.15,0.9) {$k$};
			
			\node[blue!70!black] at (4.3,0.45) {segment \(m\)};
			\node[red!70!black] at (8.6,0.45) {segment \(m+1\)};
			
			\draw[braceL] (3.15,0.9) -- (6.15,0.9)
			node[midway,above=5pt,blue!70!black] {};
			\node[blue!70!black] at (4.65,1.45) {length \(G\)};
			
			\draw[braceR] (6.25,-0.9) -- (9.25,-0.9)
			node[midway,below=5pt,red!70!black] {};
			\node[red!70!black] at (7.75,-1.45) {length \(G\)};
			
		\end{tikzpicture}
		\caption{Local moving-window configuration around a change point. The scanning location \(k\) is close to \(\gamma_m\), with \(L_k=\{k-G+1,\ldots,k\}\) and \(R_k=\{k+1,\ldots,k+G\}\). The two windows compare observations around \(\gamma_m\) from neighboring regimes.}
		\label{fig:moving-window-local}
	\end{figure}
	\vspace{0cm}
	For coordinate
	\(j\), recall
	$\theta_j^{(m)}
	=
	\mathbb Eh(X_{\gamma_m,j},X_{\gamma_m+1,j})$
	is signal jump across the \(m\)-th change point. Applying Hoeffding's decomposition to the  kernel  yields a
	signal-noise representation of the moving-window statistic $T_{j}(k)$: 
	\begin{equation}\label{eq:Tj-signal-noise}
		T_j(k)
		=
		\delta_j^{(m)}(k)
		+
		\mathcal R_j^{(m)}(k), \quad  k=G+1,\ldots,n-G,
	\end{equation}
	where \(\delta_j^{(m)}(k)\) is the deterministic signal function and
	\(\mathcal R_j^{(m)}(k)\) collects the stochastic fluctuation terms generated by the
	first-order components and the higher-order residual in Hoeffding's decomposition.
	More explicitly, we have
	\[
	\delta_j^{(m)}(k)
	=
	\frac{G(k+G-\gamma_m)}{G^{3/2}}\theta_j^{(m)}
	\mathbf{1}\{\gamma_{m-1}< k\leq \gamma_m\}
	+
	\frac{G(\gamma_m-k+G)}{G^{3/2}}\theta_j^{(m)}
	\mathbf{1}\{\gamma_m< k\leq \gamma_{m+1}\}.
	\]

	Note that as \(k\) approaches \(\gamma_m\), more cross-window pairs contain observations from two different regimes, so \(|\delta_j^{(m)}(k)|\) increases and is maximized when \(k=\gamma_m\). 
	The remainder \(\mathcal R_j^{(m)}(k)\) collects stochastic fluctuations, whose explicit form is given in the Supplementary Material. 
	Under the regularity conditions, these fluctuations are uniformly controlled as $\max_{1\leq j\leq d}
	\max_{G\leq k\leq n-G}
	\left|
	\mathcal R_j^{(m)}(k)
	\right|
	=
	O_p\{\sqrt{\log(nd)}\}$.
	Therefore, under a suitable signal-to-noise condition, the deterministic signal dominates the stochastic fluctuation near the true change point. 
	This explains why the moving-window U-statistic can detect change points and provide accurate candidate locations, as also illustrated in Figure~\ref{fig:Tjk-H1}.

	The above decomposition is coordinatewise. To test whether any change point exists
	in a high-dimensional sequence, we scan over all candidate locations $k$ and aggregate
	over coordinates $j$. Since we focus on sparse alternatives, where only a subset of
	coordinates may change, we use the \(\ell_\infty\)-norm aggregation of
	\(\mathbf T(k)=(T_1(k),\ldots,T_d(k))^\top\) which has been used in \cite{Jirak2015Uniform,YuChen2021,yu2022robust}. Specifically, our  testing
	statistic is  defined as
	\begin{equation}\label{equ: final testing statistic}
		W
		=
		\max_{G\leq k\leq n-G}
		\max_{1\leq j\leq d}
		|T_j(k)|
		=
		\max_{G\leq k\leq n-G}
		\|\mathbf T(k)\|_\infty .
	\end{equation}
	Since $W$ is constructed by adopting the $\ell_\infty$ norm, it is more sensitive to sparse signals with strong perturbations on a small number of coordinates. {More importantly,  we  do  not employ the CUSUM statistic; instead, we utilize information of two-sample U statistics that constructs the testing statistic. This  ensures that the test may not use the data points directly  and may maintain high power while also being robust to heavy-tailed data and outliers.} 
	\vspace{0cm}
	\subsection{\textbf{Multiplier bootstrap for testing}}\label{sec: bootstrap}
	In the previous section, we introduced the \(\ell_{\infty}\)-norm based testing
	statistic \(W\) in \eqref{equ: final testing statistic}. To implement the test, we
	need to approximate the null distribution of \(W\). This is challenging because
	\(W\) takes the maximum over both scanning locations and coordinates. Moreover, due
	to the moving-window construction, the statistics
	\(\{\bT(k):G\leq k\leq n-G\}\) are strongly dependent across nearby scanning
	locations. One possible approach is to use an extreme-value approximation, such as a Gumbel
	limit as derived in \cite{Jirak2015Uniform}. However, such an approximation usually requires restrictive assumptions on
	the covariance structure and may converge slowly in finite samples. 
	
	{These difficulties motivate a multiplier
		bootstrap procedure}, which approximates the null distribution of \(W\) in a
	data-driven way. Bootstrap methods for \(U\)-statistics have been developed in low dimensions
	\citep{bucher2016dependent} and extended to high-dimensional settings
	\citep{chernozhukov2013gaussian,chernozhukov2017central,Jirak2015Uniform,Liu2020Unified}.
	We design a multiplier bootstrap procedure for moving-window two-sample
	\(U\)-statistics in multiple change-point testing. Let
	\(e_1^b,\ldots,e_n^b\) be i.i.d. \(N(0,1)\) random variables, independent of the
	data, for \(b=1,\ldots,B\). The \(b\)-th bootstrap version of \(T_j(k)\) is defined as
	\begin{equation}\label{equ: bootstrap based testing statistics for each coordinate}
		T_j^b(k)
		=
		\dfrac{1}{G^{3/2}}
		\sum_{t_1=k-G+1}^{k}
		\sum_{t_2=k+1}^{k+G}
		(e_{t_1}^b+e_{t_2}^b)
		h(X_{t_1,j},X_{t_2,j}),
		\qquad j=1,\ldots,d .
	\end{equation}
	
	Accordingly, the \(b\)-th bootstrap version of the final testing statistic is
	defined as
	\[
	W^b
	=
	\max_{G\leq k\leq n-G}
	\max_{1\leq j\leq d}
	|T_j^b(k)|
	=
	\max_{G\leq k\leq n-G}
	\|\bT^b(k)\|_{\infty}.
	\]
	Given the significance level \(\alpha\), let $c_{\alpha,W}
	:=
	\inf\{t\in\mathbb R:\mathbb P(W\leq t)\geq 1-\alpha\}$
	be the theoretical critical value. Based on the bootstrap statistics
	\(\{W^1,\ldots,W^B\}\), we estimate it by
	\begin{equation}\label{equ: bootstrap critical value}
		\widehat c_{\alpha,W}
		=
		\inf\left\{
		t:
		\frac{1}{B}
		\sum_{b=1}^{B}
		\ind\{W^b\leq t\}
		\geq 1-\alpha
		\right\}.
	\end{equation}
	The resulting test is defined as
	\begin{equation}\label{statistics: robust test}
		\Psi_{\alpha,W}
		=
		\ind\{W\geq \widehat c_{\alpha,W}\}.
	\end{equation}
	We reject \(\Hb_0\) if and only if \(\Psi_{\alpha,W}=1\).
	
	{The multiplier bootstrap offers several advantages}. \textbf{First}, as a result of the moving-window procedure, \(\{\bT(k), k=G,\ldots,n-G\}\) possesses
	a complex correlation structure across different scanning positions. We prove that
	the bootstrap-based test statistic can approximate this correlation structure in a
	fully data-driven manner, thereby circumventing the necessity of first estimating
	the covariance matrix and then generating the corresponding Gaussian random variables
	for approximation as used in \cite{chen2022InferenceBreakpointsHighdimensionalb}.
	\textbf{Second}, the bootstrap method is straightforward to implement, requiring no
	hyperparameter tuning or additional estimation of high-dimensional parameters.
	\textbf{Third}, when estimating multiple change points, we can utilize the
	bootstrap-based critical values as thresholds for identifying change-point intervals,
	thereby avoiding the need to manually specify a threshold as used in
	\cite{Wang2016High}. \textbf{Lastly}, since the proposed bootstrap procedure is built
	upon the general kernel \(h\), it is not limited to mean changes, but is also
	applicable to variance changes and robust location changes, highlighting its
	flexibility and broad applicability.
	\vspace{0cm}
	\subsection{\textbf{Initial estimation of multiple change points}}
	After introducing the testing statistic for Problem~\ref{equ: cpt hypothesis}, we next
	present our methodology for estimating the unknown number and locations of change
	points and for constructing their confidence intervals. The estimation procedure
	comes from the  moving-window statistic used in the testing step. As discussed
	above, \(\|\bT(k)\|_\infty\) tends to form local peaks around true change points,
	while it remains below the bootstrap critical value away from change points. Hence,
	the bootstrap threshold naturally separates candidate change-point regions from
	noise-dominated regions. See Figure~\ref{fig:Tjk-initial-estimation} (b) for an direct illustration. 
	
	Recall $\bT(k)=(T_1(k),\ldots,T_{d}(k))^\top$ with $T_{j}(k)$ defined in (\ref{equation: wilcoxon each coordinate}).  Consider all pairs of indices $(v_j,w_j)\subset\{G,\ldots,n-G\}$ such that 
	\begin{equation}\label{equ: multiple cpt principle}
		\begin{array}{ll}
			(1)\|\bT(k)\|_{\infty}\geq \hat{c}_{\alpha,W}, ~\text{for}~k=v_j,\ldots,w_j;
			(2) \|\bT(k)\|_{\infty}<\hat{c}_{\alpha,W}, ~\text{for}~k=v_j-1,w_j+1,\\
			(3)w_j-v_j\geq \eta G,\text{for some fixed but arbitary  }~0<\eta<1/2,
		\end{array}
	\end{equation}
	where $\hat{c}_{\alpha,W}$ is the $1-\alpha$ quantile level obtained  from the  bootstrap procedure in Section \ref{sec: bootstrap}. Condition \ref{equ: multiple cpt principle} is necessary to avoid overestimation by spurious local maxima exceeding the critical value on the boundary between significant and insignificant areas. Based on  (\ref{equ: multiple cpt principle}), our initial estimation for change point number and locations are defined as:
	\begin{equation}\label{equation: intial number}
		\hat{M}_0=\text{number of pairs } (v_j,w_j),~~\text{and}~\hat{\gamma}_{m}=\argmax_{v_m\leq k\leq w_m}\|\bT(k)\|_{\infty}~\text{for}~m=1,\ldots,\hat{M}_0.
	\end{equation}
	Note that the principle in (\ref{equ: multiple cpt principle}) was proposed in \cite{eichinger2018mosum} for one-dimensional sequences and recently modified by \cite{chen2022InferenceBreakpointsHighdimensionalb}. Differently, we consider the high dimensional setting using the $\ell_\infty$-norm aggregations for the two-sample U-statistic-based moving window process. 
	\vspace{0cm}
	
	\begin{figure}[H]
		\centering
		
		\begin{minipage}{\textwidth}
			\centering
			\includegraphics[width=\textwidth]{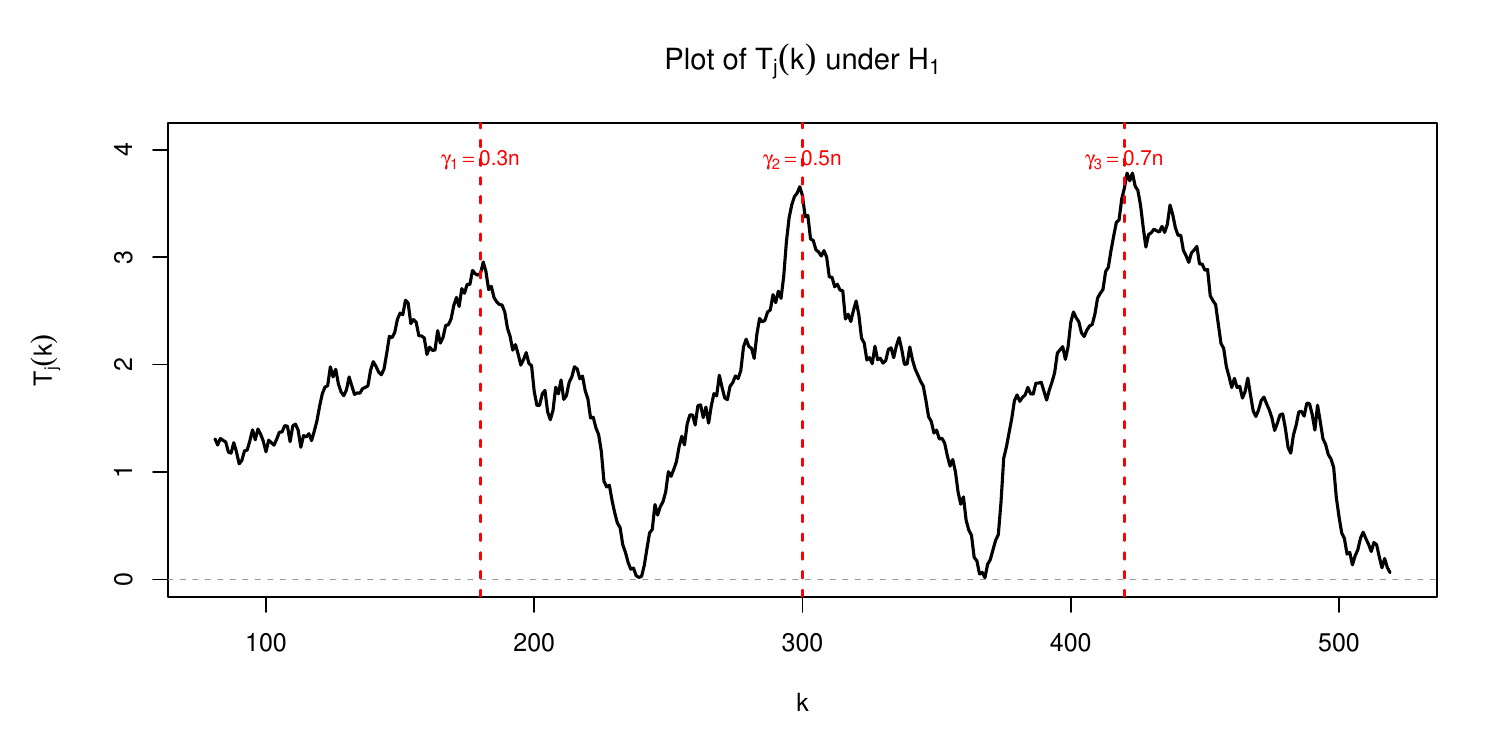}
			\caption*{(a) Coordinatewise statistic \(T_j(k)\)}
		\end{minipage}
		
		\vspace{0.5em}
		
		\begin{minipage}{\textwidth}
			\centering
			\includegraphics[width=\textwidth]{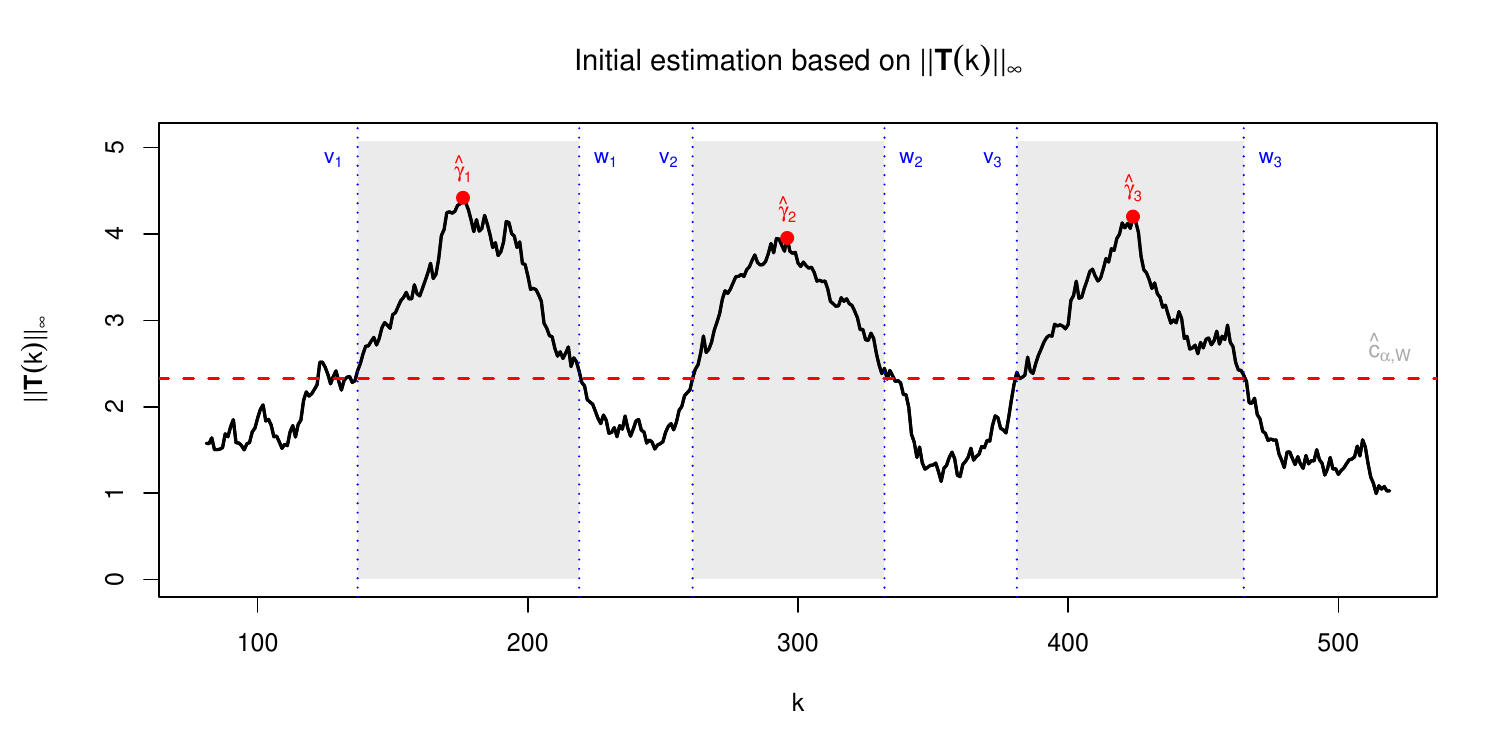}
			\caption*{(b) Initial estimation using \(\|\mathbf T(k)\|_\infty\)}
			\label{fig:initial-estimation}
		\end{minipage}
		
		\caption{
			Illustration of the moving-window statistic and the initial estimation procedure.
			Panel (a) plots \(T_j(k)\) for one coordinate under the simulation setting with three true change points and the linear kernel \(h(x,y)=y-x\). 
			The statistic forms clear peaks around the true change-point locations. 
			Panel (b) plots the aggregated statistic \(\|\mathbf T(k)\|_\infty\) over scanning locations. 
			The horizontal dashed line represents the bootstrap critical value \(\widehat c_{\alpha,W}\), and the above-threshold intervals \([v_j,w_j]\) are used as candidate change-point regions. 
			The maximizer within each interval gives the initial estimator \(\widehat\gamma_j\).
		}	\label{fig:Tjk-H1}
	\end{figure}

	\vspace{0cm}
	\subsection{\textbf{Projection-based refinement}}\label{method: refinement}
	The initial estimators obtained in the previous subsection provide candidate
	locations of the change points. However, they are mainly designed for detection and
	coarse localization.  Specifically,  the statistic \(\|\bT(k)\|_\infty\) is driven by the
	largest coordinatewise signal, it may not fully use the information contained in
	all coordinates with change points. {This motivates the refined procedure,}
	where we refine the initial estimators by aggregating the  coordinates along the signal direction.
	
	We first explain the idea at an oracle level. Suppose  that, for the
	\(m\)-th change point, the active set $\bPi_m=\{j:\theta_j^{(m)}\neq 0\}$ and the signal direction
	\(\btheta^{(m)}=(\theta_1^{(m)},\ldots,\theta_d^{(m)})^\top\) are known, where $\theta_j^{(m)}
	=
	\mathbb Eh(X_{\gamma_m,j},X_{\gamma_m+1,j})$. Then a
	natural way to combine the coordinatewise moving-window statistics is to project
	\(\bT(k)\) onto \(\btheta^{(m)}\), namely $\sum_{j\in\bPi_m}\theta_j^{(m)} T_j(k)$.
	
	As a result,
	signals from different active coordinates reinforce each other, while inactive
	coordinates are excluded. The intuition is particularly clear for the linear kernel \(h(x,y)=y-x\). In this
	case, \(\theta_j^{(m)}\) is the mean shift of the \(j\)-th coordinate across
	\(\gamma_m\). If \(k\) is close to \(\gamma_m\), then the local contrast
	\(T_j(k)\) has expectation proportional to \(\theta_j^{(m)}\). Therefore, after
	multiplying by \(\theta_j^{(m)}\), the expected contribution of the \(j\)-th active
	coordinate becomes proportional to \((\theta_j^{(m)})^2\). Summing over \(j\in\bPi_m\) accumulates the total signal jump with
	\begin{equation*}
		\sum_{j\in\bPi_m}\theta_j^{(m)}T_j(k)
		\;\propto\;
		\sum_{j\in\bPi_m}\bigl(\theta_j^{(m)}\bigr)^2
		=
		\|\btheta^{(m)}\|_2^2.
	\end{equation*}

	In contrast, the initial \(\ell_\infty\)-based statistic mainly uses the strongest
	single coordinate and is driven by \(\|\btheta^{(m)}\|_\infty\). Thus, projection
	can substantially improve the effective signal-to-noise ratio when several
	coordinates carry change-point information. This idea is related to projection-based change-point methods
	\citep{Wang2016High,Aston2018Change,chen2022InferenceBreakpointsHighdimensionalb}.
	However, existing approaches mainly rely on sample means to determine projection
	directions. In contrast, our projection direction is constructed from the general
	two-sample kernel \(h(x,y)\), so the refinement step inherits the flexibility of the
	proposed U-statistic framework and can be applied beyond mean changes.

	Note that the oracle projection
	requires the active set \(\bPi_m\) and the signal direction \(\btheta^{(m)}\), both
	of which are unknown in practice. We therefore estimate them locally around the
	initial change-point estimator \(\widehat\gamma_m\). Let \(\widehat\gamma_0=1\) and
	\(\widehat\gamma_{\widehat M_0+1}=n\). For the \(m\)-th detected change point
	\(\widehat\gamma_m\), we estimate the segment lengths on its two sides by
	$	D_m^-=\widehat\gamma_m-\widehat\gamma_{m-1}$ and $D_m^+=\widehat\gamma_{m+1}-\widehat\gamma_m $. 
	Given a trimming parameter \(\rho=0.1\), define
	$ s_{1,m}
	=
	\left\lfloor
	\widehat\gamma_m-(1-\rho)D_m^-
	\right\rfloor$, 
	$e_{1,m}
	=
	\left\lfloor
	\widehat\gamma_m-\rho D_m^-
	\right\rfloor$, 
	$s_{2,m}
	=
	\left\lceil
	\widehat\gamma_m+\rho D_m^+
	\right\rceil$ and 
	$e_{2,m}
	=
	\left\lceil
	\widehat\gamma_m+(1-\rho)D_m^+
	\right\rceil $. 
	The corresponding local neighborhoods are defined as 
	$\mathcal L_m=\{s_{1,m},\ldots,e_{1,m}\}$ and  $\mathcal R_m=\{s_{2,m},\ldots,e_{2,m}\}$. See Figure	\ref{fig:local-neighborhood-refine} for an illustration.  Since the initial
	estimator may have a nonzero localization error, observations too close to
	\(\widehat\gamma_m\) may come from either side of the true change point. Removing
	this transition region helps ensure that \(\mathcal L_m\) and \(\mathcal R_m\)
	mainly contain observations from the two adjacent regimes.
	
	For each coordinate \(j=1,\ldots,d\), we estimate the kernel based signal jump by
	\begin{equation}\label{equation: signal jump estimation}
		\widehat\theta_j^{(m)}
		=
		\frac{1}{|\mathcal L_m||\mathcal R_m|}
		\sum_{t_1\in\mathcal L_m}
		\sum_{t_2\in\mathcal R_m}
		h(X_{t_1,j},X_{t_2,j}),
		\qquad j=1,\ldots,d .
	\end{equation}
	This quantity estimates the population contrast \(\theta_j^{(m)}\) across the
	\(m\)-th change point. By Theorem~\ref{theory: initial estimators}, we have
	\(|\widehat\gamma_m-\gamma_m|=o(G)\) with probability tending to one. Therefore,
	the two trimmed local samples used in \eqref{equation: signal jump estimation} are
	not mixed across regimes with high probability. We then estimate the active set by
	thresholding:
	\begin{equation}\label{equ: estimation for the cpt set}
		\widehat\bPi_m
		=
		\{1\leq j\leq d: |\widehat\theta_j^{(m)}|\geq w^+\}.
	\end{equation}
	Here \(w^+\) is a threshold parameter used to distinguish truly active coordinates
	from noise coordinates. Our Theorem \ref{theorem: support recovery} gives a sufficient condition for consistent active-set recovery. {The threshold \(w^+\) plays an important role in determining the estimated active set
		\(\widehat{\bPi}_m\). Although our theorem gives a sufficient condition for exact support
		recovery, this condition involves unknown population quantities and is not directly available
		in practice. We therefore adopt a data-driven rule for selecting \(w^+\), as described
		in Appendix \ref{sec: pratical guidence}. Extensive simulations indicate that the resulting
		procedure is stable and yields reliable active-set recovery.}
	\vspace{0cm}
	\begin{figure}[H]
		\centering
		\begin{tikzpicture}[
			x=0.95cm,y=0.95cm,
			>=Stealth,
			every node/.style={font=\small}
			]
			
			\def\gmone{3}
			\def\gm{7}
			\def\gpone{11}
			
			\def\sone{4.1}
			\def\eone{6.1}
			\def\stwo{7.9}
			\def\etwo{9.9}
			
			\fill[gray!20] (\eone,-0.35) rectangle (\stwo,0.35);
			\node[gray!70!black] at (\gm,0.62) {trimmed buffer};
			
			\draw[thick,->] (0,0) -- (14,0) node[right] {time};
			
			\foreach \x in {1.2,1.7,2.2,2.7,3.2,3.7,4.2,4.7,5.2,5.7,6.2,6.7}{
				\draw[blue] (\x,0) circle (0.08);
			}
			\foreach \x in {7.4,7.9,8.4,8.9,9.4,9.9,10.4,10.9,11.4,11.9,12.4,12.9}{
				\draw[red] (\x,0) circle (0.08);
			}
			
			\draw[dashed] (\gmone,-0.85) -- (\gmone,1.05);
			\draw[dashed,thick] (\gm,-1.15) -- (\gm,1.15);
			\draw[dashed] (\gpone,-0.85) -- (\gpone,1.05);
			
			\node[below] at (\gmone,-0.85) {$\widehat\gamma_{m-1}$};
			\node[below] at (\gm,-1.15) {$\widehat\gamma_m$};
			\node[below] at (\gpone,-0.85) {$\widehat\gamma_{m+1}$};
			
			\draw[decorate,decoration={brace,amplitude=5pt},blue,thick]
			(\sone,1.05) -- (\eone,1.05);
			\node[blue] at ({(\sone+\eone)/2},1.55)
			{$\mathcal L_m=\{s_{1,m},\ldots,e_{1,m}\}$};
			
			\draw[decorate,decoration={brace,amplitude=5pt},red,thick]
			(\stwo,1.05) -- (\etwo,1.05);
			\node[red] at ({(\stwo+\etwo)/2},1.55)
			{$\mathcal R_m=\{s_{2,m},\ldots,e_{2,m}\}$};
			
			\draw[blue,thick] (\sone,-0.18) -- (\sone,0.18);
			\draw[blue,thick] (\eone,-0.18) -- (\eone,0.18);
			\draw[red,thick] (\stwo,-0.18) -- (\stwo,0.18);
			\draw[red,thick] (\etwo,-0.18) -- (\etwo,0.18);
			
			\node[below=5pt,blue] at (\sone,0) {$s_{1,m}$};
			\node[below=5pt,blue] at (\eone,0) {$e_{1,m}$};
			\node[below=5pt,red] at (\stwo,0) {$s_{2,m}$};
			\node[below=5pt,red] at (\etwo,0) {$e_{2,m}$};
			
			\draw[decorate,decoration={brace,amplitude=5pt,mirror},black,thick]
			(\gmone,-1.75) -- (\gm,-1.75);
			\node[below=6pt] at ({(\gmone+\gm)/2},-1.75) {$D_m^-=\widehat\gamma_m-\widehat\gamma_{m-1}$};
			
			\draw[decorate,decoration={brace,amplitude=5pt,mirror},black,thick]
			(\gm,-1.75) -- (\gpone,-1.75);
			\node[below=6pt] at ({(\gm+\gpone)/2},-1.75) {$D_m^+=\widehat\gamma_{m+1}-\widehat\gamma_m$};
			
		\end{tikzpicture}
		\caption{
			Local neighborhoods used in the projection-based refinement step. Around the
			initial estimator \(\widehat\gamma_m\), we define the estimated segment lengths
			\(D_m^-=\widehat\gamma_m-\widehat\gamma_{m-1}\) and
			\(D_m^+=\widehat\gamma_{m+1}-\widehat\gamma_m\), and then construct a left
			neighborhood \(\mathcal L_m=\{s_{1,m},\ldots,e_{1,m}\}\) and a right neighborhood
			\(\mathcal R_m=\{s_{2,m},\ldots,e_{2,m}\}\). The trimmed buffer around
			\(\widehat\gamma_m\) excludes observations close to the estimated change point,
			so that the two local samples are less affected by the transition region.
		}
		\label{fig:local-neighborhood-refine}
	\end{figure}
	\vspace{0cm}
	
	Using the estimated active set and signal direction, we construct the projected
	moving-window statistic. For each estimated change point \(\widehat\gamma_m\) and
	each search location \(k\), define
	\begin{equation}\label{equation: wilcoxon each coordinate 2}
		\widetilde T_j(k)
		:=
		\frac{1}{G}
		\sum_{t_1=k-G+1}^{k}
		\sum_{t_2=k+1}^{k+G}
		\widehat\theta_j^{(m)}
		h(X_{t_1,j},X_{t_2,j}),
		\qquad j=1,\ldots,d .
	\end{equation}
	Based on the aggregated projected statistic, we refine the change-point location by
	performing a local search over the region around each initial change point estimator $\hat{\gamma}_m$. Specifically, let 
	$\mathcal K_m
	=
	\left\{
	\left\lfloor \widehat\gamma_m-(1-\rho)D_m^- \right\rfloor,
	\ldots,
	\left\lceil \widehat\gamma_m+(1-\rho)D_m^+ \right\rceil
	\right\}.$
	The refined change-point estimator is defined as
	\begin{equation}\label{equ: refined estimator1}
		\widetilde\gamma_m
		=
		\arg\max_{k\in\mathcal K_m}
		\sum_{j\in\widehat\bPi_m}
		\widetilde T_j(k).
	\end{equation}
	The projection-based refinement has two main benefits. \textbf{First}, by aggregating the
	selected coordinates along the estimated signal direction, it uses more
	change-point information than the initial \(\ell_\infty\)-based estimator and
	achieves the theoretically optimal localization rate. See Theorems \ref{theory: initial estimators} and \ref{theorem: refined estimation}. \textbf{Second}, the resulting
	projected one-dimensional statistic admits a tractable limiting distribution. Under
	suitable regularity conditions, we show in Theorem \ref{theorem: refined estimation} that $\|\btheta^{(m)}\|_2^2
	\bigl(\widetilde\gamma_m-\gamma_m\bigr)
	\Rightarrow
	\arg\max_{s\in\mathbb R} Z_m(s)$,
	where \(Z_m(s)\) is defined in (\ref{equation: refined estimation}). This distributional
	result provides the basis for constructing confidence intervals for the change-point
	locations, with implementation details given in Appendix~\ref{sec: implementation for estimation confidence}.
	\vspace{0cm}
	\subsection{\textbf{Practical choice of bandwith G}}\label{sec: tuning parameters}
	The proposed moving-window procedure depends on the bandwidth \(G\). In general,
	\(G\) should be neither too small nor too large: a small \(G\) preserves locality but
	may increase stochastic fluctuation, whereas a large \(G\) stabilizes the local
	statistic but may blur nearby change points. Our theoretical requirements on \(G\)
	are given in Section~\ref{sec: Theoretical guarantees}. To reduce the sensitivity to a single
	bandwidth choice in practice, we further develop a multiscale implementation that
	aggregates information from multiple bandwidths; see Appendix~\ref{sec: pratical guidence}.
	Our numerical studies show that the multiscale procedure is stable for both localization and confidence interval construction.
	\vspace{0cm}
	\section{Theoretical guarantees}\label{sec: Theoretical guarantees}
	\subsection{\textbf{Theory for change-point testing}}\label{sec: Theory for change-point testing}
	\subsubsection{\textbf{Size control under the null}}
	First, we provide theoretical results for the bootstrap based procedure under $\Hb_0$. 
	Let $\bX=(X_{1},\ldots,X_{d})^\top \sim F(\bx)$ under $\Hb_0$ and $\bX'=(X'_{1},\ldots,X'_{d})^\top$ be the independent copy of $\bX$. Under $\Hb_0$, for each coordinate $j$ with the kernel $h(x,y)$, consider the following Hoeffding's decomposition 
	\begin{equation}\label{equ: hoeff decomposition}
		h(X_j,X_j')=0+h_{1,j}(X_j)+h_{2,j}(X_j')+g_j(X_j,X_j'),~j=1,\ldots,d,
	\end{equation}
	where $h_{1,j}(X_j):=\E[h(X_j,X'_j)|X_j]$, $h_{2,j}(X'_j):=\E[h(X_j,X'_j)|X'_j]$, and $g_j(X_j,X_j')=h(X_j,X_j')-h_{1,j}(X_j)-h_{2,j}(X'_j)$. Note that due to the antisymmetry property of $h(x,y)$, we have $h_{2,j}(x)=-h_{1,j}(x)$. The following {Assumptions ~(A.1)} -- (A.3) impose some moment conditions on the kernel. Specifically,
	
	\noindent$\mathbf{Assumption ~(A.1)}$: There is some constant $b$ such that $\E (h_{1,j}(X_{j}))^2\geq b$ for all $j=1,\ldots,d$.

	\noindent$\mathbf{Assumption ~(A.2)}$: There exists some constant $D$ such that $\|h_{1,j}(X_{j})\|_{\psi_1}\leq D$ for all $j=1,\ldots,d$, where  for any random variable $X$, $\|X\|_{\psi_{1}}$ is defined as
	$\|X\|_{\psi_{1}}:=\inf\big\{C>0: \E\psi_{1}(|X|/C)|)\leq 1\big\}$ with $\psi_{1}(x):=\exp(x)-1$.

	\noindent$\mathbf{Assumption ~(A.3)}$: There exists some constant $D$ such that $\E|h_{1,j}(X_{j})|^{2+\ell}\leq D^\ell$ with $\ell=1,2$ for all $j=1,\ldots,d$.
	
	\noindent$\mathbf{Assumption ~(A.4)}$ For the bandwidth parameter $G$, we require 
	\begin{equation*}
		\dfrac{\log^7((n-2G+1)d)}{G}\rightarrow 0,~\text{as}~(n,d)\rightarrow\infty.
	\end{equation*}
	Assumptions {(A.1)} -- {(A.4)} are crucial for proving the results of Gaussian approximations for the {$\ell_\infty$-norm-based} testing statistic. Assumption {(A.1)} requires that the two sample kernel based $U$-statistics are non-degenerate.  Assumption {(A.2)} requires that $h_{1,j}(X_{j})$ follows sub-exponential distributions. Many bounded kernels such as  Wilcoxon can satisfy this condition. Assumption {(A.3)} requires that $h_{1,j}(X_{j})$ has bounded third and forth moments. Lastly, Assumption ~{(A.4)} characterizes the scaling relationship between $(G,n,d)$ which allows the data dimension $d$ can grow exponentially with $n$ as long as  $G\gg \log^7((n-2G+1)d)$ holds. To make these kernel-level assumptions more interpretable, we provide in Appendix \ref{sec: Verification of the  assumptions} a detailed verification of these conditions for concrete mean and variance change-point models. 
	
	Bases on the above assumptions, the following Theorem \ref{theorem: gaussian approximation} justifies the validity of our proposed bootstrap procedure.
	\vspace{0cm}
	\begin{theorem}\label{theorem: bootstrap validity}\label{theorem: gaussian approximation}
		Suppose Assumptions $\mathbf{(A.1)}$ -- $\mathbf{(A.4)}$ hold.  Under $\Hb_{0}$,  we have
		\begin{equation}\label{equ: gaussian approximation result}
			\sup_{z\in (0,\infty)}\big|\P(W\leq z)-\P(W^{b}\leq z|\cX)\big|=o_p(1), ~\text{as}~  n,d\rightarrow\infty.
		\end{equation}
	\end{theorem}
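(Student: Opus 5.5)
Under $\Hb_0$ I would apply the Hoeffding decomposition (\ref{equ: hoeff decomposition}) to $T_j(k)$, using $h_{2,j}=-h_{1,j}$:
\[
T_j(k)=\frac{1}{\sqrt G}\Big(\sum_{t=k-G+1}^{k}h_{1,j}(X_{t,j})-\sum_{t=k+1}^{k+G}h_{1,j}(X_{t,j})\Big)+\frac{1}{G^{3/2}}\sum_{t_1,t_2}g_j(X_{t_1,j},X_{t_2,j})=:L_j(k)+R_j(k).
\]
The linear part $L_j(k)$ is a sum of independent vectors in $\RR^{(n-2G+1)d}$ indexed by $(k,j)$. Writing $\xi_{t}=(h_{1,j}(X_{t,j}))_j$, it has the form $n^{-1/2}\sum_{t=1}^n \bm a_t$. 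Each entry of $\bm a_t$ is $\pm\sqrt{n/G}\,h_{1,j}(X_{t,j})$ or $0$, so the normalized variance is at least $b\cdot$const by (A.1).

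The plan has three steps.

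\textbf{Step 1: Gaussian approximation for $\max_{k,j}|L_j(k)|$.} I would apply the high-dimensional CLT of \cite{chernozhukov2017central} (hyperrectangles, max of absolute values) to the independent, not identically distributed vectors $\bm a_t$. The effective dimension is $p=(n-2G+1)d$. Under (A.2) and (A.3), the sub-exponential and moment bounds scale like $B_n\asymp\sqrt{n/G}$. The CCK error bound $(B_n^2\log^7(pn)/n)^{1/6}=(\log^7 p/G)^{1/6}$ then vanishes by (A.4). This gives a centered Gaussian vector $\bm Y$ with the covariance of $L$ such that $\sup_z|\P(\max|L|\le z)-\P(\|\bm Y\|_\infty\le z)|\to0$.

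\textbf{Step 2: the degenerate remainder.} I would show $\max_{k,j}|R_j(k)|=o_p(\log^{-1/2}p)$. For fixed $(k,j)$, $R_j(k)$ is a degenerate two-sample U-statistic with variance $O(G^{-1})$. A moment or exponential inequality for degenerate U-statistics (e.g.\ a Bernstein-type bound of Gin\'e--Lata{\l}a--Zinn form, or a $q$-th moment bound) combined with a union bound gives $\max|R|=O_p(\log p/\sqrt G)$ up to logs. Multiplying by $\sqrt{\log p}$, this is $o(1)$ under (A.4). Gaussian anti-concentration (Nazarov's inequality, with variances bounded below via (A.1)) then transfers Step 1 to $W$.

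\textbf{Step 3: bootstrap.} Conditionally on $\cX$, $\bT^b$ is Gaussian with covariance $\widehat\Sigma$, where
\[
T_j^b(k)=\sum_t e_t^b\,\widehat a_{t,(k,j)}, \qquad \widehat a_{t,(k,j)}=G^{-3/2}\Big(\ind\{t\in L_k\}\sum_{t_2\in R_k}h(X_{t,j},X_{t_2,j})+\ind\{t\in R_k\}\sum_{t_1\in L_k}h(X_{t_1,j},X_{t,j})\Big).
\]
Each $\widehat a_t$ estimates $G^{-1/2}(\pm h_{1,j}(X_{t,j}))$ plus degenerate noise. By the Gaussian comparison lemma (CCK), it suffices to show $\max|\widehat\Sigma-\Sigma|_{\max}\log^2 p=o_p(1)$. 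I would decompose $\widehat\Sigma-\Sigma$ into three pieces: the empirical-versus-population covariance of the $h_{1}$ terms, controlled by a maximal inequality for sums of products of sub-exponential variables (giving $\sqrt{\log p/G}+\log^2 p/G$); cross terms between $h_1$ and $g$; and squared $g$ terms, bounded via the U-statistic bounds of Step 2. Combining Steps 1–3 with the triangle inequality yields (\ref{equ: gaussian approximation result}).

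The \textbf{main obstacle} is Step 3's covariance concentration together with the uniform control of the degenerate parts. There are $p^2$ entries, each a product involving quadratic sums of $h$ with only sub-exponential structure of $h_{1}$ assumed. If bounded kernels are not assumed, I would truncate $h$ at level $\asymp\log p$ and rely on the $\psi_1$ tails, which is where the $\log^7$ rate in (A.4) is consumed.
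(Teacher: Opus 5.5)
Your proposal follows the paper's proof essentially step for step. Both use the Hoeffding decomposition under $\Hb_0$ and Proposition 2.1 of \cite{chernozhukov2017central} on the linear part with $B_n\asymp\sqrt{n/G}$, which gives the $(\log^7((n-2G+1)d)/G)^{1/6}$ error. Both then bound the degenerate remainder uniformly by $O_p(\log(nd)/\sqrt{G})$ and transfer this via Nazarov's inequality, and finish with an entrywise comparison of the conditional bootstrap covariance with that of $\bT^{\bG}$ followed by a Gaussian comparison bound.

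The only differences are minor. The paper controls the covariance error by treating each conditional covariance entry directly as an order-3 U-statistic, plus an order-2 term from overlapping indices, rather than Hoeffding-expanding the multiplier weights. Two caveats apply. Like the paper, your Steps 2--3 tacitly need moment or tail control on $h$ (or $g_j$) itself, since (A.1)--(A.3) constrain only $h_{1,j}$, so the ``$\psi_1$ tails'' you would rely on are not available for $h$. Also, the $\log^7$ in (A.4) is consumed by the Step 1 CLT, not by any truncation in Step 3.
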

	Theorem \ref{theorem: gaussian approximation} shows that we can uniformly approximate the distribution of $W$ using that of $W^b$. As a corollary, the following Corollary \ref{corollary: size} shows that our proposed new test can control the Type I error asymptotically  for any given pre-specified significant level $\alpha$.
	\vspace{0cm}
	\begin{corollary}\label{corollary: size}
		Assume Assumptions $\mathbf{(A.1)}$--$\mathbf{(A.4)}$ hold. 	Under $\Hb_{0}$,  we have
			$				\P(\Psi_{\alpha,W}=1)\rightarrow\alpha, ~ \text{as} ~n,d,B\rightarrow\infty.$
	\end{corollary}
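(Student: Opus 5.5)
The corollary follows from Theorem~\ref{theorem: gaussian approximation} by a quantile-comparison argument, plus a Monte Carlo step to pass from the bootstrap conditional distribution to the finite-$B$ empirical quantile $\widehat c_{\alpha,W}$.

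\emph{Step 1 (population bootstrap quantile).} Let $F(z)=\P(W\le z)$ and $F^*(z)=\P(W^b\le z\mid\cX)$. Define the conditional quantile $c^*_{\alpha}=\inf\{t:F^*(t)\ge 1-\alpha\}$. Let $\Delta_n=\sup_z|F(z)-F^*(z)|$; by Theorem~\ref{theorem: gaussian approximation}, $\Delta_n=o_p(1)$. On the event $\{\Delta_n\le\epsilon\}$ we get the sandwich
\[
c_{\alpha+\epsilon,W}\le c^*_{\alpha}\le c_{\alpha-\epsilon,W}.
\]
Here $c_{\alpha\pm\epsilon,W}$ are the quantiles of $W$.

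\emph{Step 2 (continuity).} I need $F$ to be continuous, or at least to have vanishing atoms, so that $\P(W\ge c_{\alpha\pm\epsilon,W})=\alpha\pm\epsilon+o(1)$. The proof of Theorem~\ref{theorem: gaussian approximation} passes through a Gaussian analogue $\max_{k,j}|Z_j(k)|$. By (A.1) each $Z_j(k)$ has variance bounded below. Nazarov's anti-concentration inequality then gives
\[
\sup_z\P\bigl(|\max Z-z|\le\delta\bigr)\lesssim\delta\sqrt{\log((n-2G+1)d)}.
\]
Combining this with the Kolmogorov approximation of $W$ by $\max Z$, the distribution of $W$ has asymptotically negligible atoms, and $\P(W\ge c_{\alpha',W})\to\alpha'$ uniformly in $\alpha'$ near $\alpha$.

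\emph{Step 3 (finite $B$).} Conditionally on $\cX$, the $W^b$ are i.i.d.\ with distribution $F^*$. By the Dvoretzky--Kiefer--Wolfowitz inequality, the empirical CDF deviates from $F^*$ by $O_p(B^{-1/2})$ uniformly. Hence, with probability tending to one,
\[
c^*_{\alpha+\epsilon}\le\widehat c_{\alpha,W}\le c^*_{\alpha-\epsilon}.
\]
Combining with Step 1 gives $c_{\alpha+2\epsilon,W}\le\widehat c_{\alpha,W}\le c_{\alpha-2\epsilon,W}$ on an event of probability $1-o(1)$. Therefore
\[
\alpha-2\epsilon-o(1)\le\P(W\ge\widehat c_{\alpha,W})\le\alpha+2\epsilon+o(1),
\]
and letting $\epsilon\downarrow0$ finishes the proof.

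\emph{Main obstacle.} Step 2 is the delicate part: the approximation in Theorem~\ref{theorem: gaussian approximation} is stated only in Kolmogorov distance, so continuity of the limiting quantile must be imported from Gaussian anti-concentration. The plan is to reuse the intermediate Gaussian maximum from the theorem's proof rather than rederive it.
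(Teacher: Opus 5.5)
Your argument is correct and follows the route the paper intends. The paper states the corollary as an immediate consequence of Theorem~\ref{theorem: gaussian approximation} and gives no separate proof, so your quantile sandwich, your control of the atoms of $W$, and your Dvoretzky--Kiefer--Wolfowitz step for finite $B$ fill in exactly the details it leaves implicit. The atom control uses Nazarov's inequality, which is the tool used in Step~4 of that theorem's proof and applies because (A.1) gives the Gaussian coordinates variance $2\gamma_{jj}\ge 2b$. Because you sandwich $\widehat c_{\alpha,W}$ between the deterministic quantiles $c_{\alpha\pm2\epsilon,W}$, the dependence between $W$ and $\widehat c_{\alpha,W}$ causes no trouble, and letting $\epsilon\downarrow0$ after $n,d,B\to\infty$ closes the argument.
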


	The proof of Theorem \ref{theorem: bootstrap validity} relies on the essential modifications of high dimensional Gaussian approximation theorey of \cite{chernozhukov2017central} to the two-sample U-statistic-based multiple change point detection with moving window. The proof mainly  proceeds in three steps. We only sketch the proof here. For any $1\leq t_1\neq t_2\leq n$, define the $\RR^d$ dimensional kernel and the corresponding leading and residual terms as:
	\begin{equation*}
		\begin{array}{ll}
			\bh(\bX_{t_1},\bX_{t_2}):=\big(h(X_{t_1,1},X_{t_2,1}),\ldots,h(X_{t_1,d},X_{t_2,d})\big)^\top,
			\bh_1(\bX_{t}):=\big(h_{1,1}(X_{t,1}),\ldots,h_{1,d}(X_{t,d})\big)^\top,\\
			~~~	\text{and}~~~	\bg(\bX_{t_1},\bX_{t_2}):=\big(g_1(X_{t_1,1},X_{t_2,1}),\ldots,g_{d}(X_{t_1,d},X_{t_2,d})\big)^\top,\\
		\end{array}
	\end{equation*}
	where $h_{1,j}(x)$ and $g_j(x,y)$ are defiend in (\ref{equ: hoeff decomposition}). Let $\bGamma=(\gamma_{i,j})\in \RR^{d\times d}=\text{Cov}(\bh_1(\bX_1))$. 
	
	\textbf{Step (i).} In this step, we  approximate the distribution of \( W \) under the null hypothesis. {Note that under the null hypothesis, we have $\E h(X_j,X'_j)=0$.} For each fixed $G\leq k\leq n-G$ and $t=1,\ldots,n$, let
	$a_{t}(k)=\dfrac{\sqrt{n}}{\sqrt{G}}\big(\mathbf{1}\{k-G+1\leq t\leq k\}-\mathbf{1}\{k+1\leq t\leq k+G\}\big).$
	Then, we can rewrite $\bT(k)$ as the following form:
	\begin{equation}\label{equ: leading term}
		\bT(k)=\dfrac{1}{\sqrt{n}}\sum_{t=1}^n a_t(k)\bh_1(\bX_{t})+\dfrac{1}{G^{3/2}}\sum_{t_1=k-G+1}^k\sum_{t_2=k+1}^{k+G}\bg(\bX_{t_1},\bX_{t_2}).
	\end{equation}
	Our theorem shows that the residual term of the Hoeffding's decomposition can be uniformly negligible over $k$ and $1\leq j\leq d$ in the sense that 
	\begin{equation*}
		\max_{G\leq k\leq n-G}  \big\|\bT(k)-\dfrac{1}{\sqrt{n}}\sum_{t=1}^n a_t(k)\bh_1(\bX_{t})\big\|_{\infty}=O_p({\log(nd)}/{\sqrt{G}}).
	\end{equation*}
	Hence, by the Hoeffding's decomposition, we can approximate $W$ by $\max_{G\leq k\leq n-G}\|\bT^{(1)}(k)\|_{\infty}$, where $\bT^{(1)}(k):=\dfrac{1}{\sqrt{n}}\sum\limits_{t=1}^n a_t(k)\bh_1(\bX_{t})$.
	
	
	\textbf{Step (ii).} 
	Note that we don’t know the exact distribution of $\max_{G\leq k\leq n-G}\|\bT^{(1)}(k)\|_{\infty}$. As $\bT^{(1)}(k)$ is a sum of independent random vectors with zero mean, this motivates us to  use a Gaussian random vector to further approximate $\max_{G\leq k\leq n-G}\|\bT^{(1)}(k)\|_{\infty}$.
	Specifically,  let $\bG_{1},\ldots,\bG_n$ be $i.i.d$ Gaussian random vectors with $\bG_{t}=(G_{t,1},\ldots,G_{t,d})\in \RR^d$ and $\bG_{t}\sim N(0,\bGamma)$.  We define the Gaussian random vector-based testing statistic $\bT^{\bG}(k)=(\bT^{\bG}_1(k),\ldots,T^{\bG}_d(k))^\top$ with
	\begin{equation}\label{equ: TG}
		\begin{array}{ll}
			\bT^{\bG}(k)&=\dfrac{1}{\sqrt{n}}\sum\limits_{t=1}^n a_t(k)\bG_{t},~~\text{with}~~G\leq k\leq n-G.
		\end{array}
	\end{equation}
	Our theorem shows that we can approximate $\bT^{(1)}(k)$ using $\bT^{\bG}(k)$ in the sense that 
	\begin{equation*}
		\small
		\sup_{z\in(0,\infty)} \big|\P(\max_{G\leq k\leq n-G}\|\bT^{(1)}(k)\|_{\infty}\leq z\big)-\P(\max_{G\leq k\leq n-G}\|\bT^{G}(k)\|_{\infty}\leq z\big)\big|\leq C_1\Big(\dfrac{\log^7((n-2G+1)d)}{G}\Big)^{1/6}.
	\end{equation*}
	Hence, we can use the distribution of $\max\limits_{G\leq k\leq n-G}\|\bT^{G}(k)\|_{\infty}$ to approximate that of $\max\limits_{G\leq k\leq n-G}\|\bT^{(1)}(k)\|_{\infty}$.
	
	\textbf{Step (iii).} Since the covariance structure $\bGamma$ is typically unknown, one way is to obtain some estimator $\hat{\bGamma}$ and use the corresponding Gaussian random vectors to approximate the testing statistic. However, the estimation of the covariance matrix imposes certain requirements, which can be particularly challenging in ultra-high dimensional problems. Instead, we can use the multiplier bootstrap-based testing statistic in (\ref{equ: bootstrap based testing statistics for each coordinate}) to approximate $\bT^{\bG}(k)$ in a data-driven way. Specifically, for the Gaussian random vector-based testing statistic $\bT^{\bG}(k)$, it has very complicated covariance structure in terms of the search location:
	\begin{equation}\label{equ: covariance}
		\small
		\text{Cov}(\bT^{\bG}(k_1),\bT^{\bG}(k_2)) = 
		\left\{
		\begin{array}{ll}
			0,& \text{if}~|k_2-k_1|>2G-1; \\
			-\dfrac{2G-|k_2-k_1|}{G}\bGamma,& \text{if}~G-1<|k_2-k_1|\leq2G-1;  \\
			\dfrac{2G-3|k_2-k_1|}{G}\bGamma,& \text{if}~0<|k_2-k_1|\leq G-1;  \\
			2\bGamma,& \text{if}~|k_2-k_1|=0. \\
		\end{array}
		\right.
	\end{equation}
	Our theorem shows that the multiplier bootstrap-based testing statistic can capture the above covariance structure accurately in the sense that 
	\begin{equation*}
		\max_{G\leq k_1,k_2\leq n-G}\|\text{Cov}(\bT^{\bG}(k_1),\bT^{\bG}(k_2))-\text{Cov}(\bT^{b}(k_1),\bT^{b}(k_2))\|_{\infty}=o_p(1).
	\end{equation*}

	The above results allow us to use $\|\bT^{b}(k)\|_{\infty}$ to approximate $\max_{G\leq k\leq n-G}\|\bT^{G}(k)\|_{\infty}$. Lastly,  we analyze the estimation error 
	and combine results from \textbf{Steps (i)} - \textbf{(iii)} to 
	finish the proof of Theorem \ref{theorem: bootstrap validity}. The detailed proof is in  the supplementary materials.
	
	\vspace{0cm}
	\subsubsection{\textbf{Power under alternatives}}\label{sec: Power under alternatives}
	After analyzing the size, we discuss the power properties. To this end, we need some additional notations. Recall $\bPi_m:=\{j\in\{1,\ldots,d\}: \theta_{j}^{(m)}\neq 0\}$ as the set of coordinates having a change point at $\gamma_m$.  For each $m=1,\ldots,M_0$, define the signal jump vector 
	$\btheta^{(m)}:=(\theta^{(m)}_1,\ldots,\theta^{(m)}_d)^\top$. Define $\btheta^{\diamondsuit}=\max_{1\leq m\leq M_0}\|\btheta^{(m)}\|_{\infty}$ be the maximum signal jump among all change points and let $\Delta:=\min_{1\leq m\leq M_0}(\gamma_{m+1}-\gamma_m)$ be the minimum segmentation length between change points with $\gamma_{M_0+1}=n$. Next, we present some conditions for $\Hb_1$. 
	
	\noindent$\mathbf{Assumption ~(B.1)}$: For the bandwidth parameter $G$, we require 
	\begin{equation*}
		\Delta\geq 2G,~~\text{and}~~\dfrac{\log^2(nd)\log^2(Gd)}{G}\rightarrow 0 ~\text{as}~(n,d)\rightarrow\infty.
	\end{equation*}
	Additionally, suppose there exist some constants such that $\max_{1\leq m\leq M_0}\|\btheta^{(m)}\|_\infty\leq C^*$.
	
	\noindent$\mathbf{Assumption ~(B.2)}$: There exists some constants $D^\ell$ such that $\E|h(X_{\gamma_m,j},X_{\gamma_m+1,j})-\theta_j^{(m)}|^{2+\ell}\leq D^\ell$ with $\ell=1,2$ for all $j=1,\ldots,d$ and $m=1,\ldots,M_0$.
	
	\noindent$\mathbf{Assumption ~(B.3)}$: There exists some constant $D$ such that $\|h(X_{\gamma_m,j},X_{\gamma_m+1,j})-\theta_j^{(m)}\|_{\psi_1}\leq D$ for all $j=1,\ldots,d$ and $m=1,\ldots,M_0$.
	
	{Assumption ~(B.1)} requires that the bandwidth should not be too large in the sense that $G\leq \Delta/2$ such that there is at most one change point in any candidate moving window $[k-G+1,k+G]$. Moreover, the bandwidth is required to have an order $G\gg \log^2(nd)\log^2(Gd)$ so that the large sample property holds. See also \cite{eichinger2018mosum}. {Assumptions ~(B.2)} and (B.3) make some similar  moment conditions on the kernel  to {Assumptions ~(A.2)} and (A.3). Notably, our moment condition assumptions are primarily imposed on the kernel rather than the data itself. Therefore, for many heavy-tailed datasets, the theoretical framework of the model remains applicable as long as an appropriate kernel function is chosen.
	
	With the above notations and some regularity conditions,  the following Theorem \ref{theorem: power results} shows that we can reject the null hypothesis of no change point with overwhelming probability.
	\vspace{0cm}
	\begin{theorem}\label{theorem: power results}
		Suppose Assumptions $\mathbf{(B.1)}$ -- $\mathbf{(B.3)}$ hold. If $\btheta^{\diamondsuit}$ satisfies
		\begin{equation}\label{inequality: theoretical signal strengh}
			\sqrt{G}\times\btheta^{\diamondsuit}\geq C_0\Big(\sqrt{2\log(dn)}+\sqrt{2\log(\alpha^{-1})}\Big),
		\end{equation}
		under $\Hb_{1}$, we have
			$		\P(\Phi_{\alpha,W}=1)\rightarrow 1, \text{as}~n,d,B\rightarrow\infty,$
		where $C_0$ is a large enough universal positive constant not depending on $n$ or $d$.
	\end{theorem}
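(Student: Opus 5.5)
The plan is to show two things: under $\Hb_1$ the test statistic $W$ exceeds a deterministic level of order $\sqrt{2\log(dn)}+\sqrt{2\log(\alpha^{-1})}$ with probability tending to one, and the bootstrap critical value $\widehat c_{\alpha,W}$ stays below a constant multiple of this level with probability tending to one. The constant $C_0$ in \eqref{inequality: theoretical signal strengh} is then chosen large enough to separate the two.

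\textbf{Lower bound for $W$.} Choose the change point $m^*$ and coordinate $j^*$ that attain $\btheta^{\diamondsuit}=|\theta^{(m^*)}_{j^*}|$, and evaluate at $k=\gamma_{m^*}$. By (B.1), $\Delta\geq 2G$, so the window $[\gamma_{m^*}-G+1,\gamma_{m^*}+G]$ contains only this change point. The signal-noise representation \eqref{eq:Tj-signal-noise} then gives $\delta^{(m^*)}_{j^*}(\gamma_{m^*})=\sqrt{G}\,\theta^{(m^*)}_{j^*}$. Hence
\[
W\geq |T_{j^*}(\gamma_{m^*})|\geq \sqrt{G}\,\btheta^{\diamondsuit}-\bigl|\mathcal R^{(m^*)}_{j^*}(\gamma_{m^*})\bigr|.
\]
The remainder is handled by Hoeffding's decomposition. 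Its first-order part is $G^{-1/2}$ times a centered sum of $2G$ independent sub-exponential terms (by (B.3)), which is $O_p(1)$. Its degenerate part $G^{-3/2}\sum\sum g$ is $O_p(G^{-1/2})$ by a second-moment calculation using (B.2). The signal bound $\max_m\|\btheta^{(m)}\|_\infty\leq C^*$ from (B.1) keeps the centering terms controlled. Consequently $W\geq \sqrt{G}\btheta^{\diamondsuit}-O_p(1)$.

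\textbf{Upper bound for $\widehat c_{\alpha,W}$.} This is the main obstacle, because under $\Hb_1$ the kernel values entering $T^b_j(k)$ are not centered. Conditional on $\cX$, each $T^b_j(k)$ is Gaussian with mean zero and variance
\[
\sigma_{jk}^2=G^{-3}\sum_{t}\Bigl(\sum_{s} h(X_{t,j},X_{s,j})\pm\cdots\Bigr)^2,
\]
obtained by collecting the coefficient of each $e_t$. Since $e_t$ multiplies a row sum of $G$ kernel values, $\sigma^2_{jk}\lesssim G^{-3}\cdot 2G\cdot\max_t\bigl(\sum_s h\bigr)^2$. I would bound the row sums as $|\sum_s h(X_{t,j},X_{s,j})|\leq G|\theta|+|\text{centered sum}|$. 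The centered part is $O_p(\sqrt{G\log(nGd)})$ uniformly over $t,j$, by a Bernstein inequality under (B.3) together with a union bound. This gives
\[
\max_{j,k}\sigma^2_{jk}\lesssim (C^*)^2+\log(nGd)/G+O(1)=O_p(1),
\]
using (B.1) for the middle term. The maximum of $(n-2G+1)d$ conditionally Gaussian variables with bounded variances is then controlled by a Gaussian maximal inequality combined with the Borell--TIS concentration. This yields, with probability tending to one,
\[
\widehat c_{\alpha,W}\leq C_1\Bigl(\sqrt{2\log(dn)}+\sqrt{2\log(\alpha^{-1})}\Bigr).
\]
Finally, let $B\to\infty$ so that the empirical bootstrap quantile converges to the conditional quantile; the Dvoretzky--Kiefer--Wolfowitz inequality gives this.

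\textbf{Conclusion.} Take $C_0>C_1$. Then $\sqrt{G}\btheta^{\diamondsuit}-O_p(1)\geq C_0(\cdots)-O_p(1)$, and this exceeds $C_1(\cdots)$ with probability tending to one, because $\sqrt{2\log(dn)}\to\infty$. Therefore $\P(W\geq\widehat c_{\alpha,W})\to1$.

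The delicate point is that under $\Hb_1$ the uncentered signal inflates the bootstrap variance near change points. Assumption (B.1) caps this inflation at a constant, so it is absorbed into $C_1$ and hence into the choice of $C_0$.
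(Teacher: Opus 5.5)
Your architecture is the paper's. You lower-bound $W$ by $\|\bT(\gamma_m)\|_\infty$ at a true change point, and you upper-bound the bootstrap quantile through a uniform bound on the conditional variances $\sigma^2_{jk}$ of $T^b_j(k)$, a Gaussian maximal inequality and Borell--TIS. Your lower bound at one fixed pair $(m^*,j^*)$ with an $O_p(1)$ remainder is fine, and slightly sharper than the paper's uniform $C\sqrt{\log(nd)}$.

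\textbf{The gap is in the variance bound.} You claim the centered row sum $\sum_{s}\{h(X_{t,j},X_{s,j})-\theta\}$ is $O_p(\sqrt{G\log(nGd)})$ uniformly in $(t,j)$. This is false. The $G$ summands all share $X_{t,j}$, so they are neither independent nor centered at $\theta$ given $X_{t,j}$, and Bernstein's inequality does not apply. By Hoeffding's decomposition the row sum equals $G\theta+G\,h_{1,j}(X_{t,j})$ plus terms of order $\sqrt{G}$ up to logarithmic factors. For $h(x,y)=y-x$ the middle term is literally $-G(X_{t,j}-\E X_{t,j})$.

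Once that term is kept, your bound $\sigma^2_{jk}\lesssim G^{-3}\cdot 2G\max_t(\sum_s h)^2$ only gives $\max_{j,k}\sigma^2_{jk}\lesssim\max_{t,j}\{\theta+h_{1,j}(X_{t,j})\}^2$. Under (B.3) this maximum is of order $\log^2(nd)$, and it is still of order $\log(nd)$ for Gaussian errors with the linear kernel, so it is not $O(1)$. Your critical-value bound then picks up an extra factor of order up to $\log(nd)$ and no longer matches the signal condition \eqref{inequality: theoretical signal strengh}.

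\textbf{The fix is to keep the average over $t$ rather than take the maximum over rows.} The paper applies Cauchy--Schwarz, $(\sum_{s}h)^2\le G\sum_{s}h^2$, to get
$\sigma^2_{jk}\le 2G^{-2}\sum_{t}\sum_{s}h^2(X_{t,j},X_{s,j})$.
This is a two-sample average of $h^2$. By (B.2) its mean is bounded by a constant multiple of $(D+\max_{m,j}|\theta^{(m)}_j|)^2$, splitting the window at $\gamma_m$ when it straddles a change point. A concentration inequality for U-statistics with the heavy-tailed kernel $h^2$ (Lemma A.4 of \cite{yu2022robust}), together with a union bound over $(j,k)$, then shows the fluctuation is $O(\max\{\sqrt{\log(nd)/G},\,\log^2(nd)\log^2(Gd)/G\})=o(1)$ under (B.1).

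Equivalently, you could write the row sum as $G\{\theta+h_{1,j}(X_{t,j})\}$ plus a conditionally centered part, and concentrate the average $G^{-1}\sum_t h_{1,j}(X_{t,j})^2$ instead of its maximum. Either way you obtain $\max_{j,k}\sigma^2_{jk}\le A_0^2$ with high probability, and the rest of your argument, including the DKW step as $B\to\infty$, then goes through.
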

	\vspace{0cm}
	\indent Theorem \ref{theorem: power results} demonstrates that with probability tending to one, our proposed new test can detect the existence of multiple change points as long as the corresponding signal to noise ratio satisfies (\ref{inequality: theoretical signal strengh}). Specifically, considering Assumption (B.1) with $\Delta\geq 2G$, {if the bandwith parameter is chosen properly with $\Delta \approx G$} it mainly requires that  $\max_{1\leq m\leq M_0}\|\btheta^{(m)}\|_{\infty}\geq \sqrt{{\log(nd)}/{\Delta}}$. Hence,  with a larger signal jump and larger segmentation length, it is more likely to trigger a rejection of  the null hypothesis. From the above conditions, we can also observe that to detect multiple change points, it is sufficient for the signal-to-noise ratio condition of just one of the change points to meet the requirements. This is fundamentally different from the conditions for change point estimation discussed later.

	\vspace{0cm}
	\subsection{\textbf{Theory for initial estimation}}
	In the following, we  provide consistency results of the initial estimated change point number and locations. To that end, we need the following conditions on the signal strengh $\btheta_{\diamondsuit}=\min_{1\leq m\leq M_0}\|\btheta^{(m)}\|_{\infty}$ {with $\btheta^{(m)}:=\E\bh(\bX_{\gamma_m},\bX_{\gamma_m+1})$.}
	
	\noindent$\mathbf{Assumption ~(C.1)}$  Assume $G\times\btheta_{\diamondsuit}^2\gg \log(nd)$. Based on the assumptions, we have the following result.
	\vspace{0cm}
	\begin{theorem} \label{theory: initial estimators}
		(\textbf{Consistency of initial estimators}) Suppose Assumptions \textbf{(B.1)-(B.3)} and \textbf{(C.1)} hold. Then we have 
		
		(1) $\mathbb{P}\{\hat{M}_0={M}_0\}\to 1$, as $n,d \to \infty$;
		
		(2)   $|\hat{\gamma}_m-{\gamma}_m|=O_P\Big(\dfrac{\log(nd)}{\|\btheta^{(m)}\|_{\infty}^2}\Big)$, for $m=1,\ldots,M_0$.
	\end{theorem}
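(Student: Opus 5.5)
Throughout, I work on a single high-probability event $\mathcal E_n$. I use the signal--noise representation \eqref{eq:Tj-signal-noise}: under (B.1), every window $[k-G+1,k+G]$ contains at most one change point. So for $k\in(\gamma_m-G,\gamma_m+G]$ we have $T_j(k)=\delta_j^{(m)}(k)+\mathcal R_j^{(m)}(k)$, with $\delta_j^{(m)}(k)=G^{-1/2}(G-|k-\gamma_m|)\theta_j^{(m)}$ up to the off-by-one convention. For $k$ at distance at least $G$ from every change point, $\delta\equiv 0$.

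\textbf{Step 1: uniform noise bound.} Using Hoeffding's decomposition (now around a possibly two-regime window), I write the noise as a linear term and a degenerate term. The linear term is $G^{-3/2}\sum_t c_t(k)\,\tilde h_{1,j}(X_{t,j})$ with $|c_t(k)|\le G$. The degenerate term is $G^{-3/2}\sum\sum g$. The linear part is a normalized sum of independent sub-exponential variables by (B.3). Bernstein's inequality plus a union bound over $nd$ pairs gives $\max_{k,j}|\cdot|=O_p(\sqrt{\log(nd)}+\log(nd)/\sqrt G)=O_p(\sqrt{\log(nd)})$. For the degenerate part I use a decoupling/maximal inequality for degenerate two-sample U-statistics, as in the proof of Theorem~\ref{theorem: gaussian approximation}, which gives $O_p(\log(nd)/\sqrt G)$. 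Hence $\mathcal E_n=\{\max_{k,j}|\mathcal R_j(k)|\le C\sqrt{\log(nd)}\}$ has probability tending to one.

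I also need the threshold: $\hat c_{\alpha,W}=O_p(\sqrt{\log(nd)})$. This holds because, conditionally on $\cX$, $T^b_j(k)$ is Gaussian with variance bounded by a sub-exponential-type empirical second moment of $h$. That moment is $O_p(1)$ uniformly in $(k,j)$ by (B.2)--(B.3), even under $\Hb_1$, since $\theta$ is bounded by $C^*$. A Gaussian maximal inequality then gives the bound. Also $\hat c_{\alpha,W}\ge c\sqrt{\log(nd)}$ is not needed; only positivity is.

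\textbf{Step 2: number of change points.} Away from all change points, $\|\bT(k)\|_\infty\le C\sqrt{\log(nd)}$. I must show that this falls below $\hat c_{\alpha,W}$. For this I rely on a lower bound $\hat c_{\alpha,W}\ge$ the null-type quantile, using the bootstrap approximation restricted to null stretches (Theorem~\ref{theorem: gaussian approximation}). Alternatively, I use condition (3) of \eqref{equ: multiple cpt principle} to discard short spurious excursions.

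Near $\gamma_m$, take $j^*=\arg\max_j|\theta^{(m)}_j|$. Then $|T_{j^*}(k)|\ge G^{-1/2}(G-|k-\gamma_m|)\|\btheta^{(m)}\|_\infty - C\sqrt{\log(nd)}$. By (C.1) this exceeds $\hat c_{\alpha,W}$ for all $|k-\gamma_m|\le (1-\epsilon)G$. So each change point produces exactly one exceedance interval, of length at least $2(1-\epsilon)G\ge\eta G$, contained in $(\gamma_m-G,\gamma_m+G)$. Because $\Delta\ge 2G$, distinct change points produce distinct intervals. Hence $\hat M_0=M_0$ on $\mathcal E_n$, and each $\hat\gamma_m$ lies within $G$ of $\gamma_m$.

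\textbf{Step 3: localization rate.} This is the main obstacle, since the crude bound only gives $|\hat\gamma_m-\gamma_m|=o(G)$. Set $\theta=\|\btheta^{(m)}\|_\infty$. Since $\hat\gamma_m$ maximizes, $\|\bT(\hat\gamma_m)\|_\infty\ge\|\bT(\gamma_m)\|_\infty\ge\sqrt G\,\theta-|\mathcal R_{j^*}(\gamma_m)|$.

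For $|k-\gamma_m|=r$, the term $\max_j|\delta_j(k)|$ equals $\sqrt G\,\theta-r\theta/\sqrt G$. The key is to control increments rather than levels: I bound $\max_j|\mathcal R_j(k)-\mathcal R_j(\gamma_m)|$ for $|k-\gamma_m|\le r$. The linear part of this increment involves only about $r$ observations, each with weight $G^{-1/2}$. A Bernstein bound with a union bound over $d$ and over dyadic shells in $r$ then gives $O_p(\sqrt{r\log(nd)/G})$, with the degenerate part negligible.

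One subtlety is that the maximizing coordinate may switch. I handle this by noting that coordinates with $|\theta_j|<\theta/2$ cannot attain the max near $\gamma_m$ on $\mathcal E_n$, by (C.1). For the remaining coordinates the drift decrease is at least $r\theta/(2\sqrt G)$.

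The resulting inequality is $r\theta/(2\sqrt G)\le C\sqrt{r\log(nd)/G}$, so $r\lesssim\log(nd)/\theta^2$, which is the claimed rate in (2). I would make this rigorous with a peeling argument over $r\in[2^i L,2^{i+1}L)$ for $L=C\log(nd)/\theta^2$, summing the failure probabilities.
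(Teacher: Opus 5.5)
Your Step~3 is essentially the paper's proof of (2). The paper takes the coordinate $j^*$ attaining $\|\bT(\hat\gamma_m)\|_\infty$, shows $\theta^{(m)}_{j^*}/\|\btheta^{(m)}\|_\infty\to 1$, fixes signs, and localizes $|\hat\gamma_m-\gamma_m|\le G/2$. It then compares $T_{j^*}(\gamma_m)-T_{j^*}(\hat\gamma_m)=A_1+A_2-A_3-A_4$, with drift $r\theta/\sqrt G$ against noise $\sqrt{r\log(nd)/G}$. Your $|\theta_j|\ge\theta/2$ restriction and dyadic peeling are tidier than the paper's use of fixed-index bounds at the random $\hat\gamma_m$. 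One correction: in a two-regime window the increment's linear part is not only about $r$ observations with weight $G^{-1/2}$. All $\approx G$ window observations also get coefficient changes of order $r/G^{3/2}$ (the paper's $A_{1,1}$, $A_{4,2}$). These are $O_p(r\sqrt{\log(nd)}/G)\le O_p(\sqrt{r\log(nd)/G})$, so the rate survives.

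The genuine gap is in Step~2, at the point you flag and then leave open. You must show that, with probability tending to one, the null stretches $\{k:\min_m|k-\gamma_m|\ge G\}$ contain no exceedance interval of length $\ge\eta G$. This is also what confines each interval to $(\gamma_m-G,\gamma_m+G)$ and keeps neighbouring intervals from merging, so $\hat M_0=M_0$ rests on it.

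Step~1 only gives $\max|\mathcal R_j(k)|\le C\sqrt{\log(nd)}$ with an unspecified Bernstein constant, while $\hat c_{\alpha,W}$ is itself of order $\sqrt{\log(nd)}$. Nothing compares the two, and your remark that only positivity of $\hat c_{\alpha,W}$ is needed is wrong; a lower bound is exactly what is required. Route~(a) cannot work for fixed $\alpha$. $\hat c_{\alpha,W}$ is approximately a $(1-\alpha)$-quantile of a maximum of essentially the same noise process, so a quantile comparison yields at best $\P(\text{null-stretch exceedance})\le\alpha+o(1)$. Indeed, under $\Hb_0$, Corollary~\ref{corollary: size} says an exceedance occurs with probability tending to $\alpha$, so only the length condition~(3) can give $\hat M_0=M_0$ with probability tending to one. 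Route~(a) also borrows Theorem~\ref{theorem: gaussian approximation}, whose hypotheses (A.1)--(A.4) are not assumed here.

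The paper does not fill this gap either: in the proof of $\cA_1$ it bounds the noise by $C_1\sqrt{\log(nd)}$ and simply asserts $C_1\sqrt{\log(nd)}\le D_+$. To close it you need input beyond (B.1)--(B.3) and (C.1). One option is to let $\alpha=\alpha_n\to0$ and compare $\hat c_{\alpha_n,W}$ with the null-stretch maximum via a Gaussian/bootstrap approximation. The other is to make route~(b) quantitative, in three parts:
\begin{itemize}
\item a non-degeneracy condition ensuring $\hat c_{\alpha,W}$ exceeds the noise scale by a diverging factor $u$;
\item an excursion bound: by \eqref{equ: covariance}, $T_j(k)$ and $T_j(k+\ell)$ have correlation $1-3\ell/(2G)$, so an excursion above $u$ (in standard-deviation units) heuristically lasts $O(G/u^2)\ll\eta G$;
\item an argument ruling out different coordinates chaining their excursions across an interval of length $\eta G$.
\end{itemize}
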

	
	Theorem \ref{theory: initial estimators} (1) shows that the initial procedure can correctly identify the change point number $M_0$.  This is important for the refining procedure in the following step. More importantly,  Theorem \ref{theory: initial estimators} (2) demonstrates that the initial estimator $\hat{\gamma}_m$ has an estimation error bound of $O_p(\log(nd)/\|\btheta^{(m)}\|_{\infty}^2)$, which is consistent. 

	\vspace{-0.3cm}
	\subsection{\textbf{Theory for projection-based refinement}}\label{sec: theory for refinement}
	Next, we provide theorem for the refined procedure.  Recall $\bPi_{m}=\{j\in\{1,\cdots,d\}: \theta_j^{(m)}\neq 0 \}$ as the coordinates having a change point at $\gamma_m$ and  $s_m=|\Pi_{m}|$ is the total number. Without loss of generality, we assume $\bPi_{m}=\{1,\ldots,s_m \}$.
	Recall $\hat{\bPi}_m$ in (\ref{equ: estimation for the cpt set}) as the estimation for $\bPi_{m}$. The next theorem shows that  we can recover ${\bPi}_m$ with probability tending to one.
	\vspace{0cm}
	\begin{theorem}\label{theorem: support recovery}
		{	Suppose Assumptions $\mathbf{(B.1)}$--$\mathbf{(B.3)}$ and $\mathbf{(C.1)}$ hold. } Furthermore, for the threshold parameter $w^{+}$, we require
		\begin{equation*}
			\begin{array}{ll}
				w^+\leq \min\limits_{1\leq m\leq M_0}\min\limits_{1\leq j\leq d}|\theta_{j}^{(m)}|-C_2D\Big(\sqrt{\dfrac{\log(nd)}{G}}+\dfrac{\log(Gd)\log(nd)}{G}\Big)\\
				\text{and}~~	w^+\geq C_2D\Big(\sqrt{\dfrac{\log(nd)}{G}}+\dfrac{\log(Gd)\log(nd)}{G}\Big),
			\end{array}
		\end{equation*} 
		for some $C_2>0$.  Then we have 
			$	\P(\hat{\bPi}_m={\bPi}_{m})\rightarrow 1.$
	\end{theorem}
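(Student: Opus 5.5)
The argument reduces exact support recovery to a uniform sup-norm bound on the estimated signal jumps. Let $\epsilon_{n,d}:=C_2D\big(\sqrt{\log(nd)/G}+\log(Gd)\log(nd)/G\big)$. On the event $\mathcal E_n:=\{\max_{1\le m\le M_0}\max_{1\le j\le d}|\widehat\theta_j^{(m)}-\theta_j^{(m)}|\le \epsilon_{n,d}\}$, the two conditions on $w^+$ give $\widehat\bPi_m=\bPi_m$ for every $m$. Indeed, for $j\notin\bPi_m$ we have $|\widehat\theta_j^{(m)}|\le\epsilon_{n,d}\le w^+$. For $j\in\bPi_m$ we have $|\widehat\theta_j^{(m)}|\ge|\theta_j^{(m)}|-\epsilon_{n,d}\ge w^+$, where the minimum in the statement is read over active coordinates. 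Boundary ties are handled by taking $C_2$ slightly larger in the bound than in the threshold. It therefore suffices to show $\P(\mathcal E_n)\to1$.

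\textbf{Step 1: the neighborhoods are pure.} By Theorem~\ref{theory: initial estimators}, $\P(\widehat M_0=M_0)\to1$ and $|\widehat\gamma_m-\gamma_m|=O_P(\log(nd)/\|\btheta^{(m)}\|_\infty^2)$. Under (C.1) this error is $o_P(G)$, uniformly in $m$ after a union bound as in that proof. Since $\Delta\ge2G$, the trimmed sets satisfy the following with probability tending to one:
\begin{itemize}
\item $\mathcal L_m\subset(\gamma_{m-1},\gamma_m]$ and $\mathcal R_m\subset(\gamma_m,\gamma_{m+1}]$;
\item $|\mathcal L_m|,|\mathcal R_m|\ge c\,G$.
\end{itemize}
The trimming by $\rho D_m^\pm\gtrsim\rho G$ is what absorbs the $o(G)$ error. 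A technical point is that $\mathcal L_m,\mathcal R_m$ are random and data dependent. I would handle this by working on the event above and taking a supremum over all deterministic index pairs $(\mathcal L,\mathcal R)$ of the admissible form. These are intervals inside the two regimes with lengths at least $cG$, indexed by at most polynomially many (in $n$) endpoint choices, which costs only a $\log n$ factor absorbed into $\log(nd)$.

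\textbf{Step 2: a uniform concentration bound.} For deterministic intervals $\mathcal L,\mathcal R$ in adjacent regimes, every pair $(t_1,t_2)$ satisfies $\E h(X_{t_1,j},X_{t_2,j})=\theta_j^{(m)}$. Apply the Hoeffding decomposition with the two different marginals:
\[
\widehat\theta_j-\theta_j^{(m)}=\frac{1}{|\mathcal L|}\sum_{t_1\in\mathcal L}h_{1,j}^{(m)}(X_{t_1,j})+\frac{1}{|\mathcal R|}\sum_{t_2\in\mathcal R}h_{2,j}^{(m)}(X_{t_2,j})+\frac{1}{|\mathcal L||\mathcal R|}\sum_{t_1,t_2}g_j^{(m)}(X_{t_1,j},X_{t_2,j}).
\]
The linear terms are averages of $\gtrsim G$ independent centered variables. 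Their $\psi_1$ norms are bounded via (B.3), since conditional expectation does not increase $\psi_1$ norm up to constants, and their variances are bounded via (B.2). Bernstein's inequality plus a union bound over $j\le d$, $m\le M_0$ and the admissible intervals then gives the $\sqrt{\log(nd)/G}$ term.

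The degenerate term is the main obstacle. I would use a decoupled maximal inequality for degenerate two-sample U-statistics with sub-exponential kernels, as in Chernozhukov et al.\ and in the residual bound of Step (i) in the proof of Theorem~\ref{theorem: gaussian approximation}. Conditioning on the left sample, it becomes a sum of independent centered terms in $t_2$. Bernstein again plus a maximal inequality over $j$, with $\psi_1$ truncation producing a $\log(Gd)$ factor, yields $O_P(\log(Gd)\log(nd)/G)$.

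Combining the linear and degenerate bounds gives $\P(\mathcal E_n)\to1$ for $C_2$ large, and the theorem follows.
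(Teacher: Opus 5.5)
Your argument matches the paper's: reduce to the uniform bound $\max_{m,j}|\widehat\theta_j^{(m)}-\theta_j^{(m)}|\lesssim D\big(\sqrt{\log(nd)/G}+\log(Gd)\log(nd)/G\big)$, then read off both inclusions from the two conditions on $w^+$, with the minimum taken over active coordinates as you note. One correction: the constant in the threshold must be at least the constant in the concentration bound (strictly larger to handle ties), not the reverse as you wrote. The paper obtains the uniform bound by citing Lemma A.3 of \cite{yu2022robust} and taking a union over $m$. You instead derive it from the Hoeffding decomposition, and you also control the data-dependent neighborhoods $\mathcal L_m,\mathcal R_m$ (purity via Theorem~\ref{theory: initial estimators} plus a union over deterministic admissible interval pairs), a point the paper's proof passes over.
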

	
	The next theorem shows that after refinemet,  our proposed U-PRA can have an optimal convergence rate. Thanks to this result, we are able to construct valid confidence intervals for the positions of  change points. To that end, we need more notations. For each $m=1,\ldots,M_0$ and $j=1,\ldots,d$, define the centralized kernel $\tilde{h}(x,y)=h(x,y)-\theta_{j}^{(m)}$ and let 
	\begin{equation*}
		\begin{array}{ll}
			h_{1,j}(X_{\gamma_m,j}):=\E [\tilde{h}(X_{\gamma_m,j},X_{\gamma_m+1,j}|X_{\gamma_m,j})],~~\underline{h}_{1,j}(X_{\gamma_m,j}):=\E[ h(X_{\gamma_m-1,j},X_{\gamma_m,j}|X_{\gamma_m,j})],\\ h_{2,j}(X_{\gamma_m+1,j}):=\E [\tilde{h}(X_{\gamma_m,j},X_{\gamma_m+1,j}|X_{\gamma_m+1,j})],~~ \overline{h}_{1,j}(X_{\gamma_m+1,j}):=\E [h(X_{\gamma_m+1,j},X_{\gamma_m+2,j}|X_{\gamma_m+1,j})],\\
		\end{array}
	\end{equation*}
	with $\theta_j^{(m)}:=\E[h(X_{\gamma_m,j},X_{\gamma_m+1,j}]$ being the signal jump of the $j$-th coordinate at $\gamma_m$. Note that by above definitions, $h_{1,j}(X_{\gamma_m,j})$ and $h_{2,j}(X_{\gamma_m+1,j})$ are the Hoeffding's projections for the two-sample kernel $h(x,y)$ with heterogeneous data and $\underline{h}_{1,j}(X_{\gamma_m,j})$, $\overline{h}_{1,j}(X_{\gamma_m+1,j})$ are the Hoeffding's projections for the two-sample kernel before and after the change point with homogeneous data. Moreover, define the four $\RR^{s_m}$-dimensional  vectorized version of Hoeffding's projections as follows
	\begin{equation*}
		\begin{array}{ll}
			\bh_1(\bX_{\gamma_m})=(h_{1,1}(X_{\gamma_m,1}),\ldots,h_{1,s_m}(X_{\gamma_m,s_m}))^\top,
			\bh_2(\bX_{\gamma_m+1})=(h_{2,1}(X_{\gamma_m+1,1}),\ldots,h_{2,s_m}(X_{\gamma_m+1,s_m}))^\top,\\
			\overline{\bh}_1(\bX_{\gamma_m+1})=(\overline{h}_{1,1}(X_{\gamma_m+1,1}),\ldots,\overline{h}_{1,s_m}(X_{\gamma_m+1,s_m}))^\top,
			\underline{\bh}_1(\bX_{\gamma_m})=(\underline{h}_{1,1}(X_{\gamma_m,1}),\ldots,\underline{h}_{1,s_m}(X_{\gamma_m,s_m}))^\top,\\
		\end{array}
	\end{equation*}
	and define the following four $\RR^{s_m\times s_m}$-dimensional covariance matrices as 
	\begin{equation}\label{equ: four covs}
		\begin{array}{ll}
			\bSigma_1^{(m)}=\text{Cov}(\bh_1(\bX_{\gamma_m})),\quad \bSigma_2^{(m)}=\text{Cov}(\bh_2(\bX_{\gamma_m+1})),\\ \bSigma_3^{(m)}=\text{Cov}(\overline{\bh}_1(\bX_{\gamma_m+1})-\bh_2(\bX_{\gamma_m+1})), \quad
			\bSigma_4^{(m)}=\text{Cov}(\underline{\bh}_1(\bX_{\gamma_m})+\bh_1(\bX_{\gamma_m})).
		\end{array}
	\end{equation}
	To present the change point inference results, we need the following additional assumption.
	
	\noindent{\bf{Assumption (C.2):}} We require there exists constants $\kappa_1>0$ and $\kappa_2>0$ such that
	\begin{equation}
		\begin{array}{ll}
			\kappa_1\leq \min(\lambda_{\min}(\bSigma_1^{(m)}),\lambda_{\min}(\bSigma_2^{(m)}),\lambda_{\min}(\bSigma_3^{(m)}),\lambda_{\min}(\bSigma_4^{(m)}),\\ \max(\lambda_{\max}(\bSigma_1^{(m)}),\lambda_{\max}(\bSigma_2^{(m)}),\lambda_{\max}(\bSigma_3^{(m)}),\lambda_{\max}(\bSigma_4^{(m)})\leq \kappa_2, ~\text{for}~m=1,\ldots,M_0.
		\end{array}
	\end{equation}
	With the above notations, we are ready to give the change point inference results.
	\vspace{0cm}
	\begin{theorem}\label{theorem: refined estimation}
		Suppose Assumptions $\mathbf{(B.1)}$--$\mathbf{(B.3)}$ and $\mathbf{(C.1)}-\mathbf{(C.2)}$ hold.  Then we have 
		\begin{equation}\label{equation: refined estimation}
			\begin{array}{ll}
				(1) |\tilde{\gamma}_m-{\gamma}_m|=O_P\big(\dfrac{1}{\|\btheta^{(m)}\|^2}\big), \text{for}~ m=1,\ldots,M_0; \\
				(2) \|\btheta^{(m)}\|^2(\tilde{\gamma}_m-\gamma_m)\Rightarrow \argmax_{s\in \RR} Z^{(m)}(s), \text{for}~m=1,\ldots,M_0,\\
			\end{array}
		\end{equation}
		with
		\[
		Z^{m}(s)=
		\begin{cases} 
			-s+\sigma_{1,*}^{(m)}W^{(m)}_{1}(s)+ \sigma_{3,*}^{(m)}W^{(2)}_{2}(s)+\sigma_{2,*}^{(m)}W^{(m)}_{3}(s), & s >0, \\
			0, & s = 0, \\
			s+\sigma_{1,*}^{(m)}W^{(m)}_{1}(s)+ \sigma_{4,*}^{(m)}W^{(m)}_{2}(s)+\sigma_{2,*}^{(m)}W^{(m)}_{3}(s), & s <0,
		\end{cases}
		\]
		where $\{W^{(m)}_{1}(s)\}$, $\{W^{(m)}_{2}(s)\}$, $\{W^{(m)}_{3}(s)\}$  are standard Brownian motion processes on $(-\infty,\infty)$ which are independent with each other and $\sigma_{j,*}^{(m)}; j=1,2,3,4$  are defined as 
		\begin{equation}\label{equ: sigma1-star-4-star}
			\begin{array}{ll}
				\sigma_{1,*}^{(m)}=\dfrac{((\btheta^{(m)})^\top	\bSigma_1^{(m)}\btheta^{(m)})^{1/2}}{\|\btheta^{(m)}\|}, \sigma_{2,*}^{(m)}=\dfrac{((\btheta^{(m)})^\top	\bSigma_2^{(m)}\btheta^{(m)})^{1/2}}{\|\btheta^{(m)}\|},\\ \sigma_{3,*}^{(m)}=\dfrac{((\btheta^{(m)})^\top	\bSigma_3^{(m)}\btheta^{(m)})^{1/2}}{\|\btheta^{(m)}\|}, \sigma_{4,*}^{(m)}=\dfrac{((\btheta^{(m)})^\top	\bSigma_4^{(m)}\btheta^{(m)})^{1/2}}{\|\btheta^{(m)}\|}.
			\end{array}
		\end{equation}
		Moreover, we can prove that  $\|\btheta^{(m)}\|^2(\tilde{\gamma}_m-\gamma_m)$'s \text{are asymptotically independent}.
	\end{theorem}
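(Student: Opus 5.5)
We would first reduce to an oracle problem and then run the standard argmax-continuous-mapping argument. By Theorem~\ref{theory: initial estimators} we have $\hat M_0=M_0$ and $|\hat\gamma_m-\gamma_m|=o_P(G)$. By Theorem~\ref{theorem: support recovery} we have $\hat\bPi_m=\bPi_m$, each with probability tending to one. So we work on this event. There $\mathcal L_m$ and $\mathcal R_m$ lie entirely in segments $m$ and $m+1$, and $\mathcal K_m$ contains only $\gamma_m$ as a change point.

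Next we control the direction estimate $\hat\btheta^{(m)}$. Write $\hat\btheta^{(m)}_{\bPi_m}=\btheta^{(m)}_{\bPi_m}+\bepsilon$. Hoeffding's decomposition and (B.2)--(B.3) give $\|\bepsilon\|_\infty=O_P(\sqrt{\log(nd)/G})$. There is a subtlety: $\hat\btheta$ and $\widetilde T_j(k)$ are built from overlapping data. We would handle this by conditioning, or by noting that the dependence enters only through $O(|k-\gamma_m|)$ observations near $\gamma_m$ and is of smaller order. Concretely, we would show that replacing $\hat\btheta^{(m)}$ by $\btheta^{(m)}$ in the contrast changes it by a negligible amount, uniformly over the relevant $k$.

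The core object is the centered contrast
\[
Q(k)=\sum_{j\in\bPi_m}\theta_j^{(m)}\{T_j(k)-T_j(\gamma_m)\}\sqrt G .
\]
Using the signal formula $\delta_j^{(m)}(k)$ from \eqref{eq:Tj-signal-noise}, its drift is $-|k-\gamma_m|\,\|\btheta^{(m)}\|^2$. Its noise is obtained by expanding the double sum over $t_1\in[k-G+1,k]$, $t_2\in[k+1,k+G]$ via Hoeffding's decomposition with heterogeneous projections. When $k$ moves from $\gamma_m$ to $\gamma_m+s$ with $s>0$, three groups of indices change role, and these produce the noise terms.
\begin{itemize}
\item The observations at $\gamma_m+1,\ldots,\gamma_m+s$ switch from the right window to the left. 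Their projection changes from $h_2$ to $\overline h_1$ type terms, which gives the $\bSigma_3$ increments.
\item The boundary observations entering and leaving the windows, at distance $G$, contribute $\bh_1$ and $\bh_2$ type sums over $s$ terms. These give the $\sigma_{1,*}W_1$ and $\sigma_{2,*}W_3$ parts.
\end{itemize}
The case $s<0$ is symmetric and uses $\underline h_1+h_1$, i.e.\ $\bSigma_4$. The degenerate parts $g$ contribute at most $O_P(\|\btheta\|\sqrt{s}\,\log/\sqrt G)$, which is negligible.

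For part (1) we apply a peeling argument on shells $2^{i}\le\|\btheta^{(m)}\|^2|k-\gamma_m|<2^{i+1}$, as in \cite{chen2022InferenceBreakpointsHighdimensionalb}. On each shell the drift is of order $2^i$, while the noise has variance of order $\kappa_2\|\btheta\|^2|k-\gamma_m|\asymp 2^i$ by (C.2). A maximal inequality (Kolmogorov/Doob for the partial-sum parts) shows that the noise exceeds the drift with probability at most $C2^{-i}$. Summing over $i\ge L$ gives tightness. Far from $\gamma_m$, with $|k-\gamma_m|\ge cG$, we use the uniform $O_P(\sqrt{\log(nd)})$ bound together with (C.1).

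For part (2) we set $k=\gamma_m+\lfloor s/\|\btheta\|^2\rfloor$. By Lyapunov's condition (B.2) and the functional CLT on compacts, the rescaled process $Q$ converges to $Z^{(m)}(s)$; the variance normalisations produce exactly $\sigma_{i,*}^{(m)}$. The three Brownian motions are independent because they are built from disjoint blocks of independent observations. Part (1) gives tightness, and the limit almost surely has a unique maximum, so the argmax continuous mapping theorem applies. If $\|\btheta\|^2$ does not tend to zero the argument becomes a discrete random walk, so we would implicitly assume $\|\btheta^{(m)}\|\to0$. Asymptotic independence across $m$ holds because the processes for distinct $m$ use disjoint windows, since $\Delta\ge2G$.

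The main obstacle is the plug-in step: proving that $\sum_j(\hat\theta_j-\theta_j)\{T_j(k)-T_j(\gamma_m)\}$ is $o_P(1)$ uniformly over $|k-\gamma_m|\le C/\|\btheta\|^2$ despite the data reuse. We would first try to bound it by $\|\bepsilon\|_2$ times a uniform bound on the centered increments, which needs $s_m\log(nd)/G\to0$. If that fails, we would fall back on sample splitting of $\mathcal L_m$ and $\mathcal R_m$ away from the search region.
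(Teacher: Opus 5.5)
Your route is the paper's. On the event where $\hat M_0=M_0$ and $\hat\bPi_m=\bPi_m$, with $o(G)$-accurate initial estimators, you Hoeffding-decompose $\tilde T(\gamma_m+r)-\tilde T(\gamma_m)$ into three parts: the drift $-|r|\,\|\btheta^{(m)}\|^2$, three partial sums over short disjoint blocks near $\gamma_m-G$, $\gamma_m$ and $\gamma_m+G$ (giving $\sigma_{1,*}$, then $\sigma_{3,*}$ or $\sigma_{4,*}$, then $\sigma_{2,*}$), and negligible remainders. You then rescale $r=s/\|\btheta^{(m)}\|^2$ and apply the FCLT. Where you differ, you are more careful than the paper. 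Its proof of (1) applies CLT-type bounds at the random maximizer $r$, which is not uniform in $r$, so your peeling is the right repair. The paper also leaves the argmax continuous-mapping step and the requirement $\|\btheta^{(m)}\|\to 0$ unstated. Two points still need fixing.

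\textbf{Search region.} The paper's proof maximizes over $|k-\hat\gamma_m|\le G/4$. All windows $[k-G+1,k+G]$ then lie in $\gamma_m\pm(5G/4+o_P(G))$ and, since $\Delta\ge 2G$, contain no other break.

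With $\mathcal K_m$ as defined in Section~2, which reaches to within $\rho D_m^{\pm}$ of $\hat\gamma_{m\pm1}$, windows around $k$ near the ends of $\mathcal K_m$ do contain $\gamma_{m\pm1}$. There the projected drift is $\langle\btheta^{(m)},\btheta^{(m+1)}\rangle(G-|k-\gamma_{m+1}|)$, which can beat the value $G\|\btheta^{(m)}\|^2$ at $\gamma_m$. For example, take $\btheta^{(m+1)}=2\btheta^{(m)}$, $\gamma_{m+1}-\gamma_m=2G$ and $k=\gamma_m+1.7G$; the drift there is $1.4\,G\|\btheta^{(m)}\|^2$. So your ``far from $\gamma_m$, use (C.1)'' step fails.

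Restrict to $|k-\hat\gamma_m|\le G/4$; peeling over dyadic shells up to $G/4$ then covers everything. In the peeling, use a maximal inequality for the scalar projected process, whose increment variance is at most $\kappa_2\|\btheta^{(m)}\|^2$ by (C.2). Do not use the coordinatewise $O_P(\sqrt{\log(nd)})$ bound: after weighting by $\theta^{(m)}_j$ it costs $\|\btheta^{(m)}\|_1$ rather than $\|\btheta^{(m)}\|_2$, and (C.1) does not control that.

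\textbf{Plug-in step.} The paper does not prove this at all; it simply replaces $\hat\btheta^{(m)}$ by $\btheta^{(m)}$ on the good event. Your Cauchy--Schwarz bound is too crude. At $|r|\asymp\|\btheta^{(m)}\|^{-2}$ it gives $\|\bepsilon\|_2\sqrt{s_m|r|}\asymp s_m\sqrt{\log(nd)}/(\sqrt G\,\|\btheta^{(m)}\|)$. That requires $G\|\btheta^{(m)}\|^2\gg s_m^2\log(nd)$, not $s_m\log(nd)/G\to0$.

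Instead, split the error as $-r\langle\bepsilon,\btheta^{(m)}\rangle+\langle\bepsilon,N(r)\rangle$, where $N(r)$ is the noise increment.
\begin{itemize}
\item The first term is a projected two-sample U-statistic of order $\|\btheta^{(m)}\|/\sqrt G$. It only rescales the drift by $1+O_P((\sqrt G\|\btheta^{(m)}\|)^{-1})$.
\item For the second term, the block at $\gamma_m$ is trimmed out of $\mathcal L_m\cup\mathcal R_m$. The blocks at $\gamma_m\pm G$ enter $\hat\btheta^{(m)}$ with weight $O(1/G)$ per observation. A leave-block-out decomposition with $\lambda_{\max}\le\kappa_2$ gives a fluctuation $O_P(\sqrt{|r|s_m/G})$ and a self-interaction bias $O_P(|r|s_m/G)$.
\end{itemize}
Both are $o_P(1)$ for $|r|\lesssim\|\btheta^{(m)}\|^{-2}$ once $G\|\btheta^{(m)}\|^2\gg s_m$. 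This follows from $\min_{j\in\bPi_m}|\theta^{(m)}_j|\gtrsim\sqrt{\log(nd)/G}$, which is already needed for $\hat\bPi_m=\bPi_m$. So sample splitting is unnecessary, and it would change the estimator anyway.

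Finally, (B.2) is coordinatewise, so the Lindeberg condition for the projected summands $\btheta^\top\bh(\cdot)/\|\btheta\|$ needs something extra. Minkowski's inequality gives it if $s_m\|\btheta^{(m)}\|\to0$. The paper is silent on this too.
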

	\vspace{0cm}
	
	Theorem \ref{theorem: refined estimation} (1) demonstrates that the U-statistic projection refinement algorithm U-PRA has an estimation error bound of $O_P(\dfrac{1}{\|\btheta^{(m)}\|^2})$, which has been improved compared to the initial estimators obtained in Theorem \ref{theory: initial estimators}. To our best knowledge, the convergence rate is optimal.  Moreover, this refinement is crucial from an inferential standpoint,  which ensures the presence of a limiting distribution.   Theorem \ref{theorem: refined estimation} (2) shows that the centered and scaled change point estimator, after refining, has an limiting distribution which  follows the $\arg\max$ of a drifted weighted Brownian motion process. Hence, we can make valid confidence intervals for the detected change points. In Appdenx \ref{sec: implementation for estimation confidence}, we provide more discussions and implementation details for making confidence interval.

	\vspace{0cm}
	\section{Numerical performance}\label{section: empirical study}

	In this section, we examine the numerical performance of our proposed method in terms of change point detection, estimation as well as inference for the mean shift and variance change point models. We also compare our method with several existing state-of-art techniques.

	\subsection{\textbf{Changes in the mean}}\label{sec: changes in mean}

	We consider the location change point problem. In particular, let $\bX_1,\cdots,\bX_n$ be independent random vectors with $\bX_t\in \RR^d$ for $1 \leq t \leq n$.
	We generate data from the following mean shift model with possible three change points at $\gamma_{1},\ldots,\gamma_{3}$:
	\begin{equation*}
		\bX_t=\bmu_t+\bepsilon_t,
	\end{equation*}
	where $\{\bepsilon_t\}_{t=1}^{n}$ are i.i.d innovations with mean zero, $\bmu_{t}$ are the mean vectors at each time point. To show the robustness of our method, we generate the innovations $\{\bepsilon_t\}_{t=1}^{n}$ from four types of data distributions including:\\
	(1) \textbf{Multivariate normal distributions} $N(\mathbf{0},\bSigma)$; \\
	(2) \textbf{Multivariate student $t$ distributions $t(v,\bSigma)$} with a degree of freedom $v\in\{2,3\}$
	according to $\bZ/\sqrt{W/v}$, where $\bZ\sim N(\mathbf{0},\bSigma)$ and $W\sim\chi^2(v)$, $\bZ$ and $W$ are independent with each other; \\
	(3) \textbf{Contaminated Gaussian distribution}: $\bepsilon_i=N(\mathbf{0},\bSigma)+\mathbf{1}_d\times \epsilon_iZ_i$, where $\mathbf{1}_d$ is a
	$d$-dimensional vector 
	with each element equaling to 1, $\epsilon_i\sim B(1,p)$ with $p=0.1$,  $Z_i$ is a random variable taking 5 or -5 with equal probability and $\epsilon_i$ and $Z_i$ are independent with each other.

	Moreoover,  {we generate $\bSigma$} from the  blocked diagonal model: 
	by setting $\bSigma=\bSigma^{*}$, where $\bSigma^\star=(\sigma^\star_{ij})\in\reals^{d\times d}$ with $\sigma^\star_{ii}=1$,  $\sigma^\star_{ij}=0.5$ for
	$5(k-1)+1\le i\ne j\le 5k$ ($k=1,\ldots,\lfloor d/5\rfloor$), and $\sigma^\star_{ij} = 0$ otherwise. 
	
	Throughout numerical experiments,  for all models, we set $n=600$ with the dimension $d\in\{100,200,300\}$. {We consider the bandwith parameter $G\in\{60,80,100\}$ and the number of bootstrap replications $B=200$.} We set the significance level $\alpha=0.05$.  All numerical results are based on 500 replications under $\Hb_0$ and 200 replications under $\Hb_1$.



	\subsection{{\textbf{Empirical sizes}}}\label{sec: empirical sizes}

	\noindent In this section, we investigate the empirical size performance. It is worth noting that our method applies to a broad class of  kernel functions \( \bh(\bx,\by) \). To demonstrate the wide applicability of our approach, we have selected two specific cases: one is \(\bh_1(\bx,\by)=\by-\bx\), and the other is \(\bh_2(\bx,\by)=\text{sign}(\by-\bx)\). The former is equivalent to the high-dimensional CUSUM method, while the latter corresponds to the well-known Wilcoxon-Mann-Whitney test, which has not yet been explored in the context of ultra-high dimensions for multiple change point detection.

	As shown in Table \ref{tab:empirical-size-model1},	For Gaussian distribution, both methods  are able to effectively control the theoretical significance level (\( 0.05 \)) across both Models 1 and  2, as well as for various dimensions (\( d = 100, 200, 300 \)). This demonstrates that both methods perform well in maintaining the desired type I error when the data follows a light tailed  distribution.
	
	However, when the data follows a heavy-tailed distribution, such as \( t_3 \), the method based on \( \bh_1(\bx, \by) = \by - \bx \) becomes notably conservative. This conservatism becomes more pronounced as the data exhibits heavier tails, such as  \( t_2 \) and contaminated normal distributions. One possible explanation for this behavior is  that the increased presence of extreme outliers in heavy-tailed data leads to a larger variance in the bootstrap statistics. As a result, the test becomes overly conservative, reducing the rejection rate of the null hypothesis. On the other hand, the method based on \(\bh_2(\bx,\by)=\text{sign}(\by-\bx)\) is able to maintain size in both light-tailed and heavy-tailed distributions. This suggests that the Wilcoxon-based method is more robust in the presence of outliers, as it effectively mitigates the impact of extreme values on the test’s performance by using the ranks instead of the raw data.

	\begin{table}[H]
		\centering
		\caption{Empirical size performance for mean shifts under different bandwidth choices.}
		\label{tab:empirical-size-model1}
		\setlength{\tabcolsep}{10pt}
		\renewcommand{\arraystretch}{1.15}
		\begin{tabular}{llcccccc}
			\toprule
			&  & \multicolumn{3}{c}{\(h(x,y)=y-x\)} 
			& \multicolumn{3}{c}{\(h(x,y)=\operatorname{sign}(y-x)\)} \\
			\cmidrule(lr){3-5} \cmidrule(lr){6-8}
			\(G\) & Distribution 
			& \(d=100\) & \(d=200\) & \(d=300\)
			& \(d=100\) & \(d=200\) & \(d=300\) \\
			\midrule
			
			60 & Normal       & 0.030 & 0.030 & 0.020 & 0.040 & 0.025 & 0.025 \\
			60 & \(t_3\)      & 0.000 & 0.005 & 0.005 & 0.030 & 0.025 & 0.020 \\
			60 & \(t_2\)      & 0.000 & 0.005 & 0.005 & 0.030 & 0.025 & 0.020 \\
			60 & Contam-Norm  & 0.025 & 0.020 & 0.020 & 0.030 & 0.030 & 0.050 \\
			
			\midrule
			80 & Normal       & 0.045 & 0.035 & 0.025 & 0.035 & 0.050 & 0.045 \\
			80 & \(t_3\)      & 0.010 & 0.000 & 0.005 & 0.060 & 0.040 & 0.020 \\
			80 & \(t_2\)      & 0.010 & 0.000 & 0.005 & 0.060 & 0.040 & 0.020 \\
			80 & Contam-Norm  & 0.010 & 0.055 & 0.040 & 0.040 & 0.030 & 0.050 \\
			
			\midrule
			100 & Normal      & 0.030 & 0.055 & 0.020 & 0.025 & 0.020 & 0.040 \\
			100 & \(t_3\)     & 0.010 & 0.005 & 0.010 & 0.050 & 0.030 & 0.045 \\
			100 & \(t_2\)     & 0.010 & 0.005 & 0.010 & 0.050 & 0.030 & 0.045 \\
			100 & Contam-Norm & 0.035 & 0.035 & 0.030 & 0.045 & 0.050 & 0.040 \\
			
			\bottomrule
		\end{tabular}
	\end{table}

	\subsection{\textbf{Empirical powers}}\label{sec: empirical powers}
	
	Next, we consider the power performance. The data generating processes are considered
	below:
	\[
	(\mathbf H_{1,3}) \quad
	\text{(three change-point alternative)}:\quad
	\bmu_t
	=
	\bdelta \mathbf 1\{\gamma_1<t\leq \gamma_2\}
	+
	\bdelta \mathbf 1\{\gamma_3<t\leq n\},
	\]
	where \(\bdelta=(\delta_1,\ldots,\delta_d)^\top\in\mathbb R^d\) is the signal
	jump vector. We consider a sparse alternative in which only the first five coordinates
	have nonzero signals. Specifically, let
	\[
	\delta_j
	=
	C\sqrt{\frac{\log(d)}{n}}\,s_j,
	\qquad j=1,\ldots,5,
	\]
	where $(s_1,s_2,s_3,s_4,s_5)=(1,-1,1,-1,1)$
	and $\delta_j=0,\text{for}~ j=6,\ldots,d$. Here, \(C>0\) controls the signal strength. In this case, we set \(n=600\) and
	\(d=200\). Under \(\mathbf H_{1,3}\), the three change points are set at
	$(\gamma_1,\gamma_2,\gamma_3)=(0.3n,0.5n,0.7n).$

	Similar to Section \ref{sec: empirical sizes}, we consider the kernel \(\bh_1(\bx,\by)=\by-\bx\) and \(\bh_2(\bx,\by)=\text{sign}(\by-\bx)\). In addition, we compare our approach with methods specifically designed for sparse alternative including the $(s_0,p)$-norm based CUSUM method in \cite{Liu2020Unified}, where we set $s_0=1$ which reduces to the $\ell_\infty$-norm based method (denoted by LZZL);  Notably, based on existing research, the method in \cite{Liu2020Unified} based on $\ell_\infty$-norm exhibits similar performance to those of \cite{Jirak2015Uniform} and \cite{yu2021finite}. Therefore, we did not include comparisons with the latter two methods. We also compare with the double cusum based method in \cite{cho2016change} with parameter $\psi=0$ which is powerful for sparse alternatives (DC) and the sparse projection based method in \cite{Wang2016High} (Inspect). Note that \cite{Wang2016High} aim to estimate change points and  we adopt their single change point version function in corresponding R packages and convert them to tests using their default threshold computing functions. We use the R package "AdaptiveCpt", "hdbinseg", and ``inspect" to implement the above three methods with defalut parameters. Note that we do not compare with \cite{wang2019inference,jiang2023robust} since they are designed for the dense alternative patterns which are not discussed in this paper.
	
	Figures \ref{fig: power} shows the empirical power performance under three change-point alternative  with different signal strength and data distributions. It can be observed that when the errors  follow a normal distribution, the performance of the existing methods is comparable. This is primarily because the normal distribution is a light-tailed distribution, which satisfies the theoretical assumptions of these tests. However, as the data transition to a heavy-tailed distribution, the effectiveness of these methods gradually diminishes.
	
	Specifically, the Inspect method fails to control the  size (\( C=0 \)), indicating a tendency to overestimate the number of change points. This overestimation is further corroborated in the subsequent numerical simulations in Section \ref{sec: emprical change point estimation} for multiple change-point estimation. The DC method is severely affected in the presence of heavy-tailed distributions, demonstrating an overly conservative behavior that leads to almost no change points being detected. The primary reason for this lies in its reliance on the bootstrap procedure for determining critical values, where the presence of outliers inflates the bootstrap critical value excessively. These findings are consistent with the results reported in \cite{yu2022robust}. Although the LZZL method is relatively less affected, its power in detecting multiple change points deteriorates significantly under extreme heavy-tailed distributions and the presence of outliers. In contrast, the method based on \( \bh_2(\bx,\by) \) outperforms all existing approaches, suggesting that constructing multiple change-point testing statistics based on ranks and moving windows  achieves a desirable balance between testing powers and robustness.

	\begin{figure}[H]
		\centering
		\includegraphics[width=1\textwidth]{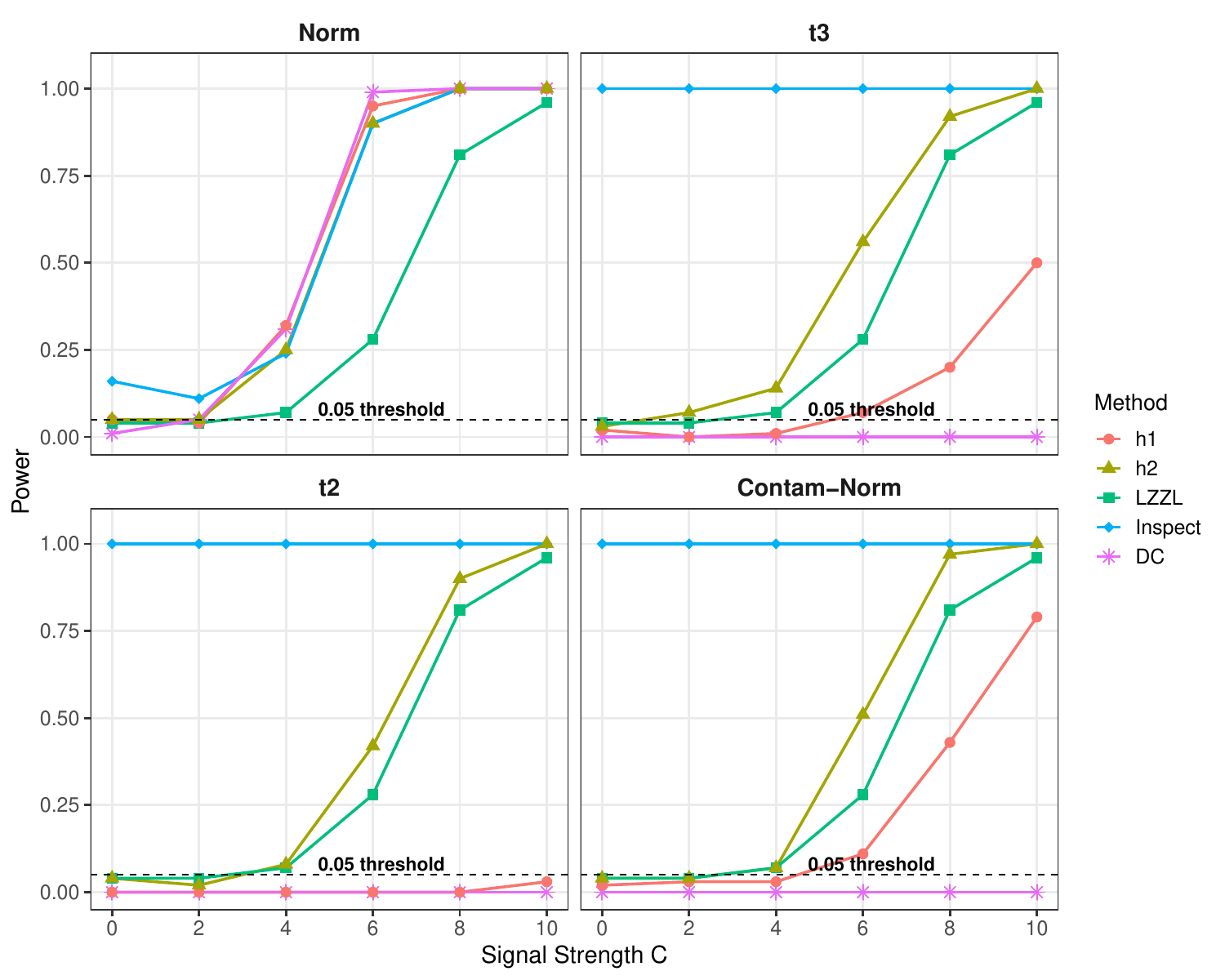}
		\caption{Empirical powers under with different distributions. }
		\label{fig: power}
	\end{figure}

	\subsection{{\textbf{Multiple change point estimation}}}\label{sec: emprical change point estimation}
	

	In this section, we consider the performance of multiple change point detection and compare our method with the existing techniques. In this numerical study, we consider the three change point model $\Hb_{1,3}$ with $C=12$ for $N(0,\bSigma)$ and $C=15$ for the heavy tailed  distributions,  which has been introduced in the previous section.  To evaluate the performance in identifying the change point, we use  the Hausdorff distance (denoted as Haus)  to evaluate the performance in change point estimation, which is defined as:
	\begin{equation*}
		d(\cS_1,\cS_2)={\max(\max_{s_1\in\cS_1}\min_{s_2\in\cS_2} |s_1-s_2|,\max_{s_2\in\cS_2}\min_{s_1\in\cS_1} |s_1-s_2|  )},
	\end{equation*}
	where $\cS_1$ are the true change points and $\cS_2$ are the estimated change points. Additionally, we have recorded the proportion of overestimated and underestimated change points across 200 simulations, denoted as $\hat{M}_0<M_0$ and $\hat{M}_0>M_0$, respectively. Similar to the previous sections, for the two types of kernels, we recorded the results of the initial estimation in (\ref{equation: intial number}) and the refined estimation in (\ref{equ: refined estimator1}), denoted as $h_1-{initial}$, $h_1-{refined}$, $h_2-{initial}$, $h_2-{refined}$, respectively. We compare our method with \cite{Liu2020Unified}, which employs the CUSUM statistic for single change-point mean testing and combine it with the binary segmentation algorithm for multiple change-point detection. Additionally, we compare  with the sparse projection-based method in \cite{Wang2016High}, which primarily assumes a Gaussian distribution and utilizes the WBS algorithm for multiple change-point detection.

	Table 	\ref{tab:haus-model1}	 shows that when the data follows a normal distribution, the method based on \( \bh_1(\bx, \by) \) generally outperforms the method based on \( \bh_2(\bx, \by) \). This is reflected in both smaller Hausdorff distances and higher proportions of \( \hat{M}_0 = M_0 \). The results indicate that the CUSUM based  method is better at accurately detecting the change points and estimating their locations under the light tailed  data. As the data transitions to a heavy-tailed distribution, such as \( t_3 \), the method based on \( \bh_1(\bx, \by) \) struggles to accurately identify both the number and positions of the change points. This difficulty becomes more pronounced as the data becomes even more heavy-tailed  and  its performance  becomes almost ineffective in such cases. In contrast, the method based on \(\bh_2(\bx,\by)=\text{sign}(\by-\bx)\), which is the Wilcoxon based statistic, demonstrates better performance in detecting change points even in the presence of heavy tails and contaminated data. This illustrates the robustness of rank-based methods for multiple change point estimation.

	For both \( \bh_1(\bx, \by) \) and \( \bh_2(\bx, \by) \), it is evident that the refined change point estimates, obtained after projection, are more accurate than the initial estimates (see Figure \ref{fig:multiscale-haus}). This observation validates the theoretical results in Theorem \ref{theorem: refined estimation} and demonstrates that projection allows for the extraction of more informative components of the change points.  This reinforces the conclusion that, compared to using the \( \ell_\infty \)-norm alone, projection enhances the method by integrating additional change point components. Interestingly, even in the presence of heavy-tailed and contaminated data, the projection-based change point estimates constructed using the Wilcoxon statistic still remain effective.
	
	{For other methods, we observe that under the normal distribution, the Inspect method outperforms the LZZL approach}. However, our two methods, which are based on the moving window and projection techniques, achieve better performance compared to both of these approaches. The primary reason for this improvement is that our bootstrap-based approach allows for the determination of critical values in multiple change-point detection, thereby enabling accurate identification of the number of change points. In contrast, LZZL is specifically designed for single change-point detection, and  Inspect  is highly sensitive to critical values, making them less effective in accurately estimating the number of change points. Furthermore, when the data exhibit heavy-tailed distributions, both  LZZL and Inspect  are significantly affected. Specifically, their performance deteriorates due to sensitivity to outliers, leading to either overestimation or underestimation of the number of change points, which in turn reduces the accuracy of multiple change-point estimation. This finding suggests that existing methods for multiple change-point estimation lack robustness to heavy-tailed data and outliers. Therefore, we recommend employing our proposed \(\bh_2(\bx,\by)\)-based  method when the data exhibit heavy-tailed properties.

	\begin{table}[H]
		\centering
		\caption{Experimental results for multiple change-point estimation  with different bandwidth choices under  mean shifts .}
		\label{tab:haus-model1}
		\setlength{\tabcolsep}{3.2pt}
		\renewcommand{\arraystretch}{1.2}
		\begin{adjustbox}{max width=\textwidth}
			\begin{tabular}{llcccccccccccc}
				\toprule
				&  & \multicolumn{3}{c}{\(G=60\)}
				& \multicolumn{3}{c}{\(G=80\)}
				& \multicolumn{3}{c}{\(G=100\)}
				& \multicolumn{3}{c}{Multiscale} \\
				\cmidrule(lr){3-5}
				\cmidrule(lr){6-8}
				\cmidrule(lr){9-11}
				\cmidrule(lr){12-14}
				Error  & Method
				& Haus & \(\widehat M_0=M_0\) & \(\widehat M_0>M_0\)
				& Haus & \(\widehat M_0=M_0\) & \(\widehat M_0>M_0\)
				& Haus & \(\widehat M_0=M_0\) & \(\widehat M_0>M_0\)
				& Haus & \(\widehat M_0=M_0\) & \(\widehat M_0>M_0\) \\
				\midrule
				
				Normal
				& \(h_1\)-initial & 6.205 & 0.945 & 0.055 & 5.210 & 0.995 & 0.005 & 5.730 & 0.990 & 0.010 & 6.460 & 0.940 & 0.060 \\
				& \(h_1\)-refined & 0.570 & 0.945 & 0.055 & 0.595 & 0.995 & 0.005 & 0.945 & 0.990 & 0.010 & 0.955 & 0.940 & 0.060 \\
				& \(h_2\)-initial & 7.205 & 0.915 & 0.075 & 5.285 & 0.995 & 0.005 & 5.375 & 0.990 & 0.010 & 6.515 & 0.930 & 0.070 \\
				& \(h_2\)-refined & 1.965 & 0.915 & 0.075 & 0.725 & 0.995 & 0.005 & 0.755 & 0.990 & 0.010 & 0.765 & 0.930 & 0.070 \\
				& LZZL    & 23.8 & 0.68 & 0.32 & - & - & - & - & -& -& - & -& - \\
				& Inspect & 20.13 & 0.63 & 0.36 & - & - & - & - & -& -& - & -& -  \\
				
				\midrule
				\(t_3\)
				& \(h_1\)-initial & 116.149 & 0.193 & 0.000 & 47.888 & 0.653 & 0.041 & 17.096 & 0.920 & 0.011 & 17.319 & 0.883 & 0.053 \\
				& \(h_1\)-refined & 114.833 & 0.193 & 0.000 & 43.088 & 0.653 & 0.041 & 11.362 & 0.920 & 0.011 & 10.729 & 0.883 & 0.053 \\
				& \(h_2\)-initial & 8.470 & 0.925 & 0.050 & 5.690 & 0.985 & 0.015 & 6.090 & 0.990 & 0.010 & 7.415 & 0.935 & 0.065 \\
				& \(h_2\)-refined & 4.100 & 0.925 & 0.050 & 1.110 & 0.985 & 0.015 & 1.115 & 0.990 & 0.010 & 1.405 & 0.935 & 0.065 \\
				& LZZL    & 26.86& 0.77 & 0.16& - & - & - & - & -& -& - & -& - \\
				& Inspect & 83.57 & 0.00 & 1.00 & - & - & - & - & -& -& - & -& -  \\
				
				\midrule
				\(t_2\)
				& \(h_1\)-initial & 155.714 & 0.000 & 0.000 & 155.714 & 0.000 & 0.000 & 127.440 & 0.200 & 0.000 & 142.517 & 0.103 & 0.000 \\
				& \(h_1\)-refined & 154.143 & 0.000 & 0.000 & 154.143 & 0.000 & 0.000 & 125.720 & 0.200 & 0.000 & 141.345 & 0.103 & 0.000 \\
				& \(h_2\)-initial & 42.335 & 0.690 & 0.050 & 8.480 & 0.955 & 0.035 & 6.545 & 0.985 & 0.015 & 7.825 & 0.910 & 0.090 \\
				& \(h_2\)-refined & 39.250 & 0.690 & 0.050 & 2.890 & 0.955 & 0.035 & 1.590 & 0.985 & 0.015 & 1.715 & 0.910 & 0.090 \\
				& LZZL    & 124.51 & 0.33 & 0.05& - & - & - & - & -& -& - & -& -  \\
				& Inspect & 87.38 & 0.00 & 1.00& - & - & - & - & -& -& - & -& -  \\
				
				\midrule
				Contam-Norm
				& \(h_1\)-initial & 158.674 & 0.000 & 0.000 & 137.023 & 0.078 & 0.039 & 80.978 & 0.439 & 0.044 & 84.373 & 0.432& 0.032 \\
				& \(h_1\)-refined & 157.826 & 0.000 & 0.000 & 135.876 & 0.078 & 0.039 & 77.050 & 0.439 & 0.044 & 80.449 & 0.432 & 0.032 \\
				& \(h_2\)-initial & 92.824 & 0.332 & 0.026 & 19.935 & 0.825 & 0.080 & 8.160 & 0.990 & 0.010 & 9.19 & 0.920 & 0.080\\
				& \(h_2\)-refined & 91.544 & 0.332 & 0.026 & 13.885 & 0.825 & 0.080 & 2.990 & 0.990 & 0.010 & 2.665 & 0.920 & 0.080 \\
				& LZZL    & 97.17 & 0.39 & 0.18 & - & - & - & - & -& -& - & -& -  \\
				& Inspect & 86.24 & 0.00 & 1.00 & - & - & - & - & -& -& - & -& -  \\
				
				\bottomrule
			\end{tabular}
		\end{adjustbox}
	\end{table}

	\begin{figure}[H]
		\centering
		\includegraphics[width=0.95\textwidth]{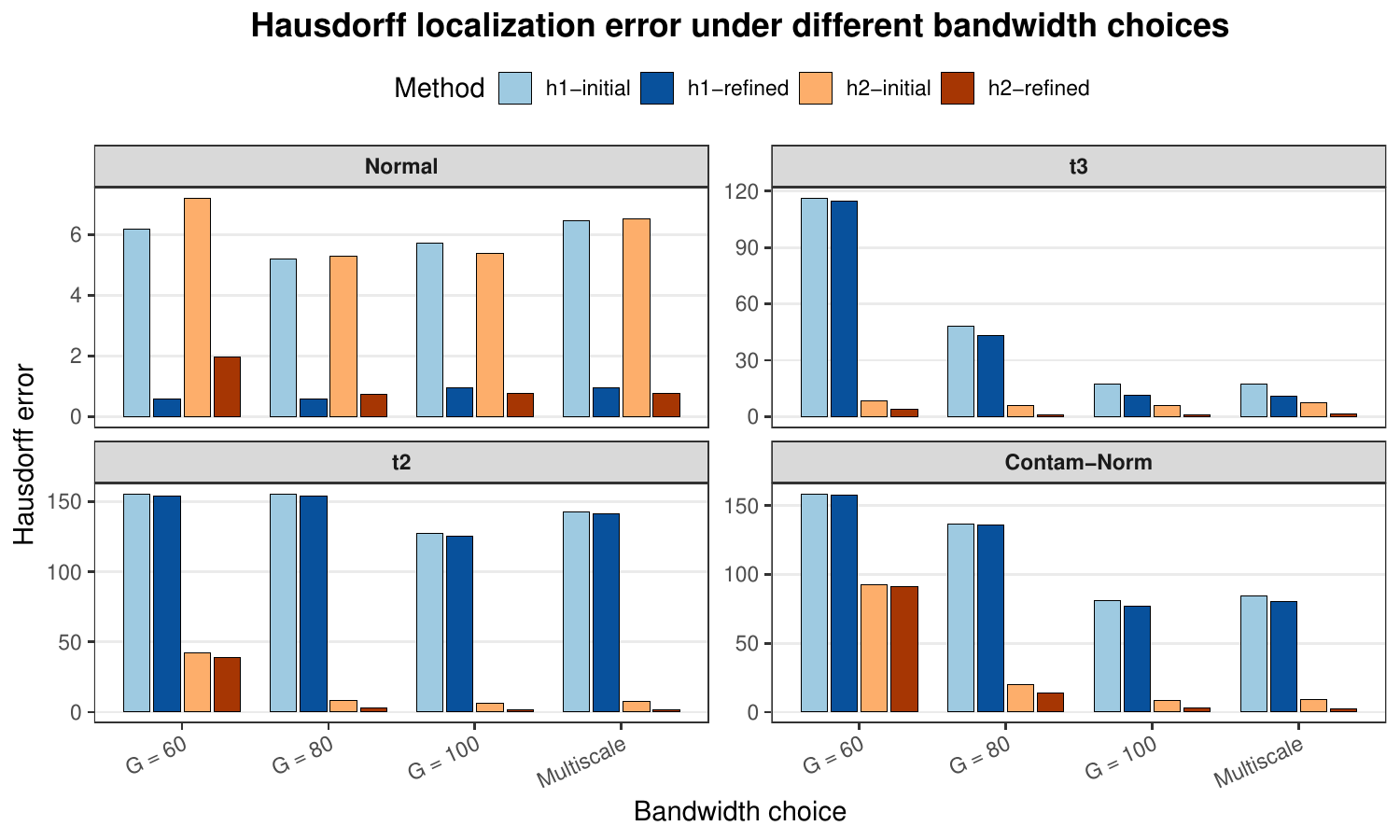}
		
		\caption{Hausdorff localization errors under different error distributions and bandwidth choices. The four panels correspond to Normal, Student \(t_3\), Student \(t_2\), and contaminated normal errors. For each error setting, we compare the initial and refined estimators under \(h_1(x,y)=y-x\) and \(h_2(x,y)=\operatorname{sign}(y-x)\), using \(G=60\), \(G=80\), \(G=100\), and the multiscale scheme.}
		\label{fig:multiscale-haus}
	\end{figure}

	\subsection{\textbf{Multiple change point inference}}\label{sec: emprical change point inference}


	Lastly, we analyze the performance of change point inference. Let $\cS_i=\{\hat{\gamma_{1}}^{(i)},\ldots,\hat{\gamma}^{(i)}_{\hat{M}^{(i)}_0}\}$ with $1\leq i\leq 200$ be the estimated change points among the 200 replications with $\hat{M}^{(i)}_0$ being the estimated change point number. Let $\hat{\Delta}_i:=\min_{1\leq m\leq \hat{M}^{(i)}_0 }=\hat{\gamma}_{m+1}^{(i)}-\hat{\gamma}_{m}^{(i)}$ be the estimated segmentation length for the $i$-th simulation. For each true change point $\gamma_m$ with $m=1,\ldots,M_0$, under each replication $i$, we first identify the estimated change point that is within a distance of less than $\hat{\Delta}_i/4$ and closest to it from $\cS_i$, say $\hat{\gamma}_{m,i}^{*}$.  Using the estimated change point $\{\hat{\gamma}_{m,i}^{*}\}_{i=1}^{200}$, we construct the corresponding confidence interval for the true change point $\gamma_m$ using (\ref{equ: estimated confidence}). Finally, we calculate the proportion of all constructed confidence intervals among the 200 replications that successfully cover the true change points.

	Table 	\ref{tab:ci-coverage-model1} presents the confidence interval coverage results for three change points using two methods: CUSUM ($\bh_1(\bx,\by)$) and Wilcoxon ($\bh_2(\bx,\by)$), under different errors  and model settings. First, when the data follows a normal distribution, both methods exhibit a good ability to approximate the theoretical coverage level (0.95) for all three change points. 
	Second, as the data transitions to a heavy-tailed distribution, such as \( t_3 \), the CUSUM-based method struggles to accurately cover the true change points due to the influence of extreme values. As the tail of the distribution becomes even heavier, the coverage rate further decreases. In contrast, the confidence intervals constructed using the Wilcoxon method continue to accurately cover the true change points, even when the data is heavy-tailed, such as in \( t_2 \) and Contam-Norm distributions. This highlights the robustness of the Wilcoxon-based method in change point inference. Note that for the CUSUM method, we do not report results for \( t_2 \), primarily because the method almost fails to detect the change points under the current model setups.

	\begin{table}[H]
		\centering
		\caption{Empirical confidence inference results  for mean shifts  with different bandwidth choices.}
		\label{tab:ci-coverage-model1}
		\small
		\setlength{\tabcolsep}{9pt}
		\renewcommand{\arraystretch}{1.18}
		\begin{tabular}{llcccccc}
			\toprule
			&  & \multicolumn{3}{c}{\(h_1(x,y)=y-x\)}
			& \multicolumn{3}{c}{\(h_2(x,y)=\operatorname{sign}(y-x)\)} \\
			\cmidrule(lr){3-5} \cmidrule(lr){6-8}
			Bandwidth & Distribution
			& \(\gamma_1\) & \(\gamma_2\) & \(\gamma_3\)
			& \(\gamma_1\) & \(\gamma_2\) & \(\gamma_3\) \\
			\midrule
			
			\(G=60\) & Norm        & 0.930 & 0.965 & 0.955 & 0.925 & 0.950 & 0.930 \\
			& \(t_3\)     & 0.871 & 0.871 & 0.953 & 0.934 & 0.904 & 0.950 \\
			& \(t_2\)     & -&  -&  -& 0.899 & 0.932 & 0.900 \\
			& Contam-Norm & 0.833 & 0.750 & 0.824 & 0.934 & 0.946 & 0.919 \\
			
			\midrule
			\(G=80\) & Norm        & 0.960 & 0.935 & 0.935 & 0.970 & 0.945 & 0.955 \\
			& \(t_3\)     & 0.856 & 0.866 & 0.898 & 0.950 & 0.905 & 0.940 \\
			& \(t_2\)     & -&  -&  - & 0.940 & 0.919 & 0.915 \\
			& Contam-Norm & 0.770 & 0.679 & 0.768 & 0.944 & 0.927 & 0.917 \\
			
			\midrule
			\(G=100\) & Norm        & 0.955 & 0.970 & 0.960 & 0.955 & 0.965 & 0.945 \\
			& \(t_3\)     & 0.862 & 0.863 & 0.891 & 0.945 & 0.925 & 0.935 \\
			& \(t_2\)     & -&  -&  -& 0.940 & 0.930 & 0.890 \\
			& Contam-Norm & 0.686 & 0.699 & 0.785 & 0.910 & 0.930 & 0.890 \\
			
			\midrule
			Multiscale & Norm        & 0.950 & 0.955 & 0.965 & 0.960 & 0.975 & 0.950 \\
			& \(t_3\)     & 0.870 & 0.899 & 0.902 & 0.925 & 0.915 & 0.930 \\
			& \(t_2\)     & -&  -&  -& 0.900 & 0.935 & 0.900 \\
			& Contam-Norm & 0.715 & 0.710 & 0.752 & 0.945 & 0.93 & 0.915\\
			
			\bottomrule
		\end{tabular}
	\end{table}

	\subsection{\textbf{Changes in the variances}}\label{sec: changes in variances}

	In this section, we consider high dimensional multiple change point detection for the variances. 
	In particular, let $\bX_1,\cdots,\bX_n$ be independent random vectors with $\bX_i\in \RR^d$ for $1 \leq i \leq n$.
	We generate data from the following  model with possible three change points at $\gamma_{1},\ldots,\gamma_{3}$: $$\bX_t=\bsigma_t\odot\bepsilon_t, ~~t=1,\ldots,n$$, where $\odot$ denotes the element-wise product (Hadamard product) of two vectors and $\{\bepsilon_t\}_{t=1}^{n}$ are i.i.d innovations with mean zero and variance ones. Under $\Hb_0$ with no change points, we set $\bsigma_t=\bsigma^{(1)}:=\sqrt{0.3}\mathbf{1}_d$ for $1\leq t\leq n$. Under $\Hb_{1,3}$ with three change points, we set
	\begin{equation*}
		\bsigma_t =\bsigma^{(1)}\mathbf{1}\{t\leq \gamma_{1}\}+\bsigma^{(2)}\{\gamma_1< t\leq \gamma_{2}\}+\bsigma^{(1)}\{\gamma_2< t\leq \gamma_{3}\}+\bsigma^{(2)}\mathbf{1}\{\gamma_3< t\},
	\end{equation*}
	where $\bsigma^{(2)}=\blambda\odot\bsigma^{(1)}$ and $\blambda=(\lambda_1,\ldots,\lambda_d)$ is the relative signal jump. We set $\lambda_j=\sqrt{2.5}$ for $1\leq j\leq 5$ and $\lambda_j=1$ for $6\leq j\leq d$, which means that the first five entries of $\bX_t$ have a change point of variances. We generate the innovations $\{\bepsilon_i\}_{i=1}^{n}$ from two types of data distributions including: (1)  Multivariate normal distributions $N(\mathbf{0},\mathbf{I}_d)$; (2)  Scaled multivariate student $t_5$ distributions $t(5,\mathbf{I}_d)/\sqrt{5/3}$. Under the current model, we consider two methods with $\bh_3(\bx,\by)=\by^2-\bx^2$ and $\bh_4(\bx,\by)=\text{sign}(\by^2-\bx^2)$ and the parameters of the methods are the same as those in Section \ref{sec: changes in mean}. 
	
	Table \ref{table: empirical size of variance} shows the size results under various data dimensions and bandwith choices. We observe that the method based on \( \bh_4(\bx,\by) \) effectively controls the significance level  across both types of distributions and different data dimensions. In contrast, while the method based on \( \bh_3(\bx,\by) \) also maintains control, it tends to be overly conservative. The primary reason for this conservativeness lies in the nature of variance change-point problems: fluctuations in variance lead to an increase in higher-order moments of the data, which subsequently enlarges the variance of the bootstrap test statistic. As a result, the test becomes excessively conservative. In comparison, rank-based methods are less affected by this issue.
	
	{Tables 	\ref{tab:variance_multicp_hausdorff} and \ref{tab:var-ci} demonstrate the multiple change point estimation and inference results. } It can be observed that, similar to the previous results, the estimation accuracy of change points improves for both methods after variance projection compared to the initial estimates. The method based on \( \bh_3(\bx,\by) \)  struggles to accurately estimate the number and locations of change points, even under the normal distribution setting, and its confidence intervals  fail to  cover the true change points. The primary reason for this issue is that variance shifts increase the likelihood of outliers in the data, and the method based on \( \bh_3(\bx,\by) \) is highly sensitive to such outliers. In contrast, the method based on \( \bh_4(\bx,\by) \)  consistently outperforms the method based on \( \bh_3(\bx,\by) \) across all model settings. Even under heavy-tailed distributions, the empirical confidence interval coverage for three change points remains close to the nominal level (95\%).

	\begin{table}[H]
		\centering
		\caption{Empirical sizes for variance change points under different kernels, dimensions, error distributions, and bandwidths.}
		\label{table: empirical size of variance}
		\begin{tabular}{llcccccc}
			\toprule
			\multirow{2}{*}{Error} 
			& \multirow{2}{*}{Bandwidth}
			& \multicolumn{3}{c}{Linear kernel}
			& \multicolumn{3}{c}{Sign kernel} \\
			\cmidrule(lr){3-5} \cmidrule(lr){6-8}
			& & $d=100$ & $d=200$ & $d=300$
			& $d=100$ & $d=200$ & $d=300$ \\
			\midrule
			\multirow{3}{*}{Normal}
			& $G=60$  & 0.000 & 0.005 & 0.005 & 0.055 & 0.040 & 0.035 \\
			& $G=80$  & 0.000 & 0.005 & 0.005 & 0.050 & 0.045 & 0.045 \\
			& $G=100$ & 0.000 & 0.005 & 0.010 & 0.055 & 0.025 & 0.050 \\
			\midrule
			\multirow{3}{*}{$t_5$}
			& $G=60$  & 0.000 & 0.000 & 0.000 & 0.035 & 0.040 & 0.030 \\
			& $G=80$  & 0.000 & 0.000 & 0.000 & 0.030 & 0.025 & 0.035 \\
			& $G=100$ & 0.000 & 0.000 & 0.000 & 0.080 & 0.030 & 0.040 \\
			\bottomrule
		\end{tabular}
	\end{table}

	
	\begin{table}[H]
		\centering
		\caption{Hausdorff distance and change-point number estimation results
			under different bandwidth settings for variance change point models. }
		\label{tab:variance_multicp_hausdorff}
		\resizebox{\textwidth}{!}{
			\begin{tabular}{llcccccccccccc}
				\toprule
				\multirow{2}{*}{Error}
				& \multirow{2}{*}{Method}
				& \multicolumn{3}{c}{$G=60$}
				& \multicolumn{3}{c}{$G=80$}
				& \multicolumn{3}{c}{$G=100$}
				& \multicolumn{3}{c}{Multiscale} \\
				\cmidrule(lr){3-5}
				\cmidrule(lr){6-8}
				\cmidrule(lr){9-11}
				\cmidrule(lr){12-14}
				&
				& Haus.
				& $\widehat{M}_0=M_0$
				& $\widehat{M}_0>M_0$
				& Haus.
				& $\widehat{M}_0=M_0$
				& $\widehat{M}_0>M_0$
				& Haus.
				& $\widehat{M}_0=M_0$
				& $\widehat{M}_0>M_0$
				& Haus.
				& $\widehat{M}_0=M_0$
				& $\widehat{M}_0>M_0$ \\
				\midrule
				
				\multirow{4}{*}{Normal}
				& $h_3$-initial
				& 112.137 & 0.200 & 0.017
				& 28.392  & 0.784 & 0.085
				& 13.920  & 0.940 & 0.050
				& 15.870  & 0.835 & 0.160 \\
				
				& $h_3$-refined
				& 110.474 & 0.200 & 0.017
				& 20.266  & 0.784 & 0.085
				& 4.505   & 0.940 & 0.050
				& 5.195   & 0.835 & 0.160 \\
				
				& $h_4$-initial
				& 26.360 & 0.795 & 0.045
				& 6.420  & 0.990 & 0.010
				& 6.365  & 1.000 & 0.000
				& 7.805  & 0.945 & 0.055 \\
				
				& $h_4$-refined
				& 22.400 & 0.795 & 0.045
				& 2.025  & 0.990 & 0.010
				& 1.815  & 1.000 & 0.000
				& 1.665  & 0.945 & 0.055 \\
				
				\midrule
				
				\multirow{4}{*}{$t_5$}
				& $h_3$-initial
				& -- & -- & --
				& -- & -- & --
				& -- & -- & --
				& -- & -- & -- \\
				
				& $h_3$-refined
				& -- & -- & --
				& -- & -- & --
				& -- & -- & --
				& -- & -- & -- \\
				
				& $h_4$-initial
				& 80.813 & 0.378 & 0.047
				& 17.120 & 0.860 & 0.085
				& 8.875  & 0.980 & 0.015
				& 10.500 & 0.880 & 0.120 \\
				
				& $h_4$-refined
				& 78.487 & 0.378 & 0.047
				& 10.580 & 0.860 & 0.085
				& 3.110  & 0.980 & 0.015
				& 2.615  & 0.880 & 0.120 \\
				
				\bottomrule
			\end{tabular}
		}
	\end{table}
	
	\begin{table}[H]
		\centering
		\caption{Empirical coverage probabilities of confidence intervals for variance change-point models.}
		\label{tab:var-ci}
		\begin{tabular}{llcccccc}
			\toprule
			\multirow{2}{*}{Bandwidth}
			& \multirow{2}{*}{Distribution}
			& \multicolumn{3}{c}{Linear kernel}
			& \multicolumn{3}{c}{Sign kernel} \\
			\cmidrule(lr){3-5} \cmidrule(lr){6-8}
			& & $\gamma_1$ & $\gamma_2$ & $\gamma_3$
			& $\gamma_1$ & $\gamma_2$ & $\gamma_3$ \\
			\midrule
			\multirow{2}{*}{$G=60$}
			& Normal & 0.625 & 0.646 & 0.780 & 0.907 & 0.910 & 0.928 \\
			& $t_5$ & -- & -- & -- & 0.921 & 0.943 & 0.909 \\
			\midrule
			\multirow{2}{*}{$G=80$}
			& Normal & 0.594 & 0.670 & 0.738 & 0.895 & 0.895 & 0.900 \\
			& $t_5$ & -- & -- & -- & 0.896 & 0.902 & 0.930 \\
			\midrule
			\multirow{2}{*}{$G=100$}
			& Normal & 0.625 & 0.693 & 0.688 & 0.910 & 0.895 & 0.930 \\
			& $t_5$ & -- & -- & -- & 0.935 & 0.920 & 0.910 \\
			\midrule
			\multirow{2}{*}{Multiscale}
			& Normal & 0.660 & 0.665 & 0.663 & 0.905 & 0.910 & 0.955 \\
			& $t_5$ & -- & -- & -- & 0.930 & 0.925 & 0.895 \\
			\bottomrule
		\end{tabular}
	\end{table}
	
	\section{Application to bladder tumor aCGH profiles}\label{sec: real data analysis}
	
	We apply the proposed method to a bladder tumor array comparative genomic hybridization 
	(aCGH) dataset. Array CGH is a genome-wide 
	technology for measuring DNA copy-number variation by comparing the hybridization 
	intensity of DNA fragments from a tumor sample with that from a reference sample. 
	The resulting observations are usually represented as log-intensity ratios along ordered 
	genomic loci. The used bladder tumor aCGH dataset was originally reported by \citet{stransky2006regional} and  contain aCGH profiles (individuals) for 57 bladder tumor samples 
	measured at 2215 ordered genomic loci. Following the preprocessing strategy used in \cite{matteson2014nonparametric},  individuals with more than \(7\%\) missing 
	measurements had been removed, and the remaining missing values had been imputed by 
	the average of their neighboring values. The resulting data matrix contains
	$n = 2215$ \text{ordered genomic loci and} and 
	$d = 43$ \text{individuals}.
	This processed dataset is publicly available as the \texttt{ACGH} dataset in the R package 
	\texttt{ecp}.

	For the implementation of our method, we considered two kernels: $ h_1(x,y) = y - x$ and   $h_2(x,y) = \operatorname{sign}(y-x)$. The linear kernel  is used as the main specification because  the classical copy-number segmentation problem is primarily concerned with changes in the mean levels. The sign kernel is used as a robustness check. Agreement between the two kernels  provides additional evidence that the detected changepoints are valid. To reduce the dependence on a single bandwidth choice, we used the multiscale version of the procedure in Appendix \ref{sec: pratical guidence} with bandwidths $ G \in \{30,40,50,60,70,80\}$.

	Table~\ref{tab:multiscale-cpts} reports that the linear kernel detects 29 change points while the the sign kernel detects more changepoints than the linear kernel, which is expected since the sign kernel can be more sensitive to local distributional changes and rank-based shifts. Despite these differences, the two kernels agree on a set of stable transition regions. Table~\ref{tab:common-cpts-linear-sign} reports the jointly supported changepoints and their localization confidence intervals. Note that two changepoints detected by the two kernels are regarded as jointly supported if their localization confidence intervals overlap.  In total, 19 representative changepoints are supported by both kernels.  Most of the corresponding confidence intervals are narrow, suggesting stable localization of the main transition regions. 
	
	In addition, \cite{YuChen2021} and \cite{yu2022robust} have also been applied to the same aCGH dataset, and their reported changepoints are listed in Table~\ref{tab:multiscale-cpts}. We regard two changepoints as overlapping if their distance is within five loci. Under this criterion, the linear-kernel based results overlap with 16 change points of \cite{YuChen2021} and 13 of \cite{yu2022robust}. The sign-kernel   overlaps with 21 of \cite{YuChen2021}and 19 of \cite{yu2022robust}. These overlaps suggest that our multiscale procedure recovers many stable changepoint locations previously identified by existing methods.
	
	Since our theoretical analysis is developed under an independence assumption, we further examine the timeseries dependence of the aCGH profiles along the ordered genomic loci.  As shown in Figure~\ref{fig:residual-acf}, the autocorrelation is positive at very small lags, especially at lag one, indicating that adjacent genomic loci are not exactly independent after centering. However, the autocorrelation decays rapidly as the lag increases. 
	Such local dependence is reasonable for aCGH profiles. Neighboring probes are physically close on the genome and may share similar experimental noise or local hybridization effects.  Nevertheless, the observed dependence is weak and short-range. In addition, our additional numerical simulations in the appendix show that the proposed new procedure remains stable under moderate serial dependence. Considering the agreement between our method and the existing techniques,
	this suggests that the proposed method remains practically applicable to this aCGH dataset, even though the real profiles may deviate from the idealized independence assumption.
	\vspace{0cm}

	\begin{figure}[H]
		\centering
		
		\begin{minipage}{\textwidth}
			\centering
			\includegraphics[width=\textwidth]{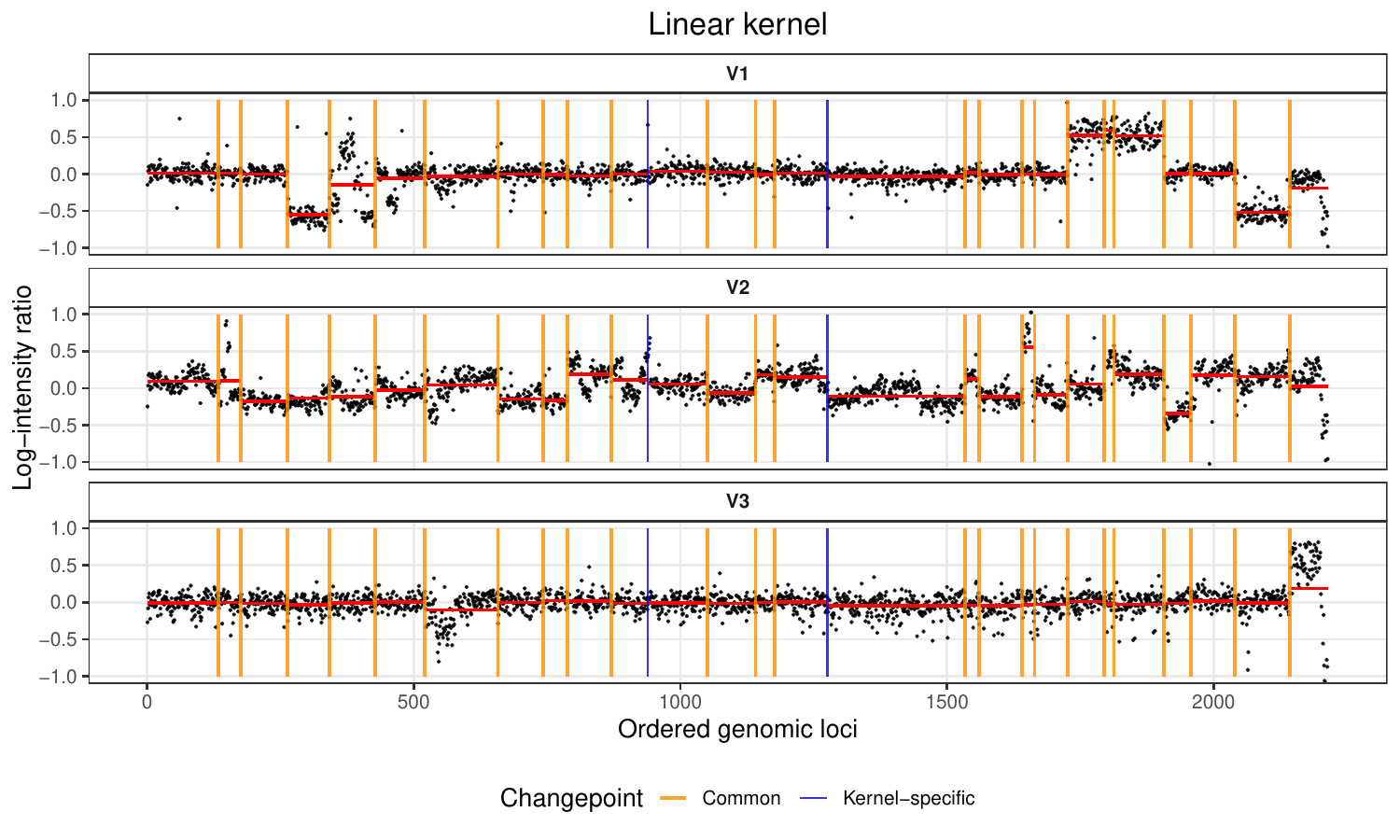}
			\caption*{(a) Linear kernel}
		\end{minipage}
		
		\vspace{0.5em}
		
		\begin{minipage}{\textwidth}
			\centering
			\includegraphics[width=\textwidth]{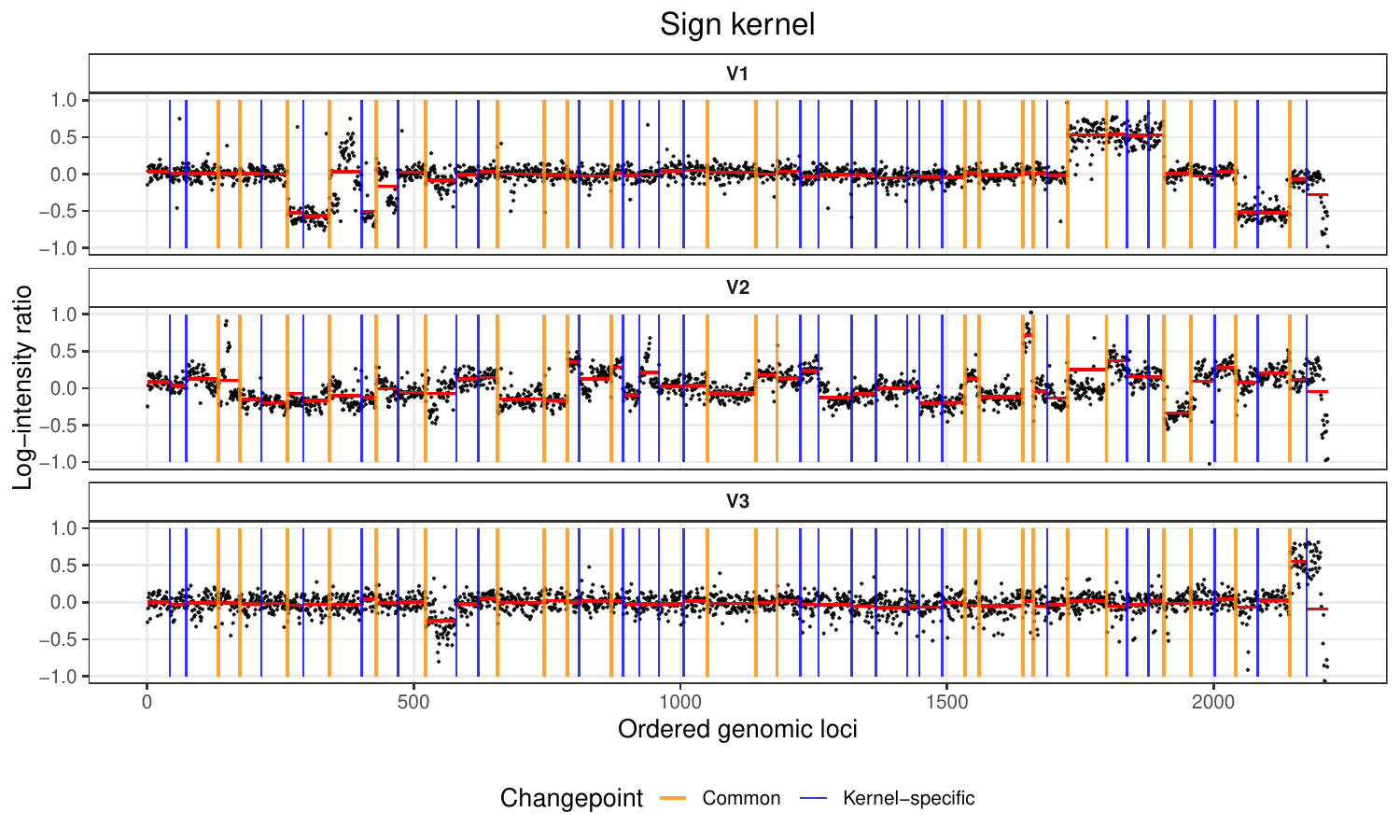}
			\caption*{(b) Sign kernel}
		\end{minipage}
		
		\caption{Segmented arrayCGH profiles using the linear and sign kernels. 
			The black points represent the observed log-intensity ratios for the first three individuals, 
			and the red horizontal segments represent the fitted segment-wise means. 
			Vertical lines indicate detected changepoints: orange lines correspond to changepoints 
			jointly supported by both the linear and sign kernels, whereas blue lines represent 
			kernel-specific changepoints detected only by the corresponding kernel.}
		\label{fig:linear-sign-segmentation}
	\end{figure}

	\vspace{0cm}
	\begin{figure}[H]
		\centering
		\includegraphics[width=0.75\textwidth]{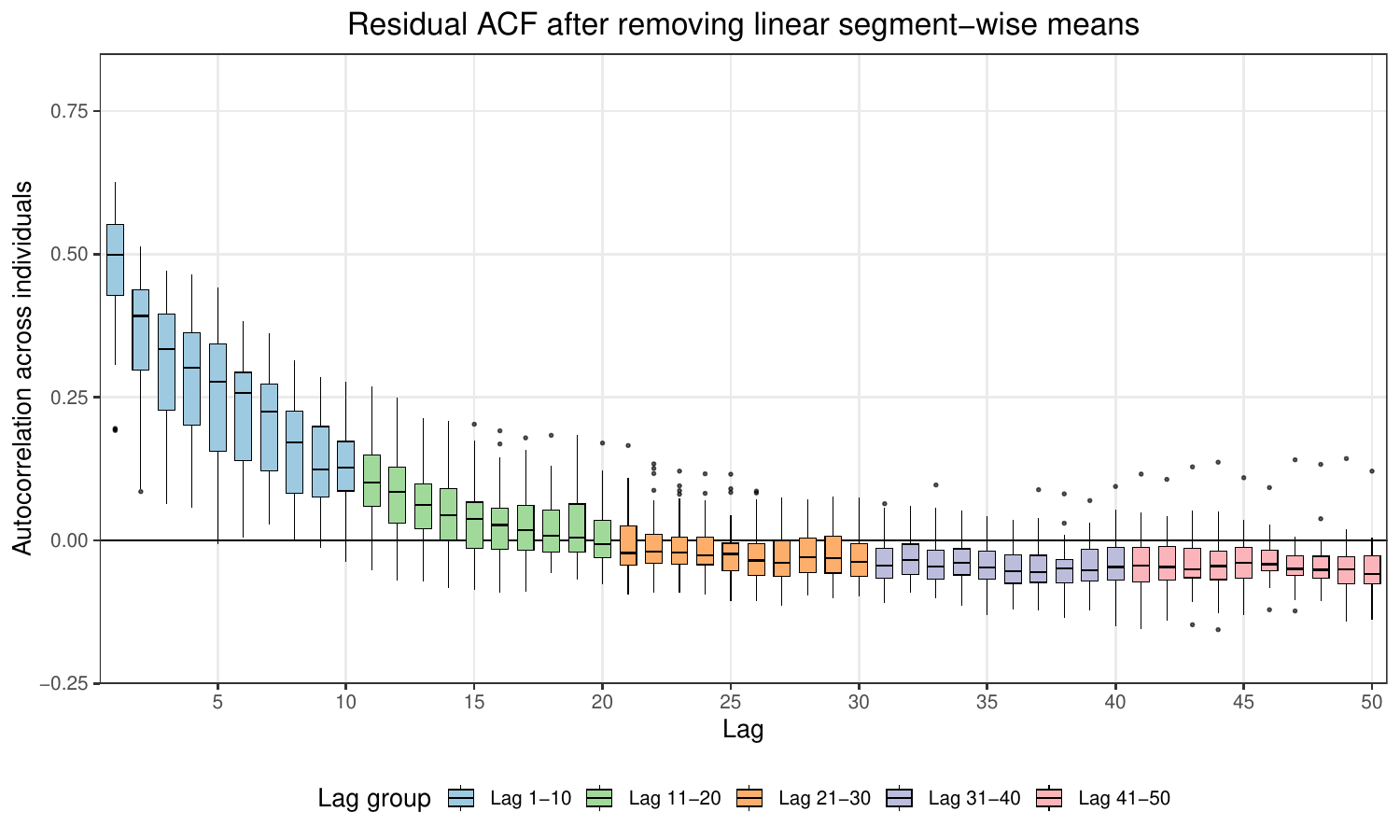}
		\caption{Residual autocorrelation after removing segment-wise means. 
			The boxplots summarize the autocorrelation coefficients across the 43 individuals at different lags.}
		\label{fig:residual-acf}
	\end{figure}
	\vspace{0cm}
	\begin{table}[htbp]
		\centering
		\caption{Detected changepoints from the linear and sign kernel based multiscale procedures}
		\label{tab:multiscale-cpts}
		\small
		\renewcommand{\arraystretch}{1.25}
		\begin{tabularx}{\textwidth}{lX}
			\toprule
			Method & Detected changepoints \\
			\midrule
			$h_1$-initial 
			& 134, 185, 263, 344, 428, 521, 657, 670, 741, 809, 871, 961, 1050, 1141, 1177, 1259, 1534, 1562, 1641, 1664, 1724, 1771, 1792, 1827, 1876, 1963, 2031, 2120, 2142 \\
			\hline
			$h_1$-refined 
			& 134, 176, 263, 342, 428, 521, 657, 658, 743, 788, 871, 939, 1051, 1141, 1177, 1276, 1534, 1560, 1641, 1664, 1726, 1795, 1795, 1813, 1907, 1957, 2040, 2143, 2143 \\
			\hline
			$h_2$-initial 
			& 43, 73, 134, 173, 214, 263, 297, 342, 401, 450, 479, 522, 561, 581, 621, 657, 744, 788, 811, 871, 891, 922, 958, 1010, 1051, 1141, 1177, 1225, 1259, 1319, 1367, 1425, 1456, 1489, 1534, 1564, 1641, 1664, 1690, 1724, 1795, 1839, 1878, 1906, 1956, 1992, 2041, 2079, 2102, 2143, 2174 \\
			\hline
			$h_2$-refined 
			& 43, 73, 134, 174, 214, 263, 293, 342, 402, 430, 471, 522, 580, 581, 621, 657, 744, 788, 810, 871, 892, 923, 960, 1006, 1051, 1142, 1181, 1225, 1259, 1321, 1367, 1425, 1448, 1491, 1534, 1560, 1642, 1661, 1688, 1726, 1799, 1837, 1878, 1906, 1957, 2002, 2041, 2079, 2084, 2143, 2174 \\
			\hline 
			\cite{YuChen2021}
			& 73, 185, 263, 342, 428, 521, 581, 657, 741, 801, 871, 960, 1051, 1141, 1216, 1276, 1367, 1427, 1503, 1563, 1664, 1724, 1836, 1905, 1965, 2044, 2143 \\
			\hline
			\cite{yu2022robust}
			& 74, 136, 174, 248, 280, 344, 448, 528, 544, 624, 658, 744, 810, 876, 932, 1022, 1050, 1140, 1220, 1282, 1366, 1418, 1500, 1560, 1642, 1726, 1850, 1908, 1964, 2022, 2084, 2142 \\
			\bottomrule
		\end{tabularx}
	\end{table}
	\vspace{0cm}
	\begin{table}[H]
		\centering
		\caption{Common changepoints detected by the linear and sign multiscale procedures}
		\label{tab:common-cpts-linear-sign}
		\small
		\renewcommand{\arraystretch}{1.15}
		\begin{tabular}{c|ccc}
			\hline
			No. & Linear refined cp & Linear CI & Sign refined cp \quad Sign CI \\
			\hline
			1  & 134  & [132.21, 134.48] & 134  \quad [133.28, 134.21] \\
			2  & 263  & [262.11, 263.77] & 263  \quad [262.30, 263.88] \\
			3  & 342  & [341.13, 342.73] & 342  \quad [340.18, 342.82] \\
			4  & 428  & [425.98, 428.38] & 430  \quad [427.15, 431.10] \\
			5  & 521  & [519.59, 524.29] & 522  \quad [521.73, 522.44] \\
			6  & 658  & [657.08, 659.16] & 657  \quad [656.36, 657.13] \\
			7  & 743  & [742.19, 743.41] & 744  \quad [742.72, 744.40] \\
			8  & 788  & [787.26, 789.12] & 788  \quad [787.50, 788.75] \\
			9  & 871  & [868.44, 874.92] & 871  \quad [870.01, 871.41] \\
			10 & 1051 & [1049.26, 1052.32] & 1051 \quad [1049.13, 1052.61] \\
			11 & 1141 & [1139.83, 1141.44] & 1142 \quad [1141.24, 1142.37] \\
			12 & 1534 & [1531.88, 1535.76] & 1534 \quad [1532.42, 1535.29] \\
			13 & 1560 & [1556.99, 1561.06] & 1560 \quad [1557.80, 1561.59] \\
			14 & 1641 & [1639.96, 1641.46] & 1642 \quad [1641.38, 1643.54] \\
			15 & 1726 & [1725.24, 1726.93] & 1726 \quad [1725.24, 1726.73] \\
			16 & 1907 & [1905.33, 1907.93] & 1906 \quad [1905.40, 1906.13] \\
			17 & 1957 & [1956.43, 1958.04] & 1957 \quad [1956.59, 1958.96] \\
			18 & 2040 & [2035.31, 2040.61] & 2041 \quad [2040.22, 2041.32] \\
			19 & 2143 & [2142.37, 2143.31] & 2143 \quad [2142.73, 2143.73] \\
			\hline
		\end{tabular}
	\end{table}
	
	
	\vspace{0cm}
	\section{ Summary}\label{section: summary and discussion}
	This paper introduces a general framework for high-dimensional change-point detection, estimation, and inference based on U-statistics within a moving window. By enabling flexible kernel function selection, our method extends beyond traditional mean-based change-point analysis to accommodate a broader range of structural changes, including variance  and robust mean shifts. We develop a minimax-optimal test statistic tailored for sparse alternatives and propose a high-dimensional multiplier bootstrap procedure that ensures valid significance control while maintaining high power. For estimation, we construct an initial estimator for the number and locations of change points and refine it using the U-statistic Projection Refinement Algorithm (U-PRA), achieving minimax-optimal localization rates. Furthermore, we derive the asymptotic distribution of the refined estimators, facilitating the construction of valid confidence intervals. The effectiveness of our method is demonstrated through both theoretical analysis and empirical validation on synthetic and real-world datasets. Interesting future directions include extending the change-point analysis framework to other statistical settings, such as graphical models.

\bibliographystyle{ims}
\bibliography{ref_robust}

\clearpage

\begin{center}
	{\Large \bfseries Supplementary Material to ``A General U-Statistic Framework for}\\[0.6em]
	{\Large \bfseries High-Dimensional Multiple Change-Point Analysis''}
\end{center}

\vspace{1em}

\appendix

\section{Several specific examples}\label{sec: examples}

\noindent\textbf{Example 1.}	When the kernel function is defined as $h(x, y) = y - x$, the resulting U-statistic focuses on detecting changes in the mean of the distribution. 
Specifically, in this case, we have $\theta_j^{(m)}=\E (X_{\gamma_m+1}-X_{\gamma_m})$  as  the mean shift and $T_j(k)$ reduces to
\begin{equation*}
	\small
	T_{j}(k)=	\dfrac{1}{\sqrt{G}}\Big(\sum\limits_{t_1=k-G+1}^kX_{t_1,j}-\sum\limits_{t_2=k+1}^{k+G}X_{t_2,j}\Big),~~\text{for}~~j=1,\ldots,d.
\end{equation*}
which has been recently extensively studied in \cite{eichinger2018mosum,chen2022InferenceBreakpointsHighdimensionalb}.

\noindent\textbf{Example 2.}	To handle cases where the data may contain outliers or heavy-tailed distributions, we consider a robust kernel function:
$h(x, y) = \operatorname{sign}(y - x)$, where $\text{sign}(x)=\mathbf{1}\{x\geq 0\}-\mathbf{1}\{x< 0\}$ is the indicator function.  This kernel function leads to a \textbf{rank-based U-statistic}, making the test resistant to the influence of outliers when detecting data with mean shifts. To see this,
$X_{\gamma_m,j}=\mu_j^{(m)}+\epsilon_j$ and $X_{\gamma_m+1,j}=\mu_j^{(m+1)}+\epsilon'_j$ and $\epsilon_j$, $\epsilon_j'$ are the random erros with mean zeros. Let $\Delta_j^{(m)}:=\mu_j^{(m+1)}-\mu_j^{(m)}$ be the signal jump at the $m$-th change point for the $j$-th coordidate and $\xi_j=\epsilon'_j-\epsilon_{j}$. By the defintion of $\theta_j^{(m)}$, we have
\begin{equation}
	\begin{array}{ll}
		\theta^{(m)}_j&=\E h(X_{\gamma_m,j},X_{\gamma_m+1,j})\\
		&=_{(1)}\E[\mathbf{1}\{\mu_j^{(m+1)}-\mu_j^{(m)}+\epsilon'_j-\epsilon_j\geq 0\}]-\E[\mathbf{1}\{\mu_j^{(m+1)}-\mu_j^{(m)}+\epsilon'_j-\epsilon_j< 0\}]\\
		&=_{(2)}\E\mathbf{1}\{\xi_j\geq -\Delta_j^{(m)}\}-\E\mathbf{1}\{\xi_j< -\Delta_j^{(m)}\}
		=_{(3)}-2(F_j(-\Delta_j^{(m)})-F(0))\approx_{(4)} 2f_j(0)\Delta_j^{(m)},
	\end{array}
\end{equation}
where $F_j(x)$ and $f_j(x)$ are the CDF and pdf of $\xi_j$ respectively, Equation (3) is derived from the symmetry of $\xi_j$, and  (4) follows from a Taylor expansion.  Thus, $\theta_j^{(m)}$ with vanishing signal jumps encapsulates information about the signal jump  and  shares equivalent magnitudes with $\Delta_j^{(m)}$, indicating that any distortion of the signal through the sign kernel is limited to a multiplicative constant. 


\noindent\textbf{Example 3.}	Consider a sequence of observations $\{X_{t,j}\}_{t=1}^n$ with $j=1,\ldots,d$, where the variance changes at $\gamma_m$:
\[
X_{t,j} =
\begin{cases}
	\sigma_{j}^{(m)}\times\epsilon_{t,j}	\sim (0, (\sigma_{j}^{(m)})^2), & \gamma_{m-1}<t \leq \gamma_m, \\
	\sigma_{j}^{(m+1)}\times\epsilon'_{t,j}\sim	(0, (\sigma_{j}^{(m+1)})^2), &  \gamma_m<t\leq \gamma_{m+1},
\end{cases}
\]
where $(\sigma_{j}^{(m+1)})^2=\lambda_j^{(m)}(\sigma_{j}^{(m)})^2$ with $\lambda_j^{(m)}$ represents the relative change in variance and $\epsilon_{t,j}$ or $\epsilon'_{t,j}$ are i.i.d random variables with mean zero and variance one.  We can use the kernel $h(x, y) = y^2 - x^2$. In this case, $\theta_{j}^{(m)}=(\lambda^{(m)}_j-1)\sigma_{1,j}^2$ denotes the value of the variance shift. If we use $h(x,y)=\text{sign}(y^2-x^2)$, we can construct a robust variance change-point detection method that is more effective for heavy-tailed data. Specifically, similar to Example 2, we can prove that $\theta_{j}^{(m)}$ is proportional to the relative changes of variances with
\begin{equation*}
	\theta_{j}^{(m)}=\E h(X_{\gamma_m,j},X_{\gamma_m+1,j})\approx 2f_{\epsilon^2_{t,j}/\epsilon'^2_{t,j}}(1)(\lambda^{(m)}_j-1),
\end{equation*}
where $f_{\epsilon^2_{t,j}/\epsilon'^2_{t,j}}(x)$ is the pdf of $\epsilon^2_{t,j}/\epsilon'^2_{t,j}$.

\section{Possible extensions to covariance matrices and  distribution function changes}\label{sec: Possible extensions to covariance}

We briefly discuss two possible  extensions of our proposed  framework. 

\textbf{Example 1: high dimensional  covariance matrix $\bSigma $}.  Consider covariance matrix changes. Suppose, for simplicity, that the observations are mean zero within each segment. Define the transformed variable
\[
\bZ_i=\operatorname{vech}(\bX_i\bX_i^\top)\in\mathbb R^q,
\qquad q=\frac{d(d+1)}{2}.
\]
Then we have 
\[
\E(\bZ_i)=\operatorname{vech}\{\E(\bX_i\bX_i^\top)\}
=\operatorname{vech}\{\operatorname{Cov}(\bX_i)\}.
\]
Therefore, detecting a change in the covariance matrix of \(\bX_i\) is equivalent to detecting a mean change in the transformed sequence \(\bZ_i\).

For the transformed data \(\bZ_i\), we can use the same antisymmetric two-sample kernel as in the mean-change point  case,
\[
h(\bz_i,\bz_j)=\bz_j-\bz_i.
\]
For a given window size \(G\), define the left and right local averages
\[
\bar \bZ_k^-=\frac{1}{G}\sum_{i=k-G+1}^{k} \bZ_i,
\qquad
\bar \bZ_k^+=\frac{1}{G}\sum_{i=k+1}^{k+G} \bZ_i,
\qquad G\le k\le n-G.
\]
Then have 
\[
\bar \bZ_k^+-\bar \bZ_k^-
=
\frac{1}{G^2}
\sum_{i=k-G+1}^{k}\sum_{j=k+1}^{k+G}
h(\bZ_i,\bZ_j).
\]
Thus, the testing statistic for covariance matrix changes is
\[
T^{\bSigma}
=
\max_{G\le k\le n-G}
\sqrt{G}\,
\left\|
\bar \bZ_k^+-\bar \bZ_k^-
\right\|_\infty.
\]
This has the same form as the statistic in the main paper, except that the original data dimension is replaced by the transformed dimension \(q=d(d+1)/2\). A rigorous theory would require verifying the corresponding tail, non-degeneracy, and signal-to-noise conditions for the transformed variables \(\bZ_i\).

\textbf{Example 2: change point in distribution functions}.  Consider changes in marginal distribution functions at finitely  fixed points.  Specifically, let \(t_1,\ldots,t_m\in \RR\) be prespecified values. For each observation \(\bX_i=(X_{i1},\ldots,X_{id})^\top\), define the transformed vector
\[
\bZ_i
=
\big(
\mathbf 1\{X_{i1}\le t_1\},\ldots,\mathbf 1\{X_{i1}\le t_m\},
\ldots,
\mathbf 1\{X_{id}\le t_1\},\ldots,\mathbf 1\{X_{id}\le t_m\}
\big)^\top
\in\mathbb R^{dm}.
\]
The mean of this transformed vector is
\[
\E(\bZ_i)
=
\big(
F_1(t_1),\ldots,F_1(t_m),
\ldots,
F_d(t_1),\ldots,F_d(t_m)
\big)^\top,
\]
where \(F_j\) denotes the marginal distribution function of the \(j\)-th coordinate of $\bX$. Hence, detecting changes in the marginal distribution functions at the fixed grid points \(t_1,\ldots,t_m\) can be reduced to detecting a mean change in \(\bZ_i\).

Using the same antisymmetric kernel
$h(\bz_i,\bz_j)=\bz_j-\bz_i$, on the transformed data \(\bZ_i\), the corresponding moving-window  based statistic can be written as
\[
T^{F}
=
\max_{G\le k\le n-G}
\sqrt{G}
\left\|
\frac{1}{G^2}
\sum_{i=k-G+1}^{k}
\sum_{j=k+1}^{k+G}
h(\bZ_i,\bZ_j)
\right\|_\infty .
\]

Note that, in this  setting, the above statistic can also be written directly in terms of the original observations. Specifically, define the componentwise antisymmetric kernel
\[
h_{r,\ell}(\bX_i,\bX_j)
=
\mathbf 1\{X_{j,r}\le t_\ell\}
-
\mathbf 1\{X_{i,r}\le t_\ell\},
\qquad
1\le r\le d,\quad 1\le \ell\le m .
\]
Then, based on the new kernel,  the corresponding  statistic can be written equivallently  as
\[
T^F
=
\max_{G\le k\le n-G}
\sqrt{G}
\max_{1\le r\le d,\ 1\le \ell\le m}
\left|
\frac{1}{G^2}
\sum_{i=k-G+1}^{k}
\sum_{j=k+1}^{k+G}
h_{r,\ell}(\bX_i,\bX_j)
\right|.
\]

Thus, under finite  fixed points, marginal distribution changes can be embedded into our current framework either by transforming the data into indicators or by choosing the above new  kernel.

Note that if we aim to detect changes in the full marginal distribution functions, rather than their values at fixed points $t_1,\ldots,t_m$,  the problem becomes substantially different. For example, to test changes in the full marginal distribution functions, we may consider the following kernel
\[
h_{r,t}(\bX_i,\bX_j)
=
\mathbf 1\{X_{j,r}\le t\}
-
\mathbf 1\{X_{i,r}\le t\},
\qquad
1\le r\le d,\quad t\in\mathbb R .
\]
Then the corresponding moving-window based statistic would take the form
\[
T^{F,\infty}
=
\max_{G\le k\le n-G}
\sqrt{G}
\max_{1\le r\le d}
\sup_{t\in\mathbb R}
\left|
\frac{1}{G^2}
\sum_{i=k-G+1}^{k}
\sum_{j=k+1}^{k+G}
h_{r,t}(\bX_i,\bX_j)
\right|.
\]
Compared with the finite-grid statistic, the maximum is  taken not only over the marginal coordinate \(r\), but also over a continuum of  points \(t\in \RR\). Therefore, the parameter is no longer finite-dimensional. Instead, the statistic is an indexed-kernel process over the class
\[
\mathcal H
=
\{h_{r,t}:1\le r\le d,\ t\in\mathbb R\}.
\]

This setting is beyond the scope of the present paper. Our current theory is developed for finite-dimensional vector-valued kernels and the resulting moving-window maximum statistic. In contrast, the above statistic would require controlling the supremum of an infinite-dimensional indexed process jointly over the scanning location \(k\), the marginal coordinate \(r\), and the continuum of the points \(t\).

We note that there is relevant theoretical work on \(U\)-process suprema recently. For example, \cite{ChenKato2020JMB} studied Gaussian and jackknife multiplier bootstrap approximations for statistics of the form
\[
Z_n=\sup_{h\in\mathcal H}\frac{\mathbb U_n(h)}{r},
\]
where, for each fixed \(h\), \(U_n(h)\) is a scalar \(U\)-statistic and \(\mathcal H\) is a possibly infinite function class. Their results provide finite-sample Gaussian and bootstrap approximations to the supremum of a \(U\)-process. However, their setting does not consider change-point detection, moving-window scanning, multiple change-point localization, or projection-based refinement and inference. Extending our framework to the full distribution-function case would therefore require new arguments. For example, we may need to combine moving-window change-point analysis with empirical-process and \(U\)-process tools, including entropy conditions for the indexed class, Gaussian coupling, anti-concentration, multiplier bootstrap approximation, and uniform control of Hoeffding projection and degenerate remainder terms over both \(k\) and \(t\). We view this as an interesting but nontrivial direction for future research.

\newpage

\section{{Comparisons with \cite{Zhou2025BootstrapMOSUM}}}\label{sec: Comparisons with zhou}

\subsection{\textbf{Connection with \cite{Zhou2025BootstrapMOSUM}}}

We briefly discuss the connection between our method and the bootstrap MOSUM approach of \cite{Zhou2025BootstrapMOSUM}. 

\textbf{First}, both methods are developed for high-dimensional change-point analysis and share a moving-window/MOSUM-type construction. \cite{Zhou2025BootstrapMOSUM} focus on high-dimensional mean changes and compare the sample means in two adjacent local windows.  In particular, when \(h(x,y)=y-x\), our statistic reduces to \cite{Zhou2025BootstrapMOSUM}
which can be viewed as a special case of our framework.

\textbf{Second}, the two methods use a similar high-dimensional aggregation strategy. \cite{Zhou2025BootstrapMOSUM} aggregate the coordinate-wise MOSUM statistics by the \(\ell_\infty\)-norm, while our testing statistic also uses an \(\ell_\infty\)-norm aggregation of coordinate-wise moving-window U-statistics. Thus, both methods are designed to be sensitive to sparse high-dimensional changes and both require theoretical control of maximum-type statistics over coordinates and candidate locations.

\textbf{Third}, regarding  the assumptions and bootstrap theory, \cite{Zhou2025BootstrapMOSUM} impose non-degeneracy, finite moment, and sub-exponential or maximum moment conditions on the noise variables. In our framework, analogous conditions are imposed on the kernel and its Hoeffding projections. In terms of bootstrap approximation, \cite{Zhou2025BootstrapMOSUM} obtain an error term of the order
$\left\{\frac{\log^7(n d)}{G}\right\}^{1/6}$. Our multiplier bootstrap approximation has a similar high-dimensional structure and requires 
\[
\frac{\log^7\{(n-2G+1)d\}}{G}\to 0.
\]
Hence, both theories require the local window size to be sufficiently large relative to the logarithmic complexity induced by the dimension and the number of scanning locations.

\textbf{Fourth}, the power conditions in \cite{Zhou2025BootstrapMOSUM} and in our paper share the same high-dimensional signal-to-noise structure. In both cases, the local jump size, measured in an \(\ell_\infty\)-norm, needs to dominate the stochastic fluctuation of a high-dimensional maximum statistic. Up to logarithmic factors and significance-level constants, the detectable signal strength has the common form
\[
\text{signal size}
\gtrsim
\sqrt{
	\frac{\log(\text{dimension} \times \text{number of scanning locations})}
	{\text{effective local sample size}}
}.
\]
Thus, both power results reflect the same sparse high-dimensional detection principle.

\subsection{\textbf{Differences from \cite{Zhou2025BootstrapMOSUM}}}

Although our method and \cite{Zhou2025BootstrapMOSUM} share a similar moving-window/MOSUM-type procedure, there are several important differences.

\textbf{First}, the model settings are different. \cite{Zhou2025BootstrapMOSUM} mainly focus on high-dimensional mean-change detection under an at-most-one-change-point model. In contrast, our paper considers a more general high-dimensional parameter change model based on two-sample U-statistics. By choosing different kernels, our framework can cover mean changes, variance changes, robust mean changes, and robust variance changes. Moreover, our model allows multiple change points and aims to estimate both the number and the locations of these change points.

\textbf{Second}, the assumptions are imposed on different objects. Since \cite{Zhou2025BootstrapMOSUM} study mean changes, their assumptions for bootstrap approximation, size control, and power analysis are mainly imposed on the original observations , such as non-degeneracy, moment, and tail conditions. In our framework, the statistic is a two-sample U-statistic, and therefore many regularity conditions are naturally imposed on the kernel function and its Hoeffding projections.

\textbf{Third}, the proof strategy is different because our method handles both multiple change points and U-statistics. In the multiple-change-point setting, the moving-window bandwidth \(G\) needs to satisfy a two-sided requirement. It should be large enough to control high-dimensional stochastic fluctuations and bootstrap approximation errors, but it should also be smaller than the minimum spacing between adjacent change points so that one local window does not cover more than one structural break. Therefore, the theoretical requirements for power and localization involve a more delicate scaling relationship among the bandwidth \(G\), the minimum spacing between adjacent change points, and the minimum signal jump size. In addition, the U-statistic structure requires a Hoeffding decomposition. The leading stochastic fluctuation is driven by the first-order Hoeffding projections, while the degenerate remainder terms need to be controlled uniformly over both coordinates and scanning locations. These issues do not arise in the same form for the mean-based MOSUM statistic in \cite{Zhou2025BootstrapMOSUM}.

\textbf{Finally}, our paper further develops estimation refinement and statistical inference, which are not the focus of \cite{Zhou2025BootstrapMOSUM}. After obtaining initial estimators, we introduce a U-statistic projection refinement algorithm. Furthermore, we derive the limiting distribution of the refined estimator and use it to construct confidence intervals for the change-point locations. This inference problem is not considered in \cite{Zhou2025BootstrapMOSUM}.

Overall, \cite{Zhou2025BootstrapMOSUM} provide an important bootstrap MOSUM approach for high-dimensional mean-change detection. Our work extends this line of ideas to a general two-sample U-statistic framework, allows multiple change points, develops projection-based refinement, and provides limiting distribution theory and confidence interval construction for the refined estimators.

\newpage

\section{{Verification of the  assumptions}}\label{sec: Verification of the  assumptions}

In this subsection, we verify the kernel-level assumptions for several representative
kernels considered in Appendix \ref{sec: examples}. The objective is to provide sufficient conditions,
stated directly in terms of the data-generating distribution and the choice of kernel
function, under which Assumptions \textbf{(A.1)--(A.3)} hold.

Let \(X_j\) and \(Y_j\) be two independent copies from the same marginal distribution
\(F_j\) within a stationary segment. For an antisymmetric kernel \(h(x,y)\), define
the first-order Hoeffding projection by
\[
h_{1,j}(X_j)
=
\E\{h(X_j,Y_j)\mid X_j\}.
\]
Assumptions (A.1)--(A.3) require that \(h_{1,j}(X_j)\) is non-degenerate, has a
uniformly bounded sub-exponential norm, and has uniformly bounded third and fourth
moments. Specifically, for some constants \(b>0\) and \(D>0\), these conditions can
be written as
\[
\E\{h_{1,j}^2(X_j)\} \geq b,
\qquad
\|h_{1,j}(X_j)\|_{\psi_1}\leq D,
\]
and
\[
\E|h_{1,j}(X_j)|^{2+\ell}\leq D^\ell,
\qquad \ell=1,2,
\]
uniformly over \(j=1,\ldots,d\). 
Next, we give sufficient conditions under which the three assumptions hold for the four typed kernels. 

\noindent\textbf{Example 1. Mean changes with \(h(x,y)=y-x\).}
Consider the linear kernel $h(x,y)=y-x.$
Let \(\mu_j=\E X_j\). Then we have 
\[
h_{1,j}(x)
=
\E(Y_j-x)
=
\mu_j-x,
\]
and hence
\[
h_{1,j}(X_j)
=
-(X_j-\mu_j).
\]
Therefore, the sub-exponential condition in Assumption (A.2) is satisfied if
\[
\sup_{1\leq j\leq d}
\|X_j-\mu_j\|_{\psi_1}<\infty.
\]
This condition is implied, in particular, by uniform sub-exponential distributions  of the coordinates. Under the same condition,
Assumption (A.3) follows from standard moment bounds for sub-exponential random
variables. The non-degeneracy condition in Assumption (A.1) reduces to
\[
\inf_{1\leq j\leq d}
\operatorname{Var}(X_j)>0.
\]
Thus, for the unbounded mean-change kernel \(h(x,y)=y-x\), the kernel-level assumptions
are implied by conventional light-tail and  non-degenerate variance conditions on the
original observations $\bX=(X_1,\ldots,X_p)$.

\noindent\textbf{Example 2. Robust mean changes with \(h(x,y)=\operatorname{sign}(y-x)\).} Next, consider the rank-based kernel $h(x,y)=\operatorname{sign}(y-x)$. Its first-order Hoeffding's projection is
\[
h_{1,j}(x)
=
\E\{\operatorname{sign}(Y_j-x)\}
=
P(Y_j\geq x)-P(Y_j<x).
\]
If \(F_j\) is continuous, then we have 
\[
h_{1,j}(x)=1-2F_j(x).
\]
In all cases, we have 
\[
|h_{1,j}(x)|\leq 1.
\]
Consequently, for some universal constant \(C>0\),
\[
\|h_{1,j}(X_j)\|_{\psi_1}\leq C,
\qquad
E|h_{1,j}(X_j)|^q\leq 1,
\quad q\geq 1.
\]
Hence Assumptions (A.2) and (A.3) hold automatically, without imposing sub-Gaussian
or sub-exponential tail conditions on \(X_j\). This is the key reason why the rank-based
kernel is suitable for heavy-tailed or contaminated distributions.

It remains to verify non-degeneracy. A sufficient condition is
\[
\inf_{1\leq j\leq d}
\operatorname{Var}\{h_{1,j}(X_j)\}>0.
\]
For example, when \(F_j\) is continuous, \(F_j(X_j)\sim U(0,1)\), and therefore
\[
\operatorname{Var}\{h_{1,j}(X_j)\}
=
\operatorname{Var}\{1-2F_j(X_j)\}
=
\frac{1}{3}.
\]
Thus, the robust mean-change kernel satisfies Assumptions (A.1)--(A.3) under mild
non-degeneracy conditions on the marginal distribution.

\noindent\textbf{Example 3. Variance changes with \(h(x,y)=y^2-x^2\).} For the variance-change kernel $h(x,y)=y^2-x^2$
we have
\[
h_{1,j}(x)
=
\E(Y_j^2-x^2)
=
\E X_j^2-x^2.
\]
Therefore,
\[
h_{1,j}(X_j)
=
-(X_j^2-E X_j^2).
\]
The sub-exponential condition in Assumption (A.2) is satisfied if
\[
\sup_{1\leq j\leq d}
\|X_j\|_{\psi_2}<\infty.
\]
Indeed, under the above condition, we have
\[
\|X_j^2-E X_j^2\|_{\psi_1}
\leq
C\|X_j\|_{\psi_2}^2
\]
for a universal constant \(C>0\), uniformly over \(j=1,\ldots,d\). Assumption (A.3)
then follows from the same sub-exponential bound. The non-degeneracy condition in
Assumption (A.1) is implied by
\[
\inf_{1\leq j\leq d}
\operatorname{Var}(X_j^2)>0.
\]
Therefore, for the unbounded variance-change point kernel \(h(x,y)=y^2-x^2\), a convenient
sufficient data-level condition is uniform sub-Gaussianity of the original data, together
with uniform non-degeneracy of \(X_j^2\) for $j=1,\ldots,d$.

\noindent\textbf{Example 4. Robust variance changes with \(h(x,y)=\operatorname{sign}(y^2-x^2)\)}.  Finally, consider the bounded kernel $h(x,y)=\operatorname{sign}(y^2-x^2)$.
Its first-order Hoeffding's projection is
\[
h_{1,j}(x)
=
\E\{\operatorname{sign}(Y_j^2-x^2)\}
=
\P(Y_j^2\geq x^2)-\P(Y_j^2<x^2).
\]
Since
\[
\operatorname{sign}(Y_j^2-x^2)\in\{-1,1\},
\]
we have $|h_{1,j}(x)|\leq 1$. Consequently, for some universal constant \(C>0\),
\[
\|h_{1,j}(X_j)\|_{\psi_1}\leq C,
\qquad
E|h_{1,j}(X_j)|^q\leq 1,
\quad q\geq 1.
\]
Hence Assumptions (A.2) and (A.3) hold automatically, without any light-tail requirement
on \(X_j\).

For Assumption (A.1), it is sufficient to require
\[
\inf_{1\leq j\leq d}
\operatorname{Var}
\left[
\P(Y_j^2\geq X_j^2)-\P(Y_j^2<X_j^2)
\right]>0.
\]
If the distribution of \(X_j^2\) is continuous, denoted by \(F_{j,2}\), then
\[
h_{1,j}(X_j)
=
1-2F_{j,2}(X_j^2),
\]
and \(F_{j,2}(X_j^2)\sim U(0,1)\). Therefore, we have 
\[
\operatorname{Var}\{h_{1,j}(X_j)\}
=
\operatorname{Var}\{1-2F_{j,2}(X_j^2)\}
=
\frac{1}{3}.
\]
Thus, the robust variance-change point kernel satisfies Assumptions (A.1)--(A.3) automatically without conditions on the distribution of \(X_j^2\).



\newpage

\section{{Implementation details for making confidence interval}}
\label{sec: implementation for estimation confidence}

In this section, we describe the construction of confidence intervals for the detected change points. 
Theorem~\ref{theorem: refined estimation} shows that, after refinement, the centered and scaled estimator satisfies an argmax limiting distribution of a drifted weighted Brownian motion process. 
This result provides a basis for constructing both marginal and simultaneous confidence intervals.

Let \(q^{(m)}_{\alpha/2}\) and \(q^{(m)}_{1-\alpha/2}\) denote the \(\alpha/2\) and \(1-\alpha/2\) quantiles of 
\[
\underset{s\in\mathbb R}{\arg\max}\, Z^{(m)}(s).
\]
Then, for each change point \(\gamma_m\), the theoretical marginal confidence interval is given by
\[
\Big[\tilde l_{\alpha}(m),\tilde u_{\alpha}(m)\Big]
:=
\left[
\tilde\gamma_m-
\dfrac{q^{(m)}_{1-\alpha/2}}{\|\btheta^{(m)}\|_2^2},
\,
\tilde\gamma_m-
\dfrac{q^{(m)}_{\alpha/2}}{\|\btheta^{(m)}\|_2^2}
\right],
\qquad m=1,\ldots,M_0.
\]
By Theorem~\ref{theorem: refined estimation}, we have
\[
\P\left\{
\gamma_m\in
\Big[\tilde l_{\alpha}(m),\tilde u_{\alpha}(m)\Big]
\right\}
\to 1-\alpha,
\qquad m=1,\ldots,M_0.
\]

Moreover, Theorem~\ref{theorem: refined estimation} also establishes the asymptotic independence of the refined estimators for different change points. 
This allows us to construct simultaneous confidence intervals. 
Specifically, for any \(1\le J\le M_0\) and any selected change points
\(\gamma_{m_1},\ldots,\gamma_{m_J}\subset\{\gamma_1,\ldots,\gamma_{M_0}\}\), define
\[
\alpha_J = 1-(1-\alpha)^{1/J}.
\]
For each \(j=1,\ldots,J\), construct the marginal interval
\[
\Big[\tilde l_{\alpha_J}(m_j),\tilde u_{\alpha_J}(m_j)\Big]
:=
\left[
\tilde\gamma_{m_j}-
\dfrac{q^{(m_j)}_{1-\alpha_J/2}}{\|\btheta^{(m_j)}\|_2^2},
\,
\tilde\gamma_{m_j}-
\dfrac{q^{(m_j)}_{\alpha_J/2}}{\|\btheta^{(m_j)}\|_2^2}
\right].
\]
Then, as \((n,d)\to\infty\),
\[
\P\left(
\gamma_{m_1}\in
\Big[\tilde l_{\alpha_J}(m_1),\tilde u_{\alpha_J}(m_1)\Big],
\ldots,
\gamma_{m_J}\in
\Big[\tilde l_{\alpha_J}(m_J),\tilde u_{\alpha_J}(m_J)\Big]
\right)
\to 1-\alpha.
\]

Note that this procedure involves the estimation for
\(\sigma_{1,*}^{(m)}-\sigma_{4,*}^{(m)}\). Specifically, for each
\(m=1,\ldots,M_0\), we put \(\tilde{\gamma}_m\) into
(\ref{equation: signal jump estimation}) and obtain estimation for
\(\btheta^{(m)}\). After obtaining the refined estimator \(\tilde{\gamma}_m\), we estimate the  left and right segment length according to the  refined
estimators. 
Let 
\[
\tilde{\gamma}_0=1,
\qquad
\tilde{\gamma}_{M_0+1}=n .
\]
For the \(m\)th refined estimator \(\tilde{\gamma}_m\), we estimate the  segment lengths on its two sides by
\[
D_m^-=\tilde{\gamma}_m-\tilde{\gamma}_{m-1},
\qquad
D_m^+=\tilde{\gamma}_{m+1}-\tilde{\gamma}_m .
\]
Let $\rho=0.1$, we define
\[
s_{1,m}
=
\left\lfloor
\tilde{\gamma}_m-(1-\rho)D_m^-
\right\rfloor,
\qquad
e_{1,m}
=
\left\lfloor
\tilde{\gamma}_m-\rho D_m^-
\right\rfloor,
\]
and
\[
s_{2,m}
=
\left\lceil
\tilde{\gamma}_m+\rho D_m^+
\right\rceil,
\qquad
e_{2,m}
=
\left\lceil
\tilde{\gamma}_m+(1-\rho)D_m^+
\right\rceil.
\]
The local left and right neighborhoods are then given by
\[
\mathcal L_m=\{s_{1,m},\ldots,e_{1,m}\},
\qquad
\mathcal R_m=\{s_{2,m},\ldots,e_{2,m}\}.
\]
Thus, \(\mathcal L_m\) and \(\mathcal R_m\) are constructed from the estimated
stationary segments adjacent to \(\tilde{\gamma}_m\), while observations within a
\(\rho\)-fraction neighborhood of the refined estimator are excluded to reduce
local contamination near the change point.

Define the estimation for the Hoeffding's projections
\(\bh_1(\bX_t)\) and \(\bh_2(\bX_t)\) by
\[
\begin{array}{ll}
	\hat{\bh}_1(\bX_{t_1})
	=
	\dfrac{1}{|\mathcal R_m|}
	\sum\limits_{t_2\in\mathcal R_m}
	\left[
	\bh(\bX_{t_1},\bX_{t_2})-\hat{\btheta}^{(m)}
	\right],
	&
	t_1\in\mathcal L_m,
	\\[0.18in]
	\hat{\bh}_2(\bX_{t_2})
	=
	\dfrac{1}{|\mathcal L_m|}
	\sum\limits_{t_1\in\mathcal L_m}
	\left[
	\bh(\bX_{t_1},\bX_{t_2})-\hat{\btheta}^{(m)}
	\right],
	&
	t_2\in\mathcal R_m .
\end{array}
\]
Then, we can estimate \(\bSigma^{(m)}_1\) and \(\bSigma^{(m)}_2\) by
\[
\hat{\bSigma}^{(m)}_1
=
\left(
\dfrac{1}{|\mathcal L_m|}
\sum_{t\in\mathcal L_m}
\hat{\bh}_1(\bX_t)\hat{\bh}_1(\bX_t)^\top
\right)_{\hat{\Pi}_m\times \hat{\Pi}_m},
\]
and
\[
\hat{\bSigma}^{(m)}_2
=
\left(
\dfrac{1}{|\mathcal R_m|}
\sum_{t\in\mathcal R_m}
\hat{\bh}_2(\bX_t)\hat{\bh}_2(\bX_t)^\top
\right)_{\hat{\Pi}_m\times \hat{\Pi}_m},
\]
where \((A)_{\hat{\Pi}_m\times\hat{\Pi}_m}\) is the sub-matrix of \(A\)
according to \(\hat{\Pi}_m\) rows and \(\hat{\Pi}_m\) columns.

For estimating \(\bSigma^{(m)}_3\) and \(\bSigma^{(m)}_4\), we further split the
adaptive left and right local neighborhoods into two parts. Let
\[
n_{1,m}=|\mathcal L_m|=e_{1,m}-s_{1,m}+1,
\qquad
n_{2,m}=|\mathcal R_m|=e_{2,m}-s_{2,m}+1.
\]
Define
\[
u_{1,m}=s_{1,m},\qquad
u_{2,m}=s_{1,m}+\lfloor n_{1,m}/2\rfloor-1,\qquad
u_{3,m}=e_{1,m},
\]
and
\[
v_{1,m}=s_{2,m},\qquad
v_{2,m}=s_{2,m}+\lfloor n_{2,m}/2\rfloor-1,\qquad
v_{3,m}=e_{2,m}.
\]
Thus, \(\{u_{1,m},\ldots,u_{2,m}\}\) and
\(\{u_{2,m}+1,\ldots,u_{3,m}\}\) are the first and second halves of the
left local neighborhood, while \(\{v_{1,m},\ldots,v_{2,m}\}\) and
\(\{v_{2,m}+1,\ldots,v_{3,m}\}\) are the first and second halves of the
right local neighborhood of $\cL_m$ and $\cR_m$, respectively.  

Let
\[
\widehat{\bv}^{(m)}
=
\frac{1}{(u_{2,m}-u_{1,m}+1)(u_{3,m}-u_{2,m})}
\sum_{t_1=u_{1,m}}^{u_{2,m}}
\sum_{t_2=u_{2,m}+1}^{u_{3,m}}
\bh(\bX_{t_1},\bX_{t_2}),
\]
and
\[
\widehat{\bw}^{(m)}
=
\frac{1}{(v_{2,m}-v_{1,m}+1)(v_{3,m}-v_{2,m})}
\sum_{t_1=v_{1,m}}^{v_{2,m}}
\sum_{t_2=v_{2,m}+1}^{v_{3,m}}
\bh(\bX_{t_1},\bX_{t_2}),
\]
where \(\widehat{\bv}^{(m)}\) and \(\widehat{\bw}^{(m)}\) estimate the within-segment
parameters before and after the change point \(\gamma_m\), respectively.

Let the estimation for \(\underline{\bh}_{1}(\bX_t)\) and
\(\overline{\bh}_{1}(\bX_t)\) be
\[
\begin{array}{ll}
	\widehat{\underline{\bh}}_1(\bX_{t_1})
	=
	\dfrac{1}{u_{3,m}-u_{2,m}}
	\sum\limits_{t_2=u_{2,m}+1}^{u_{3,m}}
	\left[
	\bh(\bX_{t_1},\bX_{t_2})-\widehat{\bv}^{(m)}
	\right],
	&
	t_1=u_{1,m},\ldots,u_{2,m},
	\\[0.18in]
	\widehat{\overline{\bh}}_1(\bX_{t_1})
	=
	\dfrac{1}{v_{3,m}-v_{2,m}}
	\sum\limits_{t_2=v_{2,m}+1}^{v_{3,m}}
	\left[
	\bh(\bX_{t_1},\bX_{t_2})-\widehat{\bw}^{(m)}
	\right],
	&
	t_1=v_{1,m},\ldots,v_{2,m}.
\end{array}
\]

Then,  we estimate the remaining covariance
matrices by
\[
\widehat{\bSigma}^{(m)}_3
=
\left(
\dfrac{1}{v_{2,m}-v_{1,m}+1}
\sum\limits_{t=v_{1,m}}^{v_{2,m}}
\big(
\widehat{\overline{\bh}}_1(\bX_t)
-
\widehat{\bh}_2(\bX_t)
\big)
\big(
\widehat{\overline{\bh}}_1(\bX_t)
-
\widehat{\bh}_2(\bX_t)
\big)^\top
\right)_{\widehat{\Pi}_m\times \widehat{\Pi}_m},
\]
and
\[
\widehat{\bSigma}^{(m)}_4
=
\left(
\dfrac{1}{u_{2,m}-u_{1,m}+1}
\sum\limits_{t=u_{1,m}}^{u_{2,m}}
\big(
\widehat{\underline{\bh}}_1(\bX_t)
+
\widehat{\bh}_1(\bX_t)
\big)
\big(
\widehat{\underline{\bh}}_1(\bX_t)
+
\widehat{\bh}_1(\bX_t)
\big)^\top
\right)_{\widehat{\Pi}_m\times \widehat{\Pi}_m}.
\]

Using the above estimtors, we put $\hat{\btheta}^{(m)},	\hat{\bSigma}^{(m)}_1,\ldots,	\hat{\bSigma}^{(m)}_4$ into the definition of $\sigma_{1,*}^{(m)} -	\sigma_{4,*}^{(m)}$ in (\ref{equ: sigma1-star-4-star}) of the main paper and obtain the estimators $\hat{\sigma}_{1,*}^{(m)} -	\hat{\sigma}_{4,*}^{(m)}$. 

Lastly, based on the estimated parameters, we use the Monte Carlo method to simulate the process $\{ Z^{(m)}(s),s\in(-\infty,\infty)\}$ 2000 times. In each simulation, we compute \( \arg\max_{s} Z^{(m)}(s) \), resulting in 2000 values of ``\( \arg\max \)''. We obtain estimates of \( q^{(m)}_{\alpha/2} \) and \( q^{(m)}_{1-\alpha/2} \), say \( \hat{q}^{(m)}_{\alpha/2} \) and \( \hat{q}^{(m)}_{1-\alpha/2} \), by taking the \( \alpha/2 \) and \( 1-\alpha/2 \) quantiles of these 2000 values. Finally, the empirical confidence interval for each detected change point is constructed as:
\begin{equation}\label{equ: estimated confidence}
	\Big[\hat{l}_{\alpha}(m),\hat{u}_\alpha(m)\Big]:=\Big[\tilde{\gamma}_m- \dfrac{\hat{q}^{(m)}_{1-\alpha/2}}{\|\hat{\btheta}^{(m)}\|^2},	\tilde{\gamma}_m- \dfrac{\hat{q}^{(m)}_{\alpha/2}}{\|\hat{\btheta}^{(m)}\|^2}\Big],~\text{for}~m=1,\ldots,M_0.
\end{equation}

Extensive numerical studies show that the empirical coverage probability of our proposed method can closely match the nominal confidence level under various model settings. See Section \ref{sec: emprical change point inference} for more details. 

\section{{Practical Guidance on Tuning Parameter Selection}}\label{sec: pratical guidence}
\subsection{\textbf{Practical choice of the minimum-length parameter \(\eta\)}}
The parameter \(\eta\) is used to impose a minimum-length requirement on the
significant regions in the initial detection step as {defined in (\ref{equ: multiple cpt principle}}) of the main paper. Its main purpose is to filter out isolated threshold exceedances caused by local
fluctuations, rather than to directly determine the final change-point locations. In our implementation, we set \(\eta=0.15\) by default.
Extensive numerical experiments show that the proposed method is not sensitive to
this choice. In particular, when \(\eta\) varies between \(0.1\) and \(0.3\), the
method exhibits similar performance in terms of estimating the number and locations
of change points, as well as the empirical coverage of the resulting confidence
intervals. Therefore, we use \(\eta=0.15\) as a stable default choice in all numerical
experiments.
\subsection{\textbf{Data-driven selection of the active-set threshold}}

The projection refinement step requires the specification of the active-set threshold
\(w^+\) as in {Theorem \ref{theorem: support recovery}}. 
In principle, this threshold should be sufficiently large to exclude inactive
coordinates, but not so large as to remove coordinates that carry  change-point
signals. To reduce the dependence on a manually fixed constant, we adopt a data-driven
calibration rule for the multiplicative constant in \(w^+\).

For each initially estimated change point \(\widehat\gamma_m\), we first construct local
left and right neighborhoods as 
\[
\mathcal L_m=\{s_{1,m},\ldots,e_{1,m}\},
\qquad
\mathcal R_m=\{s_{2,m},\ldots,e_{2,m}\},
\]
and define the effective local sample size as
\[
b_m=\min\{|\mathcal L_m|,|\mathcal R_m|\}.
\]
For each coordinate \(j=1,\ldots,d\), the local signal strength is estimated by
\[
\widehat\theta_j^{(m)}
=
\frac{1}{|\mathcal L_m||\mathcal R_m|}
\sum_{t\in \mathcal L_m}
\sum_{s\in \mathcal R_m}
h(X_{t,j},X_{s,j}),
\]
where the same kernel \(h\) as in the main testing and estimation procedure is used.

We consider thresholds of the form
\[
w_m^+(c)
=
c\sqrt{\frac{\log(dn)}{2b_m}},
\]
where \(c>0\) is a multiplicative constant. Given \(c\), the corresponding selected active
set is
\[
\widehat\Pi_m(c)
=
\left\{
j: |\widehat\theta_j^{(m)}|>w_m^+(c)
\right\},
\qquad
\widehat k_m(c)=|\widehat\Pi_m(c)|.
\]

To determine \(c\), let
\[
a_{m,(1)}\geq a_{m,(2)}\geq \cdots \geq a_{m,(d)}
\]
denote the decreasing rearrangement of $|\widehat\theta_1^{(m)}|,\ldots,|\widehat\theta_d^{(m)}|$. We restrict the selected model size to a moderate range. Let
\[
K_{\min}=2,
\qquad
K_{\max}=\lfloor 2\log(d)\rfloor,
\]
so that the selected set contains at least two coordinates but does not become excessively
large. Accordingly, the candidate interval for \(c\) is chosen as
\[
\mathcal C_m
=
\left[
\frac{a_{m,(K_{\max}+1)}}{\sqrt{\log(dn)/(2b_m)}},
\,
\frac{a_{m,(K_{\min})}}{\sqrt{\log(dn)/(2b_m)}}
\right),
\]
which is implemented through a finite grid over this interval.

For each \(c\in\mathcal C_m\), we evaluate the separation ratio
\[
R_m(c)
=
\frac{
	\left(
	\sum_{\ell=1}^{\widehat k_m(c)}
	a_{m,(\ell)}^2
	\right)^{1/2}
}{
	\left(
	\sum_{\ell=\widehat k_m(c)+1}^{2\widehat k_m(c)}
	a_{m,(\ell)}^2
	\right)^{1/2}
},
\]
provided that \(2\widehat k_m(c)\leq d\). Note that the numerator measures the aggregate magnitude of the coordinates selected under the threshold \(w_m^+(c)\), whereas the denominator
measures the aggregate magnitude of the next group of coordinates of the same size. Hence, \(R_m(c)\) quantifies the empirical separation between the selected coordinates and the remaining leading unselected coordinates.

The constant is then selected by
\[
\widehat c_m
=
\arg\max_{c\in\mathcal C_m} R_m(c),
\]
and the final threshold and active set are defined as
\[
\widehat w_m^+
=
\widehat c_m\sqrt{\frac{\log(dn)}{2b_m}},
\qquad
\widehat\Pi_m
=
\left\{
j: |\widehat\theta_j^{(m)}|>\widehat w_m^+
\right\}.
\]

Overall, this calibration strategy provides a fully data-driven way to determine the
multiplicative constant in the active-set threshold, rather than requiring it to be fixed
subjectively in advance. In our
extensive numerical studies, this approach has shown stable finite-sample performance, where
it selects all  active signal coordinates with high probability, while keeping the
number of falsely selected inactive coordinates at a moderate level.

\subsection{\textbf{Multiscale bandwidth aggregation}}

The moving-window detector depends on the bandwidth parameter \(G\). In finite samples,
a single bandwidth may not be uniformly suitable for all change points. A smaller bandwidth
is more sensitive to closely spaced changes, whereas a larger bandwidth may
produce more stable detection statistics for weaker changes over longer stationary segments.
To reduce the dependence on a single bandwidth, inspired by the multiscale idea of
\cite{cho2024highdimensional}, we have modified their strategy to fit our moving-window
\(U\)-statistic framework and proposed a multiscale bandwidth aggregation procedure.

\textbf{Step 1: Multiscale scanning.} Let
\[
\mathcal G=\{G_1,\ldots,G_H\}, \qquad G_1<\cdots<G_H,
\]
be a collection of candidate bandwidths. For example, in our numerical experiments, we use
\[
\mathcal G=\{60,80,100\}.
\]
For each \(G\in\mathcal G\), we compute the coordinate-wise moving-window U-statistic
\[
T_{j,G}(k)
=
\frac{1}{(2G)^{3/2}}
\sum_{t_1=k-G}^{k}
\sum_{t_2=k+1}^{k+G}
h(X_{t_1,j},X_{t_2,j}),
\qquad j=1,\ldots,d,
\]
and its \(\ell_\infty\)-based testing statistic
\[
S_G(k)=\max_{1\leq j\leq d}|T_{j,G}(k)|.
\]

For each bandwidth \(G\), we compute a bandwidth-specific bootstrap critical value
\(c_{\alpha,G}\) using the proposed multiplier bootstrap.  We then identify the significant regions at this scale.  Specifically,  let
\(L_G\) denote the number of significant regions detected under bandwidth \(G\). For
\(\ell=1,\ldots,L_G\), the \(\ell\)-th significant interval is denoted by
\[
I_{\ell,G}=[v_{\ell,G},w_{\ell,G}],
\]
where
\[
S_G(k)\geq c_{\alpha,G},
\qquad k=v_{\ell,G},\ldots,w_{\ell,G},
\]
with boundary conditions
\[
S_G(v_{\ell,G}-1)<c_{\alpha,G},
\qquad
S_G(w_{\ell,G}+1)<c_{\alpha,G},
\]
and the minimum-length requirement
\[
w_{\ell,G}-v_{\ell,G}\geq \eta G .
\]
For each $G$ and \(\ell=1,\ldots,L_G\), the associated bandwidth-specific pre-estimator is defined as
\begin{equation}\label{equ: pre estimation}
	\widehat\gamma_{\ell,G}
	=
	\arg\max_{v_{\ell,G}\leq k\leq w_{\ell,G}} S_G(k).
\end{equation}

\textbf{Step 2: Candidate pooling.} Pooling over all bandwidths gives the multiscale candidate set
\[
\widehat{\mathcal P}
=
\left\{
p_{\ell,G}:G\in\mathcal G,\ \ell=1,\ldots,L_G
\right\},
\]
where each candidate is represented by
\[
p_{\ell,G}
=
\left(
\widehat\gamma_{\ell,G},
G,
S_G(\widehat\gamma_{\ell,G}),
I_{\ell,G}
\right),
\]
where, \(\widehat\gamma_{\ell,G}\) is the candidate location for the $\ell$-th change point, \(G\) is the bandwidth
under which it is detected, \(S_G(\widehat\gamma_{\ell,G})\) is the corresponding
statistic, and \(I_{\ell,G}\) is its significant interval. We further define the
normalized statistic
\begin{equation}\label{equ: normalized sta}
	Z_{\ell,G}
	=
	\frac{S_G(\widehat\gamma_{\ell,G})}{c_{\alpha,G}},   \ell=1,\ldots,L_G,
\end{equation}

which makes candidates obtained under different bandwidths comparable.

\textbf{Step 3: Anchor identification.} Since the same true change point may be detected under multiple bandwidths, we merge
the pooled candidates through an anchor-based procedure. Candidates are first ordered
from finer to coarser bandwidths. 
Starting from
\(\widehat{\mathcal A}_0=\varnothing\), suppose that before examining candidate
\(p_{\ell,G}\), the current anchor set is
\[
\widehat{\mathcal A}_R
=
\{p_{\ell_r,G_r}:r=1,\ldots,R\},
\]
where, \(R\) is the current number of retained anchors, \(p_{\ell_r,G_r}\) is the
\(r\)-th retained anchor, \(\ell_r\) denotes the index of the significant region from
which this anchor is obtained, and \(G_r\) is its associated bandwidth. The candidate
\(p_{\ell,G}\) is retained as a new anchor if its significant interval does not overlap
with any previously retained anchor interval, namely,
\[
I_{\ell,G}\cap I_{\ell_r,G_r}=\varnothing,
\qquad r=1,\ldots,R.
\]
If this condition holds, we update
\[
\widehat{\mathcal A}_{R+1}
=
\widehat{\mathcal A}_{R}\cup\{p_{\ell,G}\};
\]
otherwise, \(p_{\ell,G}\) is regarded as a repeated detection of an already retained
anchor. After all candidates have been processed, the final anchor set is denoted by
\[
\widehat{\mathcal A}
=
\{a_r=p_{\ell_r,G_r}:r=1,\ldots,\widehat M\},
\]
where \(\widehat M\) is the number of retained anchors which denote the number of detected change points and
\[
a_r:
=
p_{\ell_r,G_r}
=
\left(
\widehat\gamma_{\ell_r,G_r},
G_r,
S_{G_r}(\widehat\gamma_{\ell_r,G_r}),
I_{\ell_r,G_r}
\right).
\]

\textbf{Step 4: Anchor-based clustering.} After the anchors are identified, the pooled candidates are assigned to anchor-based
clusters. For the \(r\)-th anchor \(a_r=p_{\ell_r,G_r}\), define
\[
\mathcal C_r
=
\left\{
p_{\ell,G}\in\widehat{\mathcal P}:
I_{\ell,G}\cap I_{\ell_r,G_r}\neq\varnothing
\right\},
\qquad r=1,\ldots,\widehat M.
\]
Thus, \(\mathcal C_r\) collects candidates whose significant intervals overlap with the
significant interval of the \(r\)-th anchor, and these candidates are treated as multiscale
detections of the same underlying change point. Note that if a candidate overlaps with more than
one anchor interval, it is assigned to the closest anchor according to
\[
r^\star(\ell,G)
=
\arg\min_{1\leq r\leq \widehat M}
\left|
\widehat\gamma_{\ell,G}
-
\widehat\gamma_{\ell_r,G_r}
\right|.
\]

\textbf{Step 5: Representative candidate selection.} For each cluster \(\mathcal C_r\), we retain a single representative initial estimator. The
representative candidate is chosen as the one with the largest normalized  statistic:
\[
(\ell_r^\star,G_r^\star)
=
\arg\max_{p_{\ell,G}\in\mathcal C_r}
Z_{\ell,G}.
\]
where $Z_{\ell,G}$  is in (\ref{equ: normalized sta}). 
The merged initial location is then defined as
\[
\widehat\gamma_r^{\,0}
=
\widehat\gamma_{\ell_r^\star,G_r^\star},
\qquad r=1,\ldots,\widehat M.
\]
where $\widehat\gamma_r^{\,0}$ is defined in (\ref{equ: pre estimation}).

The final merged initial set is
\[
\widehat{\mathcal G}^{\,0}
=
\left\{
\widehat\gamma_1^{\,0},\ldots,
\widehat\gamma_{\widehat M}^{\,0}
\right\}
=
\left\{
\widehat\gamma_{\ell_r^\star,G_r^\star}:
r=1,\ldots,\widehat M
\right\}.
\]

\textbf{Step 6: Local projection refinement and inference.}
The subsequent steps are identical to those in the single-bandwidth based  procedure, except that
they are initialized by the merged multiscale estimators. For each
\(\widehat\gamma_m^{\,0}\in\widehat{\mathcal G}^{\,0}\), we construct local left and right
neighborhoods,
\[
\mathcal L_m=\{s_{1,m},\ldots,e_{1,m}\},
\qquad
\mathcal R_m=\{s_{2,m},\ldots,e_{2,m}\},
\]
using the neighboring merged initial estimators. Based on these
local neighborhoods, we estimate the coordinate-wise signal vector by
\[
\widehat\theta_j^{(m)}
=
\frac{1}{|\mathcal L_m||\mathcal R_m|}
\sum_{t\in\mathcal L_m}
\sum_{s\in\mathcal R_m}
h(X_{t,j},X_{s,j}),
\qquad j=1,\ldots,d.
\]
The active set \(\widehat\Pi_m\) is then selected using the data-driven thresholding rule
described in the previous section.

Finally, using the active set \(\widehat\Pi_m\), we perform a local projection refinement around \(\widehat\gamma_m^{\,0}\) and obtain the final refined change point estimator $\widetilde\gamma_m$. Lastly, confidence intervals for the $m$-th detected change point are then constructed using the same 
procedure as in the single-bandwidth implementation.

In summary, the multiscale procedure is designed to improve the stability of the initial
detection stage by aggregating information across several bandwidths. Extensive simulations show that the multiscale implementation is stable, it can accurately estimate the number and locations of change points and provides satisfactory confidence-interval construction.


\newpage

\section{{Numerical stability analysis}}\label{sec: Numerical stability analysis}
\subsection{{\textbf{Numerical stability of signal direction estimation}}}

We first examine the finite-sample stability of signal direction estimation. The simulation setting is the same as that used for the three-change-point mean-shift model in {Section~\ref{sec: changes in mean}.}  In particular, we consider structural changes in the $d$-dimensional parameter vector
\[
\btheta=\E\{h(\bx,\by)\},
\]
where $\bx$ and $\by$ denote observations before and after a change point, respectively. Under our design, at each true change point only the first five coordinates of $\btheta$ are allowed to change, while the remaining $d-5$ coordinates remain equal to zero. Hence, the active set is fixed as
$S^\star=\{1,2,3,4,5\}.
$ For the $m$th change point , we estimate the signal jump, which yields an estimator $\hat{\btheta}_m$ for the $m$-th change point. We then compare $\hat{\btheta}_m$ with the corresponding true signal vector $\btheta_m$. 
To assess estimation accuracy, we consider two types of metrics. The first type evaluates the entire $d$-dimensional vector:
\[
\mathrm{RelErr}^{\mathrm{full}}_m
=
\frac{\|\hat{\btheta}_m-\btheta_m\|_2}{\|\btheta_m\|_2},
\qquad
\mathrm{Cos}^{\mathrm{full}}_m
=
\frac{\hat{\btheta}_m^\top \btheta_m}{\|\hat{\btheta}_m\|_2\|\btheta_m\|_2}.
\]
These two quantities measure, respectively, the overall relative estimation error and the cosine similarity between the estimated and true signal directions. The second type focuses only on the active coordinates in $S^\star$. Let $\btheta_{m,S^\star}$ and $\hat{\btheta}_{m,S^\star}$ denote the subvectors restricted to the active set. We compute
\[
\mathrm{RelErr}^{\mathrm{act}}_m
=
\frac{\|\hat{\btheta}_{m,S^\star}-\btheta_{m,S^\star}\|_2}{\|\btheta_{m,S^\star}\|_2},
\qquad
\mathrm{Cos}^{\mathrm{act}}_m
=
\frac{\hat{\btheta}_{m,S^\star}^\top \btheta_{m,S^\star}}
{\|\hat{\btheta}_{m,S^\star}\|_2\|\btheta_{m,S^\star}\|_2}.
\]
We report these four measures under both Gaussian and $t_3$ distributions, and for the two kernel  $h(\bx,\by)=\by-\bx$
and
$h(\bx,\by)=\sign(\by-\bx)$.


The results are given in Table 	\ref{tab:signal-direction-stability}. \textbf{First}, the proposed method estimates the signal jump direction accurately on the active coordinates. Across the three change points, the relative errors restricted to the active set are small, and the corresponding cosine similarities are close to one. This shows that the estimated direction is well aligned with the true signal direction. \textbf{Second}, the results are stable with respect to the bandwidth \(G\). When \(G\) changes from 60 to 100, the four reported measures vary only mildly in most settings. This indicates that the signal direction estimation step is not sensitive to the specific bandwidth choice. \textbf{Third}, the full-vector errors are larger than the active-coordinate errors. This is expected, since the full \(d\)-dimensional vector includes many inactive coordinates. Estimation noise on these inactive coordinates affects the full relative error and full cosine similarity.

\textbf{Lastly}, the sign kernel is more robust under heavy-tailed data. Under the \(t_3\) distribution, the linear kernel has larger full-vector errors, whereas the sign kernel maintains performance close to the Gaussian case. This confirms the robustness of the  sign based kernel.


\begin{table}[H]
	\centering
	\caption{Finite-sample stability of signal  jump estimation under different bandwidths \(G\). Entries are mean (sd).}
	\label{tab:signal-direction-stability}
	\footnotesize
	\setlength{\tabcolsep}{5pt}
	\renewcommand{\arraystretch}{1.5}
	\begin{tabular}{llccccc}
		\toprule
		Dist. & Kernel & \(G\) & RelErr (full) & RelErr (active) & Cos (full) & Cos (active) \\
		\midrule
		
		\multicolumn{7}{c}{\textbf{Cpt1} at 0.3n} \\
		\midrule
		Gaussian& \(y-x\) & 60  & 0.721 (0.060) & 0.101 (0.049) & 0.810 (0.025) & 0.995 (0.007) \\
		&         & 80  & 0.721 (0.061) & 0.102 (0.050) & 0.810 (0.026) & 0.994 (0.007) \\
		&         & 100 & 0.718 (0.061) & 0.102 (0.050) & 0.812 (0.025) & 0.994 (0.006) \\
		\cmidrule(lr){2-7}
		& \(\operatorname{sign}(y-x)\) & 60  & 0.815 (0.066) & 0.093 (0.043) & 0.774 (0.028) & 0.996 (0.005) \\
		&                              & 80  & 0.814 (0.066) & 0.093 (0.043) & 0.774 (0.028) & 0.996 (0.006) \\
		&                              & 100 & 0.814 (0.066) & 0.094 (0.043) & 0.775 (0.028) & 0.996 (0.006) \\
		\midrule
		\(t_3\) & \(y-x\) & 60  & 0.903 (0.092) & 0.184 (0.084) & 0.723 (0.050) & 0.985 (0.017) \\
		&         & 80  & 0.947 (0.104) & 0.149 (0.075) & 0.719 (0.047) & 0.989 (0.015) \\
		&         & 100 & 0.958 (0.108) & 0.142 (0.069) & 0.722 (0.042) & 0.989 (0.013) \\
		\cmidrule(lr){2-7}
		& \(\operatorname{sign}(y-x)\) & 60  & 0.817 (0.066) & 0.100 (0.049) & 0.766 (0.032) & 0.995 (0.006) \\
		&                              & 80  & 0.816 (0.066) & 0.101 (0.051) & 0.767 (0.031) & 0.995 (0.007) \\
		&                              & 100 & 0.818 (0.065) & 0.102 (0.048) & 0.766 (0.031) & 0.995 (0.006) \\
		
		\midrule
		\multicolumn{7}{c}{\textbf{Cpt2} at 0.5n} \\
		\midrule
		Gaussian & \(y-x\) & 60  & 0.757 (0.062) & 0.113 (0.052) & 0.795 (0.028) & 0.993 (0.007) \\
		&         & 80  & 0.755 (0.062) & 0.112 (0.049) & 0.796 (0.029) & 0.994 (0.006) \\
		&         & 100 & 0.754 (0.061) & 0.110 (0.048) & 0.797 (0.026) & 0.994 (0.006) \\
		\cmidrule(lr){2-7}
		& \(\operatorname{sign}(y-x)\) & 60  & 0.854 (0.067) & 0.100 (0.042) & 0.759 (0.030) & 0.995 (0.004) \\
		&                              & 80  & 0.851 (0.065) & 0.100 (0.040) & 0.759 (0.030) & 0.995 (0.004) \\
		&                              & 100 & 0.853 (0.065) & 0.099 (0.041) & 0.758 (0.029) & 0.995 (0.004) \\
		\midrule
		\(t_3\) & \(y-x\) & 60  & 0.945 (0.098) & 0.245 (0.110) & 0.670 (0.073) & 0.980 (0.021) \\
		&         & 80  & 0.977 (0.107) & 0.176 (0.074) & 0.701 (0.054) & 0.985 (0.013) \\
		&         & 100 & 1.000 (0.121) & 0.159 (0.068) & 0.703 (0.047) & 0.987 (0.012) \\
		\cmidrule(lr){2-7}
		& \(\operatorname{sign}(y-x)\) & 60  & 0.858 (0.069) & 0.110 (0.045) & 0.750 (0.035) & 0.995 (0.004) \\
		&                              & 80  & 0.856 (0.067) & 0.109 (0.042) & 0.751 (0.033) & 0.995 (0.005) \\
		&                              & 100 & 0.855 (0.065) & 0.110 (0.042) & 0.751 (0.032) & 0.995 (0.004) \\
		
		\midrule
		\multicolumn{7}{c}{\textbf{Cpt3} at 0.7n} \\
		\midrule
		Gaussian & \(y-x\) & 60  & 0.722 (0.058) & 0.111 (0.047) & 0.809 (0.026) & 0.994 (0.006) \\
		&         & 80  & 0.720 (0.057) & 0.109 (0.046) & 0.810 (0.024) & 0.994 (0.005) \\
		&         & 100 & 0.721 (0.056) & 0.110 (0.046) & 0.810 (0.024) & 0.994 (0.005) \\
		\cmidrule(lr){2-7}
		& \(\operatorname{sign}(y-x)\) & 60  & 0.816 (0.064) & 0.098 (0.041) & 0.774 (0.029) & 0.995 (0.004) \\
		&                              & 80  & 0.814 (0.066) & 0.097 (0.038) & 0.775 (0.028) & 0.995 (0.004) \\
		&                              & 100 & 0.814 (0.062) & 0.096 (0.038) & 0.776 (0.026) & 0.995 (0.004) \\
		\midrule
		\(t_3\) & \(y-x\) & 60  & 0.906 (0.080) & 0.177 (0.062) & 0.720 (0.045) & 0.987 (0.010) \\
		&         & 80  & 0.922 (0.091) & 0.163 (0.073) & 0.724 (0.046) & 0.987 (0.012) \\
		&         & 100 & 0.947 (0.106) & 0.151 (0.069) & 0.721 (0.047) & 0.988 (0.012) \\
		\cmidrule(lr){2-7}
		& \(\operatorname{sign}(y-x)\) & 60  & 0.818 (0.064) & 0.105 (0.042) & 0.767 (0.030) & 0.995 (0.004) \\
		&                              & 80  & 0.818 (0.063) & 0.104 (0.039) & 0.768 (0.029) & 0.995 (0.004) \\
		&                              & 100 & 0.817 (0.064) & 0.104 (0.038) & 0.768 (0.028) & 0.995 (0.004) \\
		
		\bottomrule
	\end{tabular}
\end{table}


\subsection{{\textbf{Numerical stability of covariance matrix estimation}}}

We next examine the finite-sample stability of covariance estimation in the proposed inference procedure. Recall that, for each change point, the limiting distribution depends on four covariance matrices, denoted by
$\bSigma_{m}^{(1)}, \bSigma_{m}^{(2)}, \bSigma_{m}^{(3)}, \bSigma_{m}^{(4)}$
for the $m$th change point, {which are defined in (\ref{equ: four covs})} of the main paper. 

For each refined change point, we construct local left and right neighborhoods centered at the refined estimator, and then estimate the four covariance matrices using the Hoeffding projection-based procedure described in {Section~\ref{sec: implementation for estimation confidence}}. More specifically, let $\tilde{\tau}_m$ denote the refined estimator associated with the $m$th true change point, we form the local samples on the two sides of $\tilde{\tau}_m$ , and apply the covariance estimation method in {Section~\ref{sec: implementation for estimation confidence}} to obtain $\hat\bSigma^{(1)}_{m}, \hat\bSigma^{(2)}_{m},\hat\bSigma^{(3)}_{m}$ and $\hat\bSigma^{(4)}_{m}$.

It is worth pointing out that, in each numerical  replication, the selected active set used to construct the Hoeffding projections may vary with the data. To isolate the numerical stability of covariance estimation itself from the additional randomness induced by active-set selection, we evaluate covariance estimation only on the true active coordinates. That is, we restrict attention to the fixed active set $\Pi_{m}=\{1,2,3,4,5\}$
and construct the projection-based covariance estimators on this set only. In this way,  the resulting comparison reflects the finite-sample error of covariance estimation rather than the variability caused by support recovery. Moreover, 
note that the true covariance matrices $\bSigma_{m}^{(1)}, \bSigma_{m}^{(2)}, \bSigma_{m}^{(3)}, \bSigma_{m}^{(4)}$ are analytically unavailable even when the underlying distribution and kernel function are known, we approximate them numerically using a large-scale Monte Carlo experiment under the same data-generating mechanism as in the numerical studies. These large-sample Monte Carlo approximations are then treated as oracle benchmarks in the finite-sample comparison.

To quantify estimation accuracy, we use the relative Frobenius norm error. For $\ell\in\{1,2,3,4\}$ and each change point $m$, we compute
\[
\mathrm{RelErr}(\hat\bSigma_{\ell,m})
=
\frac{\|\hat\bSigma_{\ell,m}-\bSigma_{\ell,m}\|_{F}}
{\|\bSigma_{\ell,m}\|_{F}},
\]
where $\|\cdot\|_F$ denotes the Frobenius norm. 
We report the relative Frobenius errors for all four covariance matrices at each change point under both Gaussian and $t_3$ distributions, and for the two kernel choices
$h(\bx,\by)=\by-\bx$ and  $h(\bx,\by)=\sign(\by-\bx)$.

The results are reported in  Table 	\ref{tab:covariance-estimation-stability}.
\textbf{First}, the covariance estimators are reasonably stable with respect to the bandwidth \(G\). Across the three change points, the relative Frobenius errors do not change substantially when \(G\) varies from 60 to 100. 

\textbf{Second},  the covariance estimation errors are moderate rather than negligible, which is reasonable in this finite-sample setting. Although the full sample size is \(n=600\), each covariance matrix is estimated only from local neighborhoods around a refined change point rather than the full sample. For \(\bSigma_3\) and \(\bSigma_4\), the local neighborhoods are further split into two halves, so the effective number of observations involved in the covariance estimation is very small.  As a result, the reported relative Frobenius errors should be interpreted as finite-sample errors under limited samples.

\textbf{Third}, the errors for \(\bSigma_3\) and \(\bSigma_4\) are generally larger than those for \(\bSigma_1\) and \(\bSigma_2\). This is also expected from the construction of the estimators. The estimation of \(\bSigma_1\) and \(\bSigma_2\) mainly uses the local samples on the left and right sides of the change point, respectively. In contrast, the estimation of \(\bSigma_3\) and \(\bSigma_4\) requires further splitting the local neighborhoods. This introduces extra variability and reduces the effective sample size, leading to larger relative errors.

\textbf{Lastly}, under the Gaussian distribution, both kernels yield stable covariance estimation results.  However, under the \(t_3\) distribution, the linear kernel becomes less stable, especially for \(\bSigma_3\) and \(\bSigma_4\). In comparison, the sign kernel remains much more stable under heavy-tailed data, which reflects the robustness of using a bounded kernel.


\begin{table}[H]
	\centering
	\caption{Finite-sample stability of covariance matrix estimation under different bandwidths \(G\).
		Entries are mean (sd) of the relative Frobenius errors
		\(\|\widehat\bSigma_{\ell,m}-\bSigma_{\ell,m}\|_F/\|\bSigma_{\ell,m}\|_F\), for
		\(\ell=1,2,3,4\) at each change point.}
	\label{tab:covariance-estimation-stability}
	\footnotesize
	\setlength{\tabcolsep}{6pt}
	\renewcommand{\arraystretch}{1.5}
	\begin{tabular}{llccccc}
		\toprule
		Dist. & Kernel & \(G\) & \(\bSigma_1\) & \(\bSigma_2\) & \(\bSigma_3\) & \(\bSigma_4\) \\
		\midrule
		
		\multicolumn{7}{c}{\textbf{Cpt1} at 0.3n} \\
		\midrule
		Gaussian & \(y-x\) & 60  & 0.133 (0.053) & 0.159 (0.068) & 0.195 (0.078) & 0.208 (0.092) \\
		&         & 80  & 0.134 (0.054) & 0.159 (0.069) & 0.195 (0.079) & 0.208 (0.093) \\
		&         & 100 & 0.134 (0.054) & 0.158 (0.068) & 0.195 (0.079) & 0.208 (0.092) \\
		\cmidrule(lr){2-7}
		& \(\operatorname{sign}(y-x)\) & 60  & 0.184 (0.058) & 0.204 (0.064) & 0.190 (0.067) & 0.203 (0.078) \\
		&                              & 80  & 0.184 (0.059) & 0.203 (0.061) & 0.189 (0.068) & 0.202 (0.075) \\
		&                              & 100 & 0.183 (0.058) & 0.203 (0.060) & 0.189 (0.068) & 0.202 (0.076) \\
		\midrule
		\(t_3\) & \(y-x\) & 60  & 0.359 (0.421) & 0.361 (0.241) & 0.586 (0.850) & 0.468 (0.194) \\
		&         & 80  & 0.363 (0.365) & 0.385 (0.332) & 0.515 (0.606) & 0.509 (0.415) \\
		&         & 100 & 0.370 (0.354) & 0.387 (0.351) & 0.514 (0.604) & 0.522 (0.460) \\
		\cmidrule(lr){2-7}
		& \(\operatorname{sign}(y-x)\) & 60  & 0.205 (0.065) & 0.237 (0.079) & 0.211 (0.071) & 0.231 (0.082) \\
		&                              & 80  & 0.205 (0.064) & 0.236 (0.079) & 0.211 (0.072) & 0.230 (0.079) \\
		&                              & 100 & 0.205 (0.064) & 0.236 (0.078) & 0.212 (0.074) & 0.228 (0.078) \\
		
		\midrule
		\multicolumn{7}{c}{\textbf{Cpt2} at 0.5n} \\
		\midrule
		Gaussian & \(y-x\) & 60  & 0.163 (0.075) & 0.148 (0.059) & 0.218 (0.102) & 0.221 (0.097) \\
		&         & 80  & 0.163 (0.074) & 0.148 (0.059) & 0.219 (0.102) & 0.222 (0.100) \\
		&         & 100 & 0.163 (0.074) & 0.148 (0.060) & 0.218 (0.102) & 0.222 (0.099) \\
		\cmidrule(lr){2-7}
		& \(\operatorname{sign}(y-x)\) & 60  & 0.211 (0.080) & 0.196 (0.061) & 0.211 (0.085) & 0.210 (0.082) \\
		&                              & 80  & 0.211 (0.077) & 0.197 (0.061) & 0.211 (0.083) & 0.209 (0.081) \\
		&                              & 100 & 0.209 (0.077) & 0.196 (0.060) & 0.210 (0.084) & 0.209 (0.080) \\
		\midrule
		\(t_3\) & \(y-x\) & 60  & 0.311 (0.136) & 0.338 (0.147) & 0.405 (0.241) & 0.463 (0.176) \\
		&         & 80  & 0.355 (0.309) & 0.359 (0.223) & 0.478 (0.502) & 0.491 (0.350) \\
		&         & 100 & 0.385 (0.394) & 0.385 (0.234) & 0.527 (0.612) & 0.531 (0.368) \\
		\cmidrule(lr){2-7}
		& \(\operatorname{sign}(y-x)\) & 60  & 0.238 (0.080) & 0.222 (0.070) & 0.232 (0.085) & 0.237 (0.085) \\
		&                              & 80  & 0.235 (0.075) & 0.223 (0.071) & 0.232 (0.083) & 0.237 (0.085) \\
		&                              & 100 & 0.235 (0.076) & 0.223 (0.071) & 0.231 (0.083) & 0.236 (0.084) \\
		
		\midrule
		\multicolumn{7}{c}{\textbf{Cpt3 } at 0.7n} \\
		\midrule
		Gaussian & \(y-x\) & 60  & 0.149 (0.062) & 0.142 (0.053) & 0.222 (0.099) & 0.196 (0.082) \\
		&         & 80  & 0.149 (0.063) & 0.142 (0.052) & 0.222 (0.102) & 0.196 (0.082) \\
		&         & 100 & 0.149 (0.063) & 0.143 (0.056) & 0.223 (0.100) & 0.199 (0.095) \\
		\cmidrule(lr){2-7}
		& \(\operatorname{sign}(y-x)\) & 60  & 0.196 (0.067) & 0.192 (0.066) & 0.209 (0.080) & 0.183 (0.064) \\
		&                              & 80  & 0.195 (0.062) & 0.192 (0.065) & 0.208 (0.080) & 0.184 (0.063) \\
		&                              & 100 & 0.195 (0.061) & 0.192 (0.066) & 0.207 (0.079) & 0.183 (0.064) \\
		\midrule
		\(t_3\) & \(y-x\) & 60  & 0.380 (0.229) & 0.284 (0.101) & 0.581 (0.582) & 0.390 (0.236) \\
		&         & 80  & 0.357 (0.239) & 0.337 (0.290) & 0.544 (0.533) & 0.508 (0.696) \\
		&         & 100 & 0.390 (0.267) & 0.380 (0.382) & 0.576 (0.566) & 0.573 (0.865) \\
		\cmidrule(lr){2-7}
		& \(\operatorname{sign}(y-x)\) & 60  & 0.230 (0.084) & 0.218 (0.067) & 0.241 (0.092) & 0.219 (0.079) \\
		&                              & 80  & 0.225 (0.074) & 0.219 (0.067) & 0.242 (0.092) & 0.219 (0.080) \\
		&                              & 100 & 0.225 (0.074) & 0.219 (0.066) & 0.241 (0.089) & 0.219 (0.081) \\
		
		\bottomrule
	\end{tabular}
\end{table}

\subsection{\textbf{Effect of estimation error on the limiting distribution}}
We next study how estimation error propagates into the limiting distribution used for inference. As established in {Theorem \ref{theorem: refined estimation}}, the asymptotic distribution of the refined estimator at the $m$th change point satisfies 
\begin{equation} 
	\|\btheta^{(m)}\|_2^2\bigl(\tilde\gamma_m-\gamma_m\bigr) \;\Rightarrow\; \argmax_{s\in\mathbb{R}} Z^{(m)}(s), \qquad m=1,\ldots,M_0, \label{eq:limit_refined}
\end{equation} 
where
\begin{equation} 
	Z^{(m)}(s)= \begin{cases} -s+\sigma_{1,*}^{(m)}W_1^{(m)}(s)+\sigma_{3,*}^{(m)}W_2^{(m)}(s)+\sigma_{2,*}^{(m)}W_3^{(m)}(s), & s>0,\\[0.3em] 0, & s=0,\\[0.3em] \phantom{-}s+\sigma_{1,*}^{(m)}W_1^{(m)}(s)+\sigma_{4,*}^{(m)}W_2^{(m)}(s)+\sigma_{2,*}^{(m)}W_3^{(m)}(s), & s<0, \end{cases} \label{eq:limit_process}
\end{equation} and $\{W_1^{(m)}(s)\}$, $\{W_2^{(m)}(s)\}$, and $\{W_3^{(m)}(s)\}$ are independent standard Brownian motions on $(-\infty,\infty)$. The scale parameters in~\eqref{eq:limit_process} are determined by the signal direction $\btheta^{(m)}$ and the four covariance matrices $\bSigma_1^{(m)},\ldots,\bSigma_4^{(m)}$ through 
\begin{equation} \sigma_{1,*}^{(m)} = \frac{\bigl((\btheta^{(m)})^\top \Sigma_1^{(m)} \btheta^{(m)}\bigr)^{1/2}} {\|\btheta^{(m)}\|_2}, \qquad \sigma_{2,*}^{(m)} = \frac{\bigl((\btheta^{(m)})^\top \Sigma_2^{(m)} \btheta^{(m)}\bigr)^{1/2}} {\|\btheta^{(m)}\|_2}, \label{eq:sigma12star} 
\end{equation} and 
\begin{equation} \sigma_{3,*}^{(m)} = \frac{\bigl((\btheta^{(m)})^\top \Sigma_3^{(m)} \btheta^{(m)}\bigr)^{1/2}} {\|\btheta^{(m)}\|_2}, \qquad \sigma_{4,*}^{(m)} = \frac{\bigl((\btheta^{(m)})^\top \Sigma_4^{(m)} \btheta^{(m)}\bigr)^{1/2}} {\|\btheta^{(m)}\|_2}. \label{eq:sigma34star} 
\end{equation} 
The purpose of this experiment is to study the  effect of estimation error of 
$\btheta^{(m)}, \bSigma_{1}^{(m)},\ldots,\bSigma_{4}^{(m)}$ on the limiting distribution in $\argmax_{s\in\mathbb{R}} Z^{(m)}(s)$. 

We compare two versions of the asymptotic distribution in~\eqref{eq:limit_refined}--\eqref{eq:limit_process}. 
\begin{itemize}
	\item \textbf{Oracle version}: For each $m$th change point, we first construct an \emph{oracle} version by replacing the signal jump  and covariance matrices with their true counterparts, namely $\btheta^{(m)}$ and $\bSigma_1^{(m)},\ldots,\bSigma_4^{(m)}$. Using the Monte Carlo procedure described in {Section~\ref{sec: implementation for estimation confidence}}, we then simulate the process in~\eqref{eq:limit_process} and obtain the corresponding oracle quantiles of the argmax distribution. 
	\item \textbf{Plug-in version}: We then construct a \emph{plug-in} version by replacing the unknown quantities with their estimators, namely the estimated signal jump $\hat{\btheta}^{(m)}$ and the four estimated covariance matrices $\hat\bSigma_1^{(m)},\ldots,\hat\bSigma_4^{(m)}$. Plugging these estimators into~\eqref{eq:sigma12star}--\eqref{eq:sigma34star} yields estimated scale parameters, which are again used in the Monte Carlo simulation of the limit process to obtain the corresponding plug-in quantiles of the argmax distribution. 
\end{itemize}

The results are summarized in Table	\ref{tab:argmax-quantile-gaussian-compact} and
\ref{tab:argmax-quantile-t3-compact}. \textbf{First}, the plug-in quantiles are generally close to the corresponding oracle quantiles across the three change points and different bandwidths \(G\). This indicates that the estimation errors in the signal jump and covariance matrices do not substantially distort the limiting distribution used for inference.
\textbf{Second}, the results are stable with respect to the bandwidth \(G\). For both kernels, the absolute errors change  mildly when \(G\) varies from 60 to 100. This suggests that the plug-in approximation is not sensitive to the particular bandwidth choice, which is consistent with the results observed in the signal-jump and covariance matrix estimation.
\textbf{Third},  under the \(t_3\) distribution, it  is more challenging to estimate the quantiles than the Gaussian setting, especially for the linear kernel. 
\textbf{Lastly},  the discrepancies are more pronounced at the tail quantiles than at the central quantiles. This is reasonable, since extreme quantiles of the argmax distribution are more sensitive to errors in the estimated scale parameters.  Overall, Tables	\ref{tab:argmax-quantile-gaussian-compact} and
\ref{tab:argmax-quantile-t3-compact} show that the plug-in limiting distribution provides a stable approximation to the oracle limiting distribution.

\begin{sidewaystable}
	\centering
	\caption{Gaussian error: comparison between plug-in and oracle quantiles of the limiting argmax distribution under different bandwidths \(G\).}
	\label{tab:argmax-quantile-gaussian-compact}
	\small
	\setlength{\tabcolsep}{3pt}
	\renewcommand{\arraystretch}{1.12}
	
	\begin{adjustbox}{max width=\textwidth}
		\begin{tabular}{lcc*{18}{c}}
			\toprule
			Kernel & \(G\) & cpt
			& \multicolumn{3}{c}{0.025}
			& \multicolumn{3}{c}{0.05}
			& \multicolumn{3}{c}{0.10}
			& \multicolumn{3}{c}{0.90}
			& \multicolumn{3}{c}{0.95}
			& \multicolumn{3}{c}{0.975} \\
			\cmidrule(lr){4-6}
			\cmidrule(lr){7-9}
			\cmidrule(lr){10-12}
			\cmidrule(lr){13-15}
			\cmidrule(lr){16-18}
			\cmidrule(lr){19-21}
			& &
			& Oracle & Plug-in & Abs.Err.
			& Oracle & Plug-in & Abs.Err.
			& Oracle & Plug-in & Abs.Err.
			& Oracle & Plug-in & Abs.Err.
			& Oracle & Plug-in & Abs.Err.
			& Oracle & Plug-in & Abs.Err. \\
			\midrule
			
			\(h(x,y)=y-x\)
			& 60 & cp1
			& -8.844 & -8.838 & 1.598 (1.166)
			& -6.063 & -6.242 & 1.178 (1.002)
			& -3.760 & -3.869 & 0.764 (0.673)
			& 3.722 & 3.806 & 0.740 (0.715)
			& 6.101 & 6.170 & 1.163 (1.037)
			& 8.862 & 8.744 & 1.539 (1.274) \\
			& 60 & cp2
			& -8.182 & -8.870 & 1.692 (1.460)
			& -5.481 & -6.275 & 1.305 (1.303)
			& -3.260 & -3.868 & 0.896 (0.888)
			& 3.622 & 3.953 & 0.822 (0.678)
			& 6.580 & 6.379 & 1.305 (0.837)
			& 9.781 & 8.959 & 1.825 (1.166) \\
			& 60 & cp3
			& -9.260 & -9.133 & 1.732 (1.213)
			& -6.741 & -6.455 & 1.345 (0.917)
			& -4.320 & -3.965 & 0.902 (0.594)
			& 3.662 & 3.920 & 0.807 (0.725)
			& 6.021 & 6.307 & 1.241 (0.998)
			& 8.124 & 8.916 & 1.746 (1.396) \\
			\cmidrule(lr){2-21}
			
			& 80 & cp1
			& -8.844 & -8.925 & 1.603 (1.287)
			& -6.063 & -6.329 & 1.201 (1.084)
			& -3.760 & -3.913 & 0.788 (0.712)
			& 3.722 & 3.856 & 0.774 (0.740)
			& 6.101 & 6.257 & 1.216 (1.040)
			& 8.862 & 8.867 & 1.606 (1.292) \\
			& 80 & cp2
			& -8.182 & -8.966 & 1.721 (1.515)
			& -5.481 & -6.342 & 1.325 (1.332)
			& -3.260 & -3.917 & 0.903 (0.915)
			& 3.622 & 4.004 & 0.861 (0.733)
			& 6.580 & 6.457 & 1.343 (0.885)
			& 9.781 & 9.093 & 1.894 (1.179) \\
			& 80 & cp3
			& -9.260 & -9.114 & 1.721 (1.159)
			& -6.741 & -6.446 & 1.329 (0.854)
			& -4.320 & -3.959 & 0.886 (0.528)
			& 3.662 & 3.915 & 0.808 (0.735)
			& 6.021 & 6.301 & 1.236 (1.024)
			& 8.124 & 8.929 & 1.717 (1.448) \\
			\cmidrule(lr){2-21}
			
			& 100 & cp1
			& -8.844 & -8.831 & 1.566 (1.167)
			& -6.063 & -6.238 & 1.167 (0.970)
			& -3.760 & -3.858 & 0.762 (0.654)
			& 3.722 & 3.826 & 0.757 (0.703)
			& 6.101 & 6.192 & 1.194 (0.998)
			& 8.862 & 8.784 & 1.558 (1.238) \\
			& 100 & cp2
			& -8.182 & -8.926 & 1.680 (1.473)
			& -5.481 & -6.323 & 1.319 (1.291)
			& -3.260 & -3.900 & 0.907 (0.897)
			& 3.622 & 4.006 & 0.820 (0.671)
			& 6.580 & 6.449 & 1.278 (0.807)
			& 9.781 & 9.065 & 1.812 (1.093) \\
			& 100 & cp3
			& -9.260 & -9.079 & 1.726 (1.130)
			& -6.741 & -6.407 & 1.315 (0.849)
			& -4.320 & -3.940 & 0.880 (0.532)
			& 3.662 & 3.882 & 0.799 (0.725)
			& 6.021 & 6.252 & 1.231 (1.006)
			& 8.124 & 8.841 & 1.657 (1.427) \\
			
			\midrule
			
			\(h(x,y)=\operatorname{sign}(y-x)\)
			& 60 & cp1
			& -3.001 & -2.636 & 0.564 (0.437)
			& -1.881 & -1.843 & 0.300 (0.310)
			& -1.162 & -1.128 & 0.196 (0.205)
			& 1.022 & 1.134 & 0.191 (0.189)
			& 1.781 & 1.846 & 0.278 (0.279)
			& 2.540 & 2.633 & 0.400 (0.396) \\
			& 60 & cp2
			& -2.200 & -2.654 & 0.530 (0.472)
			& -1.520 & -1.846 & 0.376 (0.320)
			& -0.980 & -1.121 & 0.209 (0.183)
			& 1.000 & 1.140 & 0.214 (0.214)
			& 1.600 & 1.857 & 0.350 (0.338)
			& 2.401 & 2.654 & 0.438 (0.417) \\
			& 60 & cp3
			& -2.520 & -2.726 & 0.440 (0.432)
			& -1.861 & -1.897 & 0.295 (0.272)
			& -1.100 & -1.162 & 0.189 (0.182)
			& 1.080 & 1.137 & 0.188 (0.182)
			& 1.701 & 1.867 & 0.312 (0.316)
			& 2.500 & 2.677 & 0.437 (0.436) \\
			\cmidrule(lr){2-21}
			
			& 80 & cp1
			& -3.001 & -2.611 & 0.547 (0.360)
			& -1.881 & -1.830 & 0.287 (0.251)
			& -1.162 & -1.115 & 0.186 (0.146)
			& 1.022 & 1.124 & 0.189 (0.187)
			& 1.781 & 1.830 & 0.281 (0.271)
			& 2.540 & 2.614 & 0.405 (0.398) \\
			& 80 & cp2
			& -2.200 & -2.660 & 0.540 (0.441)
			& -1.520 & -1.859 & 0.390 (0.312)
			& -0.980 & -1.127 & 0.211 (0.173)
			& 1.000 & 1.145 & 0.216 (0.221)
			& 1.600 & 1.872 & 0.364 (0.347)
			& 2.401 & 2.678 & 0.451 (0.434) \\
			& 80 & cp3
			& -2.520 & -2.700 & 0.416 (0.396)
			& -1.861 & -1.885 & 0.285 (0.265)
			& -1.100 & -1.152 & 0.179 (0.172)
			& 1.080 & 1.129 & 0.180 (0.183)
			& 1.701 & 1.853 & 0.299 (0.318)
			& 2.500 & 2.658 & 0.413 (0.424) \\
			\cmidrule(lr){2-21}
			
			& 100 & cp1
			& -3.001 & -2.599 & 0.564 (0.346)
			& -1.881 & -1.822 & 0.301 (0.243)
			& -1.162 & -1.111 & 0.193 (0.142)
			& 1.022 & 1.124 & 0.190 (0.186)
			& 1.781 & 1.829 & 0.284 (0.266)
			& 2.540 & 2.610 & 0.406 (0.373) \\
			& 100 & cp2
			& -2.200 & -2.655 & 0.533 (0.437)
			& -1.520 & -1.850 & 0.379 (0.311)
			& -0.980 & -1.125 & 0.209 (0.174)
			& 1.000 & 1.149 & 0.221 (0.237)
			& 1.600 & 1.874 & 0.365 (0.369)
			& 2.401 & 2.691 & 0.462 (0.491) \\
			& 100 & cp3
			& -2.520 & -2.705 & 0.427 (0.432)
			& -1.861 & -1.887 & 0.295 (0.293)
			& -1.100 & -1.156 & 0.186 (0.186)
			& 1.080 & 1.134 & 0.182 (0.186)
			& 1.701 & 1.863 & 0.306 (0.324)
			& 2.500 & 2.669 & 0.420 (0.440) \\
			
			\bottomrule
		\end{tabular}
	\end{adjustbox}
	
	\vspace{0.5em}
	\begin{minipage}{0.98\textwidth}
		\footnotesize
		Note: Oracle denotes the average oracle quantile, Plug-in denotes the average plug-in quantile,
		and Abs.Err. denotes the mean absolute error with its standard deviation in parentheses. We compare the quantiles of 0.025,0.05,0.1,0.9,0.95,0.975.
	\end{minipage}
\end{sidewaystable}

\begin{sidewaystable}
	\centering
	\caption{ Student \(t_3\) error: comparison between plug-in and oracle quantiles of the limiting argmax distribution under different bandwidths \(G\).}
	\label{tab:argmax-quantile-t3-compact}
	\small
	\setlength{\tabcolsep}{3pt}
	\renewcommand{\arraystretch}{1.12}
	
	\begin{adjustbox}{max width=\textwidth}
		\begin{tabular}{lcc*{18}{c}}
			\toprule
			Kernel & \(G\) & cpt
			& \multicolumn{3}{c}{0.025}
			& \multicolumn{3}{c}{0.05}
			& \multicolumn{3}{c}{0.10}
			& \multicolumn{3}{c}{0.90}
			& \multicolumn{3}{c}{0.95}
			& \multicolumn{3}{c}{0.975} \\
			\cmidrule(lr){4-6}
			\cmidrule(lr){7-9}
			\cmidrule(lr){10-12}
			\cmidrule(lr){13-15}
			\cmidrule(lr){16-18}
			\cmidrule(lr){19-21}
			& &
			& Oracle & Plug-in & Abs.Err.
			& Oracle & Plug-in & Abs.Err.
			& Oracle & Plug-in & Abs.Err.
			& Oracle & Plug-in & Abs.Err.
			& Oracle & Plug-in & Abs.Err.
			& Oracle & Plug-in & Abs.Err. \\
			\midrule
			
			\(h(x,y)=y-x\)
			& 60 & cp1
			& -17.942 & -17.276 & 1.962 (1.810)
			& -15.701 & -14.885 & 2.730 (2.114)
			& -10.762 & -11.167 & 3.102 (1.619)
			& 8.782 & 9.215 & 2.662 (1.981)
			& 12.701 & 13.291 & 2.669 (1.874)
			& 16.949 & 16.328 & 2.057 (1.841) \\
			& 60 & cp2
			& -16.901 & -17.629 & 2.254 (1.252)
			& -13.845 & -15.385 & 3.330 (1.563)
			& -9.242 & -11.681 & 3.877 (2.069)
			& 8.262 & 10.253 & 2.789 (2.275)
			& 12.868 & 14.437 & 2.613 (1.697)
			& 16.500 & 17.245 & 1.769 (1.103) \\
			& 60 & cp3
			& -17.162 & -16.232 & 1.851 (1.447)
			& -14.141 & -13.048 & 2.486 (1.651)
			& -9.746 & -8.880 & 2.290 (1.526)
			& 8.804 & 9.913 & 2.852 (2.124)
			& 13.324 & 13.841 & 2.849 (1.713)
			& 16.620 & 16.602 & 2.133 (1.526) \\
			\cmidrule(lr){2-21}
			
			& 80 & cp1
			& -17.942 & -16.858 & 1.920 (1.716)
			& -15.701 & -14.017 & 2.870 (2.031)
			& -10.762 & -10.113 & 2.843 (1.829)
			& 8.782 & 9.555 & 2.576 (1.946)
			& 12.701 & 13.590 & 2.666 (1.757)
			& 16.949 & 16.480 & 1.927 (1.589) \\
			& 80 & cp2
			& -16.901 & -16.516 & 2.200 (1.640)
			& -13.845 & -13.674 & 2.874 (1.782)
			& -9.242 & -9.752 & 2.887 (1.974)
			& 8.262 & 9.639 & 2.533 (2.226)
			& 12.868 & 13.659 & 2.472 (1.699)
			& 16.500 & 16.654 & 1.804 (1.279) \\
			& 80 & cp3
			& -17.162 & -16.406 & 1.826 (1.291)
			& -14.141 & -13.343 & 2.443 (1.513)
			& -9.746 & -9.241 & 2.376 (1.554)
			& 8.804 & 9.354 & 2.738 (1.932)
			& 13.324 & 13.257 & 2.751 (1.807)
			& 16.620 & 16.167 & 2.172 (1.656) \\
			\cmidrule(lr){2-21}
			
			& 100 & cp1
			& -17.942 & -16.415 & 2.128 (1.952)
			& -15.701 & -13.448 & 3.052 (2.219)
			& -10.762 & -9.430 & 2.856 (1.826)
			& 8.782 & 9.618 & 2.527 (1.886)
			& 12.701 & 13.627 & 2.635 (1.702)
			& 16.949 & 16.506 & 1.931 (1.512) \\
			& 100 & cp2
			& -16.901 & -16.239 & 2.152 (1.617)
			& -13.845 & -13.311 & 2.745 (1.785)
			& -9.242 & -9.354 & 2.696 (1.963)
			& 8.262 & 9.415 & 2.449 (2.074)
			& 12.868 & 13.447 & 2.432 (1.616)
			& 16.500 & 16.495 & 1.817 (1.287) \\
			& 100 & cp3
			& -17.162 & -16.472 & 1.763 (1.415)
			& -14.141 & -13.450 & 2.350 (1.641)
			& -9.746 & -9.334 & 2.347 (1.631)
			& 8.804 & 9.482 & 2.648 (1.891)
			& 13.324 & 13.443 & 2.651 (1.670)
			& 16.620 & 16.370 & 2.040 (1.489) \\
			
			\midrule
			
			\(h(x,y)=\operatorname{sign}(y-x)\)
			& 60 & cp1
			& -3.262 & -3.080 & 0.523 (0.582)
			& -2.302 & -2.154 & 0.395 (0.404)
			& -1.420 & -1.309 & 0.258 (0.271)
			& 1.360 & 1.357 & 0.227 (0.185)
			& 2.080 & 2.205 & 0.350 (0.326)
			& 3.000 & 3.153 & 0.504 (0.435) \\
			& 60 & cp2
			& -3.260 & -3.164 & 0.558 (0.491)
			& -2.322 & -2.206 & 0.409 (0.334)
			& -1.382 & -1.346 & 0.247 (0.219)
			& 1.382 & 1.326 & 0.245 (0.176)
			& 2.362 & 2.165 & 0.416 (0.272)
			& 3.400 & 3.110 & 0.569 (0.395) \\
			& 60 & cp3
			& -2.942 & -3.268 & 0.571 (0.517)
			& -2.061 & -2.287 & 0.408 (0.373)
			& -1.340 & -1.396 & 0.244 (0.214)
			& 1.360 & 1.360 & 0.235 (0.206)
			& 2.041 & 2.230 & 0.377 (0.361)
			& 3.060 & 3.194 & 0.520 (0.459) \\
			\cmidrule(lr){2-21}
			
			& 80 & cp1
			& -3.262 & -3.070 & 0.453 (0.325)
			& -2.302 & -2.143 & 0.340 (0.241)
			& -1.420 & -1.302 & 0.221 (0.152)
			& 1.360 & 1.366 & 0.223 (0.192)
			& 2.080 & 2.222 & 0.350 (0.333)
			& 3.000 & 3.162 & 0.489 (0.462) \\
			& 80 & cp2
			& -3.260 & -3.176 & 0.540 (0.456)
			& -2.322 & -2.208 & 0.393 (0.312)
			& -1.382 & -1.348 & 0.242 (0.202)
			& 1.382 & 1.340 & 0.244 (0.208)
			& 2.362 & 2.185 & 0.414 (0.310)
			& 3.400 & 3.134 & 0.570 (0.446) \\
			& 80 & cp3
			& -2.942 & -3.272 & 0.568 (0.473)
			& -2.061 & -2.289 & 0.397 (0.346)
			& -1.340 & -1.394 & 0.238 (0.191)
			& 1.360 & 1.357 & 0.242 (0.206)
			& 2.041 & 2.221 & 0.377 (0.369)
			& 3.060 & 3.193 & 0.523 (0.474) \\
			\cmidrule(lr){2-21}
			
			& 100 & cp1
			& -3.262 & -3.062 & 0.490 (0.360)
			& -2.302 & -2.140 & 0.362 (0.263)
			& -1.420 & -1.300 & 0.234 (0.163)
			& 1.360 & 1.369 & 0.213 (0.179)
			& 2.080 & 2.226 & 0.345 (0.312)
			& 3.000 & 3.176 & 0.497 (0.430) \\
			& 100 & cp2
			& -3.260 & -3.141 & 0.532 (0.436)
			& -2.322 & -2.183 & 0.391 (0.288)
			& -1.382 & -1.337 & 0.239 (0.199)
			& 1.382 & 1.325 & 0.244 (0.194)
			& 2.362 & 2.160 & 0.421 (0.290)
			& 3.400 & 3.098 & 0.582 (0.416) \\
			& 100 & cp3
			& -2.942 & -3.263 & 0.562 (0.477)
			& -2.061 & -2.289 & 0.398 (0.355)
			& -1.340 & -1.391 & 0.236 (0.197)
			& 1.360 & 1.356 & 0.240 (0.197)
			& 2.041 & 2.221 & 0.372 (0.349)
			& 3.060 & 3.195 & 0.515 (0.446) \\
			
			\bottomrule
		\end{tabular}
	\end{adjustbox}
	
	\vspace{0.5em}
	\begin{minipage}{0.98\textwidth}
		\footnotesize
		Note: Oracle denotes the average oracle quantile, Plug-in denotes the average plug-in quantile,
		and Abs.Err. denotes the mean absolute error with its standard deviation in parentheses. We compare the quantiles of 0.025,0.05,0.1,0.9,0.95,0.975.
	\end{minipage}
\end{sidewaystable}

\subsection{{\textbf{Practical accuracy of the asymptotic approximation}}}

We  assess the practical accuracy of the asymptotic approximation, where we aim to study oracle limiting distribution can be approximated by  the finite-sample based refined change-point estimators.  Note that different from the previous subsection, where we compared the plug-in and oracle versions of the {limiting distribution itself},  the object  is the empirical distribution of the refined estimator obtained from repeated simulations.

Specifically, for the $m$th change point, we collect the refined estimators from $200$ Monte Carlo replications and use them to form the empirical quantiles at the levels
$0.025,0.05, 0.10$ $0.90,0.95, 0.975$.
These empirical quantiles are then compared with the corresponding quantiles of the oracle asymptotic distribution.

Note that  we construct the empirical quantiles using the refined estimators on the \emph{original location scale}, namely $\tilde{\tau}_m$, without centering by the true change point or scaling by $\|\btheta^{(m)}\|_2^2$. The reason is that this comparison is more directly interpretable in finite samples. 

The results are given in Tables	\ref{tab:raw-quantile-accuracy-gaussian} and 	\ref{tab:raw-quantile-accuracy-t3}.


\begin{table}[H]
	\centering
	\caption{Gaussian error:  comparison between empirical quantiles of the refined estimators and oracle  quantiles under different bandwidths \(G\). }
	\label{tab:raw-quantile-accuracy-gaussian}
	\small
	\setlength{\tabcolsep}{4pt}
	\renewcommand{\arraystretch}{1}
	
	\begin{adjustbox}{max width=\textwidth}
		\begin{tabular}{lcc*{12}{c}}
			\toprule
			Kernel & \(G\) & cpt
			& \multicolumn{2}{c}{0.025}
			& \multicolumn{2}{c}{0.05}
			& \multicolumn{2}{c}{0.10}
			& \multicolumn{2}{c}{0.90}
			& \multicolumn{2}{c}{0.95}
			& \multicolumn{2}{c}{0.975} \\
			\cmidrule(lr){4-5}
			\cmidrule(lr){6-7}
			\cmidrule(lr){8-9}
			\cmidrule(lr){10-11}
			\cmidrule(lr){12-13}
			\cmidrule(lr){14-15}
			& &
			& Emp. & Oracle
			& Emp. & Oracle
			& Emp. & Oracle
			& Emp. & Oracle
			& Emp. & Oracle
			& Emp. & Oracle \\
			\midrule
			
			\(h(x,y)=y-x\)
			& 60 & cp1 & 178.000 & 178.609 & 179.000 & 179.046 & 179.000 & 179.409 & 180.000 & 180.585 & 181.000 & 180.960 & 181.000 & 181.394 \\
			& 60 & cp2 & 299.000 & 298.713 & 299.000 & 299.138 & 300.000 & 299.487 & 300.000 & 300.570 & 301.000 & 301.035 & 301.000 & 301.538 \\
			& 60 & cp3 & 418.975 & 418.544 & 419.000 & 418.940 & 420.000 & 419.321 & 420.000 & 420.576 & 421.000 & 420.947 & 421.000 & 421.278 \\
			
			\cmidrule(lr){2-15}
			& 80 & cp1 & 179.000 & 178.609 & 179.000 & 179.046 & 180.000 & 179.409 & 180.000 & 180.585 & 181.000 & 180.960 & 181.000 & 181.394 \\
			& 80 & cp2 & 299.000 & 298.713 & 299.000 & 299.138 & 300.000 & 299.487 & 300.000 & 300.570 & 301.000 & 301.035 & 302.000 & 301.538 \\
			& 80 & cp3 & 418.000 & 418.544 & 419.000 & 418.940 & 420.000 & 419.321 & 421.000 & 420.576 & 421.000 & 420.947 & 421.000 & 421.278 \\
			
			\cmidrule(lr){2-15}
			& 100 & cp1 & 179.000 & 178.609 & 179.000 & 179.046 & 179.000 & 179.409 & 180.000 & 180.585 & 181.000 & 180.960 & 181.000 & 181.394 \\
			& 100 & cp2 & 299.000 & 298.713 & 299.000 & 299.138 & 300.000 & 299.487 & 301.000 & 300.570 & 301.000 & 301.035 & 301.000 & 301.538 \\
			& 100 & cp3 & 418.000 & 418.544 & 419.000 & 418.940 & 419.000 & 419.321 & 420.000 & 420.576 & 421.000 & 420.947 & 421.000 & 421.278 \\
			
			\midrule
			
			\(h(x,y)=\operatorname{sign}(y-x)\)
			& 60 & cp1 & 178.000 & 178.180 & 179.000 & 178.859 & 179.000 & 179.295 & 180.000 & 180.620 & 181.000 & 181.080 & 181.050 & 181.541 \\
			& 60 & cp2 & 299.000 & 298.665 & 299.000 & 299.078 & 300.000 & 299.406 & 301.000 & 300.607 & 301.000 & 300.971 & 302.000 & 301.456 \\
			& 60 & cp3 & 418.975 & 418.471 & 419.000 & 418.871 & 419.000 & 419.333 & 421.000 & 420.655 & 421.000 & 421.032 & 422.000 & 421.517 \\
			
			\cmidrule(lr){2-15}
			& 80 & cp1 & 178.975 & 178.180 & 179.000 & 178.859 & 179.000 & 179.295 & 180.000 & 180.620 & 181.000 & 181.080 & 181.000 & 181.541 \\
			& 80 & cp2 & 299.000 & 298.665 & 299.000 & 299.078 & 300.000 & 299.406 & 300.100 & 300.607 & 301.000 & 300.971 & 302.000 & 301.456 \\
			& 80 & cp3 & 418.000 & 418.471 & 419.000 & 418.871 & 420.000 & 419.333 & 421.000 & 420.655 & 421.000 & 421.032 & 421.025 & 421.517 \\
			
			\cmidrule(lr){2-15}
			& 100 & cp1 & 179.000 & 178.180 & 179.000 & 178.859 & 179.000 & 179.295 & 180.000 & 180.620 & 181.000 & 181.080 & 182.000 & 181.541 \\
			& 100 & cp2 & 299.000 & 298.665 & 299.000 & 299.078 & 299.000 & 299.406 & 301.000 & 300.607 & 301.000 & 300.971 & 301.000 & 301.456 \\
			& 100 & cp3 & 418.000 & 418.471 & 418.000 & 418.871 & 419.000 & 419.333 & 420.000 & 420.655 & 421.000 & 421.032 & 421.000 & 421.517 \\
			
			\bottomrule
		\end{tabular}
	\end{adjustbox}
	
	\vspace{0.5em}
	\begin{minipage}{0.98\textwidth}
		\footnotesize
		Note: Emp. denotes the empirical quantile of the refined change-point estimator over 200 replications, and Oracle denotes the average oracle raw quantile on the original location scale.
	\end{minipage}
\end{table}


\begin{table}[H]
	\centering
	\caption{Student \(t_3\) error: comparison between empirical quantiles of the refined estimators and oracle  quantiles under different bandwidths \(G\). }
	\label{tab:raw-quantile-accuracy-t3}
	\setlength{\tabcolsep}{4pt}
	\renewcommand{\arraystretch}{1.12}
	
	\begin{adjustbox}{max width=\textwidth}
		\begin{tabular}{lcc*{12}{c}}
			\toprule
			Kernel & \(G\) & cpt
			& \multicolumn{2}{c}{0.025}
			& \multicolumn{2}{c}{0.05}
			& \multicolumn{2}{c}{0.10}
			& \multicolumn{2}{c}{0.90}
			& \multicolumn{2}{c}{0.95}
			& \multicolumn{2}{c}{0.975} \\
			\cmidrule(lr){4-5}
			\cmidrule(lr){6-7}
			\cmidrule(lr){8-9}
			\cmidrule(lr){10-11}
			\cmidrule(lr){12-13}
			\cmidrule(lr){14-15}
			& &
			& Emp. & Oracle
			& Emp. & Oracle
			& Emp. & Oracle
			& Emp. & Oracle
			& Emp. & Oracle
			& Emp. & Oracle \\
			\midrule
			
			\(h(x,y)=y-x\)
			& 60 & cp1 & 178.000 & 178.194 & 178.050 & 178.420 & 179.000 & 178.917 & 181.000 & 180.884 & 182.000 & 181.278 & 189.325 & 181.706 \\
			& 60 & cp2 & 294.175 & 298.299 & 295.450 & 298.606 & 298.000 & 299.070 & 301.100 & 300.832 & 302.000 & 301.295 & 303.550 & 301.661 \\
			& 60 & cp3 & 419.000 & 418.273 & 419.150 & 418.577 & 420.000 & 419.019 & 420.000 & 420.886 & 421.000 & 421.341 & 422.000 & 421.673 \\
			
			\cmidrule(lr){2-15}
			& 80 & cp1 & 174.000 & 178.194 & 177.000 & 178.420 & 179.000 & 178.917 & 181.000 & 180.884 & 182.000 & 181.278 & 182.200 & 181.706 \\
			& 80 & cp2 & 293.525 & 298.299 & 297.000 & 298.606 & 299.000 & 299.070 & 301.000 & 300.832 & 302.000 & 301.295 & 303.000 & 301.661 \\
			& 80 & cp3 & 416.650 & 418.273 & 419.000 & 418.577 & 419.000 & 419.019 & 421.000 & 420.886 & 421.700 & 421.341 & 423.000 & 421.673 \\
			
			\cmidrule(lr){2-15}
			& 100 & cp1 & 177.000 & 178.194 & 178.000 & 178.420 & 179.000 & 178.917 & 181.000 & 180.884 & 182.000 & 181.278 & 182.500 & 181.706 \\
			& 100 & cp2 & 296.525 & 298.299 & 297.050 & 298.606 & 299.000 & 299.070 & 301.000 & 300.832 & 301.950 & 301.295 & 302.475 & 301.661 \\
			& 100 & cp3 & 418.000 & 418.273 & 419.000 & 418.577 & 419.000 & 419.019 & 421.000 & 420.886 & 423.000 & 421.341 & 424.275 & 421.673 \\
			
			\midrule
			
			\(h(x,y)=\operatorname{sign}(y-x)\)
			& 60 & cp1 & 178.000 & 178.019 & 178.000 & 178.602 & 179.000 & 179.137 & 181.000 & 180.826 & 182.000 & 181.263 & 182.000 & 181.822 \\
			& 60 & cp2 & 298.000 & 298.020 & 298.800 & 298.590 & 299.000 & 299.161 & 301.000 & 300.839 & 302.000 & 301.435 & 303.000 & 302.066 \\
			& 60 & cp3 & 418.975 & 418.213 & 419.000 & 418.748 & 419.000 & 419.186 & 421.000 & 420.826 & 421.000 & 421.240 & 422.000 & 421.859 \\
			
			\cmidrule(lr){2-15}
			& 80 & cp1 & 178.000 & 178.019 & 178.000 & 178.602 & 179.000 & 179.137 & 181.000 & 180.826 & 181.000 & 181.263 & 181.025 & 181.822 \\
			& 80 & cp2 & 297.000 & 298.020 & 298.000 & 298.590 & 299.000 & 299.161 & 301.000 & 300.839 & 302.000 & 301.435 & 302.000 & 302.066 \\
			& 80 & cp3 & 418.000 & 418.213 & 419.000 & 418.748 & 420.000 & 419.186 & 421.000 & 420.826 & 422.000 & 421.240 & 422.000 & 421.859 \\
			
			\cmidrule(lr){2-15}
			& 100 & cp1 & 178.000 & 178.019 & 178.000 & 178.602 & 179.000 & 179.137 & 181.000 & 180.826 & 181.000 & 181.263 & 182.000 & 181.822 \\
			& 100 & cp2 & 298.000 & 298.020 & 298.950 & 298.590 & 299.000 & 299.161 & 301.000 & 300.839 & 301.000 & 301.435 & 302.000 & 302.066 \\
			& 100 & cp3 & 418.000 & 418.213 & 418.000 & 418.748 & 419.000 & 419.186 & 421.000 & 420.826 & 421.000 & 421.240 & 422.000 & 421.859 \\
			
			\bottomrule
		\end{tabular}
	\end{adjustbox}
	
	\vspace{0.5em}
	\begin{minipage}{0.98\textwidth}
		\footnotesize
		Note: Emp. denotes the empirical quantile of the refined change-point estimator over 200 replications, and Oracle denotes the average oracle raw quantile on the original location scale.
	\end{minipage}
\end{table}

\textbf{First}, the empirical quantiles of the refined estimators are generally close to the oracle  quantiles. Under the Gaussian setting, the empirical quantiles for both kernels approximate the oracle quantiles well across the three change points.  \textbf{Second}, the results are stable with respect to the bandwidth \(G\).  This suggests that the practical accuracy of the asymptotic approximation is not very sensitive to the specific bandwidth choice. \textbf{Third}, the sign kernel shows more stable performance than the linear kernel, especially under the \(t_3\) distribution. Specifically, for \(h(x,y)=\operatorname{sign}(y-x)\), the empirical and oracle quantiles remain close across different bandwidths and change points. In contrast, for \(h(x,y)=y-x\), the discrepancies become larger under the heavy-tailed \(t_3\) setting, particularly at the extreme quantile levels.

\textbf{Lastly}, it is worth mentioning that the empirical quantiles appear somewhat discrete because the refined change-point estimators are integer-valued locations on the original time scale.  Specifically, the empirical quantiles are computed from 200 Monte Carlo replications, many of them take values around the true change points such as 179, 180, or 181.  Overall, the above results provide finite-sample evidence that the oracle asymptotic distribution gives a relative stable and practically accurate approximation to the distribution of the refined change-point estimator.

\subsection{\textbf{Sensitivity to active-set misspecification}}

\noindent We next examine the finite-sample sensitivity of the proposed refinement and inference
procedure to the misspecification of the active set. Recall that, for the \(m\)-th change point,
the active set is defined as
$\bPi_m=\{j: \theta_j^{(m)}\neq 0\}$,
and the U-PRA refinement step first estimates \(\Pi_m\) and then projects the local
U-statistic process onto the selected coordinates. 

Although {Theorem~\ref{theorem: support recovery}} establishes the
consistency of active-set recovery, in finite samples
the selected active set may still contain false positive or false negative coordinates. Since
the limiting distribution and the resulting confidence intervals are constructed after the
projection step, it is important to evaluate how such active-set misspecification affects the
final inference results.

To isolate this effect, we conduct a controlled sensitivity analysis by manually perturbing
the active set used in the refinement step. In the simulation design considered here, the true
active set is fixed as
$\Pi_m=\{1,2,3,4,5\},  m=1,2,3.$
For each  initial change-point estimate, we keep all other steps of the proposed
procedure unchanged, the only modification is the coordinate set used in the projection step.

Specifically, we consider the following active-set choices:
\[
\begin{aligned}
	&\text{Oracle:} && \mathcal A_m=\Pi_m,\\
	&\text{Remove 1:} && \mathcal A_m \subset \Pi_m,\quad |\Pi_m\setminus\mathcal A_m|=1,\\
	&\text{Remove 3:} && \mathcal A_m \subset \Pi_m,\quad |\Pi_m\setminus\mathcal A_m|=3,\\
	&\text{Add 1:} && \mathcal A_m=\Pi_m\cup \mathcal I_m,\quad |\mathcal I_m|=1,\\
	&\text{Add 3:} && \mathcal A_m=\Pi_m\cup \mathcal I_m,\quad |\mathcal I_m|=3,\\
	&\text{Add 5:} && \mathcal A_m=\Pi_m\cup \mathcal I_m,\quad |\mathcal I_m|=5,\\
	&\text{Add 8:} && \mathcal A_m=\Pi_m\cup \mathcal I_m,\quad |\mathcal I_m|=8,
\end{aligned}
\]
where \(\mathcal I_m\subset \Pi_m^c\) denotes a set of inactive coordinates.  This construction allows us to separately examine
whether the inference procedure is more sensitive to missing true signal coordinates or to
including additional noise coordinates.

For each perturbed active set \(\mathcal A_m\), we recompute the refined estimator by
\[
\widetilde\gamma_m(\mathcal A_m)
=
\arg\max_{|k-\widehat\gamma_m|\leq G/4}
\sum_{j\in \mathcal A_m}
\widetilde T_j(k),
\]
where \(\widetilde T_j(k)\) denotes the projected local U-statistic. We then construct the corresponding confidence interval using the same
inference procedure as in {Section~\ref{sec: implementation for estimation confidence}} based on the manually selected active set. For each setting, we report three quantities:
the refined localization error, the empirical coverage probability of the confidence interval,
and the average confidence interval length.

The results in  Table 	\ref{tab:active-set-sensitivity}. \textbf{First}, the oracle active set leads to accurate and stable inference. For both kernels, the refined localization error is small, and the empirical coverage probabilities are close to the nominal level across the three change points. This confirms the good finite-sample performance of the refinement and inference procedure when the active set is correctly specified.

\textbf{Second}, the method is relatively insensitive to adding inactive coordinates. When one, three, five, or even eight noise coordinates are added to the true active set, the refined localization error increases only mildly. The coverage probabilities also remain close to the nominal level. \textbf{Third}, missing true signal coordinates has a more serious effect. Deleting one active coordinate only slightly changes the localization error and coverage probabilities. However, deleting three active coordinates leads to a clear increase in the refined localization error. This is expected because removing true signal coordinates directly weakens the projected signal used in the refinement step.



Overall,  the results show that the proposed refinement and inference procedure is reasonably stable to moderate active-set misspecification. The method is particularly robust to false positive coordinates, while false negative coordinates have a stronger impact.


\begin{table}[H]
	\centering
	\caption{Sensitivity of refined change-point inference to active-set misspecification.}
	\label{tab:active-set-sensitivity}
	\small
	\setlength{\tabcolsep}{8pt}
	\renewcommand{\arraystretch}{1.12}
	\begin{tabular}{llcccc}
		\toprule
		Kernel & Active set & Refind Haus & CPT1 coverage & CPT2 coverage & CPT3 coverage \\
		\midrule
		
		\multicolumn{6}{c}{{Normal} Distribution} \\
		\midrule
		\(h(x,y)=y-x\) & Oracle  & 0.88 & 0.955 & 0.965 & 0.965 \\
		& Delete 1 & 0.89 & 0.950 & 0.950 & 0.970 \\
		& Delete 3 & 1.65 & 0.955 & 0.940 & 0.955 \\
		& Add 1   & 0.89 & 0.955 & 0.955 & 0.965 \\
		& Add 3   & 0.90 & 0.965 & 0.960 & 0.970 \\
		& Add 5   & 0.94 & 0.960 & 0.960 & 0.965 \\
		& Add 8   & 0.98 & 0.945 & 0.965 & 0.965 \\
		\midrule
		\(h(x,y)=\operatorname{sign}(y-x)\) & Oracle  & 0.65 & 0.960 & 0.975 & 0.965 \\
		& Delete 1 & 0.67 & 0.960 & 0.940 & 0.970 \\
		& Delete 3 & 1.51 & 0.945 & 0.935 & 0.940 \\
		& Add 1   & 0.65 & 0.965 & 0.980 & 0.975 \\
		& Add 3   & 0.72 & 0.965 & 0.975 & 0.965 \\
		& Add 5   & 0.79 & 0.950 & 0.970 & 0.960 \\
		& Add 8   & 0.80 & 0.950 & 0.960 & 0.965 \\
		
		\midrule
		\multicolumn{6}{c}{{Student \(t_3\)} Distribution} \\
		\midrule
		\(h(x,y)=y-x\) & Oracle  & 0.88 & 0.955 & 0.965 & 0.965 \\
		& Delete 1 & 0.89 & 0.950 & 0.950 & 0.970 \\
		& Delete 3 & 1.65 & 0.955 & 0.940 & 0.955 \\
		& Add 1   & 0.89 & 0.955 & 0.955 & 0.965 \\
		& Add 3   & 0.90 & 0.965 & 0.960 & 0.970 \\
		& Add 5   & 0.94 & 0.960 & 0.960 & 0.965 \\
		& Add 8   & 0.98 & 0.945 & 0.965 & 0.965 \\
		\midrule
		\(h(x,y)=\operatorname{sign}(y-x)\) & Oracle  & 0.65 & 0.960 & 0.975 & 0.965 \\
		& Delete 1 & 0.67 & 0.960 & 0.940 & 0.970 \\
		& Delete 3 & 1.51 & 0.945 & 0.935 & 0.940 \\
		& Add 1   & 0.65 & 0.965 & 0.980 & 0.975 \\
		& Add 3   & 0.72 & 0.965 & 0.975 & 0.965 \\
		& Add 5   & 0.79 & 0.950 & 0.970 & 0.960 \\
		& Add 8   & 0.80 & 0.950 & 0.960 & 0.965 \\
		
		\bottomrule
	\end{tabular}
\end{table}

\section{{Additional Robustness and Sensitivity Analyses}}\label{sec: Additional Robustness and Sensitivity Analyses}

\subsection{{\textbf{Sensitivity to time-series dependence}}}

We further examine the sensitivity of the proposed method to time-series dependence within each segment. The data-generating mechanism is the same as that in {Section~\ref{section: additional empirical study}}, except that the independent Gaussian errors are replaced by serially dependent Gaussian errors. Specifically, for \(t=1,\ldots,n\), we generate
\[
\bX_t=\bmu_t+\bvarepsilon_t,
\]
where the piecewise constant mean function is given by
\[
\bmu_t
=
\bmu_1 \mathbf{1}\{1\le t\le \gamma_1\}
+
\bmu_2 \mathbf{1}\{\gamma_1<t\le \gamma_2\}
+
\bmu_3 \mathbf{1}\{\gamma_2<t\le \gamma_3\}
+
\bmu_4 \mathbf{1}\{\gamma_3<t\le n\}.
\]
where the true change points are located at
$\gamma_1=0.3n,\gamma_2=0.5n, \gamma_3=0.7n$.
The segment mean vectors \(\bmu_1,\ldots,\bmu_4\) and the signal strength are kept the same as those in Section~\ref{sec: changes in mean}. 

To introduce temporal dependence, we generate the error process from a Gaussian AR(1) model with cross-sectional covariance structure:
\[
\bvarepsilon_t=\rho\bvarepsilon_{t-1}+c_\rho \mathbf{\eta}_t,
\qquad \eta_t\sim N(0,\Sigma),
\]
where \(\bSigma\) is the same cross-sectional covariance matrix as in the independent setting. We consider three levels of serial dependence with $\rho\in\{0.2,0.4,0.6\}$. The scaling constant \(c_\rho\) is chosen as
\[
c_\rho=(1-\rho)
\left(\frac{1+\rho}{1-\rho}\right)^{\kappa/2},
\]
with \(\kappa=0.5\) fixed throughout the experiment. This construction provides an intermediate normalization between fixing the marginal covariance and fixing the long-run covariance, and allows the long-run noise level to increase moderately as \(\rho\) becomes larger.

For each simulated dataset, we apply the proposed multiscale procedure. Specifically, we combine the candidate change points obtained from multiple bandwidths with $G\in\{60,80,100\}$ and then perform the same local refinement and inference steps as in the main simulation study. 

The change-point estimation results are reported in Table \ref{tab:serial-dependence}. When \(\rho=0.2\), the method still provides accurate localization after refinement, and the probability of correctly estimating the number of change points remains high. As \(\rho\) increases to \(0.4\) and \(0.6\), the Hausdorff localization error increases and the probability of over-estimating the number of change points becomes larger. This indicates that stronger serial dependence mainly affects the screening step and may generate additional spurious local peaks.

The confidence interval results are reported in Table \ref{tab:serial-dependence-coverage}. The empirical coverage probabilities remain reasonably close to the nominal level under mild serial dependence, while they become less stable when the temporal dependence becomes stronger. Overall, these results suggest that the proposed method is reasonably robust to mild time-series dependence, but strong serial dependence can deteriorate both change-point estimation and finite-sample inference accuracy.

\begin{table}[htbp]
	\centering
	\caption{Simulation results under serial dependence. The table reports the Hausdorff localization error and the proportions of correctly and over-estimated numbers of change points under the multiscale scheme.}
	\label{tab:serial-dependence}
	\begin{tabular}{llccc}
		\toprule
		Serial dependence & Method & Haus & $\widehat{M}_0=M_0$ & $\widehat{M}_0>M_0$ \\
		\midrule
		$\rho=0.2$ 
		& $h_1$-initial & 9.240 & 0.900 & 0.100 \\
		& $h_1$-refined & 1.500 & 0.900 & 0.100 \\
		& $h_2$-initial & 9.925 & 0.890 & 0.110 \\
		& $h_2$-refined & 3.135 & 0.890 & 0.110 \\
		\midrule
		$\rho=0.4$ 
		& $h_1$-initial & 24.660 & 0.660 & 0.340 \\
		& $h_1$-refined & 12.850 & 0.660 & 0.340 \\
		& $h_2$-initial & 48.115 & 0.275 & 0.725 \\
		& $h_2$-refined & 35.000 & 0.275 & 0.725 \\
		\midrule
		$\rho=0.6$ 
		& $h_1$-initial & 65.300 & 0.120 & 0.880 \\
		& $h_1$-refined & 56.580 & 0.120 & 0.880 \\
		& $h_2$-initial & 82.920 & 0.005 & 0.995 \\
		& $h_2$-refined & 81.645 & 0.005 & 0.995 \\
		\bottomrule
	\end{tabular}
\end{table}

\begin{table}[htbp]
	\centering
	\caption{Empirical coverage probabilities under serial dependence with the multiscale scheme.}
	\label{tab:serial-dependence-coverage}
	\begin{tabular}{lcccccc}
		\toprule
		\multirow{2}{*}{Serial dependence} 
		& \multicolumn{3}{c}{$h_1(x,y)=y-x$} 
		& \multicolumn{3}{c}{$h_2(x,y)=\operatorname{sign}(y-x)$} \\
		\cmidrule(lr){2-4} \cmidrule(lr){5-7}
		& $\gamma_1$ & $\gamma_2$ & $\gamma_3$ 
		& $\gamma_1$ & $\gamma_2$ & $\gamma_3$ \\
		\midrule
		$\rho=0.2$ & 0.915 & 0.930 & 0.925 & 0.920 & 0.900 & 0.905 \\
		$\rho=0.4$ & 0.885 & 0.915 & 0.885 & 0.890 & 0.875 & 0.855 \\
		$\rho=0.6$ & 0.855 & 0.925 & 0.900 & 0.843 & 0.795 & 0.844 \\
		\bottomrule
	\end{tabular}
\end{table}

\subsection{{\textbf{Sensitivity to within-segment heteroskedasticity}}}

We next examine the sensitivity of the proposed method to within-segment heteroskedasticity. The baseline data-generating mechanism, the true change-point locations, and the signal strength are kept the same as those in {Section~\ref{sec: changes in mean}}. The only modification is that the Gaussian errors are allowed to have time-varying variances within each stationary segment.

Specifically, we generate
\[
\bvarepsilon_t=\bsigma_t \mathbf{\eta}_t,
\qquad \eta_t\sim N(0,\Sigma),
\]
where \(\bSigma\) is the same cross-sectional covariance matrix as in the independent Gaussian setting. Let \(r_t\) denote the relative location of \(t\) within its own segment. If \(\gamma_{j-1}<t\le \gamma_j\), with \(\gamma_0=0\) and \(\gamma_4=n\), we define
\[
r_t=\frac{t-\gamma_{j-1}}{\gamma_j-\gamma_{j-1}}.
\]
The variance scale is set as
\[
\sigma_t^2=1+a\sin(2\pi r_t),
\]
where \(a\in\{0.3,0.5,0.7\}\) controls the strength of within-segment heteroskedasticity. This design introduces smooth variance changes within each segment. 

For each simulated dataset, we apply the same multiscale procedure as in {Section~\ref{sec: pratical guidence}}, combining the results from \(G=60,80,100\). The change-point estimation results are reported in Table~	\ref{tab:within-segment-heteroskedasticity}, and the empirical coverage probabilities are reported in Table~\ref{tab:heteroskedasticity-coverage}.

\textbf{First}, the change-point estimation results show that within-segment heteroskedasticity has  some effect on the initial detection step. Across different values of \(a\), the initial Hausdorff errors increase slightly as the heteroskedasticity becomes stronger, indicating that stronger variance variation can make the initial screening step  more difficult. \textbf{Second}, the refined estimators remain stable for both kernels, and the effect of within-segment heteroskedasticity on the final localization accuracy is mild. \textbf{Third}, the results for estimating the number of change points are also relatively stable across different heteroskedasticity levels.  The proportions of correctly estimating and over-estimating  vary  mildly as a increases from 0.3 to 0.7. \textbf{Fourthly}, for confidence interval construction, the empirical coverage probabilities are close to the nominal level when \(a=0.3\). As \(a\) increases, the coverage probabilities decrease moderately, especially when \(a=0.7\). This is expected because stronger within-segment heteroskedasticity makes local covariance estimation more difficult.

Overall, the proposed multiscale procedure remains relative stable under mild and moderate within-segment heteroskedasticity especially for refined localization, while stronger heteroskedasticity mainly affects finite-sample coverage probability of the confidence interval.

\begin{table}[H]
	\centering
	\caption{Simulation results under within-segment heteroskedasticity with the multiscale scheme.}
	\label{tab:within-segment-heteroskedasticity}
	\begin{tabular}{llccc}
		\toprule
		Heteroskedasticity level & Method & Haus & $\widehat{M}_0=M_0$ & $\widehat{M}_0>M_0$ \\
		\midrule
		$a=0.3$
		& $h_1$-initial & 6.385 & 0.955 & 0.045 \\
		& $h_1$-refined & 1.250 & 0.955 & 0.045 \\
		& $h_2$-initial & 6.630 & 0.925 & 0.075 \\
		& $h_2$-refined & 0.720 & 0.925 & 0.075 \\
		\midrule
		$a=0.5$
		& $h_1$-initial & 6.945 & 0.935 & 0.065 \\
		& $h_1$-refined & 1.015 & 0.935 & 0.065 \\
		& $h_2$-initial & 7.405 & 0.930 & 0.070 \\
		& $h_2$-refined & 1.025 & 0.930 & 0.070 \\
		\midrule
		$a=0.7$
		& $h_1$-initial & 6.975 & 0.935 & 0.065 \\
		& $h_1$-refined & 0.950 & 0.935 & 0.065 \\
		& $h_2$-initial & 7.510 & 0.950 & 0.050 \\
		& $h_2$-refined & 0.810 & 0.950 & 0.050 \\
		\bottomrule
	\end{tabular}
\end{table}

\begin{table}[htbp]
	\centering
	\caption{Empirical coverage probabilities under within-segment heteroskedasticity with the multiscale scheme.}
	\label{tab:heteroskedasticity-coverage}
	\begin{tabular}{llcccccc}
		\toprule
		Bandwidth & Heteroskedasticity level
		& \multicolumn{3}{c}{$h_1(x,y)=y-x$}
		& \multicolumn{3}{c}{$h_2(x,y)=\operatorname{sign}(y-x)$} \\
		\cmidrule(lr){3-5} \cmidrule(lr){6-8}
		& & $\gamma_1$ & $\gamma_2$ & $\gamma_3$
		& $\gamma_1$ & $\gamma_2$ & $\gamma_3$ \\
		\midrule
		Multiscale & $a=0.3$ & 0.940 & 0.955 & 0.935 & 0.950 & 0.950 & 0.950 \\
		& $a=0.5$ & 0.895 & 0.945 & 0.910 & 0.930 & 0.925 & 0.935 \\
		& $a=0.7$ & 0.890 & 0.900 & 0.875 & 0.895 & 0.855 & 0.900 \\
		\bottomrule
	\end{tabular}
\end{table}

\newpage
\section{Useful Lemmas}\label{section: useful lemmas}
For $\beta>0$, we define the function $\psi_{\beta}: [0,\infty)\rightarrow [0,\infty)$ as $\psi_{\beta}(x):=\exp(x^\beta)-1$. Then, for any random variable $X$, we define 
\begin{equation*}
	\|X\|_{\psi_{\beta}}:=\inf\big\{C>0: \E\psi_{\beta}(|X|/C)|)\leq 1\big\}.
\end{equation*}

\begin{lemma}\label{lemma: concentration for maxsimum sub-exponential1}
	Let $X_1,\ldots,X_n\in \RR^1$ be independent random variables following sub-exponential distribution. Then, there is some $C_1>3$ and some large enough constant $C_2>0$ it holds that 
	\begin{equation*}
		\P\Big(|\sum_{t=\tau_1(k)}^{\tau_2(k)} X_{t}|\geq C_2\max\Big(\sqrt{(\tau_2(k)-\tau_1(k))\log(nd)},\log(nd)\Big),k=\underline{k},\ldots,\overline{k}\Big)\leq 2(nd)^{-C_1}.
	\end{equation*}
	
\end{lemma}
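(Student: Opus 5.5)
This follows from Bernstein's inequality for sub-exponential variables combined with a union bound over the scanning locations $k=\underline{k},\ldots,\overline{k}$. The statement is implicitly for centered variables ($\E X_t=0$) with a uniform bound $\max_t\|X_t\|_{\psi_1}\le K$ for some constant $K$, and with at most $n$ scanning locations. I would make these assumptions explicit; otherwise the bound cannot hold.

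\textbf{Step 1 (single window).} Fix $k$ and write $L_k=\tau_2(k)-\tau_1(k)+1$. Bernstein's inequality for independent centered sub-exponential variables gives, for a universal $c>0$ and all $u>0$,
\begin{equation*}
\P\Big(\Big|\sum_{t=\tau_1(k)}^{\tau_2(k)}X_t\Big|\ge u\Big)\le 2\exp\Big\{-c\min\Big(\frac{u^2}{K^2L_k},\frac{u}{K}\Big)\Big\}.
\end{equation*}

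\textbf{Step 2 (choice of threshold).} Take $u=C_2\max\{\sqrt{L_k\log(nd)},\log(nd)\}$. Then $u^2/(K^2L_k)\ge C_2^2\log(nd)/K^2$ and $u/K\ge C_2\log(nd)/K$. With $C_2$ large enough (depending only on $K$ and $c$), the minimum is at least $(C_1+1)\log(nd)$. Hence each single-window probability is at most $2(nd)^{-(C_1+1)}$.

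The difference between $\tau_2(k)-\tau_1(k)$ and $L_k$ is immaterial. The $\log(nd)$ term in the max covers windows of length $0$ or $1$, so $L_k\le 2(\tau_2-\tau_1)\vee 1$ and the constant factor is absorbed into $C_2$.

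\textbf{Step 3 (union bound).} Read the event as ``there exists $k$ with the sum exceeding the threshold.'' Since $\overline{k}-\underline{k}+1\le n\le nd$, summing over $k$ gives
\begin{equation*}
\sum_{k}2(nd)^{-(C_1+1)}\le 2(nd)^{-C_1},
\end{equation*}
which is the claim. Any $C_1>3$ is attainable because $C_2$ can be enlarged freely.

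\textbf{Main point requiring care.} The only delicate part is bookkeeping rather than analysis. One must state the implicit assumptions (centering, a uniform $\psi_1$ bound, at most $n$ scanning locations) and keep track of the constants. In particular, $C_2$ depends on $K$ and on the target exponent $C_1$, but not on $n$ or $d$.
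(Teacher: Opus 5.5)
Your proposal is correct and follows the same route as the paper. Both apply Bernstein's inequality to the sum over a single window $[\tau_1(k),\tau_2(k)]$, take a threshold of order $\max\{\sqrt{(\tau_2(k)-\tau_1(k))\log(nd)},\log(nd)\}$ with $C_2$ large, and finish with a union bound over at most $n$ locations $k$.

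Stating the centering and the uniform $\psi_1$ bound explicitly is a useful clarification, not a change of approach. So is using the threshold $\sqrt{L_k\log(nd)}$; the paper writes $x_1=C_2(\tau_2(k)-\tau_1(k))\sqrt{\log(nd)}$, which appears to be a typo for this.
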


\begin{lemma}\label{lemma: concentration for maxsimum sub-exponential}
	Let $X_1,\ldots,X_n\in \RR^1$ be independent random variables following sub-exponential distribution with $\|X_t\|_{\psi_1}<\infty$. Let $1\leq \tau_1(k)\leq  \tau_2(k)\leq n$. Then, with probability at least $1-2(np)^{-C_1}$ with $C_1>3$ for sufficient large $n$ and $d$, we have
	\begin{equation*}
		\max_{\underline{k}\leq k\leq \overline{k}}	|\sum_{t=\tau_1(k)}^{\tau_2(k)} X_{t}|\leq C_2\max_{k}(\tau_2(k)-\tau_1(k))\max\Big(\sqrt{\dfrac{\log(nd)}{\min_k(\tau_2(k)-\tau_1(k))}},{\dfrac{\log(nd)}{\min_k(\tau_2(k)-\tau_1(k))}}\Big).
	\end{equation*}
\end{lemma}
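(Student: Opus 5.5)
I plan to prove Lemma~\ref{lemma: concentration for maxsimum sub-exponential} by reducing it to Lemma~\ref{lemma: concentration for maxsimum sub-exponential1} (a pointwise Bernstein bound plus a union bound over the scanning index $k$), followed by an elementary monotonicity step. I implicitly assume the $X_t$ are centered, as in every application in the paper; otherwise the statement fails.

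\textbf{Step 1: a fixed $k$.} Fix $k$ and write $L_k=\tau_2(k)-\tau_1(k)+1$. Bernstein's inequality for sums of independent centered sub-exponential variables with $\max_t\|X_t\|_{\psi_1}\le K$ gives
\[
\P\Big(\Big|\sum_{t=\tau_1(k)}^{\tau_2(k)}X_t\Big|\ge x\Big)\le 2\exp\Big\{-c\min\Big(\frac{x^2}{K^2L_k},\frac{x}{K}\Big)\Big\}.
\]
Choose $x=C_2\max\{\sqrt{L_k\log(nd)},\log(nd)\}$ with $C_2$ large relative to $K$ and $c$. Then both terms inside the minimum are at least $(C_1+2)\log(nd)$, so the probability is at most $2(nd)^{-(C_1+2)}$. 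Here $L_k$ and $\tau_2(k)-\tau_1(k)$ differ only by a constant factor, which is absorbed into $C_2$.

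\textbf{Step 2: union bound over $k$.} The index $k$ ranges over at most $n$ values. A union bound therefore gives, with probability at least $1-2n(nd)^{-(C_1+2)}\ge 1-2(nd)^{-C_1}$, that simultaneously for all $k$,
\[
\Big|\sum_{t=\tau_1(k)}^{\tau_2(k)}X_t\Big|\le C_2\max\{\sqrt{L_k\log(nd)},\log(nd)\}.
\]
This is exactly Lemma~\ref{lemma: concentration for maxsimum sub-exponential1}, so I could also simply invoke it. If the lemma is later applied over both coordinates and locations, one adds the factor $d$ to the union, which is why the exponent carries $nd$.

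\textbf{Step 3: uniform form.} Let $L_{\max}=\max_k L_k$ and $L_{\min}=\min_k L_k$. For every $k$,
\[
\sqrt{L_k\log(nd)}=L_k\sqrt{\log(nd)/L_k}\le L_{\max}\sqrt{\log(nd)/L_{\min}},
\]
and
\[
\log(nd)=L_{\min}\cdot\frac{\log(nd)}{L_{\min}}\le L_{\max}\,\frac{\log(nd)}{L_{\min}}.
\]
Taking the maximum over $k$ yields the displayed bound.

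\textbf{Main obstacle.} There is no real obstacle. The only delicate point is the bookkeeping of constants: $C_2$ must be chosen large enough that the exponent beats both the union over $n$ locations and the target $(nd)^{-C_1}$ with $C_1>3$, and the off-by-one difference between $L_k$ and $\tau_2(k)-\tau_1(k)$ must be handled. That difference requires $L_{\min}\ge 2$, which holds in all uses, where the windows have length of order $G$.
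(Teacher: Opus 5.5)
Your proof is correct and uses the same approach as the paper's: a fixed-$k$ Bernstein bound followed by a union bound over the at most $n$ scanning locations. The only difference is ordering. The paper replaces $L_k$ by $\max_k(\tau_2(k)-\tau_1(k))$ and $\min_k(\tau_2(k)-\tau_1(k))$ inside each fixed-$k$ tail bound and then takes the union, whereas you take the union first (recovering Lemma~\ref{lemma: concentration for maxsimum sub-exponential1}) and do the monotonicity comparison deterministically afterwards. Your remark that the $X_t$ must be centered is also right: the paper's use of Bernstein's inequality tacitly requires it, and it holds in every application, since the lemma is only applied to the centered Hoeffding projections.
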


\begin{lemma}\label{lemma: Hoeffding's residual}
	Let $h(x,y): R^2\rightarrow \R$ be a measurable two sample kernel. Suppose $X\sim F(x)$ and $Y\sim G(y)$ and let $\theta=\E h(X,Y)$.  Consider the following Hoeffding's decomposition:
	\begin{equation*}
		h(x,y)-\theta=h_1(x)+h_2(y)+g(x,y)
	\end{equation*}
	where $h_1(x):=\E h(x,Y)-\theta$, $h_2(y)=\E h(X,y)-\theta$, and $g(x,y)=h(x,y)-\theta-h_1(x)-h_2(y)$. Let $\sigma^2=\text{Var}[h(X,Y)]$ and 
	let 
	\begin{equation*}
		\begin{array}{ll}
			w_1(k)=\overline{\tau}_1(k)-\underline{\tau}_1(k),~~ \underline{w}_1=\min_{\underline{k}\leq k\leq \overline{k}}w_1(k),~~\overline{w}_1=\max_{\underline{k}\leq k\leq \overline{k}}w_1(k)\\
			w_2(k)=\overline{\tau}_2(k)-\underline{\tau}_2(k),~~ \underline{w}_2=\min_{\underline{k}\leq k\leq \overline{k}}w_2(k),~~\overline{w}_2=\max_{\underline{k}\leq k\leq \overline{k}}w_2(k).\\
		\end{array}
	\end{equation*}
	There exist some big enough constant $C_1>0$ such that with probability at least $1-(nd)^{-C_2}$ with $C_2>2$, the following holds:
	\begin{equation*}
		\max\limits_{\underline{k}\leq k \leq \overline{k}}\left| \sum\limits_{t_1=\underline{\tau}_1(k)}^{\overline{\tau}_1(k)} \sum\limits_{t_2=\underline{\tau}_2(k)}^{\overline{\tau}_2(k)} g(X_{t_1}, Y_{t_2}) \right|\leq C_1\log(nd)  [\overline{w}_1]^{1/2} [\overline{w}_2]^{1/2} \Big(\sqrt{\sigma^2}+\sqrt{\dfrac{\log^2(nd)}{\underline{w}_1}}+\sqrt{\dfrac{\log^2(nd)}{\underline{w}_2}}\Big).
	\end{equation*}
	Moreover, there exist some big enough constant $C_1>0$ such that with probability at least $1-(nd)^{-C_2}$ with $C_2>3$, the following holds uniformly over $k=\underline{k},\ldots,\overline{k}$:
	\begin{equation*}
		\left| \sum\limits_{t_1=\underline{\tau}_1(k)}^{\overline{\tau}_1(k)} \sum\limits_{t_2=\underline{\tau}_2(k)}^{\overline{\tau}_2(k)} g(X_{t_1}, Y_{t_2}) \right|\leq C_1\log(nd)  [w_1(k)]^{1/2}[w_2(k)]^{1/2} \Big(\sqrt{\sigma^2}+\sqrt{\dfrac{\log^2(nd)}{w_1(k)}}+\sqrt{\dfrac{\log^2(nd)}{w_2(k)}}\Big).
	\end{equation*}
\end{lemma}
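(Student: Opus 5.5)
We need to prove Lemma (Hoeffding's residual) bound for the degenerate double sum, uniformly over k.

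The plan is to bound the degenerate remainder $S(k):=\sum_{t_1=\underline{\tau}_1(k)}^{\overline{\tau}_1(k)}\sum_{t_2=\underline{\tau}_2(k)}^{\overline{\tau}_2(k)} g(X_{t_1},Y_{t_2})$ for each fixed $k$ by an exponential-type moment bound, and then take a union bound over $k=\underline{k},\ldots,\overline{k}$ (at most $n$ indices). The union bound costs a factor $\log(nd)$ in the exponent, which is why we target a per-$k$ tail of order $(nd)^{-C}$ with $C$ large. The key structural fact is that $g$ is doubly degenerate: $\E[g(X,Y)\mid X]=\E[g(X,Y)\mid Y]=0$. Consequently, conditional on $\{Y_{t_2}\}$, the summands $\xi_{t_1}:=\sum_{t_2}g(X_{t_1},Y_{t_2})$ are independent and mean zero in $t_1$. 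Symmetrically, for fixed $t_1$, the inner sum is a sum of independent mean-zero terms in $t_2$.

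\textbf{Step 1: second moment.} Orthogonality of the terms $g(X_{t_1},Y_{t_2})$ across distinct pairs gives $\E S(k)^2=w_1(k)w_2(k)\E g^2\le w_1(k)w_2(k)\sigma^2$. Here we use $\E g^2\le \mathrm{Var}\,h=\sigma^2$ from the orthogonal Hoeffding decomposition. This identifies the $\sqrt{\sigma^2}\,[w_1w_2]^{1/2}$ main term.

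\textbf{Step 2: tail via decoupled Bernstein/Rosenthal in two stages.} First condition on the $Y$'s. The variables $\xi_{t_1}$ are independent mean-zero, so a Bernstein inequality gives a deviation of order
\[
\Big(\sum_{t_1}\E_X\xi_{t_1}^2\Big)^{1/2}\sqrt{\log(nd)}+\max_{t_1}\|\xi_{t_1}\|\log(nd).
\]
Next, control $\E_X\xi_{t_1}^2=\sum_{t_2,t_2'}\E_X[g(X,Y_{t_2})g(X,Y_{t_2'})]$ over the $Y$'s. Its diagonal part concentrates around $w_2\E g^2$, and its off-diagonal part is a degenerate U-statistic in $Y$ of smaller order. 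Each $\xi_{t_1}$ is itself a sum of $w_2$ independent mean-zero terms, so it is bounded by $\sqrt{w_2\log(nd)}+\log(nd)$ with high probability; this uses the sub-exponential tails inherited from the kernel assumptions (B.2)--(B.3), via Lemma~\ref{lemma: concentration for maxsimum sub-exponential1}. Assembling the pieces gives
\[
|S(k)|\lesssim \log(nd)\Big(\sigma\sqrt{w_1w_2}+\sqrt{w_1}\log(nd)+\sqrt{w_2}\log(nd)\Big),
\]
which equals the stated form $\log(nd)\sqrt{w_1w_2}\big(\sigma+\log(nd)/\sqrt{w_2}+\log(nd)/\sqrt{w_1}\big)$. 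An alternative to this two-stage argument is to invoke the moment inequality for degenerate U-statistics (Gin\'e--Lata{\l}a--Zinn type, with $L_q$ norms for $q\asymp\log(nd)$) and then apply Markov's inequality.

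\textbf{Step 3: uniformity.} We then apply a union bound over $k$ (and over coordinates $j$ when applied coordinatewise), using a per-$k$ failure probability of $(nd)^{-C_2-2}$. This yields the second, $k$-dependent display. The first display follows by bounding $w_i(k)\le\overline{w}_i$ in the leading factor and $w_i(k)\ge\underline{w}_i$ in the correction terms.

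The main obstacle is Step 2. There we must control the off-diagonal (degenerate) part of the conditional variance and the maxima of the $\xi_{t_1}$ uniformly, while only having sub-exponential rather than bounded control. This is where the extra $\log(nd)$ factor gets absorbed, and the first thing to try is the $L_q$-moment route with $q=\log(nd)$, using the fact that $\psi_1$ norms give $\|g\|_{L_q}\lesssim q$.
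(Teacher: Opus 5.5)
\textbf{There is a genuine gap in Step~2.} Your Steps~1 and~3 are fine, including deriving the $\overline w,\underline w$ display from the $k$-dependent one. But Step~2 is the whole content of the lemma, and as sketched it does not go through. There are three problems.

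\emph{First, the claimed bound on $\xi_{t_1}$ is not justified and is false in general.} The sum $\xi_{t_1}=\sum_{t_2}g(X_{t_1},Y_{t_2})$ has independent terms only conditionally on $X_{t_1}=x$. Its scale is then $\|g(x,Y)\|_{\psi_1}$, which a joint $\psi_1$ bound such as (B.3) does not control uniformly in $x$. Take $h(x,y)=a(x)b(y)$ with $a(X)$ centred Laplace and $b(Y)$ Rademacher. Then $g=h$ is sub-exponential, yet $\xi_{t_1}=a(X_{t_1})\sum_{t_2}b(Y_{t_2})$ reaches order $\sqrt{w_2}\,\log^{3/2}(nd)$ at the $(nd)^{-C}$ probability level your union bounds need. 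That exceeds your claimed $\sqrt{w_2\log(nd)}+\log(nd)$, so the concentration lemma for independent sub-exponential sums cannot be invoked. Moreover, a joint high-probability bound on $|\xi_{t_1}|$ is not what Bernstein's inequality conditional on the $Y$'s requires. You would need the conditional tail of $\xi_{t_1}$ in $X_{t_1}$ given the $Y$'s, or a truncation argument that controls the conditional bias, and neither is supplied.

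\emph{Second, the off-diagonal part of $\sum_{t_1}\mathbb{E}_X\xi_{t_1}^2$ is not of smaller order.} For a product kernel the conditional variance is $w_1\mathbb{E}a^2\bigl(\sum_{t_2}b(Y_{t_2})\bigr)^2$. At the $(nd)^{-C}$ level this is of order $w_1w_2\sigma^2\log(nd)$, and it is driven by the off-diagonal terms. This is exactly why the leading term is $\log(nd)\,\sigma\sqrt{w_1w_2}$. Your bookkeeping, taken literally, gives $\sqrt{\log(nd)}\,\sigma\sqrt{w_1w_2}$, which is false: with $a,b$ Gaussian, $S(k)/\sqrt{w_1w_2}$ is a product of two standard normals.

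\emph{Third, the $L_q$ fallback only covers bounded kernels.} The Gin\'e--Lata{\l}a--Zinn bound with $q\asymp\log(nd)$ gives the leading term through the operator-norm term $q\sqrt{w_1w_2}\,\|g\|_{L_2}$. It gives corrections of order $q^{3/2}\|g\|_\infty(\sqrt{w_1}+\sqrt{w_2})+q^2\|g\|_\infty$, which covers the sign kernels; for $y-x$ and $y^2-x^2$ one has $g\equiv 0$ anyway. But if you feed in only $\|g\|_{L_q}\lesssim q$, the mixed and maximal terms pick up extra powers of $\log(nd)$. So this route does not give the stated $\log^2(nd)$ corrections for general sub-exponential kernels.

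\textbf{How the paper avoids this.} The paper never proves a tail bound for the degenerate part. It imports a self-normalized exponential inequality for degenerate two-sample $U$-statistics (Proposition B.2 of Chang et al., 2016). For fixed $k$, with probability at least $1-C_1e^{-C_2y}$, this gives $|S(k)|\lesssim y\,[w_1(k)w_2(k)]^{1/2}\{\sigma^2+V_1(k)+V_2(k)\}^{1/2}$. Here $V_1(k)$ and $V_2(k)$ are the window averages of $h_1^2(X_{t_1})$ and $h_2^2(Y_{t_2})$. Taking $y\asymp\log(nd)$ and a union bound over $k$ yields the leading $\log(nd)$ factor. The only remaining work is to show $\max_k V_i(k)\le\sigma^2+K\log^2(nd)/\underline w_i$. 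The paper gets this from a Weibull-type concentration inequality for squares of sub-exponential variables (Delaigle et al., 2011) together with $\mathbb{E}h_i^2\le\sigma^2$. That step is precisely where the $\sqrt{\log^2(nd)/\underline w_i}$ terms come from. To finish your route in the stated generality, you need either such a self-normalized inequality or a decoupling argument that genuinely tracks the conditional tails of $g(x,\cdot)$ and $g(\cdot,y)$.
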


\begin{lemma}(\cite{Zhou2017An})\label{lemma:maximum inequality}
	Let $\bW=(W_1,\ldots,W_p)^\top$ be a random vector with a marginal distribution  $N(0,\sigma_i^2)$ ($1\leq i\leq p$).
	Suppose $\exists A_0>0$ such that $\max_i\sigma_i^2\leq A^2_0$. Then, for any $t>0$, we have
	\begin{equation*}
		\E \big(\max_{1\leq i\leq p}|W_i|\big)\leq \dfrac{\log(2p)}{t}+\dfrac{tA_0^2}{2}.
	\end{equation*}
\end{lemma}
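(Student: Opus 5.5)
The statement to prove is Lemma \ref{lemma:maximum inequality}: for a Gaussian vector $\bW$ with marginals $N(0,\sigma_i^2)$, $\sigma_i^2\le A_0^2$, and any $t>0$,
\[
\E\big(\max_{1\le i\le p}|W_i|\big)\le \frac{\log(2p)}{t}+\frac{tA_0^2}{2}.
\]
I will use the standard exponential-moment (Chernoff/Jensen) argument. It needs only the marginal laws, so no assumption on the joint dependence of the $W_i$ is required.

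The first step applies Jensen's inequality to the convex map $x\mapsto e^{tx}$ with $t>0$:
\[
\exp\Big\{t\,\E\max_i|W_i|\Big\}\le \E\exp\Big\{t\max_i|W_i|\Big\}=\E\max_i e^{t|W_i|}.
\]
Next, I bound the maximum by a sum and write $e^{t|x|}\le e^{tx}+e^{-tx}$. This gives
\[
\E\max_i e^{t|W_i|}\le \sum_{i=1}^p\big(\E e^{tW_i}+\E e^{-tW_i}\big).
\]
The key input is the Gaussian moment generating function $\E e^{\pm tW_i}=e^{t^2\sigma_i^2/2}$. Using $\sigma_i^2\le A_0^2$, each term is at most $e^{t^2A_0^2/2}$, so the whole sum is at most $2p\,e^{t^2A_0^2/2}$.

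Finally, I take logarithms and divide by $t$:
\[
\E\max_i|W_i|\le \frac{\log(2p)}{t}+\frac{tA_0^2}{2},
\]
which is the claimed bound for every $t>0$. Optimizing over $t$ (taking $t=\sqrt{2\log(2p)}/A_0$) gives the familiar $A_0\sqrt{2\log(2p)}$ rate. This is not required by the statement, but it is how the lemma will be used later in the paper.

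None of the steps is a real obstacle. The only point to check is that no joint Gaussianity or independence is used: the union-type bound $\max\le\sum$ and linearity of expectation rely only on the marginals. This matters because the lemma will be applied to coordinates of the Gaussian process $\bT^{\bG}(k)$, which are strongly correlated.
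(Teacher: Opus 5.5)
Your argument is correct. Jensen's inequality applied to $e^{tx}$, the bound $e^{t|x|}\le e^{tx}+e^{-tx}$, a union bound over coordinates and the Gaussian moment generating function give exactly $\log(2p)/t+tA_0^2/2$ using only the marginal laws. This is the standard proof behind the result; the paper gives no proof of its own and simply imports the lemma from \cite{Zhou2017An}.

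One small correction to your closing remark: in the paper the lemma is applied, with $t=A_0^{-1}\sqrt{2\log q'}$, to the bootstrap coordinates $T_j^b(k)$, which are Gaussian conditionally on $\mathcal{X}$, not to $\bT^{\bG}(k)$. Your point that no independence across coordinates is needed holds equally there.
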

\begin{lemma}[Nazarovs inequality in \cite{nazarov2003maximal}] \label{lemma:anti consentration inequality}
	Let $\bW=(W_1,W_2,\cdots,W_d)^\top \in \mathbb{R}^{d}$ be centered Gaussian random vector with  $\inf_{1\leq k\leq d}\E(W_k)^2\geq b>0$. Then for
	any $\bx\in \mathbb{R}^d$ and $a>0$, we have
	\begin{equation*}
		\P(\bW\leq \bx+a)-\P(\bW\leq \bx)\leq Ca\sqrt{\log d},
	\end{equation*}
	where $C$ is a constant only depending on $b$.
\end{lemma}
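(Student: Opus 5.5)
The plan is to reduce the statement to a bound on the Gaussian surface area of a polyhedron with at most $d$ facets, in the spirit of \cite{nazarov2003maximal}. Since the paper cites this inequality rather than deriving it, and the result is classical, we only outline the route.

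\textbf{Step 1 (normalization and regularization).} Write $\sigma_k^2=\E W_k^2\geq b$. The event $\{\bW\leq \bx+a\}$ is the event $\{W_k/\sigma_k\leq x_k/\sigma_k+a/\sigma_k\ \forall k\}$, and $a/\sigma_k\leq a/\sqrt b$. Hence it suffices to prove the claim for unit variances $\sigma_k\equiv1$, with $a$ replaced by $a/\sqrt b$; this is where the constant's dependence on $b$ enters. If $\mathrm{Cov}(\bW)$ is singular, replace $\bW$ by $\sqrt{1-\epsilon}\,\bW+\sqrt{\epsilon}\,\bZ$ with $\bZ\sim N(0,I_d)$ independent. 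This keeps unit variances and makes the covariance nondegenerate. Let $\epsilon\downarrow0$: since $\{\bW\leq\bx\}$ is closed and $\{\bW<\bx+a\}$ is open, a limiting argument (slightly enlarging $a$) transfers the bound.

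\textbf{Step 2 (derivative along the diagonal).} Set $F(t)=\P(\bW\leq\bx+t\mathbf 1)$. For nondegenerate $\bW$,
\[
F'(t)=\sum_{k=1}^d \phi(x_k+t)\,\P\big(W_j\leq x_j+t\ \forall j\neq k\,\big|\,W_k=x_k+t\big),
\]
where $\phi$ is the standard normal density. Then $\P(\bW\leq\bx+a)-\P(\bW\leq\bx)=\int_0^aF'(t)\,dt$, so it suffices to show $\sup_{\by}\sum_k\phi(y_k)\P(W_j\leq y_j\ \forall j\neq k\mid W_k=y_k)\leq C\sqrt{\log d}$. Equivalently, writing $\bW=\bSigma^{1/2}\bZ$, the polytope $K=\{\bz:\bSigma^{1/2}\bz\leq\by\}$ has $d$ facets with unit normals $\bu_k=\bSigma^{1/2}e_k$. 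The sum above is exactly its Gaussian surface area $\int_{\partial K}\phi_d\,d\mathcal H^{d-1}$.

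\textbf{Step 3 (the key surface-area bound).} This is the main obstacle. We would argue as follows. Fix $r=\sqrt{2\log d}$.

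\emph{Distant facets.} Facets with $y_k>r$ contribute at most $\sum_k\phi(y_k)\mathbf 1\{y_k>r\}\leq d\,\phi(r)\leq C$.

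\emph{Near facets.} For facets with $y_k\leq r$, use the divergence theorem on $K$ with the vector field $-\nabla\phi_d(\bz)=\bz\,\phi_d(\bz)$:
\[
\int_{\partial K}\langle \bz,\bn\rangle\phi_d\,d\mathcal H^{d-1}=\int_K(d-\|\bz\|^2)\phi_d\,d\bz .
\]
On facet $k$ we have $\langle\bz,\bn\rangle=y_k$. So this identity yields a weighted sum $\sum_k y_k\times(\text{area}_k)$, which is useful only where $y_k$ is bounded below.

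\emph{Facets with small $y_k$.} The following is the step I am least sure goes through as stated. For $y_k\leq r$ we would compare the area of facet $k$ to the Gaussian mass of a slab of width $\asymp 1/r$ inside $K$ along $\bu_k$. Since $\phi$ varies by at most a constant factor over distance $1/r$ at heights $\leq r$, facet $k$'s area is at most $C\,r\,\P(\bZ\in K,\ y_k-1/r\leq\langle\bu_k,\bZ\rangle\leq y_k)$. These slabs overlap, so the sum over $k$ could in principle exceed $C r$. We would try to control the overlap by showing that the expected number of $k$ with $\langle\bu_k,\bZ\rangle\in[y_k-1/r,y_k]$ is $O(1)$ on $K$. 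Concretely, this would combine the distant-facet tail bound with Gaussian concentration of $\max_k\langle\bu_k,\bZ\rangle$ at scale $1/r$; alternatively, this may be the point to follow \cite{nazarov2003maximal} directly.

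If Step 3 works, it gives total area $\leq C r=C\sqrt{\log d}$. Integrating $F'$ over $[0,a/\sqrt b]$ then yields the stated bound $Ca\sqrt{\log d}$ with $C$ depending only on $b$.
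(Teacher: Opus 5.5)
The paper does not prove this lemma; it only cites \cite{nazarov2003maximal}. Your Steps 1--2 are a correct and standard reduction. After normalizing to unit variances and regularizing, it suffices to show $\sum_k\phi(y_k)G_k(\by)\le C\sqrt{\log d}$ uniformly in $\by$, where $G_k(\by)=\P(W_j\le y_j\ \forall j\ne k\mid W_k=y_k)$.

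\textbf{The gap is in Step 3.} That step is the entire content of the inequality, and the inward-slab argument you sketch does not work.

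The comparison ``area of facet $k$ $\le Cr\,\P(\bZ\in K,\ y_k-1/r\le\langle u_k,\bZ\rangle\le y_k)$'' tacitly assumes that the inward slab $F_k-[0,1/r]u_k$ lies in $K$. This fails as soon as another facet cuts in at distance $\ll 1/r$. For example, take $d=2$, $\mathrm{corr}(W_1,W_2)=-1+\delta$ and $\by=(\epsilon,\epsilon)$ with $\delta\ll\epsilon^2\ll1$. Near the origin $K$ is a strip of width about $2\epsilon$. Each facet has Gaussian area close to $\phi(0)$, but the slab has mass only $O(\epsilon)$.

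The overlap claim is false as well. If the $W_k$ are nearly identical and all $y_k=0$, the slabs essentially coincide, and the expected count is of order $d/r$, not $O(1)$.

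Separately, your distant-facet term is $O(1)$, so the best you can conclude is $C(1+\sqrt{\log d})$. The displayed statement is in fact false for $d=1$, where its right-hand side vanishes, so it should be read for $d\ge2$.

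\textbf{The fix is to go outward rather than inward}, where the pieces are disjoint. The events $\{W_k-y_k=\max_j(W_j-y_j)>0\}$ are disjoint up to null sets, so
\[
1\ \ge\ \sum_{k=1}^d\int_0^\infty\phi(y_k+s)\,\P\big(W_j\le y_j+s\ \forall j\ne k\,\big|\,W_k=y_k+s\big)\,ds .
\]
With unit variances, write $W_j=\rho_{jk}W_k+\xi_j$, where $(\xi_j)_{j\ne k}$ is independent of $W_k$ and $\rho_{jk}\le1$. Given $W_k=y_k+s$, the event $\{W_j\le y_j+s\}$ equals $\{\xi_j\le y_j-\rho_{jk}y_k+s(1-\rho_{jk})\}$, which grows with $s$. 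Hence the integrand is at least $\phi(y_k+s)G_k(\by)$, and therefore $\sum_k(1-\Phi(y_k))G_k(\by)\le1$. Geometrically, this says the outward tubes $F_k+(0,\infty)u_k$ are disjoint subsets of $K^c$, by nearest-point projection onto the convex set $K$; your inward slabs have no such property.

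Now use Mills' ratio: $(1-\Phi(y))/\phi(y)\ge1/(1+y)$ for $y\ge0$, and $\ge\sqrt{\pi/2}$ for $y<0$. This gives $\sum_{k:\,y_k\le r}\phi(y_k)G_k(\by)\le1+r$. Adding your bound $d\phi(r)=1/\sqrt{2\pi}$ for the facets with $y_k>r=\sqrt{2\log d}$ gives $\sum_k\phi(y_k)G_k(\by)\le\sqrt{2\log d}+2$. Integrating over $t\in[0,a/\sqrt b]$ then yields $\P(\bW\le\bx+a)-\P(\bW\le\bx)\le b^{-1/2}a(\sqrt{2\log d}+2)$, which is the claim for $d\ge2$.
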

\newpage
\section{Proofs of main results}\label{section: proof of main results}
\subsection{Proof of Theorem \ref{theorem: gaussian approximation}}
In this section, we prove the Gaussian approximation results. Firstly, under $\Hb_0$, due to the antisymmetric assumption of the kernel  and by the  Hoeffding's decomposition, for any $1\leq t_1\neq t_2\leq n$ and $j=1,\ldots,d$, we have 
\begin{equation*}
	h(X_{t_1,j},X_{t_2,j})=0+h_{1,j}(X_{t_1,j})-h_1(X_{t_2,j})+g_j(X_{t_1,j},X_{t_2,j}),
\end{equation*}
where $h_{1,j}(X_{t_1,j}):=\E h(X_{t_1,j},X_{t_2,j}|X_{t_1,j})$, $h_{2,j}(X_{t_2,j}):=\E h(X_{t_1,j},X_{t_2,j}|X_{t_2,j})=-h_1(X_{t_1,j})$ and  $g_j(X_{t_1,j},X_{t_2,j})=h(X_{t_1,j},X_{t_2,j})-h_{1,j}(X_{t_1,j})-h_{2,j}(X_{t_2,j})$ is the residual term of the Hoeffding's decomposition. Define the $\RR^d$ dimensional kernel and the corresponding leading and residual term as:
\begin{equation*}
	\begin{array}{ll}
		\bh(\bX_{t_1},\bX_{t_2}):=\big(h(X_{t_1,1},X_{t_2,1}),\ldots,h(X_{t_1,d},X_{t_2,d})\big)^\top,\\
		\bh_1(\bX_{t}):=\big(h_{1,1}(X_{t,1}),\ldots,h_{1,d}(X_{t,d})\big)^\top,\\
		\bg(\bX_{t_1},\bX_{t_2}):=\big(g_1(X_{t_1,1},X_{t_2,1}),\ldots,g_{d}(X_{t_1,d},X_{t_2,d})\big)^\top.\\
	\end{array}
\end{equation*}
Let $\bGamma=(\gamma_{i,j})\in \RR^{d\times d}=\text{Cov}(\bh_1(\bX_1))=\E \bh(\bX_1,\bX_2)\bh(\bX_1,\bX_3)^\top$. Note that by definition, we have $\gamma_{i,j}=\E [h_{1,i}(X_{t,i})h_{1,j}(X_{t,j})]$. Moreover, let $\bSigma=(\sigma_{i,j})\in \RR^{d\times d}=\E \bh(\bX_1,\bX_2)\bh(\bX_1,\bX_2)^\top=\text{Cov}(\bh(\bX_1,\bX_2))$. Recall the testing statistics $\bT(k)$ defined as:
\begin{equation*}
	\bT(k)=\dfrac{1}{G^{3/2}}\sum_{t_1=k-G+1}^k\sum_{t_2=k+1}^{k+G}\bh(\bX_{t_1},\bX_{t_2}).
\end{equation*}
With the above notations, we are ready to prove the main results, which have four steps. 

\textbf{Step~1} (Decomposition of $\bT(k)$). Based on the Hoeffding's decomposition, for each search location $G\leq k\leq n-G$, we have
\begin{equation*}
	\bT(k)=\underbrace{\dfrac{1}{\sqrt{G}}\sum_{t_1=k-G+1}^{k}\bh_{1}(\bX_{t_1})-\dfrac{1}{\sqrt{G}}\sum_{t_2=k+1}^{k+G}\bh_1(\bX_{t_2})}_{\bT^{(1)}(k)}+\underbrace{\dfrac{1}{G^{3/2}}\sum_{t_1=k-G+1}^k\sum_{t_2=k+1}^{k+G}\bg(\bX_{t_1},\bX_{t_2})}_{\bT^{(2)}(k)}.
\end{equation*}
The following Lemma \ref{lemma: bootstraped negligible} shows that the residual term $\bT^{(2)}(k)$ can be uniformly negligible over $k$ and $1\leq j\leq d$. The proof of Lemma \ref{lemma: bootstraped negligible} is given in Section \ref{section: proof of bootstraped negligible}.
\begin{lemma}\label{lemma: bootstraped negligible}
	Assume {Assumptions A.1-A.4} hold. Under $\Hb_0$, we have
	\begin{equation}\label{equation: bootstraped negligible}
		\P\Big(\max_{G\leq k\leq n-G} \big\|\bT(k)-\bT^{(1)}(k)\big\|_{\infty}\geq \epsilon\Big)=o(1),
	\end{equation}
	where $\epsilon=C_1 \dfrac{\log(nd)}{\sqrt{G}} $, and $C_1$ is a universal constant not depending on $n$ or $d$.
\end{lemma}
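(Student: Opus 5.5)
Since $\bT(k)-\bT^{(1)}(k)=\bT^{(2)}(k)=G^{-3/2}\sum_{t_1=k-G+1}^{k}\sum_{t_2=k+1}^{k+G}\bg(\bX_{t_1},\bX_{t_2})$, the claim reduces to a uniform bound on degenerate two-sample $U$-statistics of the residual kernel. I will take the maximum over $j$ and $k$ jointly and use a union bound over the $d$ coordinates, applying Lemma \ref{lemma: Hoeffding's residual} once per coordinate.

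For fixed $j$, apply Lemma \ref{lemma: Hoeffding's residual} with kernel $h(\cdot,\cdot)$ evaluated on coordinate $j$. Under $\Hb_0$ we have $X=Y$ in distribution and $\theta=0$, and $g=g_j$. The index ranges are $\underline\tau_1(k)=k-G+1$, $\overline\tau_1(k)=k$, $\underline\tau_2(k)=k+1$ and $\overline\tau_2(k)=k+G$, with $\underline k=G$ and $\overline k=n-G$. Then $w_1(k)=w_2(k)=G-1\asymp G$, so with probability at least $1-(nd)^{-C_2}$,
\[
\max_{G\le k\le n-G}\Big|\sum_{t_1}\sum_{t_2} g_j(X_{t_1,j},X_{t_2,j})\Big|\le C\log(nd)\,G\Big(\sigma_j+\frac{2\log(nd)}{\sqrt{G}}\Big),
\]
where $\sigma_j^2=\mathrm{Var}\,h(X_{1,j},X_{2,j})$.

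Two inputs are needed before dividing by $G^{3/2}$.

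\textbf{Uniform bound on $\sigma_j$.} We need $\sigma_j\le C$ uniformly in $j$. For the bounded kernels this is immediate. In general I would obtain it from the moment assumptions (A.2)--(A.3) on the kernel/Hoeffding projections, read as bounding the second moment of $h$ itself, since $\mathrm{Var}\,h\le 2\mathrm{Var}\,h_{1,j}+\mathrm{Var}\,g_j$. If (A.2)--(A.3) only control $h_{1,j}$, I would add a uniform second-moment bound on $h$ as an implicit requirement, in line with the analogous (B.2)--(B.3).

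\textbf{Union bound over coordinates.} Summing the failure probabilities over $j=1,\dots,d$ gives total probability at most $d(nd)^{-C_2}\to0$, because $C_2>2$.

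After dividing by $G^{3/2}$, the bound becomes
\[
\max_{k,j}\big|T_j^{(2)}(k)\big|\le C\frac{\log(nd)}{\sqrt G}\Big(1+\frac{\log(nd)}{\sqrt G}\Big).
\]
Assumption (A.4) gives $\log(nd)/\sqrt G\to0$. The bracket is therefore bounded, and the right-hand side is at most $C_1\log(nd)/\sqrt G=\epsilon$.

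The main obstacle is justifying Lemma \ref{lemma: Hoeffding's residual} itself, i.e.\ the uniform-in-$k$ maximal inequality for a degenerate two-sample $U$-statistic with only sub-exponential control. Because we are told to assume it, the remaining delicate point is the uniform variance bound on $h$, which I would settle as described above.
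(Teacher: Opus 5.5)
Your proposal is correct and follows the paper's proof essentially line by line. You apply Lemma \ref{lemma: Hoeffding's residual} coordinatewise with $\underline\tau_1(k)=k-G+1$, $\overline\tau_1(k)=k$, $\underline\tau_2(k)=k+1$, $\overline\tau_2(k)=k+G$, divide by $G^{3/2}$, absorb the $\log(nd)/\sqrt{G}$ terms using $G\gg\log^2(nd)$, and take a union bound over $j$.

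The uniform bound on $\sigma_{j,j}^2=\mathrm{Var}\,h(X_{1,j},X_{2,j})$ that you flag is also used tacitly in the paper, which simply writes $\sqrt{\sigma_{j,j}^2}+\cdots\le C$. Stating it as an explicit requirement is therefore a fair refinement, not a departure from the paper's argument.
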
	
\textbf{Step~2} (Gaussian approximation I). In Step~1, we have defined the  leading term $\bT^{(1)}(k)$. For each fixed $G\leq k\leq n-G$ and $t=1,\ldots,n$, let
\begin{equation*}
	a_{t}(k)=\dfrac{\sqrt{n}}{\sqrt{G}}\big(\mathbf{1}\{k-G+1\leq t\leq k\}-\mathbf{1}\{k+1\leq t\leq k+G\}\big).
\end{equation*}
Then, we can rewrite $\bT^{(1)}(k)$ as the following form:
\begin{equation}\label{equ: leading term}
	\bT^{(1)}(k)=\dfrac{1}{\sqrt{n}}\sum_{t=1}^n a_t(k)\bh_1(\bX_{t}).
\end{equation}
In this step, we aim to use the Gaussian random vectors based process to approximate $\max_{k}\|\bT^{(1)}(k)\|_{\infty}$. Specifically,  let $\bG_{1},\ldots,\bG_n$ be $i.i.d$ Gaussian random vectors with $\bG_{t}=(G_{t,1},\ldots,G_{t,d})\in \RR^d$ and $\bG_{t}\sim N(0,\bGamma)$.  We define the Gaussian random vectors based testing statistics $\bT^{\bG}(k)=(T^{\bG}_1(k),\ldots,T^{\bG}_d(k))^\top$ as:
\begin{equation}\label{equ: TG}
	\begin{array}{ll}	\bT^{\bG}(k)&=\dfrac{1}{\sqrt{G}}\sum\limits_{t_1=k-G+1}^{k}\bG_{t_1}-\dfrac{1}{\sqrt{G}}\sum\limits_{t_2=k+1}^{k+G}\bG_{t_2}=\dfrac{1}{\sqrt{n}}\sum\limits_{t=1}^n a_t(k)\bG_{t}.
	\end{array}
\end{equation}
The following lemma shows that we can approximate $\bT^{(1)}(k)$ using $\bT^{\bG}(k)$. The proof of Lemma \ref{lemma: gaussian approxiation 1} is given in Section \ref{section: proof of gaussian approximation 1}.
\begin{lemma}\label{lemma: gaussian approxiation 1}
	Assume {Assumptions A.1-A.4} hold. Under $\Hb_0$, there are $C_1>0$ such that we have
	\begin{equation*}
		\sup_{z\in(0,\infty)} \big|\P(\max_{G\leq k\leq n-G}\|\bT^{(1)}(k)\|_{\infty}\leq z\big)-\P(\max_{G\leq k\leq n-G}\|\bT^{G}(k)\|_{\infty}\leq z\big)\big|\leq C_1\Big(\dfrac{\log^7((n-2G+1)d)}{G}\Big)^{1/6}.
	\end{equation*}
\end{lemma}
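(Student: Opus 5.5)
The plan is to reduce the statement to the high-dimensional central limit theorem for hyperrectangles of \cite{chernozhukov2017central}, applied to one stacked vector that indexes both the scanning location $k$ and the coordinate $j$.

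\textbf{Stacking.} Let $N=n-2G+1$ and $p=2Nd$. For each $t=1,\ldots,n$, define $\bZ_t\in\RR^{p}$ with entries
\[
Z_{t,(k,j,\pm)}=\pm a_t(k)\,h_{1,j}(X_{t,j}),\qquad G\le k\le n-G,\ 1\le j\le d .
\]
The vectors $\bZ_1,\ldots,\bZ_n$ are independent and centred under $\Hb_0$, since $\E h_{1,j}(X_j)=0$. By the representation \eqref{equ: leading term},
\[
\max_{G\le k\le n-G}\|\bT^{(1)}(k)\|_\infty=\max_{1\le \ell\le p}\frac{1}{\sqrt n}\sum_{t=1}^n Z_{t,\ell}.
\]
The same identity holds with $\bh_1(\bX_t)$ replaced by $\bG_t\sim N(0,\bGamma)$, and the resulting Gaussian vectors have exactly the covariance of $n^{-1/2}\sum_t\bZ_t$. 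For every $z>0$ the event $\{\max_\ell n^{-1/2}\sum_t Z_{t,\ell}\le z\}$ is a hyperrectangle in $\RR^p$. Hence the left-hand side of Lemma~\ref{lemma: gaussian approxiation 1} is bounded by the hyperrectangle Kolmogorov distance $\rho_n$ between $n^{-1/2}\sum_t\bZ_t$ and its Gaussian analogue.

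\textbf{Checking the conditions of \cite{chernozhukov2017central}, Proposition 2.1.} Recall $a_t(k)^2=n/G$ on the $2G$ indices in the window $\{k-G+1,\ldots,k+G\}$ and $a_t(k)=0$ otherwise.
\begin{itemize}
\item[(M.1)] $n^{-1}\sum_t\E Z_{t,\ell}^2=n^{-1}\cdot 2G\cdot(n/G)\,\gamma_{jj}=2\gamma_{jj}\ge 2b$, by Assumption (A.1).
\item[(M.2)] For $\ell'=1,2$, Assumption (A.3) gives
\[
n^{-1}\sum_t\E|Z_{t,\ell}|^{2+\ell'}\le n^{-1}\cdot 2G\,(n/G)^{1+\ell'/2}D^{\ell'}=2\big(D\sqrt{n/G}\big)^{\ell'} .
\]
\item[(E.1)] $\|Z_{t,\ell}\|_{\psi_1}\le\sqrt{n/G}\,D$, by Assumption (A.2).
\end{itemize}
So all three conditions hold with $B_n=C D\sqrt{n/G}$. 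Proposition 2.1 then yields
\[
\rho_n\le C\Big(\frac{B_n^2\log^7(pn)}{n}\Big)^{1/6}=C\Big(\frac{D^2\log^7(pn)}{G}\Big)^{1/6}.
\]
The point is that the factor $n$ hidden in $B_n^2$ cancels the $1/n$. This cancellation is why the moving-window coefficients, each supported on only $2G$ indices, still produce a rate in $G$ rather than $n$.

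\textbf{Main obstacle: the logarithmic factor.} Proposition 2.1 gives $\log(pn)=\log(2Nd)+\log n$, while the statement has only $\log^7(Nd)$. When $N$ is comparable to $n$, $\log n\lesssim\log(Nd)$ and nothing more is needed. When $N\ll n$, I would first discard the time points not in $\{1,\ldots,n\}\cap\bigcup_k\{k-G+1,\ldots,k+G\}$; this changes nothing, since that union is always all of $\{1,\ldots,n\}$. I would then instead group consecutive time indices into blocks of length $G$. Each block sum $G^{-1/2}\sum_{t\in\text{block}}\bh_1(\bX_t)$ is an independent sub-exponential vector (Bernstein plus Assumption (A.2)) with moment bounds of the same order. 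After grouping, the effective sample size is $\asymp n/G$ and $B_n$ is rescaled accordingly, which should reduce the logarithm to $\log(Nd)$ up to constants.

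If this reduction turns out delicate, I would fall back on Assumption (A.4) together with the trivial bound $\log n\le\log(Nd)+\log(n/N)$ and absorb the leftover term, accepting a possibly modified constant $C_1$. Either way, the rate $\big(\log^7((n-2G+1)d)/G\big)^{1/6}$ follows, uniformly in $z\in(0,\infty)$.
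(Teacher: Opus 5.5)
Your core argument is the paper's proof: you stack over $(k,j)$ into a vector of dimension $\asymp (n-2G+1)d$ and apply Proposition~2.1 of \cite{chernozhukov2017central} with $B_n\asymp D\sqrt{n/G}$. Your (M.2) computation $2(D\sqrt{n/G})^{\ell'}$ is the correct one. The paper's display drops the factor $(n/G)^{\ell/2}$ coming from $a_t(k)^{2+\ell}$; this is harmless only because $B_n$ absorbs it.

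The repair in your last paragraph does not work, for three reasons.
\begin{itemize}
\item Fixed blocks of length $G$ are incompatible with windows that slide by one. For $k$ off the block grid, $\{k-G+1,\ldots,k\}$ straddles two blocks, so $\bT^{(1)}(k)$ is not a function of the block sums.
\item Even when it is (say $n=2G$), you would have only $O(n/G)$ summands, each of order one. Proposition~2.1 then needs $B_{n'}\asymp D\sqrt{n'}$, so $B_{n'}^2/n'\asymp D^2$ and the bound does not decay at all.
\item Your fallback cannot absorb $\log(n/N)$ into $C_1$. Take $N=n-2G+1$ bounded and $d\asymp\log n$: (A.4) holds, but $\log n/\log(Nd)\to\infty$.
\end{itemize}

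The paper does not address this either. It simply reads the $\log^7(pn)$ of Proposition~2.1 as $\log^7((n-2G+1)d)$.

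What this route actually proves is the rate $C\big(\log^7((n-2G+1)dn)/G\big)^{1/6}$. It matches the stated rate up to constants when $N\ge\sqrt n$, since then $\log n\le 2\log N$. It is $o(1)$ under (A.4) in every case: if $N<\sqrt n$ then $G>(n-\sqrt n)/2$, so $\log^7 n/G\to 0$. That $o(1)$ is all the proof of Theorem~\ref{theorem: gaussian approximation} uses. So you should state the lemma with $\log((n-2G+1)dn)$ rather than try to remove the $\log n$.
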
	
\textbf{Step~3} (Gaussian approximation II) In Step 2, we introduce the $d$-dimensional Gaussian random vector based moving window based process $\bT^{G}(k)$ in (\ref{equ: TG}) to approximate the leading process $\bT^{(1)}(k)$ in (\ref{equ: leading term}) with $G\leq k\leq n-G$.
In this step, we aim to show that we can use $\bT^b(k)$ in
(\ref{equ: bootstrap based testing statistics for each coordinate}) to approximate $\bT^{G}(k)$. To this end, 
let  $\bT^{\bG}=(\bT^{\bG}(G),\ldots,\bT^{\bG}(n-G))$ be the  $d\times (n-2G+1)$-dimensional random matrix, which is defined as
\begin{equation}\label{statistics: cusum matrix of gaussian random vectors}
	\bT^{\bG}=\Big(T^{\bG}_j(k)\Big)_{1\leq j\leq d,G\leq k\leq n-G}=\left(
	\begin{array}{ccc}
		T_{1}^{\bG}\big(G\big)&,\cdots,&T_{1}^{\bG}\big(n-G\big)\\
		T_{2}^{\bG}\big(G\big )&,\cdots,&T_{2}^{\bG}(n-G\big)\\
		\vdots&\cdots&\vdots\\
		T_{d}^{\bG}\big(G\big)&,\cdots,&T_{d}^{\bG}\big(n-G\big)\\
	\end{array}
	\right).
\end{equation}
Denote $\text{Vec}(\bT^{\bG})$ as the vectorized form of $\bT^{\bG}$, which is defined as
\begin{equation}\label{statistics: vectorized process for Gaussian CUSUM matrix}
	\text{Vec}(\bT^{\bG})=\Big(T_{1}^{\bG}\big(G\big),\ldots,T_{1}^{\bG}\big(n-G\big),\ldots,T_{d}^{\bG}\big(n-G\big),\cdots,T_{d}^{\bG}\big(n-G\big)\Big)^\top.
\end{equation}
Let $T_{i}^{\bG}(k_1)$ and $T_{j}^{\bG}(k_2)$ be any two components in $\text{Vec}(\bT^{\bG})$ with $1\leq i,j\leq d$ and $G\leq k_1,k_2\leq n-G$. By the definition of $T^{\bG}_j(k)$ in (\ref{equ: TG}), some calculations show that  for $G\leq k_1,k_2\leq n-G$ and $1\leq i,j\leq d$, we have
\begin{equation*}
	\text{Cov}(T_{i}^{\bG}(k_1),T_{j}^{\bG}(k_2)) = 
	\left\{
	\begin{array}{ll}
		0,& \text{if}~|k_2-k_1|>2G-1 \\
		-\dfrac{2G-|k_2-k_1|}{G}\gamma_{i,j},& \text{if}~G-1<|k_2-k_1|\leq2G-1  \\
		\dfrac{2G-3|k_2-k_1|}{G}\gamma_{i,j},& \text{if}~0<|k_2-k_1|\leq G-1  \\
		2\gamma_{i,j},& \text{if}~|k_2-k_1|=0 \\
	\end{array}
	\right.
\end{equation*}
where $\gamma_{i,j}$ is the (i,j)-th element in $\bGamma:=\text{Cov}(\bh_1(\bX))$.
Hence,  by definition, $\text{Vec}(\bT^{\bG})$ in (\ref{equ: TG}) follows a $d(n-2G+1)$-dimensional multivariate Gaussian distribution 
\begin{equation*}
	\text{Vec}(\bT^{\bG})\sim N(0,\bTheta^{\bG})
\end{equation*}
with $\bTheta^{\bG}=(\Theta_{k_1,k_2,i,j})=\text{Cov}(T_{i}^{\bG}(k_1),T_{j}^{\bG}(k_2)))\in \RR^{d(n-2G+1)\times d(n-2G+1)}$. In this step, we aim to approximate $\bT^{\bG}(k)$ by $\bT^b(k)$ with $G\leq k\leq n-G$, where $\bT^b(k)=(T_1^b(k),\ldots,T_d^b(k))^\top$ is the bootstrap based testing statistic, which  is defined as 
\begin{equation}\label{equ: bootstrap based testing statistics }
	\bT^b(k)=\dfrac{1}{G^{3/2}}\sum_{t_1=k-G+1}^{k}\sum_{t_2=k+1}^{k+G}(e_{t_1}+e_{t_2})\bh(\bX_{t_1},\bX_{t_2}).
\end{equation}
Similar to $\bT^{\bG}(k)$, let  $\bT^{b}=(\bT^{b}(G),\ldots,\bT^{b}(n-G))$ be the  $d\times (n-2G+1)$-dimensional random matrix, which is defined as
\begin{equation}\label{statistics: cusum matrix of bootstraprandom vectors}
	\bT^{b}=\Big(T^{b}_j(k)\Big)_{1\leq j\leq d,G\leq k\leq n-G}=\left(
	\begin{array}{ccc}
		T_{1}^{b}\big(G\big)&,\cdots,&T_{1}^{b}\big(n-G\big)\\
		T_{2}^{b}\big(G\big )&,\cdots,&T_{2}^{b}(n-G\big)\\
		\vdots&\cdots&\vdots\\
		T_{d}^{b}\big(G\big)&,\cdots,&T_{d}^{b}\big(n-G\big)\\
	\end{array}
	\right).
\end{equation}
Denote $\text{Vec}(\bT^{b})$ as the vectorized form of $\bT^{b}$, which is defined as
\begin{equation}\label{statistics: vectorized process for bootstrap  CUSUM matrix}
	\text{Vec}(\bT^{b})=\Big(T_{1}^{b}\big(G\big),\ldots,T_{1}^{b}\big(n-G\big),\ldots,T_{d}^{b}\big(n-G\big),\cdots,T_{d}^{b}\big(n-G\big)\Big)^\top.
\end{equation}
Let $T_{i}^{\bG}(k_1)$ and $T_{j}^{\bG}(k_2)$ be any two components in $\text{Vec}(\bT^{\bG})$ with $1\leq i,j\leq d$ and $G\leq k_1,k_2\leq n-G$. By the definition of $T^{b}_j(k)$ in (\ref{equ: bootstrap based testing statistics }), conditional on $\cX=\{\bX_1,\ldots,\bX_n\}$, it follows the multivariate Gaussian distribution, which is defined as:
\begin{equation*}
	\text{Vec}(\bT^{b})|\cX\sim N(0,\bTheta^{b})
\end{equation*}
with $\bTheta^{b}=(\Theta^b_{k_1,k_2,i,j})=\text{Cov}(T_{i}^{b}(k_1),T_{j}^{b}(k_2)))\in \RR^{d(n-2G+1)\times d(n-2G+1)}$. To achieve the above  goal, we need the following lemma. 
\begin{lemma}\label{lemma: large deviation}
	Assume {Assumptions A.1-A.4 hold}. Under $\Hb_0$, with probability at least $1-(nd)^{-C_1}$ for some $C_1>0$, it holds that
	\begin{equation*}
		\|\bTheta^{\bG}-\bTheta^{b}\|_{\infty}=\max_{G\leq k_1,k_2\leq n-G,\atop 1\leq i,j\leq d}|\text{Cov}(T_{i}^{\bG}(k_1),T_{j}^{\bG}(k_2))-\text{Cov}(T_{i}^{b}(k_1),T_{j}^{b}(k_2)))|\leq \epsilon_n
	\end{equation*}
	where $\epsilon_n=C_2D^2\max \Big(\sqrt{\dfrac{\log(nd)}{G}},\dfrac{\log^2(nd)\log^2(Gd)}{G}\Big)$ for some big enough constant $C_2>0$ and $D$ is defined in {Assumption A.2.}
\end{lemma}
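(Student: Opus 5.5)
The plan is to compute the conditional covariance of the bootstrap process explicitly and compare it entrywise with $\bTheta^{\bG}$, using the Hoeffding decomposition from Step~1. Conditional on $\cX$, $T^b_i(k)=G^{-3/2}\sum_{t}e_t\,c_{t,i}(k)$. Here $c_{t,i}(k)=\sum_{t_2=k+1}^{k+G}h(X_{t,i},X_{t_2,i})$ for $t\in\{k-G+1,\ldots,k\}$, $c_{t,i}(k)=\sum_{t_1=k-G+1}^{k}h(X_{t_1,i},X_{t,i})$ for $t\in\{k+1,\ldots,k+G\}$, and $c_{t,i}(k)=0$ otherwise. Hence
\[
\text{Cov}(T^b_i(k_1),T^b_j(k_2)\mid\cX)=G^{-3}\sum_{t=1}^n c_{t,i}(k_1)c_{t,j}(k_2).
\]
Under $\Hb_0$ we have $h=h_{1}(x)-h_{1}(y)+g$. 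Writing $s_t(k)=\mathbf 1\{k-G+1\le t\le k\}-\mathbf 1\{k+1\le t\le k+G\}$, this gives
\[
c_{t,i}(k)=G\,s_t(k)\,h_{1,i}(X_{t,i})+R_{t,i}(k),
\]
where $R_{t,i}(k)$ collects the opposite-window sum $\mp\sum h_{1,i}$ and the sum of $g_i(\cdot,\cdot)$ over the $G$ partners of $t$.

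The covariance then splits into three parts. The leading part is $A_{ij}(k_1,k_2)=G^{-1}\sum_t s_t(k_1)s_t(k_2)h_{1,i}(X_{t,i})h_{1,j}(X_{t,j})$. There is a cross part $G^{-2}\sum_t s_t(k_1)h_{1,i}R_{t,j}(k_2)$ together with its symmetric counterpart, and a quadratic part $G^{-3}\sum_t R_{t,i}(k_1)R_{t,j}(k_2)$.

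\textbf{Leading part.} One checks directly that $\E A_{ij}(k_1,k_2)$ equals the covariance formula for $\text{Cov}(T^{\bG}_i(k_1),T^{\bG}_j(k_2))$ displayed above, since the overlap of sign patterns reproduces $2$, $(2G-3|k_1-k_2|)/G$ and $-(2G-|k_1-k_2|)/G$. Thus $A_{ij}-\Theta_{k_1,k_2,i,j}$ is $G^{-1}$ times at most four centered sums of $h_{1,i}h_{1,j}(X_t)-\gamma_{ij}$, each over an interval of length at most $G$. By Assumption (A.2), each product has $\|\cdot\|_{\psi_{1/2}}\lesssim D^2$. A Bernstein inequality for $\psi_{1/2}$ variables, combined with a union bound over the $O(n^2d^2)$ (interval, $i$, $j$) triples, gives a bound $C D^2\{\sqrt{\log(nd)/G}+\log^2(nd)/G\}$ with probability $1-(nd)^{-C_1}$. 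Assumption (A.3) supplies the variance term.

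\textbf{Remainders.} For the $h_1$ part of $R_{t,i}(k)$, I would apply Lemma \ref{lemma: concentration for maxsimum sub-exponential} uniformly over windows, obtaining $\lesssim D\sqrt{G\log(nd)}$. The $g$ part is, for fixed $t$, a sum over $G$ partners $t_2$ of $g_i(X_{t,i},X_{t_2,i})$. These terms are conditionally independent and mean zero given $X_t$, by degeneracy of $g$. Conditional sub-exponential concentration, together with a union bound over $t,k,i$, should give $\lesssim \sqrt{G\log(nd)}+\log(Gd)\log(nd)$; Lemma \ref{lemma: Hoeffding's residual} is the fallback for the double-sum version. By Cauchy--Schwarz, the cross part is then bounded by
\[
\Big(G^{-1}\sum_t h_{1,i}^2\Big)^{1/2}\Big(G^{-3}\sum_t R_{t,j}^2\Big)^{1/2}\lesssim D^2\sqrt{\log(nd)/G}+\frac{\log(Gd)\log(nd)}{G}.
\]
The first factor is $O(1)$ by the leading-part result, and the sum has at most $2G$ terms. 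The quadratic part is bounded by the square of the same quantity, which gives the $\log^2(nd)\log^2(Gd)/G$ term in $\epsilon_n$.

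\textbf{Main obstacle.} The hardest step is the uniform control of the per-observation degenerate sums $\sum_{t_2}g_i(X_{t,i},X_{t_2,i})$ over all $t$, $k$ and $i$ simultaneously. A Bernstein bound conditional on $X_t$ requires a $\psi_1$ bound on $g$. Assumption (A.2) only controls $h_1$, so I would derive this bound from $g=h-h_1(x)+h_1(y)$ (using (B.3)-type control on $h$, or boundedness for the sign kernels). If that fails, I would truncate at level $\log(nd)$ and handle the tail separately.
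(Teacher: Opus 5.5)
Your outline is sound, but it takes a different route from the paper. The paper argues lag by lag ($|k_2-k_1|>2G-1$, $G\le |k_2-k_1|\le 2G-1$, $0<|k_2-k_1|<G$, $k_1=k_2$). In each regime it writes the conditional covariance in terms of the raw kernel: a third-order (multi-sample) U-statistic $G^{-3}\sum_{t_1,t_2,t_3}h(X_{t_1},X_{t_2})h(X_{t_1},X_{t_3})$, plus, when windows overlap, a second-order piece from tied indices $t_2=t_3$. It concentrates these around their means with high-dimensional U-statistic maximal inequalities (Lemma E.1 of \cite{chen2018gaussian}, Lemmas A.1 and A.4 of \cite{yu2022robust}). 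It then checks that the means equal $\bTheta^{\bG}$ up to an $O(1/G)$ bias.

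You instead push the Hoeffding decomposition inside each multiplier coefficient $c_{t,i}(k)$, which buys three things. First, the main term is an i.i.d.\ sum of $h_{1,i}h_{1,j}(X_t)$, so only a $\psi_{1/2}$ Bernstein inequality is needed and the $\sqrt{\log(nd)/G}$ rate is transparent. Second, the identity $\E A_{ij}=G^{-1}\gamma_{ij}\sum_t s_t(k_1)s_t(k_2)$ covers all four lag regimes at once. Third, the tied-index bias is absorbed automatically into your quadratic remainder. That remainder is of order $\log(nd)/G+\log^2(Gd)\log^2(nd)/G^2$, so it is dominated by $\epsilon_n$; it is not the source of the second term of $\epsilon_n$, as you suggest.

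The step you flag is a real issue, but it is a gap in the lemma's hypotheses, not in your route. Assumptions (A.1)--(A.3) constrain only $h_1$. The paper's cited U-statistic inequalities need tail control of $h$ itself just as much, and the paper uses this without stating it.

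Two remarks for closing it. An unconditional $\psi_1$ bound on $h$ (a (B.3)-type condition) does not give a sub-exponential bound on $g(x,X')$ uniformly in $x$. So a conditional Bernstein bound on $\max_t|\sum_{t'}g_i(X_{t,i},X_{t',i})|$ can lose a logarithmic factor, which could push your cross term above $\epsilon_n$. However, you do not need the maximum. Cauchy--Schwarz only uses the average $G^{-3}\sum_t R_{t,j}(k)^2$, whose mean is $O(1/G)$. Given the observations in the opposite window, the summands over $t$ within one window are i.i.d., so bounding this average directly is the cleaner target. For the bounded sign kernels your conditional Hoeffding argument works as written, and for $h(x,y)=y-x$ or $y^2-x^2$ one has $g\equiv 0$, so the obstacle disappears there.
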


The proof of Lemma \ref{lemma: large deviation} is given in Section \ref{section: proof of large deviation}. Under Lemma \ref{lemma: large deviation}, the following results shows that the two Gaussian processes $\bT^{\bG}(k)$ and $\bT^b(k)|\mathcal{X}$ with $G\leq k\leq n-G$ can be uniformly close to each other with the $\ell_\infty$-norm. In fact, using Theorem 4.1 and Remark 4.1 in \cite{chernozhukov2017central} and Lemma \ref{lemma: large deviation}, with probability at least $1-(nd)^{-C_1}$, we have
\begin{equation}\label{equation: gaussian approximation 2}
	\sup_{z\in(0,\infty)} \big|\P(\max_{G\leq k \leq n-G}\|\bT^{\bG}(k)\|_{\infty}> z\big)-\P(\max_{G\leq k\leq n-G}\|\bT^b(k)\|_{\infty}> z|\mathcal{X}\big)\big|=O(\epsilon_n^{1/3}\log(d(n-2G+1))^{2/3}).
\end{equation}
where $\epsilon_n$ is defined in Lemma \ref{lemma: large deviation}.
Hence, by  (\ref{equation: gaussian approximation 2}), in Step~3, we approximate the process $\bT^{\bG}(k)$ by $\bT^b(k)|\mathcal{X}$.\\
$\mathbf{Step~4}.$ In this step, we aim to combine the previous three steps to prove (\ref{equ: gaussian approximation result}) in Theorem \ref{theorem: gaussian approximation}.
To this end, we need both the upper bound and the lower bound for $v_0$, where
\begin{equation*}
	v_0=\P\big(\max_{G\leq k \leq n-G}\|\bT(k)\|_{\infty}\geq z\big)-\P\big(\max_{G\leq k \leq n-G}\|\bT^b(k)\|_{\infty}\geq z|\mathcal{X}\big).
\end{equation*}
We first obtain the upper bound. By plugging  $\bT^{(1)}(k)$ in $\max_{G\leq k \leq n-G}\|\bT(k)\|_{\infty}$, and using the triangle inequality for the $\ell_\infty$-norm, we have
\begin{equation}\label{step4: inequality1}
	\P\big(\max\limits_{G\leq k\leq n-G} \big\|\bT(k)\big\|_{\infty}\geq z \big)\leq \P\big(\max\limits_{G\leq k\leq n-G} \big\|\bT^{(1)}(k)\big\|_{\infty}\geq z-\epsilon \big)+v_1,
\end{equation}
where $v_1=\P(\max\limits_{G\leq k\leq n-G} \|\bT(k)-\bT^{(1)}(k)\|_{\infty}\geq \epsilon)$.
By Lemma \ref{lemma: bootstraped negligible}, we have $v_1=o(1)$. For $\P(\max_{G\leq k\leq n-G} \|\bT^{(1)}(k)\|_{\infty}\geq z-\epsilon)$, by the triangle inequality, we have
\begin{equation}\label{step4: inequality2}
	\P\big(\max\limits_{G\leq k\leq n-G} \big\|\bT^{(1)}(k)\big\|_{\infty}\geq z-\epsilon \big)\leq \P\big(\max\limits_{G\leq k\leq n-G} \big\|\bT^{\bG}(k)\big\|_{\infty}\geq z-\epsilon \big)+v_2,
\end{equation}
where
\begin{equation*}
	v_2=\max_{x>0}\big|\P(\max_{G\leq k\leq n-G}\|\bT^{\bG}(k)\|_{\infty}> x\big)-\P(\max_{G\leq k\leq n-G}\|\bT^{(1)}(k)\|_{\infty}> x\big)\big|.
\end{equation*}
By Lemma \ref{lemma: gaussian approxiation 1}, we have $v_2\leq Cn^{-\zeta_0}$. Therefore, by (\ref{step4: inequality1}) and (\ref{step4: inequality2}), we have proved that
\begin{equation}\label{step4: inequality3}
	\P\big(\max\limits_{G\leq k\leq n-G} \big\|\bT(k)\big\|_{\infty}\geq z \big)\leq\underbrace{\P\big(\max\limits_{G\leq k\leq n-G} \big\|\bT^{\bG}(k)\big\|_{\infty}\geq z-\epsilon \big)}_{v_3}+o(1). 
\end{equation}
We next consider $v_3$. We decompose $v_3$ as $v_3=v_4+v_5$, where $v_4$ and $v_5 $ are defined as
\begin{equation*}
	v_4=\P\big(z-\epsilon\leq \max\limits_{G\leq k\leq n-G} \big\|\bT^{\bG}(k)\big\|_{\infty}\leq z\big),~~v_5=\P(\max\limits_{G\leq k\leq n-G} \|\bT^{\bG}(k)\|_{\infty}\geq z).
\end{equation*}
By Lemmas \ref{lemma:anti consentration inequality}, we can show that $v_4=o(1)$. For $v_5$, we have
\begin{equation}\label{step4: inequality4}
	\P\big(\max\limits_{G\leq k\leq n-G} \big\|\bT^{\bG}(k)\big\|_{\infty}\geq z \big)\leq \P\big(\max\limits_{G\leq k\leq n-G} \big\|\bT^b(k)\big\|_{\infty}\geq z|\mathcal{X}) \big)+v_6,
\end{equation}
where $v_6=\sup_{z\in(0,\infty)} |\P(\max\limits_{G\leq k\leq n-G}\|\bT^{\bG}(k)\|_{\infty}> z)-\P(\max\limits_{G\leq k\leq n-G}\|\bT^b(k)\|_{\infty}> z|\mathcal{X})|$. By (\ref{equation: gaussian approximation 2}), we have $v_6=o(1)$  with probability at least $1-C(nd)^{-1}$. Therefore, by (\ref{step4: inequality1}) -- (\ref{step4: inequality4}), we have proved
\begin{equation*}
	\P\big(\max_{G\leq k \leq n-G}\|\bT(k)\|_{\infty}\geq z\big)-\P\big(\max_{G\leq k \leq n-G}\|\bT^b(k)\|_{\infty}\geq z|\mathcal{X}\big)=o_p(1),
\end{equation*}
uniformly for $z>0$. Similarly, we can obtain the lower bound. Finally, we have proved that
\begin{equation*}
	\sup_{z\in (0,\infty)}\big|\P\big(\max_{G\leq k \leq n-G}\|\bT(k)\|_{\infty}\geq z\big)-\P\big(\max_{G\leq k \leq n-G}\|\bT^b(k)\|_{\infty}\geq z|\mathcal{X}\big)\big|=o_p(1),
\end{equation*}
which finishes the proof of \ref{equ: gaussian approximation result} in Theorem \ref{theorem: gaussian approximation}.

\subsection{Proof of Theorem \ref{theorem: power results}}
In this section, we give the power results. 
The proof of Theorem \ref{theorem: power results} proceeds in two steps. In Step 1, we obtain the upper bound of $c_{W^b}(1-\alpha)$, where $c_{W^b}(1-\alpha)$ is  the $1-\alpha$ quantile of $W^{b}$, which is defined as $	c_{W^b}(1-\alpha):=\inf\big\{t:\P(W^b\leq t|\cX)\geq 1-\alpha  \big\}$. In Step 2, using the upper bound, we get the lower bound of $\P\big(W\geq c_{W^b}(1-\alpha)\big)$ to prove
\begin{equation}
	\P\big(W\geq c_{W^b}(1-\alpha)\big)\rightarrow1, ~\text{as}~n,d\rightarrow\infty.
\end{equation}
Note that $\{\Phi_{\alpha}=1\}\Leftrightarrow\{W\geq \hat{c}_{W^b}(1-\alpha) \}$, where 
\begin{equation}
	\hat{c}_{W^b}(1-\alpha):=\inf\Big\{t:
	\dfrac{1}{B+1}\sum_{b=1}^{B}\mathbf{1}\{W^{b}\leq t|\cX\}\geq 1-\alpha\Big\}.
\end{equation}
Finally, using the fact that $\hat{c}_{W^b}(1-\alpha)$ is the estimation for $c_{W^b}(1-\alpha)$ based on the bootstrap samples, we complete the proof. Now, we consider the two steps in detail.  \\
\textbf{Step~1}: In this step, we aim to obtain the upper bound for $c_{W^b}(1-\alpha)$. Recall $T_j^b(k)$ defined in $(\ref{equ: bootstrap based testing statistics for each coordinate})$. For each fixed $j$ and $k=G,\ldots,n-G$, conditional on $\cX$, we have $$T_j(k)\sim N(0,\sigma_{j}(k)^2).$$
Let $q'=d(n-2G+1)$. Next, we will prove that there exists some constant $A_0$ such that with probability tending to one, $$\max_{1\leq j\leq d}\max_{G\leq k\leq n-G}\sigma_{j}(k)^2\leq A_0^2,$$ which is shown in the following lemma.
\begin{lemma}\label{lemma: upper bound of variance of bootstrap samples}
	{Suppose Assumptions B.1-B.3 hold.} With probability at least $1-3(nd)^{-C_1}$, it holds that 
	\begin{equation*}
		\max_{j}\max_{k}\sigma_j(k)^2\leq C_2(D+\max_{1\leq m\leq M_0}\max_{1\leq j\leq d}|\theta^m_j|)^2:=A_0^2.
	\end{equation*}
\end{lemma}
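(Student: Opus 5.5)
Conditionally on $\cX$, $T_j^b(k)=G^{-3/2}\sum_{t}e_t c_{t,j}(k)$ is Gaussian with variance $\sigma_j(k)^2=G^{-3}\sum_{t=k-G+1}^{k+G}c_{t,j}(k)^2$. Here, for $t$ in the left window, $c_{t,j}(k)=\sum_{t_2=k+1}^{k+G}h(X_{t,j},X_{t_2,j})$, and for $t$ in the right window, $c_{t,j}(k)=\sum_{t_1=k-G+1}^{k}h(X_{t_1,j},X_{t,j})$. So the task is to bound $\max_{j,k}G^{-2}c_{t,j}(k)^2$ uniformly over $t$ in the window, since there are $2G$ summands and dividing by $G^3$ then leaves a bounded quantity.

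Under (B.1) each window $[k-G+1,k+G]$ contains at most one change point $\gamma_m$. We handle the two windows symmetrically; take a left-window index $t$. We decompose $h(X_{t,j},X_{t_2,j})$ through the Hoeffding decomposition of the pair. The pair is homogeneous if $t$ and $t_2$ lie in the same segment, and heterogeneous if they lie on opposite sides of $\gamma_m$. This gives
\[
h(X_{t,j},X_{t_2,j})=\theta_{j}(t,t_2)+h_{1,j}(X_{t,j})+h_{2,j}(X_{t_2,j})+g_j(X_{t,j},X_{t_2,j}),
\]
where $|\theta_j(t,t_2)|\le\max_{m,j}|\theta_j^{(m)}|$ and the projections are those of the relevant segment pair. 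Summing over $t_2$ and dividing by $G$, we get four pieces.

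\begin{itemize}
\item The mean piece is bounded by $\max_{m,j}|\theta^{(m)}_j|\le C^*$.
\item The first-projection piece is at most $2|h_{1,j}(X_{t,j})|$ (at most two versions, one per regime). Its maximum over $t\le n$ and $j\le d$ is $O(\log(nd))$ by the $\psi_1$ bound in (B.3) and a union bound.
\item The second-projection piece is $G^{-1}\sum_{t_2}h_{2,j}(X_{t_2,j})$, split into at most two segment blocks. By Lemma \ref{lemma: concentration for maxsimum sub-exponential} applied to window sums, it is $O(\sqrt{\log(nd)/G}+\log(nd)/G)$ uniformly in $k$ and $j$, with probability $1-2(nd)^{-C_1}$.
\item The degenerate piece $G^{-1}\sum_{t_2}g_j(X_{t,j},X_{t_2,j})$ is, for fixed $t$, a sum of independent centered terms conditionally on $X_{t,j}$. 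It can be controlled either by the conditional sub-exponential bound or by Lemma \ref{lemma: Hoeffding's residual} with $w_1=1$, $w_2=G$. In either case it is $O(\sqrt{\log(nd)/G}\,\cdot\,\text{polylog})$, which is $o(1)$ by (B.1).
\end{itemize}

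The difficulty is the first-projection piece. The pointwise maximum of $|h_{1,j}(X_{t,j})|$ grows like $\log(nd)$, so bounding each $c_{t,j}(k)^2/G^2$ separately gives only $A_0^2\asymp\log^2(nd)$, not a constant. To avoid this, we do not take a sup over $t$. We instead write
\[
\sigma_j(k)^2\le \frac{C}{G}\sum_{t=k-G+1}^{k+G}\Big[\theta_{\max}^2+h_{1,j}(X_{t,j})^2+\tilde h_{1,j}(X_{t,j})^2\Big]+o(1),
\]
where $\tilde h_{1,j}$ is the other-regime projection, and the $o(1)$ absorbs the second-projection and degenerate pieces via $(a+b+c+d)^2\le 4(a^2+\cdots)$.

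It remains to bound the averages $G^{-1}\sum_t h_{1,j}(X_{t,j})^2$. These are window averages of squares of sub-exponential variables, whose $\psi_{1/2}$ norms are bounded by (B.3). Their means are at most $D^2$ by (B.2)/(B.3). Their centered fluctuations, uniformly over $k$ and $j$, are $O(\sqrt{\log(nd)/G}+\log^2(nd)/G)$ by a $\psi_{1/2}$ version of Lemma \ref{lemma: concentration for maxsimum sub-exponential}, for instance via truncation at level $\log^2(nd)$ and a union bound. This is $o(1)$ under (B.1).

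Collecting the pieces on the intersection of three events, each of probability at least $1-(nd)^{-C_1}$, gives $\max_{j,k}\sigma_j(k)^2\le C_2(D+\max_{m,j}|\theta_j^{(m)}|)^2$ with probability at least $1-3(nd)^{-C_1}$. The main technical step is this uniform law for windowed squared projections. I would carry it out by truncation: the truncated part is handled by Bernstein's inequality, and the tail is shown to vanish with high probability using the $\psi_1$ tail bound.
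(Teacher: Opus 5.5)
Your argument is sound in its main line, and it reaches the lemma by a genuinely different route from the paper.

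\textbf{How the paper argues.} The paper never decomposes $c_{t,j}(k)$. It applies Cauchy--Schwarz, $(\sum_{t_2}h(X_{t_1,j},X_{t_2,j}))^2\le G\sum_{t_2}h^2(X_{t_1,j},X_{t_2,j})$. This gives $\sigma_j(k)^2\le 2G^{-2}\sum_{t_1}\sum_{t_2}h^2(X_{t_1,j},X_{t_2,j})$, a two-sample U-statistic in the kernel $h^2$. The double sum is split at $\gamma_m$ when the window straddles a change point, and each block has mean at most $C(D+\max_{m,j}|\theta_j^{(m)}|)^2$. A concentration inequality for U-statistics with $\psi_{1/2}$ kernels (Lemma A.4 of \cite{yu2022robust}) then bounds the fluctuation, after division by $G^2$, by a multiple of $\max\{\sqrt{\log(nd)/G},\log^2(nd)\log^2(Gd)/G\}$. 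This is exactly the quantity (B.1) sends to zero. That one step removes the $\max_t|h_{1,j}(X_{t,j})|\asymp\log(nd)$ obstruction you identified, because the squares are averaged before any maximum is taken.

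\textbf{What your route buys and costs.} Your route needs only concentration for sums of independent $\psi_{1/2}$ variables for the leading term. It also shows more precisely that $\sigma_j(k)^2$ is essentially a window average of the squared mean-plus-first-projection piece. The price is three remainders to control uniformly instead of one. The degenerate remainder still needs the U-statistic tail bound behind Lemma~\ref{lemma: Hoeffding's residual}.

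\textbf{Two steps that lose a logarithm.} (B.1) controls $\log^2(nd)\log^2(Gd)/G$, but not $\log^3(nd)/G$ or $\log^4(nd)/G$ when $\log(Gd)\ll\log(nd)$ (e.g.\ $d$ fixed or slowly growing). Two of your quantitative steps lose a logarithm that (B.1) cannot absorb in that regime.

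First, the degenerate piece. The conditional sub-exponential option needs a $\psi_1$ bound on $h(x,Y)-\E h(x,Y)$ uniform in $x$, and (B.3) does not give this. Lemma~\ref{lemma: Hoeffding's residual} with a single left index carries $V_1=h_{1,j}(X_{t,j})^2$. So a bound uniform in $t$ is of order $\log^2(nd)/\sqrt{G}$, and its square is $\log^4(nd)/G$. The fix is to apply your own averaging device here too. Keep the $t$-dependence,
$|G^{-1}\sum_{t_2}g_j(X_{t,j},X_{t_2,j})|\lesssim\log(nd)G^{-1/2}\{1+|h_{1,j}(X_{t,j})|\}+\log^2(nd)/G$,
and average the square over $t$. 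This gives $O(\log^2(nd)/G)=o(1)$ once the window averages of $h_{1,j}^2$ are bounded.

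Second, those window averages. Truncating at $\log^2(nd)$ and applying plain Bernstein gives a deviation of order $\log^3(nd)/G$, not the $\log^2(nd)/G$ you claim. Use a genuine $\psi_{1/2}$ Bernstein-type bound instead, such as Theorem 6 of \cite{delaigle2011robustness}, which the paper invokes in the proof of Lemma~\ref{lemma: Hoeffding's residual}. Its deviation, $\sqrt{\log(nd)/G}+\log^2(nd)/G$, lies within the rate that (B.1) controls.
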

The proof of Lemma \ref{lemma: upper bound of variance of bootstrap samples} is given in Section \ref{sec: proof of upper bound of bootstrap samples}.  Combining Lemmas \ref{lemma: upper bound of variance of bootstrap samples} and  \ref{lemma:maximum inequality}, for any $t>0$, we have 
\begin{equation}\label{inequality: maximum inequality}
	\E \Big(\max_{G\leq k\leq n-G}\max_{1\leq j\leq d} |T_{j}^b(k)|\Big)\leq \dfrac{\log(2q')}{t}+\dfrac{tA_0^2}{2}.
\end{equation}
Furthermore, taking $t=A_0^{-1}\sqrt{2\log(q')}$ in (\ref{inequality: maximum inequality}), we have
\begin{equation}\label{inequality: maximum inequality2}
	\E \Big(\max_{G\leq k\leq n-G}\max_{1\leq j\leq d}  |T_{j}^b(k)|\Big)\leq A_0\sqrt{2\log(q')}\big(1+\dfrac{1}{2\log q'}\big).
\end{equation}
By Theorem 5.8 in \cite{boucheron2013concentration}, we have
\begin{equation}\label{inequality: upper bound expectation}
	\P\Big(\max_{G\leq k\leq n-G}\max_{1\leq j\leq d}  |T_{j}^b(k)|\geq \E\Big[ \max_{G\leq k\leq n-G}\max_{1\leq j\leq d}  |T_{j}^b(k)|\Big]+z\Big|\mathcal{X}\Big)\leq \exp\Big(-\dfrac{z^2}{2A_0^2}\Big).
\end{equation}
Based on (\ref{inequality: maximum inequality2}), and taking $z=A_0\sqrt{2\log(\alpha^{-1})}$ in (\ref{inequality: upper bound expectation}), we have 
\begin{equation}\label{inequality: upper bound for 1-alpha quantile1}
	c_{W^b}(1-\alpha)\leq A_0\sqrt{2\log(q')}\big(1+\dfrac{1}{2\log q'}\big)+A_0\sqrt{2\log(\alpha^{-1})}.
\end{equation}
\textbf{Step~2}:
In this step, we aim to prove that $\P\big(W\geq c_{W^b}(1-\alpha)\big)\rightarrow1$ as $n,d\rightarrow\infty$. 
Let 
\begin{equation}\label{equation: upper bound for 1-alpha quantile}
	\begin{array}{ll}
		c^{\rm u}_{W^b}(1-\alpha)&=A_0\sqrt{2\log(q')}\big(1+\dfrac{1}{2\log q'}\big)+A_0\sqrt{2\log(\alpha^{-1})}
	\end{array}
\end{equation}
Considering the upper bound obtained in (\ref{equation: upper bound for 1-alpha quantile}), it is sufficient to prove $H_1\rightarrow1$, where
\begin{equation}\label{inequality: H1}
	H_1=\P\big(W\geq c^{\rm u}_{W^b}(1-\alpha)\big).
\end{equation}
By the definition of $W$, we have 
\begin{equation}\label{equ: lower bound of maximum statistics 1}
	W=\max_{G\leq k\leq n-G }\max_{1\leq j\leq d}|T_j(k)|=\max_{G\leq k\leq n-G}\|\bT(k)\|_{\infty}\geq \max_{1\leq m\leq M_0}\|\bT(\gamma_m)\|_{\infty}.
\end{equation}
Hence, it is sufficient to prove $H_2\rightarrow 1$, where 
\begin{equation}\label{inequality: H2}
	H_2=\P\big( \max_{1\leq m\leq M_0}\|\bT(\gamma_m)\|_{\infty}\geq c^{\rm u}_{W^b}(1-\alpha)\big).
\end{equation}
For each fixed $m=1\ldots,M_0$ and $j=1,\ldots,d$, we have
\begin{equation}\label{equ: Hoeffding at gamma}
	\begin{array}{ll}
		T_{j}(\gamma_m)=\sqrt{G}\theta_{j}^m+\dfrac{1}{\sqrt{G}}\sum\limits_{t_1=\gamma_m-G}^{\gamma_{m}}h_{1,j}(X_{t_1,j})+\dfrac{G}{G^{3/2}}\sum\limits_{t_2=\gamma_m+1}^{\gamma_m+G}h_{2,j}(X_{t_2,j})\\
		+\dfrac{1}{G^{3/2}}\sum\limits_{t_1=\gamma_m-G}^{\gamma_m}\sum\limits_{t_2=\gamma_m+1}^{\gamma_m+G}g_j(X_{t_1,j},X_{t_2,j}),
	\end{array}
\end{equation}
where $h_1(X_{t_1,j}):=\E h(X_{t_1,j},X_{\gamma_{m}+1,j}|X_{t_1,j})-\theta_j^m$, $h_2(X_{t_2,j}):=\E h(X_{\gamma_m,j},X_{t_2,j}|X_{t_2,j})-\theta_j^m$ and $g_j(X_{t_1,j},X_{t_2,j}):=h(X_{t_1,j},X_{t_2,j})-\theta_{j}^m-h_1(X_{t_1,j})-h_2(X_{t_2,j})$.
Hence, by Lemmas \ref{lemma: concentration for maxsimum sub-exponential1}, \ref{lemma: concentration for maxsimum sub-exponential} and \ref{lemma: Hoeffding's residual}, with probability at least $1-(nd)^{-C_1}$, it holds uniformly over  $m=1,\ldots,M_0$ 
\begin{equation}\label{equ: lower bound Hoeffding at gamma}
	\begin{array}{ll}
		\max\limits_{1\leq j\leq d}	|T_{j}(\gamma_m)|\geq \max\limits_{1\leq j\leq d}|\sqrt{G}\theta_{j}^m|-\max\limits_{1\leq j\leq d}|\dfrac{1}{\sqrt{G}}\sum\limits_{t_1=\gamma_m-G}^{\gamma_{m}}h_{1,j}(X_{t_1,j})|-\max\limits_{1\leq j\leq d}|\dfrac{G}{G^{3/2}}\sum\limits_{t_2=\gamma_m+1}^{\gamma_m+G}h_{2,j}(X_{t_2,j})|\\
		-\max\limits_{1\leq j\leq d}|\dfrac{1}{G^{3/2}}\sum\limits_{t_1=\gamma_m-G}^{\gamma_m}\sum\limits_{t_2=\gamma_m+1}^{\gamma_m+G}g_j(X_{t_1,j},X_{t_2,j})|\\
		\geq_{(1)} \sqrt{G}\|\btheta^m\|_{\infty}-C_2\sqrt{\log(nd)}-C_3\dfrac{G}{G^{3/2}}\log(nd)\max(\sigma^2+2\sqrt{\dfrac{\log^2(nd)}{G}})\\
		\geq_{(2)} \sqrt{G}\|\btheta^m\|_{\infty}-C_2\sqrt{\log(nd)}-C_3\log(nd)/\sqrt{G}\\
		\geq_{(3)} \sqrt{G}\|\btheta^m\|_{\infty}-C_4\sqrt{\log(nd)}\\
	\end{array}
\end{equation}
where $(1)$ comes from Lemmas \ref{lemma: concentration for maxsimum sub-exponential1}, \ref{lemma: concentration for maxsimum sub-exponential} and \ref{lemma: Hoeffding's residual}, (2) and (3) come from the assumption that $G\gg\log^2(nd)$. Hence, the above results in (\ref{equ: lower bound Hoeffding at gamma}) yield that with probability at least $1-(nd)^{-C_1}$
\begin{equation}\label{equ: lower bound of maximum statistics}
	\max_{1\leq m\leq M_0}\max\limits_{1\leq j\leq d}	|T_{j}(\gamma_m)|\geq \sqrt{G}	\max_{1\leq m\leq M_0}\|\btheta^m\|_{\infty}-C_3\sqrt{\log(nd)}.
\end{equation}
Considering (\ref{equation: upper bound for 1-alpha quantile}), (\ref{equ: lower bound of maximum statistics 1}),  (\ref{equ: lower bound of maximum statistics}) and the {assumption} in (\ref{inequality: theoretical signal strengh}), we have
\begin{equation*}
	\P\big( W\geq c^{\rm u}_{W^b}(1-\alpha)\big)\rightarrow 1,
\end{equation*}
which completes the proof.

\subsection{Proof of Theorem \ref{theory: initial estimators}}
{\bf{Proof of (1) of Theorem \ref{theory: initial estimators}}}.  Let $D_+$ be the bootstrap based critical value. We first prove that the estimated change point number $\hat{M}_0$ in (\ref{equation: intial number}) is consistent for the true number $M_0$. By the definition of $\hat{M}_0$, define the two events:
\begin{equation*}
	\begin{array}{ll}
		\cA_1=\{\max\limits_{k: \min_{1\leq m\leq M_0: |k-\gamma_m|\geq G}} \|\bT(k)\|_{\infty}\leq D_+\},\\
		\cA_2=\{\min\limits_{k:\min_{1\leq m\leq M_0: |k-\gamma_m|\leq (1-\eta)G}}\|\bT(k)\|_{\infty}> D_+\}.
	\end{array}
\end{equation*}
Note that $\cA_1\cap\cA_2\subset \{\hat{M}_0=M_0\}$ and $\cA_1\cap \cA_2\subset\{\max_{1\leq m\leq M_0}|\hat{\gamma}_m-\gamma_m|\leq G\}$. If we can prove that $\P(\cA_1\cap \cA_2)\rightarrow 1$, the proof of (1) is completed. Now, we consider $\P(\cA_1)$ and $\P(\cA_2)$, respectively.

{\bf{Proof of $\cA_1$}}.  For $\cA_1$, by the definition of $\bT(k)=(T_1(k),\ldots,T_{d}(k))^\top$, where 
\begin{equation*}
	T_j(k):=\dfrac{1}{G^{3/2}}\sum_{t_1=k-G}^k\sum_{t_2=k+1}^{k+G}h(X_{t_1,j},X_{t_2,j}), ~~j=1,\ldots,d.
\end{equation*}
Note that for any $k\in\{G,\ldots,n-G\}$ satisfying $ \min_{1\leq m\leq M_0}: |k-\gamma_m|\geq G$, it means that there is no change point in the interval $[k-G,k+G]$. In other words, $X_{t_1,j}$ and $X_{t_2,j}$ are identically distributed, say $F_j(x)$. In this case, by the Hoeffding's decomposition, we have
\begin{equation}\label{equation: no change point interval}
	h(X_{t_1,j},X_{t_2,j})=0+h_{1,j}(X_{t_1,j})+h_{2,j}(X_{t_2,j})+g_j(X_{t_1,j},X_{t_2,j}),
\end{equation} 
where $h_{1,j}(X_{t_1,j}):=\E h(X_{t_1,j},X_{t_2,j}|X_{t_1,j})$ and $h_{2,j}(X_{t_2,j}):=\E h(X_{t_1,j},X_{t_2,j}|X_{t_2,j})=-h_1(X_{t_1,j})$ and  $g_j(X_{t_1,j},X_{t_2,j})=h(X_{t_1,j},X_{t_2,j})-h_{1,j}(X_{t_1,j})-h_{2,j}(X_{t_2,j})$ is the residual term of the Hoeffding's decomposition. Hence, by (\ref{equation: no change point interval}), we have:
\begin{equation}\label{equation: Tjk}
	T_j(k)=\underbrace{G^{-1/2}\sum_{t_1=k-G}^{k}h_{1,j}(X_{t_1,j})}_{T_j^{(1)}(k)}+\underbrace{G^{-1/2}\sum_{t_2=k+1}^{k+G}h_{2,j}(X_{t_1,j})}_{T_j^{(2)}(k)}+{\underbrace{\dfrac{1}{G^{3/2}}\sum_{t_1=k-G}^k\sum_{t_2=k+1}^{k+G}g_j(X_{t_1,j},X_{t_2,j})}_{R_j(k)}}.
\end{equation}
Note that by definition of $k: \min_{1\leq m\leq M_0}: |k-\gamma_m|\geq G$, we can split those k into $m+1$ groups: 
\begin{equation}\label{equation: cT1-cTm}
	\cT_1=\{k: k\in[1,\gamma_{1}-G]\},~	\cT_2=\{ k: k\in[\gamma_{1}+G,\gamma_{2}-G]\},\ldots, \cT_{m+1}=\{k: k\in[\gamma_{m}+G,n]\}.
\end{equation}
Hence, it is sufficient to prove 
\begin{equation*}
	\P(\max_{1\leq j\leq d}\max_{1 \leq m\leq M_0}\max_{k\in \cT_m}|T_j(k)|\leq D_{+}\}\rightarrow 1.
\end{equation*}
For each fixed $j$ and $m$, we consider $\max_{k\in \cT_m}|T_j^{(1)}(k)|$, $\max_{k\in \cT_m}|T_j^{(2)}(k)|$, and $\max_{k\in \cT_m}|R_j^{(1)}(k)|$. 

{\bf{Control of $\max_{k\in \cT_m}|T_j^{(1)}(k)|$}}. By {Assumption B.2}, $h_{1,j}(X_{t_1,j})$ is sub-exponential distributed. Hence, by Lemma \ref{lemma: concentration for maxsimum sub-exponential}, for each  fixed $j$ and $m$, taking $\tau_1(k)=k-G$, $\tau_2(k)=k$, with probability at least $1-(nd)^{-C_2}$ for some $C_2>3$, we have
\begin{equation*}
	\max_{k\in \cT_m}G^{1/2}|T_{j}^{(1)}(k)|\leq C_1G\sqrt{\dfrac{\log(nd)}{G}}=C_1\sqrt{G\log(nd)}.
\end{equation*}
Taking the union bounds over $j$ and $m$, we have with probability at least $1-(nd)^{-C_3}$ for some $C_3>0$, it holds that 
\begin{equation*}
	\max_{1\leq j \leq d}\max_{1\leq m\leq M_0}	\max_{k\in \cT_m}|T_{j}^{(1)}(k)|\leq C_1G\sqrt{\dfrac{\log(nd)}{G}}=C_1\sqrt{\log(nd)}\leq D^+.
\end{equation*}
Similarly, we  can prove 
\begin{equation*}
	\max_{1\leq j \leq d}\max_{1\leq m\leq M_0}	\max_{k\in \cT_m}|T_{j}^{(1)}(k)=O_p(\log(nd))
\end{equation*}
For $R_j(k)$ with fixed $j\in\{1,\ldots,d\}$ and $m\in\{1,\ldots,M_0\}$, by Lemma \ref{lemma: Hoeffding's residual}, taking $\underline{\tau}_{1}(k)=k-G$, $\overline{\tau}_{1}(k)=k$, $\underline{\tau}_{2}(k)=k+1$, $\overline{\tau}_{2}(k)=k+G$, with probability at least $1-(nd)^{-C_2}$ for some $C_2>3$, we have 
\begin{equation*}
	\begin{array}{ll}
		G^{3/2}	\max_{k\in \cT_{m}}|R_j(k)|&\leq C_1 \log(nd)G\Big(\sqrt{\sigma^2}+\sqrt{\dfrac{\log^2(nd)}{G}}+\sqrt{\dfrac{\log^2(nd)}{G}}\Big)\\
		&\leq C_1G\log(nd)
	\end{array}
\end{equation*}
where we use the assumption that $G\gg \log^2(nd)$. Taking the  union bounds over $j$ and $m$, we then have 
\begin{equation*}
	\max_{1\leq j\leq d}\max_{1\leq m\leq M_0}\max_{k\in \cT_{m}}|R_j(k)|=O_p(\dfrac{\log(nd)}{\sqrt{G}})\ll D^+.
\end{equation*}
Hence, based on the above results,  with probability at least $1-(dn)^{-C_1}$, we have 
\begin{equation*}
	\max_{1\leq j\leq d}\max_{1 \leq m\leq M_0}\max_{k\in \cT_m}|T_j(k)|\leq C\sqrt{\log(nd)}\leq D_+.
\end{equation*}

{\bf{Proof of $\cA_2$}}. For $\cA_2$, Note that by definition of $\{k: \min_{1\leq m\leq M_0}: |k-\gamma_m|\leq (1-\eta)G\}$ we can split those k into $m$ groups: 
\begin{equation}\label{equ: cT1'-cTm'}
	\begin{array}{ll}
		\cT'_{1,1}=\{k: k\in[\gamma_{1}-(1-\eta)G,\gamma_{1}]\},~\cT'_{1,2}=\{k: k\in[\gamma_{1}+1,\gamma_{1}+(1-\eta)G],\}\\
		\cT'_{2,1}=\{k: k\in[\gamma_{2}-(1-\eta)G,\gamma_{2}]\},~\cT'_{2,2}=\{k: k\in[\gamma_{2}+1,\gamma_{2}+(1-\eta)G],\}\\
		\qquad	\qquad \qquad\qquad\qquad\qquad\qquad\vdots\\
		\cT'_{M_0,1}=\{k: k\in[\gamma_{M_0}-(1-\eta)G,\gamma_{M_0}]\},~\cT'_{M_0,2}=\{k: k\in[\gamma_{M_0}+1,\gamma_{M_0}+(1-\eta)G]\}
	\end{array}
\end{equation}
Hence, it is sufficient to prove 
\begin{equation*}
	\P(\min_{1 \leq m\leq M_0}\min_{1\leq s\leq 2}\min_{k\in \cT'_{m,s}}\max_{1\leq j\leq d}|T_j(k)|\geq D_{+})\rightarrow 1.
\end{equation*}
For any $m$ and $s$, there is only one change point in either 
$[k-G,k]$ or $[k+1,k+G]$.  Without loss of generality, we consider the case of $s=2$. In this case, for any  $k\in \cT'_{m,2}$, there is only one change point $\gamma_m$ in $[k-G,k]$. 
In this case, we can rewrite 
$T_j(k)$ as two parts:
\begin{equation}\label{equ: decomposition T_{j,k}}
	T_{j}(k)=\underbrace{\dfrac{1}{G^{3/2}}\sum_{t_1=k-G}^{\gamma_m}\sum_{t_2=k+1}^{k+G}h(X_{t_1,j},X_{t_2,j})}_{T_{j,1}(k)}+\underbrace{\dfrac{1}{G^{3/2}}\sum^{k}_{t_1=\gamma_m}\sum_{t_2=k+1}^{k+G}h(X_{t_1,j},X_{t_2,j})}_{T_{j,2}(k)},
\end{equation}
Let $\theta_{j}^{m}:=\E h(X_{\gamma_m,j},X_{\gamma_{m+1},j})$ be the signal jump at the change point $\gamma_{m}$ for the $j$-th coordinate. Let $\btheta^m =(\theta_1^m,\ldots,\theta_d^m)$. Based on the Hoeffding's decomposition, we can decompose $T_{j,1}(k)$ into three parts:
\begin{equation}\label{equ: Hoeffding T_{j,k}}
	\begin{array}{ll}
		T_{j,1}(k)=\dfrac{\gamma_m-k+G}{G^{1/2}}\theta_{j}^m+\underbrace{\dfrac{1}{\sqrt{G}}\sum\limits_{t_1=k-G}^{\gamma_{m}}h_{1,j}(X_{t_1,j})}_{T_{j,1,1}(k)}+\underbrace{\dfrac{\gamma_{m}-k+G}{G^{3/2}}\sum\limits_{t_2=k+1}^{k+G}h_{2,j}(X_{t_2,j})}_{T_{j,1,2}(k)}\\
		+\underbrace{\dfrac{1}{G^{3/2}}\sum\limits_{t_1=k-G}^{\gamma_m}\sum\limits_{t_2=k+1}^{k+G}g_j(X_{t_1,j},X_{t_2,j})}_{T_{j,1,3}(k)},
	\end{array}
\end{equation}
where $h_1(X_{t_1,j}):=\E h(X_{t_1,j},X_{\gamma_{m}+1,j}|X_{t_1,j})-\theta_j^m$, $h_2(X_{t_2,j}):=\E h(X_{\gamma_m,j},X_{t_2,j}|X_{t_2,j})-\theta_j^m$ and $g_j(X_{t_1,j},X_{t_2,j}):=h(X_{t_1,j},X_{t_2,j})-\theta_{j}^m-h_1(X_{t_1,j})-h_2(X_{t_2,j})$. For $T_{j,1,1}(k)$ and $T_{j,1,2}(k)$, using Lemma \ref{lemma: concentration for maxsimum sub-exponential} and the fact that for $k\in \cT'_{m,2}$, $\eta G\leq \gamma_m-k+G\leq G$,  taking the union bound over $m=1,\ldots,M_0$, $j=1,\ldots,d$ and $k\in\cT'_{m,s}$, we have with probability at least $(1-(nd)^{-C_2}$ for some $C_2>0$, it holds that
\begin{equation*}
	\max_{1\leq m\leq M_0} \max_{k\in \cT_{m,2}}\max_{1\leq j\leq d}|T_{j,1,1}(k)|\leq C_1\sqrt{\log(nd)},~~\max_{1\leq m\leq M_0} \max_{k\in \cT_{m,2}}\max_{1\leq j\leq d}|T_{j,1,2}(k)|\leq C_1\sqrt{\log(nd)}.
\end{equation*}
For $T_{j,1,3}(k)$, wtih fixed $j$ and $m$, taking  $\underline{\tau}_{1}(k)=k-G$, $\overline{\tau}_{1}(k)=\gamma_m$, $\underline{\tau}_{2}(k)=k+1$, $\overline{\tau}_{2}(k)=k+G$, with probability at least $1-(nd)^{-C_2}$ for some $C_2>3$, we have 
\begin{equation*}
	\begin{array}{ll}
		G^{3/2}	\max_{k\in \cT_{m}}|T_{j,1,3}(k)|&\leq C_1 \log(nd)G\Big(\sqrt{\sigma^2}+\sqrt{\dfrac{\log^2(nd)}{\eta G}}+\sqrt{\dfrac{\log^2(nd)}{G}}\Big)\\
		&\leq C_1G\log(nd)
	\end{array}
\end{equation*}
Taking the union bound over $m=1,\ldots,M_0$, $j=1,\ldots,d$, we have 
\begin{equation*}
	\max_{1\leq m\leq M_0} \max_{k\in \cT_{m,2}}\max_{1\leq j\leq d}|T_{j,1,3}(k)|=O_p(\dfrac{\log(nd)}{\sqrt{G}})\leq O_p(\sqrt{\log(nd)})
\end{equation*}
Lastly, using a similar proof, for $T_{j,2}(k)$ in (\ref{equ: decomposition T_{j,k}}), we can prove that 
\begin{equation*}
	\max_{1\leq m\leq M_0} \max_{k\in \cT_{m,2}}\max_{1\leq j\leq d}|T_{j,2}(k)|=O_p(\sqrt{\log(nd)}
\end{equation*}
Combining the above results, with probability at least $(1-(nd)^{-C_2}$ for some $C_2>0$, we have
\begin{equation}\label{equ: bound of non signal}
	\begin{array}{ll}
		\max\limits_{1\leq m\leq M_0} \max\limits_{k\in \cT_{m,2}}\max\limits_{1\leq j\leq d}\Big(|T_{j,1,1}(k)|+|T_{j,1,2}(k)|+|T_{j,2}(k)| \Big)\leq C_1\sqrt{\log(nd)}.
	\end{array}
\end{equation}
Hence, by the decomposition of $T_j(k)$ in (\ref{equ: decomposition T_{j,k}}) and the Hoeffding's decomposition in (\ref{equ: Hoeffding T_{j,k}}), and the bound in (\ref{equ: bound of non signal}), for $s=2$, with probability at least $1-(nd)^{-C_2}$ , we have
\begin{equation*}
	\begin{array}{ll}
		\min\limits_{1 \leq m\leq M_0}\min\limits_{k\in \cT'_{m,2}}\max\limits_{1\leq j\leq d}|T_j(k)|\\
		\geq_{(1)} \min\limits_{1 \leq m\leq M_0}\min\limits_{k\in \cT'_{m,2}}\max\limits_{1\leq j\leq d}\Big|\dfrac{\gamma_m-k+G}{G^{1/2}}\theta_{j}^m\Big|-\max\limits_{1\leq m\leq M_0} \max\limits_{k\in \cT_{m,2}}\max\limits_{1\leq j\leq d}\Big(|T_{j,1,1}(k)|+|T_{j,1,2}(k)|+|T_{j,2}(k)| \Big)\\
		\geq_{(2)} \dfrac{\eta G}{G^{1/2}}\min_{1\leq m\leq M_0}\|\btheta^m\|_{\infty}-C_1\sqrt{\log(nd)}\\
		\geq_{(3)} 0
	\end{array}
\end{equation*}
where $(2)$ comes from the fact that $\gamma_m\leq k \leq \gamma_m+(1-\eta)G$ and (\ref{equ: bound of non signal}),
and $(3)$ is due to the assumption that $\min_{1\leq m\leq M_0}\|\btheta^m\|_{\infty}\gg \sqrt{\log(nd)/G}$. Hence, we have completed the proof for $s=2$. Note that the proof for $s=1$ is similar, which is ommitted.

{\bf{Proof of (2) of Theorem \ref{theory: initial estimators}}}. In this part, we give the theoretical results of the estimated change point $\hat{\gamma}_m$ for $m=1,\ldots,M_0$.  In other words, we aim to prove that 
\begin{equation*}
	|\hat{\gamma}_m-\gamma_m|=O_p\Big(\dfrac{\log(nd)}{\|\btheta^m\|^2_{\infty}}\Big).
\end{equation*}
Note that we assume $\|\btheta^m\|_{2}\leq C_0$ for some $C_0>0$. Hence, we can assume $	|\hat{\gamma}_m-\gamma_m|\geq \log(nd)$, otherwise the result is trivial. 

Define the event
\[
M_n = \left\{\hat{M}_0 = M_0, \ \max_{1 \leq m \leq M_0} |\hat{\gamma}_m - \gamma_m| < G, \ \hat{\tau}_n > 0 \right\} 
\]
Let $\tilde{v}_m=\max(\gamma_m-G+1,v_m)$ and $\tilde{w}_m=\min(\gamma_m+G,w_m)$. Under the event $M_n$, we have
\begin{equation}\label{equ: definition of gamma-hat}
	\begin{array}{ll}
		\hat{\gamma}_m&=\argmax_{\tilde{v}_m\leq k\leq \tilde{w}_{m}} |\bT(k)|_{\infty}.
	\end{array}
\end{equation}
We first consider the case of $\hat{\gamma}_m\geq \gamma_m$ since the case of $\hat{\gamma}_m\leq \gamma_m$ is similar. Recal $\theta_{j}^{m}:=\E h(X_{\gamma_m,j},X_{\gamma_{m+1},j})$ be the signal jump at the change point $\gamma_{m}$ for the $j$-th coordinate and $\btheta^m =(\theta_1^m,\ldots,\theta_d^m)$. Without loss of generality, we assume $\theta_{j}^m\geq 0$.  Let $j^*\in\{1,\ldots,d\}$ such that $T_{j^*}(\hat{\gamma}_m)=\max\limits_{1\leq j\leq d}T_j(\hat{\gamma}_m)$. The following Lemma \ref{lemma: maximum signal at the identified cpt} shows that $\liminf_{n\rightarrow \infty}\theta^{m}_{j^*}/\|\btheta^m\|_{\infty}\geq 1$. The proof of Lemma \ref{lemma: maximum signal at the identified cpt} is given in Section \ref{sec: proof of lemma at the biggest signal}
\begin{lemma}\label{lemma: maximum signal at the identified cpt}
	{	Suppose Assumptions $\mathbf{(B.1)}$ -- $\mathbf{(B.3)}$ and $\mathbf{(C.1)}$ hold}. Then, with probability tending to one,  we have
	$\liminf_{n\rightarrow \infty}\theta^m_{j^*}/\|\btheta^m\|_{\infty}\geq 1$.
\end{lemma}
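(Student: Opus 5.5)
Under the event $M_n$ we have $\gamma_m\le\hat\gamma_m<\gamma_m+G$. The plan is to compare the value of $|T_{j^*}(\hat\gamma_m)|$ with $\max_j|T_j(\gamma_m)|$. Since $\hat\gamma_m$ maximizes $\|\bT(k)\|_\infty$ over the search interval, which contains $\gamma_m$, we get
\[
\|\bT(\hat\gamma_m)\|_\infty\ \ge\ \|\bT(\gamma_m)\|_\infty .
\]

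The first step is a uniform signal--noise decomposition. For every $k$ with $|k-\gamma_m|<G$ and every $j$, the Hoeffding decomposition in (\ref{equ: decomposition T_{j,k}})--(\ref{equ: Hoeffding T_{j,k}}) gives
\[
T_j(k)=\delta_j^{(m)}(k)+\mathcal R_j(k).
\]
Here the signal is $\delta_j^{(m)}(k)=G^{-1/2}(G-|k-\gamma_m|)\,\theta_j^{m}$. The remainder collects the first-order projection sums and the degenerate term. Lemmas \ref{lemma: concentration for maxsimum sub-exponential}, \ref{lemma: Hoeffding's residual} and a union bound over $j$, $m$ and $k$ control it as
\[
\max_{m}\max_{|k-\gamma_m|<G}\max_j|\mathcal R_j(k)|\le C\sqrt{\log(nd)}
\]
with probability $1-(nd)^{-C_1}$; this uses $G\gg\log^2(nd)$ from (B.1).

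The second step applies the argmax inequality. On the high-probability event it yields two bounds:
\[
|T_{j^*}(\hat\gamma_m)|\le G^{-1/2}(G-|\hat\gamma_m-\gamma_m|)\,|\theta^m_{j^*}|+C\sqrt{\log(nd)},
\]
\[
\|\bT(\gamma_m)\|_\infty\ge \sqrt G\,\|\btheta^m\|_\infty-C\sqrt{\log(nd)}.
\]
Combining them, and using $|\theta^m_{j^*}|\le\|\btheta^m\|_\infty$ together with $G-|\hat\gamma_m-\gamma_m|\le G$, we obtain
\[
\sqrt G\,|\theta^m_{j^*}|\ \ge\ \sqrt G\,\|\btheta^m\|_\infty-2C\sqrt{\log(nd)} .
\]
Dividing by $\sqrt G\|\btheta^m\|_\infty$ gives
\[
\frac{|\theta^m_{j^*}|}{\|\btheta^m\|_\infty}\ \ge\ 1-\frac{2C\sqrt{\log(nd)}}{\sqrt G\,\|\btheta^m\|_\infty}.
\]
By (C.1), $G\btheta_\diamondsuit^2\gg\log(nd)$, so the correction term tends to $0$ uniformly in $m$. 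Hence the ratio $\to1$ and the liminf is at least $1$.

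The third step is bookkeeping about signs. The lemma is stated for $\theta^m_{j^*}$ with $j^*$ chosen as the argmax of $T_j(\hat\gamma_m)$, under the WLOG convention $\theta^m_j\ge0$. If instead $j^*$ maximizes $|T_j|$, the second step bounds $|\theta^m_{j^*}|$. In that case I would also show that $\operatorname{sign}(T_{j^*}(\hat\gamma_m))=\operatorname{sign}(\theta^m_{j^*})$, which holds because the signal term $\eta\sqrt G|\theta^m_{j^*}|$ dominates the $\sqrt{\log(nd)}$ noise by (C.1).

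I expect the main obstacle to be the uniform remainder control in the first step. The windows straddle the change point, so the Hoeffding decomposition has heterogeneous pieces: the left block splits into pre- and post-change observations, and the $T_{j,2}(k)$ term is within-regime. The degenerate remainder must be bounded uniformly over all $k$ near $\gamma_m$ via Lemma \ref{lemma: Hoeffding's residual}, with blocks of varying length, including very short ones when $k$ is near $\gamma_m$. I would handle the short-block regime using the second, $k$-wise form of Lemma \ref{lemma: Hoeffding's residual}. That form gives the bound $\log(nd)\,(w_1w_2)^{1/2}G^{-3/2}(1+\log(nd)/\sqrt{w_1})\lesssim \log^2(nd)/\sqrt G$, which is still $o(\sqrt{\log(nd)})$.
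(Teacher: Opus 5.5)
Your proposal is correct and is the paper's argument run directly rather than by contradiction. Both bound $\|\bT(\hat\gamma_m)\|_\infty=T_{j^*}(\hat\gamma_m)$ above by $\sqrt G\,\theta^m_{j^*}+O_p(\sqrt{\log(nd)})$, bound $\|\bT(\gamma_m)\|_\infty$ below by $\sqrt G\|\btheta^m\|_\infty-O_p(\sqrt{\log(nd)})$, and invoke the argmax property; your version also gives an explicit rate and justifies the sign identity that the paper takes for granted.

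One small slip is in the short-block regime. Keep the sharper bound $\log(nd)/\sqrt G+\log^2(nd)/G$ there, because your cruder $\log^2(nd)/\sqrt G$ is $o(\sqrt{\log(nd)})$ only when $G\gg\log^3(nd)$, and (B.1) does not ensure that.
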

Furthermore, define the event 
\begin{equation}
	\begin{array}{ll}
		\cH_1=\Big\{\max_{1\leq j\leq d}T_j(\hat{\gamma}_m)=\max_{1\leq j\leq d}|T_j(\hat{\gamma}_m)|:=\|\bT(\hat{\gamma}_m)\|_{\infty}\Big\},\\
		\cH_2=\Big\{T_{j^*}({\gamma}_m)=|T_{j^*}({\gamma}_m)|\Big\}.
	\end{array}	
\end{equation}
The following Lemma \ref{lemma: sign consistent} shows that $\cH_1\cap\cH_2$ occurs with high probability. The proof of Lemma \ref{lemma: sign consistent} is provided in Section \ref{sec: proof of lemma of the sign consistency}

\begin{lemma}\label{lemma: sign consistent}
	Suppose Assumptions $\mathbf{(B.1)}$ -- $\mathbf{(B.3)}$ and $\mathbf{(C.1)}$ hold. Then we have
	\begin{equation}
		\P(\cH_1\cap\cH_2)\geq 1-C_1(nd)^{-C_2},
	\end{equation}
	where $C_1$ and $C_2$ are universal positive constants not depending on $n$ or $p$.
\end{lemma}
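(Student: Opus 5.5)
We need to show that two sign events hold with high probability. The first is $\cH_1$: the $\ell_\infty$ maximum of $\bT(\hat\gamma_m)$ is attained by a positive entry, so the signed maximum equals $\|\bT(\hat\gamma_m)\|_\infty$. The second is $\cH_2$: the coordinate $j^*$ has $T_{j^*}(\gamma_m)\ge 0$.

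The approach is the same signal–noise decomposition used in the proof of $\cA_2$. We work on the event $M_n$, where $|\hat\gamma_m-\gamma_m|<G$. This event holds with probability at least $1-C(nd)^{-C}$ by part (1) of Theorem \ref{theory: initial estimators}, since $\cA_1\cap\cA_2$ holds with that probability.

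\textbf{Step 1 (uniform noise bound).} For every $k$ with $|k-\gamma_m|\le G$, I split $T_j(k)$ as in (\ref{equ: decomposition T_{j,k}})--(\ref{equ: Hoeffding T_{j,k}}):
\[
T_j(k)=\delta_j^{(m)}(k)+\mathcal R_j^{(m)}(k),
\qquad
\delta_j^{(m)}(k)=\frac{G-|k-\gamma_m|}{\sqrt G}\,\theta_j^{(m)}.
\]
Lemmas \ref{lemma: concentration for maxsimum sub-exponential} and \ref{lemma: Hoeffding's residual}, together with a union bound over $j$, $k$ and $m$, give
\[
\max_{m}\ \max_{|k-\gamma_m|\le G}\ \max_{j}|\mathcal R_j^{(m)}(k)|\le C\sqrt{\log(nd)}
\]
with probability at least $1-(nd)^{-C_2}$. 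Call this event $\cE$. The first-order projections are controlled by (B.3), and the degenerate remainder is of order $\log(nd)/\sqrt G\ll\sqrt{\log(nd)}$ by (B.1). Because the number of indices is polynomial in $nd$, the union bound keeps a polynomial tail.

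\textbf{Step 2 (event $\cH_2$).} On $\cE$,
\[
T_{j^*}(\gamma_m)\ge \sqrt G\,\theta^{(m)}_{j^*}-C\sqrt{\log(nd)}.
\]
Lemma \ref{lemma: maximum signal at the identified cpt} gives $\theta^{(m)}_{j^*}\ge (1-o(1))\|\btheta^{(m)}\|_\infty$. By (C.1), $\sqrt G\|\btheta^{(m)}\|_\infty\gg\sqrt{\log(nd)}$, so $T_{j^*}(\gamma_m)>0$. We are using the WLOG convention $\theta_j\ge 0$ for the maximizing coordinate's sign. To avoid circularity, I would instead argue directly that the maximizing coordinate $j^*$ must carry a signal of size comparable to $\|\btheta^{(m)}\|_\infty$ and of the same sign as $T_{j^*}$; this is essentially the content of that lemma's proof.

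\textbf{Step 3 (event $\cH_1$).} This is the main obstacle. We need the largest $|T_j(\hat\gamma_m)|$ to come from a positive entry. On $\cE$ and $M_n$, I would first show that $\hat\gamma_m$ lies in a region where the signal dominates. Since $\hat\gamma_m$ maximizes $\|\bT(k)\|_\infty$ and $\|\bT(\gamma_m)\|_\infty\ge\sqrt G\|\btheta^{(m)}\|_\infty-C\sqrt{\log(nd)}$, we get
\[
\frac{G-|\hat\gamma_m-\gamma_m|}{\sqrt G}\,\|\btheta^{(m)}\|_\infty\ge \sqrt G\|\btheta^{(m)}\|_\infty-2C\sqrt{\log(nd)}.
\]
In particular, the signal at $\hat\gamma_m$ is at least $\tfrac12\sqrt G\|\btheta^{(m)}\|_\infty$.

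Now take any coordinate with $T_j(\hat\gamma_m)<0$. Its magnitude is at most $\max\bigl(0,\,-\delta_j(\hat\gamma_m)\bigr)+C\sqrt{\log(nd)}$. For coordinates with $\theta_j\ge 0$ this is $O(\sqrt{\log(nd)})$, which is strictly below $\|\bT(\hat\gamma_m)\|_\infty\ge \tfrac12\sqrt G\|\btheta^{(m)}\|_\infty-C\sqrt{\log(nd)}$.

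The delicate case is a coordinate with a large negative $\theta_j$. Here I would use the sign normalization: replacing $h$ by $-h$ on coordinates with $\theta_j<0$ leaves $\|\cdot\|_\infty$ unchanged. This reduces the problem to the case where all $\theta_j^{(m)}\ge 0$, and the bound above then closes the argument.

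Finally, combining the failure probabilities of $M_n$ and $\cE$ yields $\P(\cH_1\cap\cH_2)\ge 1-C_1(nd)^{-C_2}$.
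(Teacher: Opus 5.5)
Your proposal is essentially the paper's argument: under the normalization $\theta^{(m)}_j\ge 0$, a uniform $C\sqrt{\log(nd)}$ bound on the noise caps every negative entry of $\bT(\hat\gamma_m)$, while the signal keeps $\|\bT(\hat\gamma_m)\|_\infty$ of order $\sqrt G\|\btheta^{(m)}\|_\infty$, with your inlined argmax comparison playing the role of Lemma~\ref{lemma: big error bound of initial estimators}. Two remarks: read the sign flip as the paper's standing ``without loss of generality'' convention rather than as a reduction, since $j^*$, $\cH_1$ and $\cH_2$ are not invariant under coordinate sign flips (only $\|\bT(k)\|_\infty$ and $\hat\gamma_m$ are); and do Step~3 before Step~2, because the paper proves Lemma~\ref{lemma: maximum signal at the identified cpt} using $\|\bT(\hat\gamma_m)\|_\infty=T_{j^*}(\hat\gamma_m)$, i.e.\ $\cH_1$ itself --- once $\cH_1$ holds, $\sqrt G\,\theta^{(m)}_{j^*}\ge T_{j^*}(\hat\gamma_m)-C\sqrt{\log(nd)}\ge\|\bT(\gamma_m)\|_\infty-C\sqrt{\log(nd)}\ge\sqrt G\|\btheta^{(m)}\|_\infty-2C\sqrt{\log(nd)}$, so $T_{j^*}(\gamma_m)\ge\sqrt G\|\btheta^{(m)}\|_\infty-3C\sqrt{\log(nd)}>0$ by (C.1), which is exactly the direct argument you had in mind.
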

Moreover, our proof relies on the following lemma. 
\begin{lemma}\label{lemma: big error bound of initial estimators}
	{	Suppose Assumptions $\mathbf{(B.1)}$ -- $\mathbf{(B.3)}$ and $\mathbf{(C.1)}$ hold}. Then, with probability tending to one,  we have $|\hat{\gamma}_m-\gamma_m|\leq G/2$. 
\end{lemma}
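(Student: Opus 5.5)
We need to prove Lemma \ref{lemma: big error bound of initial estimators}: with probability tending to one, $|\hat{\gamma}_m-\gamma_m|\leq G/2$.

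The plan is to compare $\|\bT(k)\|_\infty$ at $k=\gamma_m$ with its value at any $k$ in the search window satisfying $|k-\gamma_m|>G/2$, and show that the latter is strictly smaller with high probability. The argmax $\hat{\gamma}_m$ then cannot lie in that region. We work on the event $M_n$ from part (1), which has probability tending to one. On $M_n$, $\hat{\gamma}_m$ maximizes $\|\bT(k)\|_\infty$ over $\tilde v_m\le k\le \tilde w_m\subset[\gamma_m-G+1,\gamma_m+G]$. By Assumption (B.1), every window $[k-G+1,k+G]$ with $k$ in this range contains only the single change point $\gamma_m$. We treat $k>\gamma_m$; the case $k<\gamma_m$ is symmetric.

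\textbf{Decomposition.} Use the representation \eqref{eq:Tj-signal-noise}, made explicit in \eqref{equ: decomposition T_{j,k}}--\eqref{equ: Hoeffding T_{j,k}}. For $\gamma_m\le k\le \gamma_m+G$ and each $j$,
\[
T_j(k)=\frac{G-(k-\gamma_m)}{\sqrt G}\,\theta_j^{(m)}+\mathcal R_j^{(m)}(k).
\]
Here $\mathcal R_j^{(m)}(k)$ collects four kinds of terms: the first-order Hoeffding projections $T_{j,1,1},T_{j,1,2}$, the within-regime piece $T_{j,2}$ (whose mean is zero because it compares observations from the same regime), and the degenerate remainders. The concentration bounds already used in the proof of $\cA_2$ apply uniformly over $j$ and over $k\in[\gamma_m,\gamma_m+G]$: Lemma \ref{lemma: concentration for maxsimum sub-exponential} handles the sub-exponential partial sums, and Lemma \ref{lemma: Hoeffding's residual} handles $g_j$. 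With probability at least $1-(nd)^{-C}$ they give
\[
\max_{k}\max_{1\le j\le d}|\mathcal R_j^{(m)}(k)|\le C_1\sqrt{\log(nd)}.
\]
One point needs checking here. When $k-\gamma_m$ is small, the summation ranges become short. Lemma \ref{lemma: Hoeffding's residual} then produces factors of the form $\log^2(nd)/w$ with small $w$. These are still fine, because the prefactor $[w_1]^{1/2}[w_2]^{1/2}/G^{3/2}$ shrinks at the same time, so the term stays $O(\log(nd)/\sqrt G)$.

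\textbf{Comparison.} The signal part of $\|\bT(k)\|_\infty$ is at most $\frac{G-|k-\gamma_m|}{\sqrt G}\|\btheta^{(m)}\|_\infty$. This gives, for every $k$ with $|k-\gamma_m|>G/2$,
\[
\|\bT(k)\|_\infty\le \tfrac{\sqrt G}{2}\|\btheta^{(m)}\|_\infty+C_1\sqrt{\log(nd)}.
\]
At $k=\gamma_m$ we have
\[
\|\bT(\gamma_m)\|_\infty\ge \sqrt G\|\btheta^{(m)}\|_\infty-C_1\sqrt{\log(nd)}.
\]
The gap between these two bounds is $\tfrac{\sqrt G}{2}\|\btheta^{(m)}\|_\infty-2C_1\sqrt{\log(nd)}$. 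It is positive because of (C.1): $G\btheta_\diamondsuit^2\gg\log(nd)$. Therefore no $k$ with $|k-\gamma_m|>G/2$ can be the argmax. We then take a union bound over $m\le M_0\le n$; the polynomial tail $(nd)^{-C}$ absorbs it. Intersecting with $M_n$ completes the argument.

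\textbf{Main obstacle.} The delicate step is the uniform control of the noise when the window straddles $\gamma_m$ with one side very short. In that regime the Hoeffding projections are taken with respect to heterogeneous pairs, and the sub-exponential assumptions are (B.2)--(B.3), which concern the centered kernel rather than the projections. I would first verify that $\|h_{1,j}\|_{\psi_1}$ and $\|h_{2,j}\|_{\psi_1}$ are bounded by $2D$, using Jensen's inequality for conditional expectations and the bound $|\theta_j^{(m)}|\le C^*$ from (B.1). After that, Lemmas \ref{lemma: concentration for maxsimum sub-exponential}--\ref{lemma: Hoeffding's residual} apply directly.
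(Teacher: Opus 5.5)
Your proposal is correct and follows essentially the same argument as the paper: you lower-bound $\|\bT(\gamma_m)\|_\infty$ by $\sqrt{G}\|\btheta^{(m)}\|_\infty - O_p(\sqrt{\log(nd)})$ and upper-bound $\|\bT(k)\|_\infty$ by $\frac{\sqrt G}{2}\|\btheta^{(m)}\|_\infty + O_p(\sqrt{\log(nd)})$ uniformly for $|k-\gamma_m|\ge G/2$, using the Hoeffding decomposition and Lemmas \ref{lemma: concentration for maxsimum sub-exponential}--\ref{lemma: Hoeffding's residual}, and then close the argument with (C.1). Your explicit use of the event $M_n$ to restrict the search to $[\gamma_m-G+1,\gamma_m+G]$, and your remarks on short windows and on the $\psi_1$-norms of the projections, spell out steps that the paper leaves implicit.
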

Using the above lemmas, we are ready to give the proof. Specifically, by the above two lemmas, we have:
\begin{equation*}
	\begin{array}{ll}
		&	\|\bT(\gamma_m)\|_{\infty}-	\|\bT(\hat{\gamma}_m)\|_{\infty}\\
		&=\|\bT(\gamma_m)\|_{\infty}-T_{j^*}(\hat{\gamma}_m )\\
		&\geq T_{j^*}(\gamma_m)-T_{j^*}(\hat{\gamma}_m).	
	\end{array}
\end{equation*}

Moreover, for $k\in[\tilde{v}_m,\tilde{w}_m]$, by the definition of $T_j(k)$ in (\ref{equ: decomposition T_{j,k}}), we have
\begin{equation}\label{equ: Tj true minus Tj} 
	T_{j^*}(\gamma_m)-T_{j^*}(\hat{\gamma}_m)=A_1+A_2-A_3-A_{4},
\end{equation}
where $A_1,\ldots,A_4$ are defined as:
\begin{equation*}
	\begin{array}{ll}
		A_1=\dfrac{1}{G^{3/2}}\sum\limits_{t_1=\gamma_m-G}^{\gamma_m}\sum\limits_{t_2=\gamma_{m}}^{\hat{\gamma}_m}h(X_{t_1,j^*},X_{t_2,j^*}),\\
		A_2=\dfrac{1}{G^{3/2}}\sum\limits_{t_1=\gamma_m-G}^{\hat{\gamma}_m-G}\sum\limits^{\gamma_{m}+G}\limits_{t_2=\hat{\gamma}_m}h(X_{t_1,j^*},X_{t_2,j^*}),\\
		A_3=\dfrac{1}{G^{3/2}}\sum\limits_{t_1=\hat{\gamma}_m-G}^{\gamma_m}\sum\limits_{t_2=\gamma_{m}+G}^{\hat{\gamma}_m+G}h(X_{t_1,j^*},X_{t_2,j^*}),\\
		A_4=\dfrac{1}{G^{3/2}}\sum\limits_{t_1=\gamma_m}^{\hat{\gamma}_m}\sum\limits_{t_2=\hat{\gamma}_{m}}^{\hat{\gamma}_m+G}h(X_{t_1,j^*},X_{t_2,j^*})\\
	\end{array}	
\end{equation*}
Note that $A_1,A_2,A_3$ involve data before and after the change point $\gamma_m$ and $A_4$ involves homogeneous data after the change point. Hence, we can make Hoeffding's decomposition for the above four parts as:
\begin{equation}\label{equ: Hoeffding's decomposition for four parts}
	\begin{array}{ll}
		A_1=\dfrac{1}{G^{3/2}}\Big\{G(\hat{\gamma}_m-\gamma_m)\theta_{j^*}^m+\underbrace{(\hat{\gamma}_m-\gamma_m)\sum\limits_{t_1=\gamma_m-G}^{\gamma_m} h_{1,j^*}(X_{t_1,j^*})}_{A_{1,1}}+\underbrace{G\sum\limits_{t_2=\gamma_m}^{\hat{\gamma}_m}h_{2,j^*}(X_{t_2,j^*})}_{A_{1,2}}\\
		+\nb{\underbrace{\sum\limits_{t_1=\gamma_m-G}^{\gamma_m}\sum\limits_{t_2=\gamma_{m}}^{\hat{\gamma}_m}g_{j^*}(X_{t_1,j^*},X_{t_2,j^*})}_{A_{1,3}}\Big\}},\\
		A_2=\dfrac{1}{G^{3/2}}\Big\{(\hat{\gamma}_m-\gamma_m)(\gamma_m+G-\hat{\gamma}_m)\theta_{j^*}^m+\underbrace{({\gamma}_m+G-\hat{\gamma}_m)\sum\limits_{t_1=\gamma_m-G}^{\hat{\gamma}_m-G} h_{1,j^*}(X_{t_1,j^*})}_{A_{2,1}}\\+\underbrace{(\hat{\gamma}_m-\gamma_m)\sum\limits_{t_2=\hat{\gamma}_m}^{\gamma_m+G}h_{2,j^*}(X_{t_2,j^*})}_{A_{2,2}}
		+\nb{\underbrace{\sum\limits_{t_1=\gamma_m-G}^{\hat{\gamma}_m-G}\sum\limits^{\gamma_{m}+G}\limits_{t_2=\hat{\gamma}_m}g_{j^*}(X_{t_1,j^*},X_{t_2,j^*})}_{A_{2,3}}\Big\}},\\
		A_3=\dfrac{1}{G^{3/2}}\Big\{(\hat{\gamma}_m-\gamma_m)(\gamma_m+G-\hat{\gamma}_m)\theta_{j^*}^m+\underbrace{(\hat{\gamma}_m-{\gamma}_m)\sum\limits_{t_1=\hat{\gamma}_m-G}^{{\gamma}_m} h_{1,j^*}(X_{t_1,j^*})}_{A_{3,1}}\\+\underbrace{({\gamma}_m+G-\hat{\gamma}_m)\sum\limits_{t_2={\gamma}_m+G}^{\hat{\gamma}_m+G}h_{2,j^*}(X_{t_2,j^*})}_{A_{3,2}}
		+\nb{\underbrace{\sum\limits_{t_1=\hat{\gamma}_m-G}^{\gamma_m}\sum\limits_{t_2=\gamma_{m}+G}^{\hat{\gamma}_m+G}g_{j^*}(X_{t_1,j^*},X_{t_2,j^*})}_{A_{3,3}}\Big\}},\\
		A_4=\dfrac{1}{G^{3/2}}\Big\{\underbrace{G\sum\limits_{t_1={\gamma}_m}^{\hat{\gamma}_m} h^0_{1,j^*}(X_{t_1,j^*})}_{A_{4,1}}+\underbrace{(\hat{\gamma}_m-{\gamma}_m)\sum\limits_{t_2=\hat{\gamma}_m}^{\hat{\gamma}_m+G}h^0_{2,j^*}(X_{t_2,j^*})}_{A_{4,2}}
		+\nb{\underbrace{\sum\limits_{t_1=\gamma_m}^{\hat{\gamma}_m}\sum\limits_{t_2=\hat{\gamma}_{m}}^{\hat{\gamma}_m+G}g^0_{j^*}(X_{t_1,j^*},X_{t_2,j^*})}_{A_{4,3}}\Big\}}
	\end{array}
\end{equation}
where we use the superscript ``0" in $A_4$ to denote that the Hoeffding's decomposition is based on the homogeneous samples $X_{t,j^*}$ with $[\gamma_m,\gamma_m+2G]$. Hence, based on the (\ref{equ: Tj true minus Tj}) and the Hoeffding's decomposition in (\ref{equ: Hoeffding's decomposition for four parts}), we have:
\begin{equation}\label{equ: lower bound of Tj-est minus Tj true}
	\begin{array}{ll}
		T_{j^*}(\gamma_m)-T_{j^*}(\hat{\gamma}_m)\geq_{(1)} G^{-1/2}(\hat{\gamma}_m-\gamma_m)\theta_{j^*}^m-G^{-3/2}\Big(|A_{1,1}|+|A_{1,2}|+|A_{1,3}|\\
		+|A_{2,1}|+|A_{2,2}|+|A_{2,3}|	+|A_{3,1}|+|A_{3,2}|+|A_{3,3}|+	+|A_{4,1}|+|A_{4,2}|+|A_{4,3}|\Big)\\
		\geq_{(2)}G^{-1/2}(\hat{\gamma}_m-\gamma_m)\|\btheta^m\|_\infty-G^{-3/2}\Big(|A_{1,1}|+|A_{1,2}|+|A_{1,3}|\\
		+|A_{2,1}|+|A_{2,2}|+|A_{2,3}|+|A_{3,1}|+|A_{3,2}|+|A_{3,3}|	+|A_{4,1}|+|A_{4,2}|+|A_{4,3}|\Big)\\
	\end{array}
\end{equation}
where $(2)$ come from Lemma \ref{lemma: maximum signal at the identified cpt}. Hence, the next goal is to consider the 12 parts $|A_{1,1}|,\ldots,|A_{4,3}|$, respectively. 

For $A_{1,1}$, taking $\overline{k}=\underline{k}=1$, $\tau_1(k)=\gamma_m-G$ and $\tau_{2}(k)=\gamma_m$, using Lemma \ref{lemma: concentration for maxsimum sub-exponential}, with probability at least $1-C_1(nd)^{-C_2}$, we have
\begin{equation*}
	|A_{1,1}|\leq C_3(\hat{\gamma}_m-\gamma_m)\sqrt{G\log(nd)}.
\end{equation*}
Using a similar analysis, we can prove that with probability at least $1-C_1(nd)^{-C_2}$, it holds that 
\begin{equation*}
	\begin{array}{ll}
		|A_{1,2}|&\leq C_3G(\hat{\gamma}_m-\gamma_m)\max\Big(  \sqrt{\dfrac{\log(nd)}{\hat{\gamma}_m-\gamma_m}},  \dfrac{\log(nd)}{\hat{\gamma}_m-\gamma_m}  \Big)\\
		&=C_3G\max\Big(\sqrt{(\hat{\gamma}_m-\gamma_m)\log(nd)},\log(nd)\Big),\\
		|A_{2,1}|&\leq C_3(\gamma_m+G-\hat{\gamma}_m)(\hat{\gamma}_m-\gamma_m)\max\Big(  \sqrt{\dfrac{\log(nd)}{\hat{\gamma}_m-\gamma_m}},  \dfrac{\log(nd)}{\hat{\gamma}_m-\gamma_m}  \Big)\\
		&=C_3(\gamma_m+G-\hat{\gamma}_m)\max\Big(\sqrt{(\hat{\gamma}_m-\gamma_m)\log(nd)},\log(nd)\Big),\\
		|A_{2,2}|&\leq C_3(\hat{\gamma}_m-\gamma_m)(\gamma_m+G-\hat{\gamma}_m)\max\Big(  \sqrt{\dfrac{\log(nd)}{\gamma_m+G-\hat{\gamma}_m}},  \dfrac{\log(nd)}{\gamma_m+G-\hat{\gamma}_m}  \Big)\\
		&=C_3(\hat{\gamma}_m-\gamma_m)\max\Big(\sqrt{(\gamma_m+G-\hat{\gamma}_m)\log(nd)},\log(nd)\Big),\\
		|A_{3,1}|&\leq C_3(\hat{\gamma}_m-\gamma_m)(\gamma_m+G-\hat{\gamma}_m)\max\Big(  \sqrt{\dfrac{\log(nd)}{\gamma_m+G-\hat{\gamma}_m}},  \dfrac{\log(nd)}{\gamma_m+G-\hat{\gamma}_m}  \Big)\\
		&=C_3(\hat{\gamma}_m-\gamma_m)\max\Big(\sqrt{(\gamma_m+G-\hat{\gamma}_m)\log(nd)},\log(nd)\Big),\\
		|A_{3,2}|&\leq C_3(\gamma_m+G-\hat{\gamma}_m)(\hat{\gamma}_m-\gamma_m)\max\Big(  \sqrt{\dfrac{\log(nd)}{\hat{\gamma}_m-\gamma_m}},  \dfrac{\log(nd)}{\hat{\gamma}_m-\gamma_m}  \Big)\\
		&=C_3(\gamma_m+G-\hat{\gamma}_m)\max\Big(\sqrt{(\hat{\gamma}_m-\gamma_m)\log(nd)},\log(nd)\Big),\\
		|A_{4,1}|&\leq C_3G(\hat{\gamma}_m-\gamma_m)\max\Big(  \sqrt{\dfrac{\log(nd)}{\hat{\gamma}_m-\gamma_m}},  \dfrac{\log(nd)}{\hat{\gamma}_m-\gamma_m}  \Big)\\
		&=C_3G\max\Big(\sqrt{(\hat{\gamma}_m-\gamma_m)\log(nd)},\log(nd)\Big),\\
		|A_{4,2}|&\leq C_3(\hat{\gamma}_m-\gamma_m)\sqrt{G\log(nd)}.
	\end{array}
\end{equation*}
Note that by the assumption that $  \hat{\gamma}_m-\gamma_m\geq \log(nd)$, we have 
\begin{equation*}
	\max\Big(\sqrt{(\hat{\gamma}_m-\gamma_m)\log(nd)},\log(nd)\Big)=\sqrt{(\hat{\gamma}_m-\gamma_m)\log(nd)},~~\gamma_m+G-\hat{\gamma}_m\leq G
\end{equation*}
It implies that
\begin{equation}\label{equ: upper bound of the leading hoeffding}
	\begin{array}{ll}
		|A_{1,1}|+\ldots+|A_{4,2}|&\leq C_4\Big((\hat{\gamma}_m-\gamma_m)\sqrt{G\log(nd)}+G\sqrt{(\hat{\gamma}_m-\gamma_m)\log(nd)}\Big)\\
		&\leq C_5G\sqrt{(\hat{\gamma}_m-\gamma_m)\log(nd)},\\
	\end{array}
\end{equation}
where the second inequality comes from the fact that $\hat{\gamma}_m-\gamma_m\leq G$. 

Next, we analyze $A_{1,3}$, $A_{2,3}$, $A_{3,3}$ and $A_{4,3}$. For $A_{1.3}$, taking  $\underline{\tau}_{1}(k)=\gamma_m-G$, $\overline{\tau}_{1}(k)=\gamma_m$, $\underline{\tau}_{2}(k)=\gamma_m$, $\overline{\tau}_{2}(k)=\hat{\gamma}_m$ in Lemma \ref{lemma: Hoeffding's residual}, with probability at least $1-(nd)^{-C_2}$ for some $C_2>3$, we have 
\begin{equation}\label{equ: upper bound of the residual hoeffding 1}
	\begin{array}{ll}
		|A_{1,3}|&\leq_{(1)} C_1 \log(nd)\sqrt{G}\sqrt{\hat{\gamma_m}-\gamma_m}\Big(\sqrt{\sigma^2}+\sqrt{\dfrac{\log^2(nd)}{ G}}+\sqrt{\dfrac{\log^2(nd)}{\hat{\gamma}_m-\gamma_m}}\Big)\\
		&\leq_{(2)} C_1\sqrt{G}\sqrt{\hat{\gamma_m}-\gamma_m}\log(nd)\sqrt{\log(nd)}\\
		&\ll_{(3)} C_1G\sqrt{\hat{\gamma_m}-\gamma_m}\sqrt{\log(nd)}
	\end{array}
\end{equation}
where the $(2)$ uses the fact that $\log(nd)\leq \hat{\gamma}_m-\gamma_m\leq G$, $(3)$ uses the assumption that $G\gg \log^2(nd)$. Similarly, for $A_{2,3}$,  using Lemma \ref{lemma: Hoeffding's residual}, we can prove
\begin{equation*}
	\begin{array}{ll}
		|A_{2,3}|&\leq_{(1)} C_1 \log(nd)\sqrt{\hat{\gamma_m}-\gamma_m}\sqrt{\gamma_m-\hat{\gamma_m}+G}\Big(\sqrt{\sigma^2}+\sqrt{\dfrac{\log^2(nd)}{ \hat{\gamma_m}-\gamma_m}}+\sqrt{\dfrac{\log^2(nd)}{\gamma_m-\hat{\gamma_m}+G}}\Big)\\
		&=_{(2)} C_1 \log(nd)\sqrt{\hat{\gamma_m}-\gamma_m}\Big(\sqrt{\sigma^2(\gamma_m-\hat{\gamma_m}+G)}+\sqrt{(\gamma_m-\hat{\gamma_m}+G)\dfrac{\log^2(nd)}{ \hat{\gamma_m}-\gamma_m}}+\log(nd)\Big)\\
		& =_{(3)} C_1 \sqrt{\log(nd)}\sqrt{\hat{\gamma_m}-\gamma_m}\Big(\sqrt{\log(nd)\sigma^2(\gamma_m-\hat{\gamma_m}+G)}\\&+\sqrt{(\gamma_m-\hat{\gamma_m}+G)\dfrac{\log^3(nd)}{ \hat{\gamma_m}-\gamma_m}}+\log^{3/2}(nd)\Big)\\
	\end{array}
\end{equation*}
Note that under the event $M_n$ and the assumption that $\hat{\gamma}_m-\gamma_m\geq \log(nd)$, we have $\hat{\gamma}_m\in[\gamma_m-\log(nd),\gamma_m+G]$. Hence, using the assumption that $G\gg \log^2(nd)$, we have
\begin{equation*}
	\begin{array}{ll}
		\log(nd)\sigma^2(\gamma_m-\hat{\gamma_m}+G)\leq C\log(nd)(G-\log(nd))\ll G^2\\
		(\gamma_m-\hat{\gamma_m}+G)\dfrac{\log^3(nd)}{ \hat{\gamma_m}-\gamma_m}\leq \log^3(nd)\dfrac{G-\log(nd)}{\log(nd)}\leq \dfrac{G}{\log^2(nd)}\ll G^2\\
		\log^{3/2}(nd)\ll G
	\end{array}
\end{equation*}
where the second inequality uses the fact that $(\gamma_m-\hat{\gamma_m}+G)\dfrac{\log^3(nd)}{ \hat{\gamma_m}-\gamma_m}$ obtains its maximum at $\hat{\gamma}_m=\gamma_m+\log(nd)$. This implies that 
\begin{equation}\label{equ: upper bound of the residual hoeffding 2}
	|A_{2,3}|=o_p(G\sqrt{\hat{\gamma_m}-\gamma_m}\sqrt{\log(nd)}).
\end{equation}
Lastly, for $A_{3,3}$ and $A_{4,3}$, using a similar proof, we can prove that 
\begin{equation}\label{equ: upper bound of the residual hoeffding 3}
	|A_{3,3}|, 	|A_{4,3}|,=o_p(G\sqrt{\hat{\gamma_m}-\gamma_m}\sqrt{\log(nd)}).
\end{equation}
Combining (\ref{equ: lower bound of Tj-est minus Tj true}) - (\ref{equ: upper bound of the residual hoeffding 3}), we have with probability at least $1-C_1(nd)^{-C_2}$, it holds that 
\begin{equation*}
	T_{j^*}(\gamma_m)-T_{j^*}(\hat{\gamma}_m)\geq G^{-1/2}(\hat{\gamma}_m-\gamma_m)\|\btheta^m\|_\infty-C_*G^{-3/2}G\sqrt{(\hat{\gamma}_m-\gamma_m)\log(nd)}
\end{equation*}
for some large enough $C_*>0$. Since $\hat{\gamma}_m$ is the maximizer of $T_{j^*}(k)$, it implies that with probability at least $1-C_1(nd)^{-C_2}$, we have
\begin{equation*}
	G^{-1/2}(\hat{\gamma}_m-\gamma_m)\|\btheta^m\|_\infty-C_*G^{-3/2}G\sqrt{(\hat{\gamma}_m-\gamma_m)\log(nd)}\leq 0,
\end{equation*}
which further implies
\begin{equation*}
	\hat{\gamma}_m-\gamma_m \leq C_*\dfrac{\log(nd)}{\|\btheta^m\|^2_{\infty}},
\end{equation*}
which completes the proof for the case of $\hat{\gamma}_m\geq \gamma_m$. Note that the case of $\hat{\gamma}_m<\gamma_m$ is similar. The proof is omitted.

\newpage

\subsection{Proof of Theorem \ref{theorem: support recovery}}
Recall $\Pi_{m}:=\{j: 1\leq j\leq d: |\theta_{j}^{(m)}>0\}|$ be the coordinates having a change point at $\gamma_m$ and $\hat{\Pi}_{m}:=\{j: 1\leq j\leq d: |\hat{\theta}_{j}^{(m)}>w^+\}|$ be the estimator. To prove $\hat{\Pi}_m=\Pi_m$, it is sufficient to prove that 
\begin{equation}\label{equ: large deviation bound of parameter estimator}
	\max_{1\leq m \leq M_0}\max_{1\leq j\leq d}|\hat{\theta}_{j}^{(m)}-{\theta}_j^{m}|=O_p\Big(D\Big(\sqrt{\dfrac{\log(nd)}{G}}+\dfrac{\log(Gd)\log(nd)}{G}\Big)\Big).
\end{equation}
By (\ref{equ: large deviation bound of parameter estimator}), for any $j\notin \Pi_{m}$ with $\theta_j^{(m)}=0$, we have
\begin{equation*}
	|\hat{\theta}_j^{(m)}|=	|\hat{\theta}_j^{(m)}-\theta_j^{(m)}|\leq C_2D\Big(\sqrt{\dfrac{\log(nd)}{G}}+\dfrac{\log(Gd)\log(nd)}{G}\Big)\leq w^+.
\end{equation*}
This implies that $j\notin \hat{\Pi}_m$. For any $j\in \Pi_{m}$, we have
\begin{equation*}
	\begin{array}{ll}
		|\hat{\theta}_j^{(m)}|&=	|\hat{\theta}_j^{(m)}-\theta_j^{(m)}+\theta_j^{(m)}|\\
		&\geq |\theta_j^{(m)}|-|\hat{\theta}_j^{(m)}-\theta_j^{(m)}|\\
		&\geq \min_{j,m}|\theta_j^{(m)}|-\max_{j,m}|\hat{\theta}_j^{(m)}-\theta_j^{(m)}|\\
		&\geq \min_{j,m}|\theta_j^{(m)}|-C_1\sqrt{\dfrac{\log(dn)}{G}}\\
		&\geq w^+,
	\end{array}
\end{equation*}
This implies that $j\in \hat{\Pi}_m$. Hence, it remains to prove (\ref{equ: large deviation bound of parameter estimator}). By {Assumptions $\mathbf{(B.1)}$ -- $\mathbf{(B.3)}$ and $\mathbf{(C.1)}$}, Lemma A.3 in \cite{yu2022robust}, we can prove that with probability at least $1-(nd)^{-C_1}$ for some $C_1>2$, it holds that
\begin{equation*}
	\max_{1\leq j\leq d}|\hat{\theta}_{j}^{(m)}-{\theta}_j^{m}|\leq C_2D\Big(\sqrt{\dfrac{\log(nd)}{G}}+\dfrac{\log(Gd)\log(nd)}{G}\Big).
\end{equation*}
for some big enough constant $C_2>0$. Lastly, taking the union bound over $m=1,\ldots,M_0$, we finish the proof of (\ref{equ: large deviation bound of parameter estimator}).
\subsection{Proof of Theorem \ref{theorem: refined estimation}}

{\bf{Proof of (1) of Theorem \ref{theorem: refined estimation}}}.  Define the set $$\MM=
\left\{
\begin{aligned}
	&\hat{M}_0 = M_0, \hat{\Pi}_m={\Pi}_{m} ~~\text{for} ~~m=1,\ldots,M_0,\\
	&\max_{1 \leq m \leq \hat{M}_0} |\hat{\gamma}_m - \gamma_m| = O_P\left( \frac{\log(nd)}{\|\btheta^{(m)}\|_{\infty}^2} \right), \\
	&\max_{1 \leq m \leq \hat{M}_0} \|\hat{\btheta}^{(m)} - \btheta^{(m)}\|_{\infty} = O_P\left( \sqrt{\frac{\log(nd)}{G}} \right).
\end{aligned}
\right\}
$$ Note that by Theorem \ref{theory: initial estimators}, we have $\PP(\MM)\rightarrow 1$. Next, we give our proof given $\MM$. Note that by {Assumption C.1}, we have 
\begin{equation*}
	\frac{\log(nd)}{\|\btheta^{(m)}\|_{\infty}^2} \ll G.
\end{equation*}
For each change point $m=1,\ldots,M_0$, recall 
\begin{equation}
	\tilde{T}_j(k):=\dfrac{1}{G}\sum_{t_1=k-G}^{k}\sum_{t_2=k+1}^{k+G}\hat{\theta}_{j}^{(m)}h(X_{t_1,j},X_{t_2,j}),~~\text{for}~~j=1,\ldots,d.
\end{equation}
and 
\begin{equation}\label{equ: refined estimator}
	\tilde{\gamma}_m=\argmax_{|k-\hat{\gamma}_{m}|\leq G/4}\sum_{j\in\hat{\Pi}_m}\tilde{T}_j(k):=\argmax_{|k-\hat{\gamma}_{m}|\leq G/4}\tilde{T}(k),
\end{equation}
where $\tilde{T}(k)=\sum_{j\in\hat{\Pi}_m}\tilde{T}_j(k)$. Note that by Theorem \ref{theory: initial estimators}, we have proved that $|\hat{\gamma}_m-\gamma_m|=o_p(G)$. Hence, the search domain belongs to $[\gamma_m-G,\gamma_m+G]$. 
Next, we will prove that $|\tilde{\gamma}_m-{\gamma}_m|=O_P(\dfrac{1}{\|\btheta^{(m)}\|^2})$. 
Since $\tilde{\gamma}_m$ is the ``argmax'' of $\tilde{T}(k)$, it is also the ``argmax" of $	\tilde{T}(k)-\tilde{T}(\gamma_m)$. Equivalently, let $\tilde{\gamma_m}=\gamma_m+r$, we aim to prove that $|r|=O_P(\dfrac{1}{\|\btheta^{(m)}\|^2})$. Note that by the above search domain, we have $r\in[\gamma_m-G,\gamma_m+G]$. 
Let 
\begin{equation*}
	J_1(r)=G\times \big(\tilde{T}(\gamma_m+r)-\tilde{T}(\gamma_m)\big).
\end{equation*}
Since $r$ is the argmax of $G(r)$, we have $\tilde{T}(\gamma_m+r)-\tilde{T}(\gamma_m)\ge 0$. For simplicity, we assume $r\geq 0$. Moreover, by the definition of  $\tilde{T}(k)$, $G(r)$ can be decomposed into the following:

\begin{equation*}
	J_1(r)=I+II-III-IV,
\end{equation*}
where
\begin{equation*}
	\begin{array}{ll}
		I=\sum\limits_{j\in\hat{\Pi}_m}\sum\limits_{t_1=\gamma_m+r-G}^{\gamma_m}\sum\limits_{t_2=\gamma_m+G}^{\gamma_m+r+G}\hat{\theta}_{j}^{(m)}h(X_{t_1,j},X_{t_2,j}),\\
		II=\sum\limits_{j\in\hat{\Pi}_m}\sum\limits_{t_1=\gamma_m+1}^{\gamma_m+r}\sum\limits_{t_2=\gamma_m+r}^{\gamma_m+G+r}\hat{\theta}_{j}^{(m)}h(X_{t_1,j},X_{t_2,j}),\\
		III=\sum\limits_{j\in\hat{\Pi}_m}\sum\limits_{t_1=\gamma_m-G}^{\gamma_m+r-G}\sum\limits_{t_2=\gamma_m+1}^{\gamma_m+G}\hat{\theta}_{j}^{(m)}h(X_{t_1,j},X_{t_2,j}),\\
		IV=\sum\limits_{j\in\hat{\Pi}_m}\sum\limits_{t_1=\gamma_m+r-G}^{\gamma_m+1}\sum\limits_{t_2=\gamma_m+1}^{\gamma_m+r}\hat{\theta}_{j}^{(m)}h(X_{t_1,j},X_{t_2,j}).
	\end{array}
\end{equation*}
Moreover,  replacing $\hat{\theta}_{j}^{(m)}$ by ${\theta}_{j}^{(m)}$ in $I-IV$, we define 
\begin{equation*}
	\begin{array}{ll}
		\tilde{I}=\sum\limits_{j\in\hat{\Pi}_m}\sum\limits_{t_1=\gamma_m+r-G}^{\gamma_m}\sum\limits_{t_2=\gamma_m+G}^{\gamma_m+r+G}{\theta}_{j}^{(m)}h(X_{t_1,j},X_{t_2,j}),\\
		\tilde{II}=\sum\limits_{j\in\hat{\Pi}_m}\sum\limits_{t_1=\gamma_m+1}^{\gamma_m+r}\sum\limits_{t_2=\gamma_m+r}^{\gamma_m+G+r}{\theta}_{j}^{(m)}h(X_{t_1,j},X_{t_2,j}),\\
		\tilde{III}=\sum\limits_{j\in\hat{\Pi}_m}\sum\limits_{t_1=\gamma_m-G}^{\gamma_m+r-G}\sum\limits_{t_2=\gamma_m+1}^{\gamma_m+G}{\theta}_{j}^{(m)}h(X_{t_1,j},X_{t_2,j}),\\
		\tilde{IV}=\sum\limits_{j\in\hat{\Pi}_m}\sum\limits_{t_1=\gamma_m+r-G}^{\gamma_m}\sum\limits_{t_2=\gamma_m+1}^{\gamma_m+r}{\theta}_{j}^{(m)}h(X_{t_1,j},X_{t_2,j}).
	\end{array}
\end{equation*}
Conditional on $\MM$, it is sufficient to consider 
\begin{equation*}
	\tilde{J_1}(r)=\tilde{I}+\tilde{II}-\tilde{III}-\tilde{IV}.
\end{equation*}
Next, we consider the four parts respectively.  According to our considered model, for $t\in[\gamma_m-2G,\gamma_m+2G]$, we have
\begin{equation*}
	\bX_t=\bmu^{ (m)}\mathbf{1}\{\gamma_m-2G\leq t\leq \gamma_m\}+\bmu^{ (m+1)}\mathbf{1}\{\gamma_m< t\leq \gamma_m+2G\}+\bepsilon_t.
\end{equation*}
We next consider the Hoeffding's decomposition under the change point model . Specifically, based on the Hoeffding's decomposition, we have
\begin{equation} \label{equ: hoef1 H1}
	h(x,y)=\theta_{j}^{(m)} +h'_{1,j}(x)+h'_{2,j}(y)+g'_{j}(x,y),
\end{equation}
with
\begin{equation}
	\begin{array}{ll}
		\theta_{j}^{(m)}  =\E h(X_{j}',Y_{j}'), &h'_{1,j}(x)=\E h(x,Y_{j}')-\theta_j,\\
		h'_{2,j}(y)=\E h(X_{j}',y)-\theta_j,& g'_{j}(x,y) = h(x,y)-h'_{1,j}(x)-h'_{2,j}(y)-\theta_{j},
	\end{array}
\end{equation}
where $X_{j}'$ and $Y_{j}'$ are two independent random variables with the same marginal distributions as $X_{\gamma_m,j}$ and $X_{\gamma_m+1,j}$. Hence, by (\ref{equ: hoef1 H1}), we have
\begin{equation*}
	\begin{array}{ll}
		\tilde{I}=(G-r)r\sum\limits_{j\in{\Pi}_m}(\theta_{j}^{(m)})^2+r\sum\limits_{j\in{\Pi}_m}\theta_{j}^{(m)}\sum\limits_{t_1=\gamma_m+r-G}^{\gamma_m}h'_{1,j}(X_{t_1,j})\\
		+(G-r)\sum\limits_{j\in{\Pi}_m}\theta_{j}^{(m)}\sum\limits_{t_2=\gamma_m+G}^{\gamma_m+G+r}h'_{2,j}(X_{t_2,j})+\sum\limits_{j\in\hat{\Pi}_m}\sum\limits_{t_1=\gamma_m+r-G}^{\gamma_m}\sum\limits_{t_2=\gamma_m+G}^{\gamma_m+r+G}{\theta}_{j}^{(m)}g_j'(X_{t_1,j},X_{t_2,j}),\\
		\tilde{IV}=(G-r)r\sum\limits_{j\in{\Pi}_m}(\theta_{j}^{(m)})^2+r\sum\limits_{j\in{\Pi}_m}\theta_{j}^{(m)}\sum\limits_{t_1=\gamma_m+r-G}^{\gamma_m}h'_{1,j}(X_{t_1,j})\\
		+(G-r)\sum\limits_{j\in{\Pi}_m}\theta_{j}^{(m)}\sum\limits_{t_2=\gamma_m}^{\gamma_m+r}h'_{2,j}(X_{t_2,j})+\sum\limits_{j\in\hat{\Pi}_m}\sum\limits_{t_1=\gamma_m+r-G}^{\gamma_m}\sum\limits_{t_2=\gamma_m}^{\gamma_m+r}{\theta}_{j}^{(m)}g_j'(X_{t_1,j},X_{t_2,j}),\\
		\tilde{III}=Gr\sum\limits_{j\in{\Pi}_m}(\theta_{j}^{(m)})^2+G\sum\limits_{j\in{\Pi}_m}\theta_{j}^{(m)}\sum\limits_{t_1=\gamma_m-G}^{\gamma_m-G+r}h'_{1,j}(X_{t_1,j})\\
		+r\sum\limits_{j\in{\Pi}_m}\theta_{j}^{(m)}\sum\limits_{t_2=\gamma_m}^{\gamma_m+G}h'_{2,j}(X_{t_2,j})+\sum\limits_{j\in\hat{\Pi}_m}\sum\limits_{t_1=\gamma_m-G}^{\gamma_m-G+r}\sum\limits_{t_2=\gamma_m}^{\gamma_m+G}{\theta}_{j}^{(m)}g_j'(X_{t_1,j},X_{t_2,j}),\\
		\tilde{II}=\sum\limits_{j\in\hat{\Pi}_m}\sum\limits_{t_1=\gamma_m+1}^{\gamma_m+r}\sum\limits_{t_2=\gamma_m+r}^{\gamma_m+G+r}{\theta}_{j}^{(m)}h(\epsilon_{t_2,j}-\epsilon_{t_1,j}).\\
	\end{array}
\end{equation*}
Based on the above decomposition, we can further have:
\begin{equation}\label{equ: tilde{G}}
	\begin{array}{ll}
		\tilde{G}(r)=	-Gr\sum\limits_{j\in{\Pi}_m}(\theta_{j}^{(m)})^2+\underbrace{\tilde{II}}_{(1)}+	\underbrace{(G-r)\sum\limits_{j\in{\Pi}_m}\theta_{j}^{(m)}\sum\limits_{t_2=\gamma_m+G}^{\gamma_m+G+r}h'_{2,j}(X_{t_2,j})}_{(2)}\\-\underbrace{(G-r)\sum\limits_{j\in{\Pi}_m}\theta_{j}^{(m)}\sum\limits_{t_2=\gamma_m}^{\gamma_m+r}h'_{2,j}(X_{t_2,j})}_{(3)}
		-\underbrace{G\sum\limits_{j\in{\Pi}_m}\theta_{j}^{(m)}\sum\limits_{t_1=\gamma_m-G}^{\gamma_m-G+r}h'_{1,j}(X_{t_1,j})}_{(4)}-\underbrace{r\sum\limits_{j\in{\Pi}_m}\theta_{j}^{(m)}\sum\limits_{t_2=\gamma_m}^{\gamma_m+G}h'_{2,j}(X_{t_2,j})}_{(5)}\\+\underbrace{\sum\limits_{j\in\hat{\Pi}_m}\sum\limits_{t_1=\gamma_m+r-G}^{\gamma_m}\sum\limits_{t_2=\gamma_m+G}^{\gamma_m+r+G}{\theta}_{j}^{(m)}g_j'(X_{t_1,j},X_{t_2,j})}_{(6)}-\underbrace{\sum\limits_{j\in\hat{\Pi}_m}\sum\limits_{t_1=\gamma_m+r-G}^{\gamma_m}\sum\limits_{t_2=\gamma_m}^{\gamma_m+r}{\theta}_{j}^{(m)}g_j'(X_{t_1,j},X_{t_2,j})}_{(7)}\\
		-\underbrace{\sum\limits_{j\in\hat{\Pi}_m}\sum\limits_{t_1=\gamma_m-G}^{\gamma_m-G+r}\sum\limits_{t_2=\gamma_m}^{\gamma_m+G}{\theta}_{j}^{(m)}g_j'(X_{t_1,j},X_{t_2,j})}_{(8)}.
	\end{array}
\end{equation}
Recall $s_m=|\Pi_{m}|$ as the total number of coordinates having a change point at $\gamma_m$. For notational convenience, we assume $\Pi_{m}=\{1,\ldots,s_m\}$. Let $\btheta^{(m)}=(\theta_1^{(m)},\ldots,\theta_{s_m}^{(m)})\in\RR^{s_m}$ be the signal jump in $\Pi_{m}$. 
For (1), define $\bh(\bx,\by)=(h(x_1,y_1),\cdots,h(x_{s_m},y_{s_m}))\in\RR^{s_m}$. Note that (1) considers the noise term with no change points. We can further have:
\begin{equation*}
	(1)=(\btheta^{(m)})^\top\sum\limits_{t_1=\gamma_m+1}^{\gamma_m+r}\sum\limits_{t_2=\gamma_m+r}^{\gamma_m+G+r}\bh(\bepsilon_{t_1},\bepsilon_{t_2}).
\end{equation*}
By the central limiting theory for two-sample U-statistics, we can prove that $$(1)=\|\btheta^{(m)}\|O_p(G\sqrt{r}+r\sqrt{G})$$. For $(3)$, $(4)$ and (5), using the central limiting theory, we can prove 
\begin{equation*}
	(2),(3)=O_p(\sqrt{r}(G-r)\|\btheta^{(m)}\|), ~~(4)=O_p(\sqrt{r}G\|\btheta^{(m)}\|), ~~(5)=O_p(r\sqrt{G}\|\btheta^{(m)}\|).
\end{equation*}
For (6), using Lemma \ref{lemma: Hoeffding's residual}, taking $\underline{w}_1=\overline{w}_1=G-r$ and $\underline{w}_2=\overline{w}_2=r$, we have 
\begin{equation*}
	(6)=O_p(\|\btheta^{(m)}\|\log^2(G)\sqrt{(G-r)r})=o_p(\sqrt{r}G\|\btheta^{(m)}\|).
\end{equation*}
Similarly, we have $(7),(8)=o_p(\sqrt{r}G\|\btheta^{(m)}\|)$.

Note that we require $\tilde{G}(r)\ge 0$. Based on the above bounds, it is sufficient to prove that
\begin{equation}\label{ine: refined estimator}
	\begin{array}{ll}
		Gr\|\btheta^{(m)}\|^2\leq \|\btheta^{(m)}\|O_p(G\sqrt{r}+r\sqrt{G})+O_p(\sqrt{r}(G-r)\|\btheta^{(m)}\|)\\
		+O_p(\sqrt{r}G\|\btheta^{(m)}\|)+O_p(r\sqrt{G}\|\btheta^{(m)}\|)+o_p(\sqrt{r}G\|\btheta^{(m)}\|)\leq O_p(\sqrt{r}G\|\btheta^{(m)}\|),
	\end{array}
\end{equation}
where the last inequality come from the fact that $r\leq G$. Lastly, based on (\ref{ine: refined estimator}), we have $r\leq O_p(\dfrac{1}{\|\btheta^{(m)}\|^2})$, which finishes the proof of (1) of Theorem \ref{theorem: refined estimation}.

{\bf{Proof of (2) of Theorem \ref{theorem: refined estimation}}}. Consider the process
\begin{equation*}
	J_2(r)=\tilde{T}(\gamma_m+r)-\tilde{T}(\gamma_m).
\end{equation*}
We first consider the case of $r\geq 0$. By (\ref{equ: tilde{G}}), we have
\begin{equation*}
	J_2(r):=\dfrac{1}{G}J_1(r).
\end{equation*}
Moreover, replacing the estimated parameters by the true values, we define $\tilde{J}_2(r)$ as follows:
\begin{equation}\label{equ: tilde{G}}
	\begin{array}{ll}
		\tilde{J}_2(r)=	-r\sum\limits_{j\in{\Pi}_m}(\theta_{j}^{(m)})^2+\underbrace{\dfrac{1}{G}\sum\limits_{j\in\hat{\Pi}_m}\sum\limits_{t_1=\gamma_m+1}^{\gamma_m+r}\sum\limits_{t_2=\gamma_m+r}^{\gamma_m+G+r}{\theta}_{j}^{(m)}h(X_{t_1,j},X_{t_2,j})}_{(1)}\\+	\underbrace{\dfrac{(G-r)}{G}\sum\limits_{j\in{\Pi}_m}\theta_{j}^{(m)}\sum\limits_{t_2=\gamma_m+G}^{\gamma_m+G+r}h'_{2,j}(X_{t_2,j})}_{(2)}-\underbrace{\dfrac{(G-r)}{G}\sum\limits_{j\in{\Pi}_m}\theta_{j}^{(m)}\sum\limits_{t_2=\gamma_m}^{\gamma_m+r}h'_{2,j}(X_{t_2,j})}_{(3)}\\
		-\nb{\underbrace{\sum\limits_{j\in{\Pi}_m}\theta_{j}^{(m)}\sum\limits_{t_1=\gamma_m-G}^{\gamma_m-G+r}h'_{1,j}(X_{t_1,j})}_{(4)}}-\underbrace{\dfrac{r}{G}\sum\limits_{j\in{\Pi}_m}\theta_{j}^{(m)}\sum\limits_{t_2=\gamma_m}^{\gamma_m+G}h'_{2,j}(X_{t_2,j})}_{(5)}\\+\underbrace{\sum\limits_{j\in\hat{\Pi}_m}\dfrac{1}{G}\sum\limits_{t_1=\gamma_m+r-G}^{\gamma_m}\sum\limits_{t_2=\gamma_m+G}^{\gamma_m+r+G}{\theta}_{j}^{(m)}g_j'(X_{t_1,j},X_{t_2,j})}_{(6)}-\underbrace{\sum\limits_{j\in\hat{\Pi}_m}\dfrac{1}{G}\sum\limits_{t_1=\gamma_m+r-G}^{\gamma_m}\sum\limits_{t_2=\gamma_m}^{\gamma_m+r}{\theta}_{j}^{(m)}g_j'(X_{t_1,j},X_{t_2,j})}_{(7)}\\
		-\underbrace{\sum\limits_{j\in\hat{\Pi}_m}\dfrac{1}{G}\sum\limits_{t_1=\gamma_m-G}^{\gamma_m-G+r}\sum\limits_{t_2=\gamma_m}^{\gamma_m+G}{\theta}_{j}^{(m)}g_j'(X_{t_1,j},X_{t_2,j})}_{(8)}.
	\end{array}
\end{equation}
Note that 
\begin{equation*}
	(3)+(5)=\nb{-\sum\limits_{j\in{\Pi}_m}\theta_{j}^{(m)}\sum\limits_{t_2=\gamma_m}^{\gamma_m+r}h'_{2,j}(X_{t_2,j})}-\underbrace{\dfrac{r}{G}\sum\limits_{j\in{\Pi}_m}\theta_{j}^{(m)}\sum\limits_{t_2=\gamma_m+r}^{\gamma_m+G}h'_{2,j}(X_{t_2,j})}_{(9)}
\end{equation*}
Furthermore,  combining (2) and (9), we have
\begin{equation*}
	(2)+(9)=\nb{\sum\limits_{j\in{\Pi}_m}\theta_{j}^{(m)}\sum\limits_{t_2=\gamma_m+G}^{\gamma_m+G+r}h'_{2,j}(X_{t_2,j})}-\dfrac{r}{G}\sum\limits_{j\in{\Pi}_m}\theta_{j}^{(m)}\sum\limits_{t_2=\gamma_m+r}^{\gamma_m+r+G}h'_{2,j}(X_{t_2,j}).
\end{equation*}
Lastly, we analyze (1). Note that (1) involves homogeneous data after the change point. To this end, we introduce the Hoeffding's decomposition. 
According to our considered model, for $t\in[\gamma_m+1,\gamma_m+2G]$, we have
\begin{equation*}
	\bX_t=\bmu^{ (m+1)}\mathbf{1}\{\gamma_m< t\leq \gamma_m+2G\}+\bepsilon_t.
\end{equation*}
Based on the Hoeffding's decomposition, we have
\begin{equation} \label{equ: hoef1 H0}
	h(x,y)=0 +\overline{h}_{1,j}(x)+\overline{h}_{2,j}(y)+\overline{g}_{j}(x,y),
\end{equation}
with 
\begin{equation}\label{equation: Hoeffding's after cpt}
	\begin{array}{ll}
		0 =\E h(X_{j}',Y_{j}'), &\overline{{h}}_{1,j}(x)=\E h(x,Y_{j}'),\\
		\overline{h}_{2,j}(y)=\E h(X_{j}',y)=-\overline{h}_{1,j}(y),& \overline{g}_{j}(x,y) = h(x,y)-\overline{h}_{1,j}(x)-\overline{h}_{2,j}(y),
	\end{array}
\end{equation}
where $X_{j}'$ and $Y_{j}'$ are two independent random variables with the same marginal distributions as  $X_{\gamma_m+1,j}$. Note that we use the notation $\overline{{h}}_{1,j}(x)$ to denote that we consider the Hoeffding's decomposition for the data after the change point. Hence, by (\ref{equ: hoef1 H0}), for (1), we have
\begin{equation}\label{equ: decomposition of (1)}
	\begin{array}{ll}
		(1)=\nb{\sum\limits_{j\in{\Pi}_m}\theta_{j}^{(m)}\sum\limits_{t_2=\gamma_m+1}^{\gamma_m+r}\overline{h}_{1,j}(X_{t_2,j})}-\dfrac{r}{G}\sum\limits_{j\in{\Pi}_m}\theta_{j}^{(m)}\sum\limits_{t_2=\gamma_m+r}^{\gamma_m+G+r}\overline{h}_{1,j}(X_{t_2,j})\\+\underbrace{\dfrac{1}{G}\sum\limits_{j\in\hat{\Pi}_m}\sum\limits_{t_1=\gamma_m+1}^{\gamma_m+r}\sum\limits_{t_2=\gamma_m+r}^{\gamma_m+G+r}{\theta}_{j}^{(m)}\overline{g}_j(X_{t_2,j},X_{t_1,j})}_{(10)}
	\end{array}
\end{equation}
Combining the above results, we have
\begin{equation}
	\begin{array}{ll}
		\tilde{J}_2(r)=	-r\sum\limits_{j\in{\Pi}_m}(\theta_{j}^{(m)})^2	-\nb{\sum\limits_{j\in{\Pi}_m}\theta_{j}^{(m)}\sum\limits_{t_1=\gamma_m-G}^{\gamma_m-G+r}h'_{1,j}(X_{t_1,j})}+\nb{\sum\limits_{j\in{\Pi}_m}\theta_{j}^{(m)}\sum\limits_{t_2=\gamma_m+1}^{\gamma_m+r}\big(\overline{h}_{1,j}(X_{t_2,j})-h'_{2,j}(X_{t_2,j})\big)}\\
		+\nb{\sum\limits_{j\in{\Pi}_m}\theta_{j}^{(m)}\sum\limits_{t_2=\gamma_m+G}^{\gamma_m+G+r}h'_{2,j}(X_{t_2,j})}-\dfrac{r}{G}\sum\limits_{j\in{\Pi}_m}\theta_{j}^{(m)}\sum\limits_{t_2=\gamma_m+r}^{\gamma_m+r+G}h'_{2,j}(X_{t_2,j})-\dfrac{r}{G}\sum\limits_{j\in{\Pi}_m}\theta_{j}^{(m)}\sum\limits_{t_2=\gamma_m+r}^{\gamma_m+G+r}\overline{h}_{1,j}(X_{t_2,j})\\+(6)+(7)+(8)+(10).
	\end{array}
\end{equation}
Recall $s_m=|\Pi_{m}|$ as the total number of coordinates and $\btheta^{(m)}=(\theta_1^{(m)},\ldots,\theta_{s_m}^{(m)})\in\RR^{s_m}$ be the signal jump in $\Pi_{m}$. 
For (1), define  $\overline{\bh}_1(\bx)=(\underline{h}_{1,1}(x_1),\cdots,\underline{h}_{1,s_{m}}(x_{s_m}))\in\RR^{s_m}$, $\bh'_1(\bx)=(h'_{1,1}(x_1),\cdots,h'_{1,s_m}(x_{s_m}))\in\RR^{s_m}$, and $\bh'_2(\bx)=(h'_{2,1}(x_1),\cdots,h'_{2,s_m}(x_{s_m}))\in\RR^{s_m}$.  Similarly, define 
$\overline{\bg}'(\bx,\bx')=(\overline{g}'_{1}(x_1,x'_1),\cdots,\overline{g}'_{s_m}(x_{s_m},x'_{s_m}))\in\RR^{s_m}$.

Hence, we have
\begin{equation}
	\begin{array}{ll}
		\tilde{J}_2(r)=	-r\|\btheta^{(m)}\|^2	-\underbrace{(\btheta^{(m)})^\top\sum\limits_{t_1=\gamma_m-G}^{\gamma_m-G+r}\bh_1'(\bX_{t_1})}_{	\tilde{J}_{2,1}(r)}+\underbrace{(\btheta^{(m)})^\top\sum\limits_{t_2=\gamma_m+1}^{\gamma_m+r}\big(\overline{\bh}_{1}(\bX_{t_2})-\bh'_{2}(\bX_{t_2})}_{\tilde{J}_{2,2}(r)}\\
		+\underbrace{(\btheta^{(m)})^\top\sum\limits_{t_2=\gamma_m+G}^{\gamma_m+G+r}\bh'_{2}(\bX_{t_2})}_{\tilde{J}_{2,3}(r)}-\underbrace{\dfrac{r}{G}(\btheta^{(m)})^\top\sum\limits_{t_2=\gamma_m+r}^{\gamma_m+r+G}\bh'_{2}(\bX_{t_2})}_{\tilde{J}_{2,4}(r)}-\underbrace{\dfrac{r}{G}(\btheta^{(m)})^\top\sum\limits_{t_2=\gamma_m+r}^{\gamma_m+G+r}\overline{\bh}_{1}(\bX_{t_2})}_{\tilde{J}_{2,5}(r)}\\
		+\underbrace{(\btheta^{(m)})^\top\dfrac{1}{G}\sum\limits_{t_1=\gamma_m+r-G}^{\gamma_m}\sum\limits_{t_2=\gamma_m+G}^{\gamma_m+r+G}{\bg'(\bX_{t_1},\bX_{t_2})}}_{\tilde{J}_{2,6}(r)}-\underbrace{(\btheta^{(m)})^\top\dfrac{1}{G}\sum\limits_{t_1=\gamma_m+r-G}^{\gamma_m}\sum\limits_{t_2=\gamma_m}^{\gamma_m+r}{\bg'(\bX_{t_1},\bX_{t_2})}}_{\tilde{J}_{2,7}(r)}\\
		-\underbrace{(\btheta^{(m)})^\top\dfrac{1}{G}\sum\limits_{t_1=\gamma_m-G}^{\gamma_m-G+r}\sum\limits_{t_2=\gamma_m}^{\gamma_m+G}{\bg'(\bX_{t_1},\bX_{t_2})}}_{\tilde{J}_{2,8}(r)}+\underbrace{(\btheta^{(m)})^\top\dfrac{1}{G}\sum\limits_{t_1=\gamma_m+1}^{\gamma_m+r}\sum\limits_{t_2=\gamma_m+r}^{\gamma_m+G+r}{\overline{\bg}(\bX_{t_2},
				\bX_{t_1})}}_{\tilde{J}_{2,9}(r)}
	\end{array}
\end{equation}
Let $r=\dfrac{s}{\|\btheta^{(m)}\|^2}$ with $\btheta^{(m)}=(\theta_1^{(m)},\ldots,\theta_{s_m}^{(m)})\in\RR^{s_m}$ be the signal jump in $\Pi_{m}$ with $s>0$.  Define
\begin{equation*}
	\begin{array}{ll}
		\bSigma_{\bh'_1}:=\text{Cov}(\bh'_1(\bX_{\gamma_{m}}))=\text{Cov}(\bh'_1(\bmu^{(m)}+\bepsilon)),\\
		\bSigma_{\bh'_2}:=\text{Cov}(\bh'_2(\bX_{\gamma_{m}+1}))=\text{Cov}(\bh'_2(\bmu^{(m+1)}+\bepsilon)),\\
		\bSigma_{\overline{\bh}_1-\bh'_2}:=\text{Cov}(\overline{\bh}_1(\bX_{\gamma_{m}+1})-\bh'_2(\bX_{\gamma_{m}+1}))=\text{Cov}(\overline{\bh}_1(\bmu^{(m+1)}+\bepsilon)-\bh'_2(\bmu^{(m+1)}+\bepsilon)).
	\end{array}
\end{equation*}
By {Assumptions C.2}, we require there exists constants $\kappa_1$ and $\kappa_2$ such that
\begin{equation*}
	\begin{array}{ll}
		\kappa_1\leq \min(\lambda_{min}(\bSigma_{\bh'_1}),\lambda_{min}(\bSigma_{\bh'_2}),\lambda_{min}(\bSigma_{\overline{\bh}_1-\bh'_2})), \max(\lambda_{max}(\bSigma_{\bh'_1}),\lambda_{max}(\bSigma_{\bh'_2}),\lambda_{max}(\bSigma_{\overline{\bh}_1-\bh'_2}))\leq \kappa_2.
	\end{array}
\end{equation*}

Define $\tilde{J}_3(s):=\tilde{J}_2(\dfrac{s}{\|\btheta^{(m)}\|^2})$ with
\begin{equation*}
	\begin{array}{ll}
		\tilde{J}_3(s)&=-s+\tilde{J}_{2,1}(\dfrac{s}{\|\btheta^{(m)}\|^2})+\cdots+\tilde{J}_{2,9}(\dfrac{s}{\|\btheta^{(m)}\|^2}),\\
		&:=-s+\tilde{J}_{3,1}(s)+\cdots+\tilde{J}_{3,9}(s).
	\end{array}
\end{equation*}
Next, we aim to prove
\begin{equation*}
	\tilde{J}_3(s)\Rightarrow Z(s), \text{for ~any~} s\leq C_u.
\end{equation*}
To this end, we need two steps.

{\bf{Step 1}}: In this step, we aim to prove that 
\begin{equation*}
	\sup_{0<s\leq C_u}|\tilde{J}_{2,4}(\dfrac{s}{\|\btheta^{(m)}\|^2})|=o_p(1),\ldots,\sup_{0<s\leq C_u}|\tilde{J}_{2,9}(\dfrac{s}{\|\btheta^{(m)}\|^2})|=o_p(1).
\end{equation*}

\textbf{Control of $\tilde{J}_{2,4}(\dfrac{s}{\|\btheta^{(m)}\|^2})$ and $\tilde{J}_{2,5}(\dfrac{s}{\|\btheta^{(m)}\|^2})$}. For $\tilde{J}_{2,4}(\dfrac{s}{\|\btheta^{(m)}\|^2})$, by the central limitting's theorem, we have 
\begin{equation*}
	\sup_{0<s\leq C_u}\tilde{J}_{2,4}(\dfrac{s}{\|\btheta^{(m)}\|^2})=O_p(\dfrac{C_u}{\|\btheta^{(m)}\|^2}\dfrac{((\btheta^{(m)})^\top	\bSigma(\bh'_2)\btheta^{(m)})^{1/2}}{\sqrt{G}})=O_p(\dfrac{1}{\sqrt{G}\|\btheta^{(m)}\|}).
\end{equation*}
By {Assumption C.1}, we have $G\|\btheta^{(m)}\|^2\gg \log(nd)$. This implies that 
\begin{equation*}
	\sup_{0<s\leq C_u}\tilde{J}_{2,4}(\dfrac{s}{\|\btheta^{(m)}\|^2})=o_p(1). 
\end{equation*}
Similarly, we can prove 
$\sup_{0<s\leq C_u}\tilde{J}_{2,5}(\dfrac{s}{\|\btheta^{(m)}\|^2})=o_p(1)$. 

\textbf{Control of $\tilde{J}_{2,6}(\dfrac{s}{\|\btheta^{(m)}\|^2})$}. Let $\underline{w}_1=G-\dfrac{C_u}{\|\btheta^{(m)}\|^2}$, $\overline{w}_1=G$,
$\underline{w}_2=1$, $\overline{w}_2=\dfrac{C_u}{\|\btheta^{(m)}\|^2}$. 
By its definition and the independence of the samples, we have
\begin{equation*}
	\begin{array}{ll}
		&\sup\limits_{0\leq s \leq C_u}|\tilde{J}_{2,6}(\dfrac{s}{\|\btheta^{(m)}\|^2})|\\&\leq _{(1)}	\dfrac{1}{G}\sup\limits_{\tau_1\in[\underline{w}_1,\overline{w}_1]}\sup\limits_{\tau_2\in[\underline{w}_2,\overline{w}_2]}\Big|\sum\limits_{t_1=1}^{\tau_1}\sum\limits_{t_2=1}^{\tau_2}{(\btheta^{(m)})^\top\bg'(\bX_{t_1},\bY_{t_2})}\Big|\\
		&\leq_{(2)} O_p(\dfrac{1}{G}\log^2(G)\sqrt{\overline{w}_1\overline{w}_2})\\
		&=_{(3)} O_p(\dfrac{1}{G}\log^2(G)\sqrt{G}\dfrac{1}{\|\btheta^{(m)}\|})\\
		&=_{(4)}O_p(\log^2(G)\dfrac{1}{\sqrt{G}\|\btheta^{(m)}\|})=_{(5)}o_p(1)
	\end{array}
\end{equation*}
where $(2)$ comes from Lemma \ref{lemma: Hoeffding's residual} and $(5)$ comes from the assumption that $G\|\btheta^{(m)}\|^2\gg \log^2(nd)$.

\textbf{Control of $\tilde{J}_{2,7}(\dfrac{s}{\|\btheta^{(m)}\|^2})$, $\tilde{J}_{2,8}(\dfrac{s}{\|\btheta^{(m)}\|^2})$, $\tilde{J}_{2,9}(\dfrac{s}{\|\btheta^{(m)}\|^2})$}. Similar to the analysis of $\tilde{J}_{2,6}(\dfrac{s}{\|\btheta^{(m)}\|^2})$. By Lemma \ref{lemma: Hoeffding's residual}, we can prove that
\begin{equation*}
	\small
	\begin{array}{ll}
		\sup\limits_{0\leq s \leq C_u}|\tilde{J}_{2,7}(\dfrac{s}{\|\btheta^{(m)}\|^2})|, \sup\limits_{0\leq s \leq C_u}|\tilde{J}_{2,8}(\dfrac{s}{\|\btheta^{(m)}\|^2})|,\sup\limits_{0\leq s \leq C_u}|\tilde{J}_{2,9}(\dfrac{s}{\|\btheta^{(m)}\|^2})|
		&=O_p(\log^2(G)\dfrac{1}{\sqrt{G}\|\btheta^{(m)}\|})=o_p(1).
	\end{array}
\end{equation*}
{\bf{Step 2}}: In this step, we aim to prove that
\begin{equation*}
	\tilde{J}_{2,1}(\dfrac{s}{\|\btheta^{(m)}\|^2})+\tilde{J}_{2,2}(\dfrac{s}{\|\btheta^{(m)}\|^2})+\tilde{J}_{2,3}(\dfrac{s}{\|\btheta^{(m)}\|^2}))\Rightarrow Z(s).
\end{equation*}
For $\tilde{J}_{3,1}(s)$, we have
\begin{equation*}
	\begin{array}{ll}
		\tilde{J}_{3,1}(s)&=-(\btheta^{(m)})^\top\sum\limits_{t_1=\gamma_m-G}^{\gamma_m-G+{s}/{\|\btheta^{(m)}\|^2}}\bh_1'(\bX_{t_1})\\
		&=\dfrac{((\btheta^{(m)})^\top	\bSigma_{\bh'_1}\btheta^{(m)})^{1/2}}{\|\btheta^{(m)}\|}\|\btheta^{(m)}\|\sum\limits_{t_1=\gamma_m-G}^{\gamma_m-G+{s}/{\|\btheta^{(m)}\|^2}}\dfrac{-(\btheta^{(m)})^\top h_1'(\bX_{t_1})}{((\btheta^{(m)})^\top	\bSigma_{\bh'_1}\btheta^{(m)})^{1/2}}.
	\end{array}
\end{equation*}
{By the functional central limitng theorem, see, e.g., Theorem 4.3.2 of \cite{whitt2002stochastic})}, we have
\begin{equation*}
	\tilde{J}_{3,1}(s)\Rightarrow \dfrac{((\btheta^{(m)})^\top	\bSigma_{\bh'_1}\btheta^{(m)})^{1/2}}{\|\btheta^{(m)}\|}W_{1}(s), ~\text{for}~s>0,
\end{equation*}
where $\{W_{1}(s),s>0\}$ is the standard Browian motion. Using  a similar analysis, we can also prove that 
\begin{equation*}
	\tilde{J}_{3,2}(s)\Rightarrow \dfrac{((\btheta^{(m)})^\top	\bSigma_{\overline{\bh}_1-\bh_2'}\btheta^{(m)})^{1/2}}{\|\btheta^{(m)}\|}W_{2}(s),~~\tilde{J}_{3,3}(s)\Rightarrow \dfrac{((\btheta^{(m)})^\top	\bSigma_{\bh_2'}\btheta^{(m)})^{1/2}}{\|\btheta^{(m)}\|}W_{3}(s)
\end{equation*}
where $\{W_{2}(s),s>0\}$ and $\{W_{3}(s),s>0\}$  are the standard Browian motion on $[0,\infty)$. Note that $\tilde{J}_{3,1}(s)$, $\tilde{J}_{3,2}(s)$, $\tilde{J}_{3,3}(s)$ are independent. Hence, $\{W_{1}(s),s>0\}$, $\{W_{2}(s),s>0\}$  and $\{W_{3}(s),s>0\}$ are independent.

Lastly, combining Steps 1 and 2, for $s>0$, we have proved
\begin{equation*}
	\begin{array}{ll}
		\tilde{J}_3(s):=\tilde{J}_2(\dfrac{s}{\|\btheta^{(m)}\|^2})\\
		\Rightarrow -s+\dfrac{((\btheta^{(m)})^\top	\bSigma_{\bh'_1}\btheta^{(m)})^{1/2}}{\|\btheta^{(m)}\|}W_{1}(s)+ \dfrac{((\btheta^{(m)})^\top	\bSigma_{\overline{\bh}_1-\bh_2'}\btheta^{(m)})^{1/2}}{\|\btheta^{(m)}\|}W_{2}(s)+\dfrac{((\btheta^{(m)})^\top	\bSigma_{\bh_2'}\btheta^{(m)})^{1/2}}{\|\btheta^{(m)}\|}W_{3}(s).
	\end{array}
\end{equation*}

Next, we consider the case for $s<0$. Note that in this case, the analysis involves the data before the change point $\gamma_m$. According to our considered model, for $t\in[\gamma_m-2G,\gamma_m]$, we have
\begin{equation*}
	\bX_t=\bmu^{ (m)}\mathbf{1}\{\gamma_m-2G< t\leq \gamma_m\}+\bepsilon_t.
\end{equation*}
Based on the Hoeffding's decomposition, we have
\begin{equation} \label{equ: hoef1 H0 before}
	h(x,y)=0 +\underline{h}_{1,j}(x)+\underline{h}_{2,j}(y)+\underline{g}_{j}(x,y),
\end{equation}
with 
\begin{equation}
	\begin{array}{ll}
		0 =\E h(X_{j}',Y_{j}'), &\underline{{h}}_{1,j}(x)=\E h(x,Y_{j}'),\\
		\underline{h}_{2,j}(y)=\E h(X_{j}',y)=-\underline{h}_{1,j}(y),& \underline{g}_{j}(x,y) = h(x,y)-\overline{h}_{1,j}(x)-\overline{h}_{2,j}(y),
	\end{array}
\end{equation}
where $X_{j}'$ and $Y_{j}'$ are two independent random variables with the same marginal distributions as  $X_{\gamma_m,j}$. Note that we use the notation $\underline{{h}}_{1,j}(x)$ to denote that we consider the Hoeffding's decomposition for the data before the change point. Define  $\underline{\bh}_1(\bx)=(\underline{h}_{1,1}(x_1),\cdots,\underline{h}_{1,s_{m}}(x_{s_m}))\in\RR^{s_m}$, $\bh'_1(\bx)=(h'_{1,1}(x_1),\cdots,h'_{1,s_m}(x_{s_m}))\in\RR^{s_m}$ and let 
\begin{equation*}
	\bSigma_{\underline{\bh}_1+\bh'_1}:=\text{Cov}(\underline{\bh}_1(\bX_{\gamma_{m}})+\bh'_1(\bX_{\gamma_{m}}))=\text{Cov}(\underline{\bh}_1(\bmu^{(m)}+\bepsilon)+\bh'_1(\bmu^{(m)}+\bepsilon)).
\end{equation*}
Using the above notation,  for $s<0$,  we can prove that 
\begin{equation*}
	\begin{array}{ll}
		\tilde{J}_3(s):=\tilde{J}_2(\dfrac{s}{\|\btheta^{(m)}\|^2})\\
		\Rightarrow s+\dfrac{((\btheta^{(m)})^\top	\bSigma_{\bh'_1}\btheta^{(m)})^{1/2}}{\|\btheta^{(m)}\|}W'_{1}(s)+ \dfrac{((\btheta^{(m)})^\top	\bSigma_{\underline{\bh}_1+\bh_1'}\btheta^{(m)})^{1/2}}{\|\btheta^{(m)}\|}W'_{2}(s)+\dfrac{((\btheta^{(m)})^\top	\bSigma_{\bh_2'}\btheta^{(m)})^{1/2}}{\|\btheta^{(m)}\|}W'_{3}(s).
	\end{array}
\end{equation*}
where  $\{W'_{1}(s),s<0\}$, $\{W'_{2}(s),s<0\}$, $\{W'_{3}(s),s<0\}$  are the standard Browian motion on $(-\infty,0]$ which are independent with each other. 

Lastly, replacing $\btheta^{(m)}$ by $\hat{\btheta}^{(m)}$ and by Theorem \ref{theory: initial estimators}, we finish the proof of Theorem \ref{theorem: refined estimation}.

\newpage
\section{Proof of useful lemmas}\label{section: proof of useful lemmas}

\subsection{Proof of Lemma \ref{lemma: concentration for maxsimum sub-exponential1}}
Since $X_t$ are sub-exponential distributed, by the Bernstein’s inequality, for each fixed $1\leq k\leq n$, there exists some universal constant $C_1>0$ such that 
\begin{equation*}
	\P\Big(\Big|\dfrac{1}{\tau_2(k)-\tau_1(k)}\sum_{t=\tau_1(k)}^{\tau_2(k)}X_t\Big|>x\Big)\leq 2\exp\big(-C_1(\tau_2(k)-\tau_1(k))\min (x^2,x)\big),
\end{equation*}
this implies that 
\begin{equation*}
	\P\Big(\Big|\sum_{t=\tau_1(k)}^{\tau_2(k)}X_t\Big|>x\Big)\leq 2\exp\Big(-C_1\min\Big( \dfrac{x^2}{\tau_2(k)-\tau_1(k)},x\Big)\Big).
\end{equation*}
Let $C_2>0$ be some big enough constant and let 
\begin{equation*}
	x_1=C_2	(\tau_2(k)-\tau_1(k))\sqrt{\log(nd)},~~x_2=C_2\log(nd), ~~x_3=\max(x_1,x_2).
\end{equation*}
Then, by choosing a large enough constant $C_2>0$, for fixed $k$, we have
\begin{equation*}
	\P\Big(\Big|\sum_{t=\tau_1(k)}^{\tau_2(k)}X_t\Big|>x_3\Big)\leq 2\exp\Big(-C_1C_2\log(nd)\Big).
\end{equation*}
Moreover, we can further choose $C_2$ big enough such that $C_3=C_1C_2>4$. Hence, for each fixed $k$, we derive that
\begin{equation*}
	\P\Big(\Big|\sum_{t=\tau_1(k)}^{\tau_2(k)}X_t\Big|>x_3\Big)\leq 2\exp\Big(-C_3\log(nd)\Big).
\end{equation*}
Lastly, we take the union bound over $k=1,\ldots,n$, which finishes the proof.

\subsection{Proof of Lemma \ref{lemma: concentration for maxsimum sub-exponential}}
Similar to the proof of Lemma \ref{lemma: concentration for maxsimum sub-exponential1}, for each fixed $k$, we have
\begin{equation*}
	\begin{array}{ll}
		\P\Big(\Big|\dfrac{1}{\max_{k}(\tau_2(k)-\tau_1(k))}\sum\limits_{t=\tau_1(k)}^{\tau_2(k)}X_t\Big|>x\Big)
		&\leq \P\Big(\Big|\dfrac{1}{\tau_2(k)-\tau_1(k)}\sum\limits_{t=\tau_1(k)}^{\tau_2(k)}X_t\Big|>x\Big)\\
		&\leq 2\exp\big(-C_1(\tau_2(k)-\tau_1(k))\min (x^2,x)\big),\\
		&\leq 2\exp\big(-C_1\min_{k}(\tau_2(k)-\tau_1(k))\min (x^2,x)\big).
	\end{array}
\end{equation*}
Let $C_2>0$ be some big enough constant and let 
\begin{equation*}
	x_1=\sqrt{C_2	\dfrac{\log(nd)}{\min_{k}(\tau_2(k)-\tau_1(k))}},~~x_2=C_2\dfrac{\log(nd)}{\min_{k}(\tau_2(k)-\tau_1(k))}, ~~x_3=\max(x_1,x_2).
\end{equation*}
Then, by choosing a large enough constant $C_2>0$, for fixed $k$, we have
\begin{equation*}
	\P\Big(\Big|\dfrac{1}{\max_{k}(\tau_2(k)-\tau_1(k))}\sum\limits_{t=\tau_1(k)}^{\tau_2(k)}X_t\Big|>x_3\Big)\leq 2\exp\Big(-C_1C_2\log(nd)\Big).
\end{equation*}
Moreover, we can further choose $C_2$ big enough such that $C_3=C_1C_2>4$. Hence, for each fixed $k$, we derive that
\begin{equation*}
	\P\Big(\Big|\sum\limits_{t=\tau_1(k)}^{\tau_2(k)}X_t\Big|>{\max_{k}(\tau_2(k)-\tau_1(k))}x_3\Big)\leq 2\exp\Big(-C_3\log(nd)\Big).
\end{equation*}
Take the union bound over $k=1,\ldots,n$, we further have
\begin{equation*}
	\P\Big(\Big|\sum\limits_{t=\tau_1(k)}^{\tau_2(k)}X_t\Big|>{\max_{k}(\tau_2(k)-\tau_1(k))}x_3,k=1,\ldots,n\Big)\leq 2\exp\Big(-C_3'\log(nd)\Big).
\end{equation*}
for some $C_3'>3$, which finishes the proof.
\subsection{Proof of Lemma \ref{lemma: Hoeffding's residual}}\label{sec: proof of lemma of the hoeffding's decomposition}
\begin{proof}
	In this section, we prove Lemma \ref{lemma: Hoeffding's residual}. For each fixed $\underline{k}\leq k\leq \overline{k}$, 
	let 
	\begin{equation*}
		\begin{array}{ll}
			w_1(k)=\overline{\tau}_1(k)-\underline{\tau}_1(k),~~ \underline{w}_1=\min_{\underline{k}\leq k\leq \overline{k}}w_1(k),~~\overline{w}_1=\max_{\underline{k}\leq k\leq \overline{k}}w_1(k)\\
			w_2(k)=\overline{\tau}_2(k)-\underline{\tau}_2(k),~~ \underline{w}_2=\min_{\underline{k}\leq k\leq \overline{k}}w_2(k),~~\overline{w}_2=\max_{\underline{k}\leq k\leq \overline{k}}w_2(k).\\
		\end{array}
	\end{equation*}
	Moreover, for each fixed $k$, let 
	\begin{equation}\label{equ: V1(k)+V2(k)}
		V_1{(k)}=\dfrac{1}{w_1(k)}\sum_{t_1=\underline{\tau}_1(k)}^{\overline{\tau}_1(k)} h_1(X_{t_1})^2, 	V_2{(k)}=\dfrac{1}{w_2(k)}\sum_{t_2=\underline{\tau}_2(k)}^{\overline{\tau}_2(k)} h_2(Y_{t_2})^2,
	\end{equation}
	By Proposition B.2 in \cite{chang2016}, there are some constants such that with probability at least $1-C_1\exp(-C_2y)$, 
	\[
	\left| \sum_{t_1=\underline{\tau}_1(k)}^{\overline{\tau}_1(k)} \sum_{t_2=\underline{\tau}_2(k)}^{\overline{\tau}_2(k)} g(X_{t_1}, Y_{t_2}) \right|
	\lesssim C_3y  [w_1(k)]^{1/2} [w_2(k)]^{1/2}
	\sqrt{\sigma^2 + V_1(k) + V_2(k)}.
	\]
	Hence, for any fixed $\underline{k}\leq k \leq \overline{k}$, with probability smaller than $C_1\exp(-C_2y)$,  we have
	\begin{equation}\label{equ: upper bound of the residual1}
		\left| \sum_{t_1=\underline{\tau}_1(k)}^{\overline{\tau}_1(k)} \sum_{t_2=\underline{\tau}_2(k)}^{\overline{\tau}_2(k)} g(X_{t_1}, Y_{t_2}) \right|
		\lesssim C_3y  [w_1(k)]^{1/2} [w_2(k)]^{1/2}
		\sqrt{\sigma^2 + \max_{k}V_1(k) + \max_{k}V_2(k)}.
	\end{equation}
	Taking $y=C_4\log (nd)$ in (\ref{equ: upper bound of the residual1}) with some big enough constant $C_4>0$, taking the union bound over $k$, we have with probability smaller than $C_1(nd)^{-C_2}$ for some $C_2>2$,  we have
	\begin{equation}\label{equ: upper bound of the residual2}
		\begin{array}{ll}
			\P\Big(\max\limits_{\underline{k}\leq k \leq \overline{k}}\left| \sum\limits_{t_1=\underline{\tau}_1(k)}^{\overline{\tau}_1(k)} \sum\limits_{t_2=\underline{\tau}_2(k)}^{\overline{\tau}_2(k)} g(X_{t_1}, Y_{t_2}) \right|\\
			\lesssim C_5
			\log(nd)  [w_1(k)]^{1/2} [w_2(k)]^{1/2}
			\sqrt{\sigma^2 + \max_{k}V_1(k) + \max_{k}V_2(k)}\Big)\\
			\leq C_1(nd)^{-C_2}
		\end{array}
	\end{equation}
	Note that by {Assumption A.3 and the Jensen's inequality}, we have
	\begin{equation*}
		\P(h_1(X_t)^2\geq y)\leq K_1\exp(-K_2y^{1/2}),~~\P(h_2(Y_t)^2\geq y)\leq K_1\exp(-K_2y^{1/2}).
	\end{equation*}
	By Theorem 6 in \cite{delaigle2011robustness}, we have
	\begin{equation*}
		\P\Big(\dfrac{1}{w_1(k)}\sum_{t_1=\underline{\tau}_1(k)}^{\overline{\tau}_1(k)} (h_1(X_{t_1})^2-\E[h_1(X_{t_1})^2])\geq y\Big)\leq K_1\exp(-K_2(w_1(k)y)^{1/2})\leq K_1\exp(-K_2(\underline{w}_1y)^{1/2}),
	\end{equation*}
	where the last inequality uses the fact that $\underline{w}_1=\min_{\underline{k}\leq k\leq \overline{k}}\overline{\tau}_1(k)-\underline{\tau}_1(k)$. 
	Moreover, using the assumption that $[\underline{k},\overline{k}]\subset\{1,\ldots,n\}$, and by the union of the events over $k$, we have
	\begin{equation*}
		\P(\max_{\underline{k}\leq k\leq \overline{k}}\Big|\dfrac{1}{w_1(k)}\sum_{t_1=\underline{\tau}_1(k)}^{\overline{\tau}_1(k)} (h_1(X_{t_1})^2-\E[h_1(X_{t_1})^2])\Big|\geq y)\leq K_1n\exp(-K_2(\underline{w}_1y)^{1/2}).
	\end{equation*}
	Hence, taking $y=K_3\log^2(nd)/\underline{w}_1$ for some big enough constant $K_3>0$ we have 
	\begin{equation*}
		\P\Big(\max_{\underline{k}\leq k\leq \overline{k}}\Big|\dfrac{1}{w_1(k)}\sum_{t_1=\underline{\tau}_1(k)}^{\overline{\tau}_1(k)} (h_1(X_{t_1})^2-\E[h_1(X_{t_1})^2])\Big|\geq K_3\dfrac{\log^2(nd)}{\underline{w}_1}\Big)\leq K_4(nd)^{-K_5},
	\end{equation*}
	for some $K_5>2$. Similarly, we can prove that 
	\begin{equation*}
		\P\Big(\max_{\underline{k}\leq k\leq \overline{k}}\Big|\dfrac{1}{w_2(k)}\sum_{t_2=\underline{\tau}_2(k)}^{\overline{\tau}_2(k)} (h_2(Y_{t_2})^2-\E[h_2(Y_{t_2})^2])\Big|\geq K_3\dfrac{\log^2(nd)}{\underline{w}_2}\Big)\leq K_4(nd)^{-K_5}.
	\end{equation*}
	Define the two events:
	\begin{equation*}
		\begin{array}{ll}
			\cE_1=\Big\{\max_{\underline{k}\leq k\leq \overline{k}}  V_{1}(k)\leq \underbrace{K_3\dfrac{\log^2(nd)}{\underline{w}_1}+  \max_{\underline{k}\leq k\leq \overline{k}}\dfrac{1}{w_1(k)}\sum\limits_{t_1=\underline{\tau}_1(k)}^{\overline{\tau}_1(k)} \E[h_1(X_{t_1})^2]}_{M_1}\Big\},\\
			\cE_2=\Big\{\max_{\underline{k}\leq k\leq \overline{k}}  V_{2}(k)\leq \underbrace{K_3\dfrac{\log^2(nd)}{\underline{w}_2}+  \max_{\underline{k}\leq k\leq \overline{k}}\dfrac{1}{w_2(k)}\sum\limits_{t_2=\underline{\tau}_2(k)}^{\overline{\tau}_2(k)} \E[h_2(Y_{t_2})^2]}_{M_2} \Big\}\\
		\end{array}
	\end{equation*}
	By the above results, we have 
	\begin{equation}\label{equ: upper bound of the residual3}
		\P(\cE_1,\cE_2)\geq 1-C_6(nd)^{-C_7}, ~~\text{for}~~ C_7>2. 
	\end{equation}
	Combining (\ref{equ: upper bound of the residual2}) and (\ref{equ: upper bound of the residual3}), we have
	\begin{equation}
		\small
		\begin{array}{ll}
			\P\Big(\max\limits_{\underline{k}\leq k \leq \overline{k}}\left| \sum\limits_{t_1=\underline{\tau}_1(k)}^{\overline{\tau}_1(k)} \sum\limits_{t_2=\underline{\tau}_2(k)}^{\overline{\tau}_2(k)} g(X_{t_1}, Y_{t_2}) \right|
			\geq C_5
			\log(nd)  [\overline{w}_1]^{1/2} [\overline{w}_2]^{1/2}
			\sqrt{\sigma^2 + M_1+M_2}\Big)\\
			\leq	\P\Big(\max\limits_{\underline{k}\leq k \leq \overline{k}}\left| \sum\limits_{t_1=\underline{\tau}_1(k)}^{\overline{\tau}_1(k)} \sum\limits_{t_2=\underline{\tau}_2(k)}^{\overline{\tau}_2(k)} g(X_{t_1}, Y_{t_2}) \right|
			\geq C_5
			\log(nd)  [w_1(k)]^{1/2} [w_2(k)]^{1/2}
			\sqrt{\sigma^2 + M_1+M_2},\cE_1,\cE_2\Big)\\+\P((\cE_1)^c)+\P((\cE_2)^c)\\
			\leq	\P\Big(\max\limits_{\underline{k}\leq k \leq \overline{k}}\left| \sum\limits_{t_1=\underline{\tau}_1(k)}^{\overline{\tau}_1(k)} \sum\limits_{t_2=\underline{\tau}_2(k)}^{\overline{\tau}_2(k)} g(X_{t_1}, Y_{t_2}) \right|
			\geq C_5
			\log(nd)  [w_1(k)]^{1/2} [w_2(k)]^{1/2}
			\sqrt{\sigma^2 + \max_{k}V_1(k) + \max_{k}V_2(k)}\Big)\\
			+\P((\cE_1)^c)+\P((\cE_2)^c)\\
			\leq C_7 (nd)^{-C_8}~~\text{for~some}~ C_8>2.
		\end{array}
	\end{equation}
	Lastly, using the fact that $\sqrt{a+b+c}\leq \sqrt{a}+\sqrt{b}+\sqrt{c}$, and $\E[h_1(X_t)^2]\leq \E h(X_{t_1},Y_{t_2})^2:=\sigma^2$, for some large enough constant $C_9>0$, we have
	\begin{equation*}
		\begin{array}{ll}
			&\sqrt{\sigma^2+M_1+M_2}\\&\leq \sqrt{\sigma^2}+\sqrt{K_3\dfrac{\log^2(nd)}{\underline{w}_1}+\sigma^2}\sqrt{K_3\dfrac{\log^2(nd)}{\underline{w}_2}+\sigma^2}\\
			&\leq C_9 \Big(\sqrt{\sigma^2}+\sqrt{\dfrac{\log^2(nd)}{\underline{w}_1}}+\sqrt{\dfrac{\log^2(nd)}{\underline{w}_2}}\Big)
		\end{array}
	\end{equation*}
	which finishes the proof.
\end{proof}

\section{Proof of lemmas used in Section \ref{section: proof of main results}}\label{sec: proof of lemmas used in the main proof}

\subsection{Proof of Lemma \ref{lemma: bootstraped negligible}}\label{section: proof of bootstraped negligible}
In this section, we prove that the residual term of the Hoeffding's decomposition can be uniformly negligible in the sense that 
\begin{equation*}
	\begin{array}{ll}
		\max_{G\leq k\leq n-G} \big\|\bT^{(2)}(k)\big\|_{\infty}&=\max\limits_{G\leq k\leq n-G}\|\dfrac{1}{G^{3/2}}\sum\limits_{t_1=k-G+1}^k\sum\limits_{t_2=k+1}^{k+G}\bg(\bX_{t_1},\bX_{t_2})\|_{\infty}\\
		&=\max\limits_{1\leq j\leq d}\max\limits_{G\leq k\leq n-G}\Big|\dfrac{1}{G^{3/2}}\sum\limits_{t_1=k-G+1}^k\sum\limits_{t_2=k+1}^{k+G}g_j(X_{t_1,j},X_{t_2,j})\Big|
	\end{array}
\end{equation*}
Let $\underline{k}=G$,  $\overline{k}=n-G$, $\underline{\tau}_1(k)=k-G+1$, $\overline{\tau}_1(k)=k$, $\underline{\tau}_2(k)=k+1$, $\overline{\tau}_2(k)=k+G$, and $\underline{w}_1=\underline{w}_2=G$.  Hence, by Lemma \ref{lemma: Hoeffding's residual}, for each fixed $j=1,\ldots,d$, there exist some big enough constant $C_1>0$ such that with probability at least $1-(nd)^{-C_2}$ with $C_2>2$, the following holds:
\begin{equation*}
	\begin{array}{ll}
		\max\limits_{G\leq k\leq n-G}\Big|\dfrac{1}{G^{3/2}}\sum\limits_{t_1=k-G+1}^k\sum\limits_{t_2=k+1}^{k+G}g_j(X_{t_1,j},X_{t_2,j})\Big|\\\leq C_1\dfrac{\log(nd)}{\sqrt{G}}  \Big(\sqrt{\sigma_{j,j}^2}+\sqrt{\dfrac{\log^2(nd)}{G}}+\sqrt{\dfrac{\log^2(nd)}{G}}\Big)
		\leq C_1 \dfrac{\log(nd)}{\sqrt{G}}  
	\end{array}
\end{equation*}
where $\sigma_{j,j}^2=\text{Var}(h(X_{1,j},X_{2,j}))$ and the last inequality uses the assumption that $G\gg \log^2(nd)$. Lastly, taking the union bound over $j=1,\ldots,d$ and taking $\epsilon=C_1 \dfrac{\log(nd)}{\sqrt{G}} $ yields the results. 
\subsection{Proof of Lemma \ref{lemma: gaussian approxiation 1}}\label{section: proof of gaussian approximation 1}
In this section, we prove Lemma \ref{lemma: gaussian approxiation 1}. Recall
\begin{equation*}
	a_{t}(k)=\dfrac{\sqrt{n}}{\sqrt{G}}(\mathbf{1}\{k-G+1\leq t\leq k\}-\mathbf{1}\{k+1\leq t\leq k+G\}).
\end{equation*}
Let $\bZ_{t}(k)=a_t(k)\bh_1(\bX_{t})\in \RR^{d}$ for $G\leq k\leq n-G$. Moreover, for each $t$, define the $d(n-2G+1)$-dimensional random vector $\bZ_t=(Z_{t}(k,j))_{k,j}$ with $Z_{t}(k,j)=a_t(k)h_{1,j}(X_{t,j})$ for $G\leq k\leq n-G$ and $1\leq j\leq d$. Hence, by the above definition, 
\begin{equation*}
	\max_{G\leq k\leq n-G}\|\bT^{(1)}(k)\|_{\infty}=\max_{G\leq k\leq n-G,\atop 1\leq j\leq d}\Big|\dfrac{1}{\sqrt{n}}\sum\limits_{t=1}^nZ_{t}(k,j)\Big|=\Big\|\dfrac{1}{\sqrt{n}}\sum\limits_{t=1}^n\bZ_{t}\Big\|_{\infty}.
\end{equation*}
Using similar notations, for each $t$, define the $d(n-2G+1)$-dimensional Gaussian distributed random vector $\bH_t=(H_{t}(k,j))$ with $H_{t}(k,j)=a_t(k)G_{t,j}$ for $G\leq k\leq n-G$ and $1\leq j\leq d$. Hence, by the above definition, we have
\begin{equation*}
	\max_{G\leq k\leq n-G}\|\bT^{\bG}(k)\|_{\infty}=\max_{G\leq k\leq n-G,\atop 1\leq j\leq d}\Big|\dfrac{1}{\sqrt{n}}\sum\limits_{t=1}^nH_{t}(k,j)\Big|=\Big\|\dfrac{1}{\sqrt{n}}\sum\limits_{t=1}^n\bH_{t}\Big\|_{\infty}.
\end{equation*}
It is easy to see that $\bH_t$ has the same covariance of $\bZ_t$. To prove Lemma \ref{lemma: gaussian approxiation 1}, it remains to verify that the Conditions M.1, M.2 and E.1 of Proposition 2.1 hold in \cite{chernozhukov2017central}. Note that by {Assumptions A.2 and A.3},  By Jensen’s inequality, for $t=1,\ldots,n$ and $j=1,\ldots,d$, we have
\begin{equation*}
	\E |h_{1,j}(X_{t,j})|^{2+\ell}\leq D^\ell, \ell=1,2;~~\|h_{1,j}(X_{t,j})\|_{\psi_1}\leq D
\end{equation*}
{By Assumption A.1}, we have
\begin{equation*}
	\begin{array}{ll}
		\dfrac{1}{n}\sum\limits_{t=1}^{n} \E[Z_{t}(k,j)]^2&=_{(1)}	\dfrac{1}{n}\sum\limits_{t=1}^{n}a_{t}(k)^2\E|h_{1,j}(X_{t,j})|^2\\
		&=_{(2)}\dfrac{1}{G}\sum\limits_{t=k-G+1}^k\E|h_{1,j}(X_{t,j})|^2+\dfrac{1}{G}\sum\limits_{t=k+1}^{k+G}\E|h_{1,j}(X_{t,j})|^2\\
		&\geq_{(3)} b^2
	\end{array}
\end{equation*}
where $(2)$ is due to the definition of $a_{t}(k)$ and $(3)$ comes from {Assumption A.1}. Hence, Condition M.1 hold. Moreover, using similar analysis, it holds that
\begin{equation*}
	\begin{array}{ll}
		\dfrac{1}{n}\sum\limits_{t=1}^{n} \E[Z_{t}(k,j)]^{2+\ell}&=_{(1)}	\dfrac{1}{n}\sum\limits_{t=1}^{n}a_{t}(k)^{2+\ell}\E|h_{1,j}(X_{t,j})|^{2+\ell}\\
		&\leq_{(2)}\dfrac{1}{G}\sum\limits_{t=k-G+1}^k\E|h_{1,j}(X_{t,j})|^{2+\ell}+\dfrac{1}{G}\sum\limits_{t=k+1}^{k+G}\E|h_{1,j}(X_{t,j})|^{2+\ell}\\
		&\leq_{(3)} D^\ell
	\end{array}
\end{equation*}
Hence, Condition M.2 hold.  Lastly, by Assumption A.2 and the definition of $a_{t}(k)$, we have
\begin{equation*}
	\|Z_{t}(k,j)\|_{\psi_1}=\|a_t(k)h_{1,j}(X_{t,j})\|_{\psi_1}\leq \dfrac{\sqrt{n}}{\sqrt{G}}|h_{1,j}(X_{t,j})\|_{\psi_1}\leq \dfrac{\sqrt{n}}{\sqrt{G}}D.
\end{equation*}
Taking $B_n=\dfrac{\sqrt{n}}{\sqrt{G}}D$ in Proposition 2.1 of \cite{chernozhukov2017central}, we finish the proof. 

\subsection{Proof of Lemma \ref{lemma: large deviation}}\label{section: proof of  large deviation}
In this section, we prove Lemma \ref{lemma: large deviation}. Firstly, for each $G\leq k\leq n-G$ and $1\leq j\leq d$, we have
\begin{equation}\label{equ: TG of coordinate}
	\begin{array}{ll}	T_j^{\bG}(k)&=\dfrac{1}{\sqrt{G}}\sum\limits_{t_1=k-G+1}^{k}G_{t_1,j}-\dfrac{1}{\sqrt{G}}\sum\limits_{t_2=k+1}^{k+G}G_{t_2,j}.
	\end{array}
\end{equation}
and 
\begin{equation*}
	T_{j}^b(k)=	\dfrac{1}{G^{3/2}}\sum_{t_1=k-G+1}^{k}e_{t_1}\Big[\sum_{t_2=k+1}^{k+G}h(X_{t_1,j},X_{t_2,j})\Big]+\dfrac{1}{G^{3/2}}\sum_{t_2=k+1}^{k+G}e_{t_2}\Big[\sum_{t_1=k-G+1}^{k}h(X_{t_1,j},X_{t_2,j})\Big].
\end{equation*} 
Without loss of generality, we consider $k_1\leq k_2$. According to the locations of $k_1,k_2$, we consider four cases. 

\textbf{\text{Case 1:} $k_2-k_1>2G-1$.}  In this case, it is easy to see that
\begin{equation*}
	\text{Cov}(T_{i}^{\bG}(k_1),T_{j}^{\bG}(k_2))=\text{Cov}(T_{i}^{b}(k_1),T_{j}^{b}(k_2)))=0.
\end{equation*}

\textbf{\text{Case 2:}} $G-1<k_2-k_1\leq 2G-1$.  In this case, we can prove that 
\begin{equation*}
	\begin{array}{ll}
		&\text{Cov}(T_{j_1}^{\bG}(k_1),T_{j_2}^{\bG}(k_2))\\
		&=	\text{Cov}\Big(-\dfrac{1}{\sqrt{G}}\sum\limits_{t_2=k_1+1}^{k+G}G_{t_2,j_1},\dfrac{1}{\sqrt{G}}\sum\limits_{t_1=k_2-G+1}^{k_2}G_{t_1,j_2}\Big)\\
		&=\text{Cov}\Big(-\dfrac{1}{\sqrt{G}}\sum\limits_{t_2=k_2-G+1}^{k_1+G}G_{t_2,j_1},\dfrac{1}{\sqrt{G}}\sum\limits_{t_1=k_2-G+1}^{k_1+G}G_{t_1,j_2}\Big)\\
		&=-\dfrac{2G-(k_2-k_1)}{G}\text{Cov}(G_{t,j_1},G_{t,j_2})\\
		&=-\dfrac{2G-(k_2-k_1)}{G}\gamma_{j_1,j_2}.
	\end{array}
\end{equation*}
Similarly, for the bootstrap based testing statistic, conditional on $\cX$, we have
\begin{equation*}
	\begin{array}{ll}
		&	\text{Cov}(T_{j_1}^{b}(k_1),T_{j_2}^{b}(k_2))\\
		&=\text{Cov}\Big(\dfrac{1}{G^{3/2}}\sum\limits_{t_2=k_2-G+1}^{k_1+G}e_{t_2}\sum\limits_{t_1=k_1-G+1}^{k_1}h(X_{t_1,j_1},X_{t_2,j_1}),\dfrac{1}{G^{3/2}}\sum\limits_{t_1=k_2-G+1}^{k_1+G}e_{t_1}\sum\limits_{t_2=k_2+1}^{k_2+G}h(X_{t_1,j},X_{t_2,j})\Big)\\
		&=\dfrac{1}{G^{3}}\sum\limits_{t_1=k_2-G+1}^{k_1+G}\sum\limits_{t_2=k_1-G+1}^{k_1}\sum\limits_{t_3=k_2+1}^{k_2+G}h(X_{t_1,j_1},X_{t_2,j_1})h(X_{t_1,j_2},X_{t_3,j_2}).
	\end{array}
\end{equation*}
Note that for each fixed $k_1$ and $k_2$, there is no overlap of $t_1,t_2,t_3$.  Hence, it holds that
\begin{equation*}
	\E\Big[	\text{Cov}(T_{j_1}^{b}(k_1),T_{j_2}^{b}(k_2))\Big]=\text{Cov}(T_{j_1}^{\bG}(k_1),T_{j_2}^{\bG}(k_2))
\end{equation*}
Moreover, let $T^1=\{t: k_2-G+1\leq t\leq k_1+G\}$, $T^2=\{t: k_1-G+1\leq t\leq k_1\}$ and $T^3=\{t: k_2+1\leq t\leq k_2+G\}$. Let $n(k_1,k_2):=|T^1|+|T^2|+|T^3|=4G-|k_2-k_1|$. 
Recall 
\begin{equation*}
	\begin{array}{ll}
		\bh(\bX_{t_1},\bX_{t_2}):=(h(X_{t_1,1},X_{t_2,1}),\ldots,h(X_{t_1,d},X_{t_2,d}))^\top,\\
		\bh_1(\bX_{t}):=(h_{1,1}(X_{t,1}),\ldots,h_{1,d}(X_{t,d}))^\top,\\
		\bg(\bX_{t_1},\bX_{t_2}):=(g_1(X_{t_1,1},X_{t_2,1}),\ldots,g_{d}(X_{t_1,d},X_{t_2,d}))^\top.\\
	\end{array}
\end{equation*}
and $\bGamma=(\gamma_{i,j})\in \RR^{d\times d}=\text{Cov}(\bh(\bX_1))=\E \bh(\bX_1,\bX_2)\bh(\bX_1,\bX_3)^\top$. Let $H(\bx_1,\bx_2,\bx_3)=\bh(\bx_1,\bx_2)\bh(\bx_1,\bx_3)^\top\in \RR^{d\times d}$ and define the new kernel $H'(\bX_{t_{1}}, \bX_{t_2}, \bX_{t_3}) = \sum\limits_{\pi_3} \tilde{H}(\bX_{\pi_3(1)}, \bX_{\pi_3(2)}, \bX_{\pi_3(3)})$, where
\begin{equation*}
	\tilde{H}(\bX_{t_1},\bX_{t_2},\bX_{t_3})=\left\{
	\begin{array}{cc}
		H(\bX_{t_1},\bX_{t_2},\bX_{t_3})),& \text{if}~ t_1\in T^1, t_2\in T^2, t_3\in T^3\\
		0,& \text{else}.
	\end{array}
	\right.
\end{equation*}
and  $\pi_3$ \text{ is a permutation of } $t_1,t_2,t_3$. Using the above notations, we have
\begin{equation*}
	\begin{array}{ll}
		\text{Cov}(\bT^{b}(k_1),\bT^{b}(k_2))&=\dfrac{1}{G^{3}}\sum\limits_{t_1=k_2-G+1}^{k_1+G}\sum\limits_{t_2=k_1-G+1}^{k_1}\sum\limits_{t_3=k_2+1}^{k_2+G}\bh(\bX_{t_1},\bX_{t_2})\bh(\bX_{t_1},\bX_{t_3})\\
		&=\dfrac{1}{G^{3}}\sum\limits_{t_1\neq t_2 \neq t_3\in T^1\cup T^2\cup T^3}\tilde{H}(\bX_{t_1},\bX_{t_2},\bX_{t_3})\\
		&=\dfrac{1}{6G^{3}}\sum\limits_{t_1\neq t_2\neq t_3\in T^1\cup T^2\cup T^3}{H'}(\bX_{t_1},\bX_{t_2},\bX_{t_3}).\\
	\end{array}
\end{equation*}
Note that by the above notations, we see that $\text{Cov}(\bT^{b}(k_1),\bT^{b}(k_2))$ is a U-statistics of order 3. Moreover, for each fixed $G-1<k_2-k_1\leq 2G-1$, let 
\begin{equation}\label{equ: relation of W and cov}
	\begin{array}{ll}
		\bW(k_1,k_2):=	\dfrac{(n(k_1,k_2)-3)!}{n(k_1,k_2)!}\sum\limits_{t_1\neq t_2\neq t_3\in T^1\cup T^2\cup T^3}{H'}(\bX_{t_1},\bX_{t_2},\bX_{t_3})\\=\dfrac{6G^3(n(k_1,k_2)-3)!}{n(k_1,k_2)!}\text{Cov}(\bT^{b}(k_1),\bT^{b}(k_2)).
	\end{array}
\end{equation}
By Lemma E.1 in \cite{chen2018gaussian} and similar to the proof of Lemma A.1 in \cite{yu2022robust}, we can prove with probability at least $1-(nd)^{-C_0}$ for some $C_0>2$, it holds that 
\begin{equation*}
	\|\bW(k_1,k_2)-\E \bW(k_1,k_2)\|_{\infty}\leq C_1D^2\max \Big(\dfrac{\log(nd)}{n(k_1,k_2)},\dfrac{\log^2(nd)\log^2(n(k_1,k_2)d)}{n(k_1,k_2)}\Big).
\end{equation*}
By (\ref{equ: relation of W and cov}), it holds  with probability at least $1-(nd)^{-C_0}$ for some $C_0>2$ that 
\begin{equation*}
	\begin{array}{ll}
		\Big\|\text{Cov}(\bT^{b}(k_1),\bT^{b}(k_2))-\E	\text{Cov}(\bT^{b}(k_1),\bT^{b}(k_2)))\Big\|_{\infty}\\\leq C_1	\dfrac{n(k_1,k_2)!}{6G^3(n(k_1,k_2)-3)!}D^2\max \Big(\dfrac{\log(nd)}{n(k_1,k_2)},\dfrac{\log^2(nd)\log^2(n(k_1,k_2)d)}{n(k_1,k_2)}\Big)\\
	\end{array}
\end{equation*}
Note that $n(k_1,k_2):=|T^1|+|T^2|+|T^3|=4G-|k_2-k_1|$ and $G-1<k_2-k_1\leq 2G-1$, which implies that $2G+1\leq n(k_1,k_2)\leq 3G$. Hence, with probability at least $1-(nd)^{-C_0}$ for some $C_0>2$,  it holds  uniformly for all $(k_1,k_2)$ with $G-1<k_2-k_1\leq 2G-1$ that 
\begin{equation*}
	\begin{array}{ll}
		\Big\|\text{Cov}(\bT^{b}(k_1),\bT^{b}(k_2))-\E	\text{Cov}(\bT^{b}(k_1),\bT^{b}(k_2))\Big\|_{\infty}\\
		=\Big\|\text{Cov}(\bT^{b}(k_1),\bT^{b}(k_2))-\text{Cov}(\bT^{\bG}(k_1),\bT^{\bG}(k_2)))\Big\|_{\infty}\\
		\leq C_2D^2\max \Big(\sqrt{\dfrac{\log(nd)}{G}},\dfrac{\log^2(nd)\log^2(Gd)}{G}\Big).\\
	\end{array}
\end{equation*}
for some large enough constant $C_2>0$. 

\textbf{\text{Case 3:}} $0<k_2-k_1\leq G-1$.  In this case, for any $1\leq j_1,j_2\leq d$, we can prove that 
\begin{equation*}
	\begin{array}{ll}
		\text{Cov}(T_{j_1}^{\bG}(k_1),T_{j_2}^{\bG}(k_2))\\
		=	\text{Cov}\Big(\dfrac{1}{\sqrt{G}}\sum\limits_{t=k_2-G+1}^{k_1}G_{t,j_1},\dfrac{1}{\sqrt{G}}\sum\limits_{t=k_2-G+1}^{k_1}G_{t,j_2}\Big)+\text{Cov}\Big(-\dfrac{1}{\sqrt{G}}\sum\limits_{t=k_1+1}^{k_2}G_{t,j_1},\dfrac{1}{\sqrt{G}}\sum\limits_{t=k_1+1}^{k_2}G_{t,j_2}\Big)\\
		\quad+\text{Cov}\Big(\dfrac{1}{\sqrt{G}}\sum\limits_{t=k_2+1}^{k_1+G}G_{t,j_1},\dfrac{1}{\sqrt{G}}\sum\limits_{t=k_2+1}^{k_1+G}G_{t,j_2}\Big)\\
		=\dfrac{G-(k_2-k_1)}{G}\gamma_{j_1,j_2}-\dfrac{(k_2-k_1)}{G}\gamma_{j_1,j_2}+\dfrac{G-(k_2-k_1)}{G}\gamma_{j_1,j_2}
	\end{array}
\end{equation*}
This implies
\begin{equation}\label{equ: case 3 large deviation 1}
	\begin{array}{ll}
		\text{Cov}(\bT^{\bG}(k_1),\bT^{\bG}(k_2))
		=\underbrace{\dfrac{G-(k_2-k_1)}{G}\bGamma}_{\text{Cov}_1(\bT^{\bG}(k_1),\bT^{\bG}(k_2))}\underbrace{-\dfrac{(k_2-k_1)}{G}\bGamma}_{\text{Cov}_2(\bT^{\bG}(k_1),\bT^{\bG}(k_2))}+\underbrace{\dfrac{G-(k_2-k_1)}{G}\bGamma}_{\text{Cov}_3(\bT^{\bG}(k_1),\bT^{\bG}(k_2))}
	\end{array}
\end{equation}
Similarly, for the bootstrap based testing statistics, we can prove that 
\begin{equation}\label{equ: case 3 large deviation 2}
	\begin{array}{ll}
		\text{Cov}(\bT^{b}(k_1),\bT^{b}(k_2))=\text{Cov}_1(\bT^{b}(k_1),\bT^{b}(k_2))+\text{Cov}_2(\bT^{b}(k_1),\bT^{b}(k_2))+\text{Cov}_3(\bT^{b}(k_1),\bT^{b}(k_2))\\
	\end{array}
\end{equation}
where $\text{Cov}_1(\bT^{b}(k_1),\bT^{b}(k_2))$, $\text{Cov}_2(\bT^{b}(k_1),\bT^{b}(k_2))$, and $\text{Cov}_3(\bT^{b}(k_1),\bT^{b}(k_2))$ are  as
\begin{equation*}
	\begin{array}{ll}
		&\text{Cov}_1(\bT^{b}(k_1),\bT^{b}(k_2))\\
		&=\text{Cov}\Big(\dfrac{1}{G^{3/2}}\sum\limits_{t_1=k_2-G+1}^{k_1}e_{t_1}\sum\limits_{t_2=k_1+1}^{k_1+G}\bh(\bX_{t_1},\bX_{t_2}),\dfrac{1}{G^{3/2}}\sum\limits_{t'_1=k_2-G+1}^{k_1}e_{t_1'}\sum\limits_{t_2'=k_2+1}^{k_2+G}\bh(\bX_{t'_1},\bX_{t'_2})\Big)\\
		&=\dfrac{1}{G^{3}}\sum\limits_{t_1=k_2-G+1}^{k_1}\sum\limits_{t_2=k_1+1}^{k_1+G}\sum\limits_{t_3=k_2+1}^{k_2+G}\bh(\bX_{t_1},\bX_{t_2})\bh(\bX_{t_1},\bX_{t_3}).
	\end{array}
\end{equation*}
\begin{equation*}
	\begin{array}{ll}
		\text{Cov}_2(\bT^{b}(k_1),\bT^{b}(k_2))\\=\text{Cov}\Big(\dfrac{1}{G^{3/2}}\sum\limits_{t_2=k_1+1}^{k_1+G}e_{t_2}\sum\limits_{t_1=k_1-G+1}^{k_1}\bh(\bX_{t_1},\bX_{t_2}),\dfrac{1}{G^{3/2}}\sum\limits_{t'_1=k_2-G+1}^{k_2}e_{t_1'}\sum\limits_{t_2'=k_2+1}^{k_2+G}\bh(\bX_{t'_1},\bX_{t'_2})\Big)\\
		=-\dfrac{1}{G^{3}}\sum\limits_{t_1=k_1+1}^{k_2}\sum\limits_{t_2=k_1-G+1}^{k_1}\sum\limits_{t_3=k_2+1}^{k_2+G}\bh(\bX_{t_1},\bX_{t_2})\bh(\bX_{t_1},\bX_{t_3}).
	\end{array}
\end{equation*}
and 
\begin{equation*}
	\begin{array}{ll}
		\text{Cov}_3(\bT^{b}(k_1),\bT^{b}(k_2))\\=\text{Cov}\Big(\dfrac{1}{G^{3/2}}\sum\limits_{t_2=k_2+1}^{k_2+G}e_{t_2}\sum\limits_{t_1=k_1-G+1}^{k_1}\bh(\bX_{t_1},\bX_{t_2}),\dfrac{1}{G^{3/2}}\sum\limits_{t'_2=k_2+1}^{k_2+G}e_{t_2'}\sum\limits_{t_1'=k_2-G+1}^{k_2}\bh(\bX_{t'_1},\bX_{t'_2})\Big)\\
		=\dfrac{1}{G^{3}}\sum\limits_{t_1=k_2+1}^{k_1+G}\sum\limits_{t_2=k_1-G+1}^{k_1}\sum\limits_{t_3=k_2-G+1}^{k_2}\bh(\bX_{t_1},\bX_{t_2})\bh(\bX_{t_1},\bX_{t_3}).
	\end{array}
\end{equation*}
In the next, we aim to prove that
\begin{equation*}
	\begin{array}{ll}
		\|	\text{Cov}_1(\bT^{b}(k_1),\bT^{b}(k_2))-\text{Cov}_1(\bT^{\bG}(k_1),\bT^{\bG}(k_2))\|_{\infty}=o_p(1),\\ \|	\text{Cov}_2(\bT^{b}(k_1),\bT^{b}(k_2))-\text{Cov}_2(\bT^{\bG}(k_1),\bT^{\bG}(k_2))\|_{\infty}=o_p(1),\\ \|	\text{Cov}_3(\bT^{b}(k_1),\bT^{b}(k_2))-\text{Cov}_3(\bT^{\bG}(k_1),\bT^{\bG}(k_2))\|_{\infty}=o_p(1).
	\end{array}
\end{equation*}
We first consider the first term. Note that by the definition of $\text{Cov}_1(\bT^{b}(k_1),\bT^{b}(k_2))$, there is overlap between $t_2$ and $t_3$. Specifically, let $T^0=\{t: k_2-G+1\leq t\leq k_1\}$, $T^1=\{t: k_1+1\leq t\leq k_2\}$, $T^2=\{t: k_2+1\leq t\leq k_1+G\}$ and $T^3=\{t: k_1+G+1\leq t\leq k_2+G\}$. Moreover, define $H(\bx_1,\bx_2,\bx_3)=\bh(\bx_1,\bx_2)\bh(\bx_1,\bx_3)^\top\in \RR^{d\times d}$, 
and define the new kernel $H'(\bX_{t_{1}}, \bX_{t_2}, \bX_{t_3}) = \sum_{\pi_3} \tilde{H}(\bX_{\pi_3(1)}, \bX_{\pi_3(2)}, \bX_{\pi_3(3)})$, where
\begin{equation*}
	\tilde{H}(\bX_{t_1},\bX_{t_2},\bX_{t_3})=\left\{
	\begin{array}{cc}
		H(\bX_{t_1},\bX_{t_2},\bX_{t_3})),& \text{if}~ t_1\in T^0, t_2\in T^1, t_3\in T^2\cup T^3\\
		H(\bX_{t_1},\bX_{t_2},\bX_{t_3})),& \text{if}~ t_1\in T^0, t_2\in T^2, t_3\in  T^3\\
		H(\bX_{t_1},\bX_{t_2},\bX_{t_3})),& \text{if}~ t_1\in T^0, t_2\neq t_3, t_2\in T^2, t_3\in T^2 \\
		0,& \text{else}.
	\end{array}
	\right.
\end{equation*}
and  $\pi_3$ \text{ is a permutation of } $t_1,t_2,t_3$. Let $H_2(\bx_1,\bx_2)=\bh(\bx_1,\bx_2)\bh(\bx_1,\bx_2)^\top\in \RR^{d\times d}$. Then, we can rewrite $\text{Cov}_1(\bT^{b}(k_1),\bT^{b}(k_2))$ as:
\begin{equation}\label{equ: case 3 large deviation 3}
	\begin{array}{ll}
		\text{Cov}_1(\bT^{b}(k_1),\bT^{b}(k_2))\\
		=\dfrac{1}{G^{3}}\sum\limits_{t_1=k_2-G+1}^{k_1}\sum\limits_{t_2=k_1+1}^{k_1+G}\sum\limits_{t_3=k_2+1}^{k_2+G}\bh(\bX_{t_1},\bX_{t_2})\bh(\bX_{t_1},\bX_{t_3})\\
		=\dfrac{1}{G^{3}}\sum\limits_{t_1\neq t_2 \neq t_3\in T^0\cup T^1\cup T^2\cup T^3}\tilde{H}(\bX_{t_1},\bX_{t_2},\bX_{t_3})+\dfrac{1}{G^{3}}\sum\limits_{t_1\in T^0,t_2\in T^2}H_2(\bX_{t_1},\bX_{t_2})\\
		=\underbrace{\dfrac{1}{6G^{3}}\sum\limits_{t_1\neq t_2\neq t_3\in T1\cup T2\cup T3}{H'}(\bX_{t_1},\bX_{t_2},\bX_{t_3})}_{\text{Cov}_{1,1}(\bT^{b}(k_1),\bT^{b}(k_2))}+\underbrace{\dfrac{1}{G^{3}}\sum\limits_{t_1\in T^0,t_2\in T^2}H_2(\bX_{t_1},\bX_{t_2})}_{\text{Cov}_{1,2}(\bT^{b}(k_1),\bT^{b}(k_2))}\\
	\end{array}
\end{equation}
Note that by the above definition, we have
\begin{equation}\label{equ: case 3 large deviation 4}
	\begin{array}{ll}
		\E[\text{Cov}_1(\bT^{b}(k_1),\bT^{b}(k_2))]&=\E[\text{Cov}_{1,1}(\bT^{b}(k_1),\bT^{b}(k_2))]+\E[\text{Cov}_{1,2}(\bT^{b}(k_1),\bT^{b}(k_2))]\\
		&=\dfrac{(G-(k_2-k_1))(G^2-(G-(k_2-k_1))}{G^3}\bGamma+\dfrac{(G-(k_2-k_1))}{G^3}\bSigma
	\end{array}
\end{equation}
where $\bSigma:=\text{Cov}(\bh(\bX_1,\bX_2)):=\E \bh(\bX_1,\bX_2)\bh(\bX_1,\bX_2)^\top$. 
Similar to the analysis of Case 3, we can prove that, with probability at least $1-(nd)^{-C_0}$ for some $C_0>2$,  it holds  uniformly for all $(k_1,k_2)$ with $0<k_2-k_1\leq G-1$ that 
\begin{equation}\label{equ: case 3 large deviation 5}
	\begin{array}{ll}
		\Big\|\text{Cov}_{1,1}(\bT^{b}(k_1),\bT^{b}(k_2))-\E	\text{Cov}_{1,1}(\bT^{b}(k_1),\bT^{b}(k_2))\Big\|_{\infty}\\
		\leq C_2D^2\max \Big(\sqrt{\dfrac{\log(nd)}{G}},\dfrac{\log^2(nd)\log^2(Gd)}{G}\Big).\\
	\end{array}
\end{equation}
for some large enough constant $C_2>0$. Moreover, by Lemma A.4 in \cite{yu2022robust}, we can prove that 
with probability at least $1-(nd)^{-C_0}$ for some $C_0>2$,  it holds  uniformly for all $(k_1,k_2)$ with $0<k_2-k_1\leq G-1$ that 
\begin{equation}\label{equ: case 3 large deviation 6}
	\begin{array}{ll}
		\Big\|\text{Cov}_{1,2}(\bT^{b}(k_1),\bT^{b}(k_2))-\E	\text{Cov}_{1,2}(\bT^{b}(k_1),\bT^{b}(k_2))\Big\|_{\infty}\\
		\leq C_2D^2\max \Big(\sqrt{\dfrac{\log(nd)}{G}},\dfrac{\log^2(nd)\log^2(Gd)}{G}\Big).\\
	\end{array}
\end{equation}
for some large enough constant $C_2>0$. Combining (\ref{equ: case 3 large deviation 1})-(\ref{equ: case 3 large deviation 6}), we have 
\begin{equation*}
	\begin{array}{ll}
		\|	\text{Cov}_1(\bT^{b}(k_1),\bT^{b}(k_2))-\text{Cov}_1(\bT^{\bG}(k_1),\bT^{\bG}(k_2))\|_{\infty}\\
		\leq 	\|	\text{Cov}_{1,1}(\bT^{b}(k_1),\bT^{b}(k_2))-\E	\text{Cov}_{1,1}(\bT^{b}(k_1),\bT^{b}(k_2))\|_{\infty}\\+	\|	\text{Cov}_{1,2}(\bT^{b}(k_1),\bT^{b}(k_2))-\E	\text{Cov}_{1,2}(\bT^{b}(k_1),\bT^{b}(k_2))\|_{\infty}\\+\|\E	\text{Cov}_{1,1}(\bT^{b}(k_1),\bT^{b}(k_2))+\E	\text{Cov}_{1,2}(\bT^{b}(k_1),\bT^{b}(k_2))-\dfrac{G-(k_2-k_1)}{G}\bGamma\|_{\infty}\\
		\leq C_2D^2\max \Big(\sqrt{\dfrac{\log(nd)}{G}},\dfrac{\log^2(nd)\log^2(Gd)}{G}\Big)\\+\|\dfrac{(G-(k_2-k_1))(G^2-(G-(k_2-k_1))}{G^3}\bGamma-\dfrac{G-(k_2-k_1)}{G}\bGamma\|_{\infty}+\|\dfrac{(G-(k_2-k_1))}{G^3}\bSigma\|_{\infty}\\
		\leq C_2D^2\max \Big(\sqrt{\dfrac{\log(nd)}{G}},\dfrac{\log^2(nd)\log^2(Gd)}{G}\Big)+o(\dfrac{1}{G^2})=o_p(1)
	\end{array}
\end{equation*}
where the last inequality uses the fact that $0<k_2-k_1\leq G-1$. 
Using a similar proof, we can prove that 
\begin{equation*}
	\begin{array}{ll}
		\|\text{Cov}_2(\bT^{b}(k_1),\bT^{b}(k_2))-\text{Cov}_2(\bT^{\bG}(k_1),\bT^{\bG}(k_2))\|_{\infty}=O_p(D^2\max \Big(\sqrt{\dfrac{\log(nd)}{G}},\dfrac{\log^2(nd)\log^2(Gd)}{G}\Big)),\\ \|	\text{Cov}_3(\bT^{b}(k_1),\bT^{b}(k_2))-\text{Cov}_3(\bT^{\bG}(k_1),\bT^{\bG}(k_2))\|_{\infty}=O_p(D^2\max \Big(\sqrt{\dfrac{\log(nd)}{G}},\dfrac{\log^2(nd)\log^2(Gd)}{G}\Big)).
	\end{array}
\end{equation*}
which implies 
\begin{equation*}
	\|\text{Cov}(\bT^{b}(k_1),\bT^{b}(k_2))-\text{Cov}(\bT^{\bG}(k_1),\bT^{\bG}(k_2))\|_{\infty}=O_p(D^2\max \Big(\sqrt{\dfrac{\log(nd)}{G}},\dfrac{\log^2(nd)\log^2(Gd)}{G}\Big)).
\end{equation*}
This completes the proof of Case 3. 

\textbf{\text{Case 4:}} $k_2-k_1=0$.  Using a similar analysis of Cases 2 and 3, we can prove that  with probability at least $1-(nd)^{-C_0}$ for some $C_0>2$,  it holds  uniformly for all $(k_1,k_2)$ with $k_1=k_2$ that 
\begin{equation*}
	\|\text{Cov}(\bT^{b}(k_1),\bT^{b}(k_2))-\text{Cov}(\bT^{\bG}(k_1),\bT^{\bG}(k_2))\|_{\infty}\leq C_2D^2\max \Big(\sqrt{\dfrac{\log(nd)}{G}},\dfrac{\log^2(nd)\log^2(Gd)}{G}\Big).
\end{equation*}
Lastly, using the results of Cases 1-4 and taking the union over $G\leq k_1,k_2\leq n-G$, we prove that with probability at least $1-(nd)^{-C_0}$ for some $C_0>0$, it holds that
\begin{equation*}
	\max_{k_1,k_2}\|\text{Cov}(\bT^{b}(k_1),\bT^{b}(k_2))-\text{Cov}(\bT^{\bG}(k_1),\bT^{\bG}(k_2))\|_{\infty}\leq C_2D^2\max \Big(\sqrt{\dfrac{\log(nd)}{G}},\dfrac{\log^2(nd)\log^2(Gd)}{G}\Big),
\end{equation*}
which completes the proof of Lemma \ref{lemma: large deviation}.

\subsection{Proof of Lemma \ref{lemma: upper bound of variance of bootstrap samples}}\label{sec: proof of upper bound of bootstrap samples}
\begin{proof}
	In this section, we prove Lemma \ref{lemma: upper bound of variance of bootstrap samples}. Recall $T^b_{j}(k)$ defined in $(\ref{equ: bootstrap based testing statistics for each coordinate})$. It has the following form:
	\begin{equation*}
		T_{j}^b(k)=	\dfrac{1}{G^{3/2}}\sum_{t_1=k-G}^{k}e_{t_1}\Big[\sum_{t_2=k+1}^{k+G}h(X_{t_1,j},X_{t_2,j})\Big]+\dfrac{1}{G^{3/2}}\sum_{t_2=k+1}^{k+G}e_{t_2}\Big[\sum_{t_1=k-G}^{k}h(X_{t_1,j},X_{t_2,j})\Big].
	\end{equation*} 
	Note that 	$\{e_t\}_{t=1}^n$ are independent $N(0,1)$ random variables. Hence,  for each fixed $j$ and $k=G,\ldots,n-G$, conditional on $\cX=\{\bX_1,\ldots,\bX_n\}$, we have
	\begin{equation}\label{equ: variance of bootstrap samples}
		\begin{array}{ll}
			&	\text{Var}(T_j^b(k))\\
			&=_{(1)}\dfrac{1}{G^3}\sum\limits_{t_1=k-G}^k\Big(\sum\limits_{t_2=k+1}^{k+G}h(X_{t_1,j},X_{t_2,j})\Big)^2+\sum\limits_{t_2=k+1}^{k+G}\Big(\sum\limits_{t_1=k-G}^{k}h(X_{t_1,j},X_{t_2,j})\Big)^2\\
			&	\leq_{(2)} \dfrac{1}{G^2}\sum\limits_{t_1=k-G}^k\sum\limits_{t_2=k+1}^{k+G}h^2(X_{t_1,j},X_{t_2,j})+\dfrac{1}{G^2}\sum\limits_{t_2=k+1}^{k+G}\sum\limits_{t_1=k-G}^kh^2(X_{t_1,j},X_{t_2,j})\\
			&=\dfrac{2}{G^2}\sum\limits_{t_1=k-G}^k\sum\limits_{t_2=k+1}^{k+G}h^2(X_{t_1,j},X_{t_2,j}),
		\end{array}
	\end{equation}
	where $(1)$ comes from the Cauchy-Schwarz inequality. Note that we consider the alternative that there may exist multiple change points. Hence, we can split the search location $k\in[G,n-G]$ into three types of groups:		
	\begin{equation}\label{equation: cT1-cTm-2}
		\begin{array}{ll}
			\cT_1=\{k: k\in[1,\gamma_{1}-G]\},			
			\cT_2=\{ k: k\in[\gamma_{1}+G,\gamma_{2}-G]\},
			\ldots, 
			\cT_{M_0+1}=\{k: k\in[\gamma_{M_0}+G,n]\}.
		\end{array}
	\end{equation}
	and 		
	\begin{equation}\label{equ: cT1'-cTm'-2}
		\begin{array}{ll}
			\cT'_{1,1}=\{k: k\in[\gamma_{1}-G,\gamma_{1}]\},~\cT'_{1,2}=\{k: k\in[\gamma_{1}+1,\gamma_{1}+G]\}\\
			\cT'_{2,1}=\{k: k\in[\gamma_{2}-G,\gamma_{2}]\},~\cT'_{2,2}=\{k: k\in[\gamma_{2}+1,\gamma_{2}+G]\}\\
			\qquad	\qquad \qquad\qquad\qquad\qquad\qquad\vdots\\
			\cT'_{M_0,1}=\{k: k\in[\gamma_{M_0}-G,\gamma_{M_0}]\},~\cT'_{M_0,2}=\{k: k\in[\gamma_{M_0}+1,\gamma_{M_0}+G]\}.
		\end{array}
	\end{equation}
	Based on the above groups, we consider three cases: \\
	\textbf{Case 1: }  
	For $k\in \cT_{m,2}'$ there is one change point in the interval $[k-G,k+G]$. Hence, we can decompose it into two terms:
	\begin{equation*}
		\sum\limits_{t_1=k-G}^k\sum\limits_{t_2=k+1}^{k+G}h^2(X_{t_1,j},X_{t_2,j})=\underbrace{\sum_{t_1=k-G}^{\gamma_m}\sum_{t_2=k+1}^{k+G}h^2(X_{t_1,j},X_{t_2,j})}_{V_{j,1}(k)}+\underbrace{\sum^{k}_{t_1=\gamma_m}\sum_{t_2=k+1}^{k+G}h^2(X_{t_1,j},X_{t_2,j})}_{V_{j,2}(k)}.
	\end{equation*}
	Next, we prove $\max_{1\leq j\leq d}V_{j,1}(k)$ and $\max_{1\leq j\leq d}V_{j,2}(k)$ are unifomly bounded. We first analyze $V_{j,1}(k)$.  Let $n_1(k)=\gamma_m+G-k$ and $n_2(k)=G$. Moreover, let $\tilde{t}_1=t_1-(k-G)+1$, $\tilde{t}_2=t_2-(k+1)+1$, 
	$\tilde{X}_{\tilde{t}_1,j}=X_{\tilde{t}_1,j}$ and $\tilde{Y}_{\tilde{t}_2,j}=X_{\tilde{t}_2,j}$. Hence, we have
	\begin{equation*}
		\begin{array}{ll}
			V_{j,1}(k)&=\sum\limits_{t_1=k-G}^{\gamma_m}\sum\limits_{t_2=k+1}^{k+G}h^2(X_{t_1,j},X_{t_2,j})\\
			&=\sum\limits_{\tilde{t}_1=1}^{n_1(k)}\sum\limits_{\tilde{t}_2=1}^{n_2(k)}h^2(\tilde{X}_{\tilde{t}_1,j},\tilde{Y}_{\tilde{t}_2,j})-\E[h^2(\tilde{X}_{\tilde{t}_1,j},\tilde{Y}_{\tilde{t}_2,j})]+\sum\limits_{\tilde{t}_1=1}^{n_1(k)}\sum\limits_{\tilde{t}_2=1}^{n_2(k)}\E[h^2(\tilde{X}_{\tilde{t}_1,j},\tilde{Y}_{\tilde{t}_2,j})].
		\end{array}
	\end{equation*}
	Note that for $k\in \cT'_{m,2}$, we have $n_1(k)\leq n_2(k)$. Let $$t^+=C_1B^2n_2(k)\max(\sqrt{n_1(k)\log(nd)},\log^2(nd)\log^2(n_1(k)d)),$$ for some big enough universal constant $C_1>0$, where $B:=D+\max_m\max_j|\theta_j^m|$ with $D$ defined in {Assumption B.2}. {By Assumptions B.1, B.2, and Lemma A.4 in \cite{yu2022robust} }, we can prove that 
	\begin{equation}\label{equ: upper bound of V(j,k)-1}
		\P\Big(\max_{1\leq j\leq d}\sum\limits_{\tilde{t}_1=1}^{n_1(k)}\sum\limits_{\tilde{t}_2=1}^{n_2(k)}h^2(\tilde{X}_{\tilde{t}_1,j},\tilde{Y}_{\tilde{t}_2,j})-\E[h^2(\tilde{X}_{\tilde{t}_1,j},\tilde{Y}_{\tilde{t}_2,j})]\geq 2t^+,k\in  \cT'_{m,2} \Big)\leq (nd)^{-C_2}
	\end{equation}
	for some $C_2>3$. Moreover, { by Assumption B.2}, we have
	\begin{equation}\label{equ: upper bound of V(j,k)-2}
		\begin{array}{ll}
			&\sum\limits_{\tilde{t}_1=1}^{n_1(k)}\sum\limits_{\tilde{t}_2=1}^{n_2(k)}\E[h^2(\tilde{X}_{\tilde{t}_1,j},\tilde{Y}_{\tilde{t}_2,j})]\\
			&=_{(1)}n_1(k)n_2(k)\E[h^2(\tilde{X}_{\tilde{t}_1,j},\tilde{Y}_{\tilde{t}_2,j})]\\
			&=_{(2)}n_1(k)n_2(k)\E[h^2({X}_{\gamma_m-1,j},{X}_{\gamma_m+1,j})]\\
			&=_{(3)}n_1(k)n_2(k)\E(h({X}_{\gamma_m-1,j},{X}_{\gamma_m+1,j})-\theta_j^m+\theta_j^m)^2\\
			&\leq_{(4)} n_1(k)n_2(k)(2D+2|\theta_j^m|^2)\leq 2n_1(k)n_2(k)B^2,
		\end{array}
	\end{equation}
	where $(4)$ comes from {Assumption B.2}.  Considering (\ref{equ: upper bound of V(j,k)-1}) and (\ref{equ: upper bound of V(j,k)-2}),  we have
	\begin{equation}\label{equ: upper bound of V(j,k)-3}
		\P\Big(\max_{1\leq j\leq d}V_{j,1}(k)\geq 2(t^++n_1(k)n_2(k)B^2),k\in  \cT'_{m,2} \Big)\leq (nd)^{-C_2}.
	\end{equation}
	For $V_{2,j}(k)$, using a similar analysis, by letting $n'_1(k)=k-\gamma_m$ and $n'_2(k)=G$, we can prove that 
	\begin{equation}\label{equ: upper bound of V(j,k)-3}
		\P\Big(\max_{1\leq j\leq d}V_{j,2}(k)\geq 2(t^{++}+n'_1(k)n'_2(k)B^2),k\in  \cT'_{m,2} \Big)\leq (nd)^{-C_2}.
	\end{equation}
	where 
	$$t^{++}=C_1B^2n'_2(k)\max(\sqrt{n'_1(k)\log(nd)},\log^2(nd)\log^2(n'_1(k)d)),$$ for some big enough universal constant $C_1>0$. Note that $n_1(k)\leq G$ and $n_1'(k)\leq G$ for $k\in  \cT'_{m,2} $. Let 
	$$t^{'}=4(C_1B^2G\max(\sqrt{G\log(nd)},\log^2(nd)\log^2(Gd))+G^2B^2).$$
	By the above bound, taking the union bound over $m=1,\ldots,M_0$, we have proved that 
	\begin{equation*}
		\P\Big(V_{j,1}(k)+V_{j,2}(k)\geq t',k\in  \cT'_{m,2},m=1,\ldots,M_0 \Big)\leq (nd)^{-C_2}.
	\end{equation*}
	for some $C_2>2$.\\
	\textbf{Case 2: }  Similar to the analysis of Case 1, we can prove that  
	\begin{equation*}
		\begin{array}{ll}
			\P\Big(\max_{1\leq j\leq d}	\sum\limits_{t_1=k-G}^k\sum\limits_{t_2=k+1}^{k+G}h^2(X_{t_1,j},X_{t_2,j})\\
			\leq 4(C_1B^2G\max(\sqrt{G\log(nd)},\log^2(nd)\log^2(Gd))+G^2B^2),k\in  \cT'_{m,1},m=1,\ldots,M_0\Big)\\\leq (nd)^{-C_2}
		\end{array}
	\end{equation*}
	for some $C_2>2$.\\
	\textbf{Case 3: }   Similar to the analysis of Case 1, we can prove that  
	\begin{equation*}
		\begin{array}{ll}
			\P\Big(\max_{1\leq j\leq d}	\sum\limits_{t_1=k-G}^k\sum\limits_{t_2=k+1}^{k+G}h^2(X_{t_1,j},X_{t_2,j})\\
			\leq 4(C_1B^2G\max(\sqrt{G\log(nd)},\log^2(nd)\log^2(Gd))+G^2B^2),k\in  \cT_{m},m=1,\ldots,M_0+1\Big)\\\leq (nd)^{-C_2}
		\end{array}
	\end{equation*}
	for some $C_2>2$.
	Lastly, considering  (\ref{equ: variance of bootstrap samples}) and the obtained upper bound of the three cases, we have with probability at least $1-3(nd)^{-C_2}$, it holds uniformly over $k=G,\ldots,n-G$ that 
	\begin{equation*}
		\begin{array}{ll}
			&\max_{1\leq j\leq d}\text{Var}(T_j^b(k))\\&\leq \dfrac{2}{G^2}\sum\limits_{t_1=k-G}^k\sum\limits_{t_2=k+1}^{k+G}h^2(X_{t_1,j},X_{t_2,j})\\
			&\leq_{(1)} 8\dfrac{(C_1B^2G\max(\sqrt{G\log(nd)},\log^2(nd)\log^2(Gd))+G^2B^2)}{G^2}\\
			&\leq_{(2)} 8(B^2+o(1))\leq 9B^2,
		\end{array}
	\end{equation*}
	where $(2)$ comes from the assumption that $G\gg \log^2(nd)$ and $\dfrac{\log^2(nd)\log^2(Gd)}{G}=o(1)$, which finishes the proof. 
\end{proof}
\subsection{Proof of Lemma \ref{lemma: maximum signal at the identified cpt}}\label{sec: proof of lemma at the biggest signal}
\begin{proof}
	In this section, we prove $\liminf_{n\rightarrow \infty}\theta^m_{j^*}/\|\btheta^m\|_{\infty}\geq 1$. 	We give the proof by contradiction. Suppose there is a constant $c<1$ such that 
	\begin{equation*}
		\theta^m_{j^*}\leq c\|\btheta^m\|_{\infty}.
	\end{equation*}
	On one hand, by the decomposition of $\bT(k) $ in (\ref{equ: decomposition T_{j,k}}) and (\ref{equ: Hoeffding T_{j,k}}), at time point $\hat{\gamma}_m$, we have:
	\begin{equation}\label{equ: upper bound of statistics at k}
		\begin{array}{ll}
			\|\bT(\hat{\gamma}_m)\|_\infty=T_{j^*}(\hat{\gamma}_m)\\
			\leq_{(1)} \dfrac{\gamma_m-\hat{\gamma}_m+G}{G^{1/2}}\theta_{j^*}^m+\Big|\dfrac{1}{\sqrt{G}}\sum\limits_{t_1=\hat{\gamma}_m-G}^{\gamma_{m}}h_{1,j^*}(X_{t_1,j^*})\Big|+\Big|\dfrac{\gamma_{m}-\hat{\gamma}_m+G}{G^{3/2}}\sum\limits_{t_2=\hat{\gamma}_m+1}^{\hat{\gamma}_m+G}h_{2,j^*}(X_{t_2,j^*})\Big|\\
			+\Big|\dfrac{1}{G^{3/2}}\sum\limits_{t_1=\hat{\gamma}_m-G}^{\gamma_m}\sum\limits_{t_2=\hat{\gamma}_m+1}^{\hat{\gamma}_m+G}g_{j^*}(X_{t_1,j^*},X_{t_2,j^*})\Big|+\Big|\dfrac{1}{G^{3/2}}\sum^{\hat{\gamma}_m}\limits_{t_1=\gamma_m}\sum\limits_{t_2=\hat{\gamma}_m+1}^{\hat{\gamma}_m+G}h(X_{t_1,j^*},X_{t_2,j^*})\Big|\\
			\leq_{(2)} \dfrac{\gamma_m-\hat{\gamma}_m+G}{G^{1/2}}c\|\btheta^m\|_{\infty}+G^{-1/2}\max\Big(O_p(\sqrt{(\gamma_m+G-\hat{\gamma}_m)\log(nd)}),O_p(\log(nd))\Big)\\+\dfrac{\gamma_{m}-\hat{\gamma}_m+G}{G^{3/2}}O_p(\sqrt{G\log(nd)})+o_p(\sqrt{\log(nd)})\\
			\leq_{(3)} c\sqrt{G}\|\btheta^m\|_{\infty}+O_p(\sqrt{\log(nd)})\\
			\leq_{(4)} c\sqrt{G}(1+o_p(1))\|\btheta^m\|_{\infty}
		\end{array}
	\end{equation}
	where $(2)$ comes from {Lemmas \ref{lemma: concentration for maxsimum sub-exponential} and \ref{lemma: Hoeffding's residual}}, $(3)$ comes from the fact that ${\gamma}-\hat{\gamma}_m+G\leq G$, $(4)$ comes from the assumption that $G\min_{1\leq m \leq M_0}\|\btheta^m\|^2_{\infty}\gg \log(nd)$. 
	
	On the other hand, 	at time point $\gamma_m$, we have:
	\begin{equation}\label{equ: lower bound of the statistics at the true cpt}
		\begin{array}{ll}
			\|\bT({\gamma}_m)\|_\infty= \max_{j}|T_{j}({\gamma}_m)|\\
			\geq_{(1)} \dfrac{G}{G^{1/2}}\|\btheta^m\|_{\infty}-\max_{j}\Big|\dfrac{1}{\sqrt{G}}\sum\limits_{t_1=\gamma_m-G}^{\gamma_{m}}h_{1,j}(X_{t_1,j})\Big|-\max_{j}\Big|\dfrac{G}{G^{3/2}}\sum\limits_{t_2=\gamma_m+1}^{\gamma_m+G}h_{2,j}(X_{t_2,j})\Big|\\
			+\max_{j}\Big|\dfrac{1}{G^{3/2}}\sum\limits_{t_1=\gamma_m-G}^{\gamma_m}\sum\limits_{t_2=\gamma_m+1}^{\gamma_m+G}g_j(X_{t_1,j},X_{t_2,j})\Big|\\
			\geq_{(2)} \sqrt{G}\|\btheta^m\|_{\infty}-O_p(\sqrt{\log(nd)})-o_p(\sqrt{\log(nd)})\\
			\geq_{(4)} \sqrt{G}(1-o_p(1))\|\btheta^m\|_{\infty}.
		\end{array}
	\end{equation}	
	Considering the above results, we have: $\P(\|\bT(\gamma_m)\|_{\infty}>\|\bT(\hat{\gamma}_m)\|_{\infty})\rightarrow 1$, which is contradicted to the fact that $\hat{\gamma}_m$ is the maximizer of $\|\bT(k)\|_{\infty}$. This completes the proof. 
\end{proof}

\subsection{Proof of Lemma \ref{lemma: sign consistent}}\label{sec: proof of lemma of the sign consistency}
\begin{proof}
	
	{\bf{Proof of $\cH_1$}}. Without loss of generality, we assume $\hat{\gamma}_m\geq \gamma_m$. The proof proceeds in two steps. In Step 1, we prove that 
	\begin{equation}
		\big|\max_{j\in\{1,\ldots,d\}}T_j(\hat{\gamma}_m)\big|\geq \big|\min_{j\in\{1,\ldots,d\}}T_j(\hat{\gamma}_m)\big|
	\end{equation}
	By noting that 	$$\max_{j\in\{1,\ldots,d\}}|T_j(\hat{\gamma}_m)|=
	\big|\max_{j\in\{1,\ldots,d\}}T_j(\hat{\gamma}_m)\big|\vee \big|\min_{j\in\{1,\ldots,d\}}T_j(\hat{\gamma}_m)\big|$$
	we have $\max_{j\in\{1,\ldots,d\}}|T_j(\hat{\gamma}_m)|=
	\big|\max_{j\in\{1,\ldots,d\}}T_j(\hat{\gamma}_m)\big|$. 
	In Step 2, we prove $\max_{j\in\{1,\ldots,d\}}T_j(\hat{\gamma}_m)\geq 0$. Note that by definition, $T_{j^*}(\hat{\gamma}_m)=|\max\limits_{j\in\{1,\ldots,d\}}T_j(\hat{\gamma}_m)|$. Combining Steps 1 and 2, we complete the proof. 
	
	By the decomposition of $T_{j}(k)$ at $k=\hat{\gamma}_m$ with $\hat{\gamma}_m\geq \gamma_m$  in (\ref{equ: decomposition T_{j,k}}) and (\ref{equ: Hoeffding T_{j,k}}), at time point $\hat{\gamma}_m$, we have:
	\begin{equation*}
		\begin{array}{ll}
			T_{j}(\hat{\gamma}_m)=_{(1)} \dfrac{\gamma_m-\hat{\gamma}_m+G}{G^{1/2}}\theta_{j}^m+\dfrac{1}{\sqrt{G}}\sum\limits_{t_1=\hat{\gamma}_m-G}^{\gamma_{m}}h_{1,j}(X_{t_1,j})+\dfrac{\gamma_{m}-\hat{\gamma}_m+G}{G^{3/2}}\sum\limits_{t_2=\hat{\gamma}_m+1}^{\hat{\gamma}_m+G}h_{2,j}(X_{t_2,j})\\
			+\dfrac{1}{G^{3/2}}\sum\limits_{t_1=\hat{\gamma}_m-G}^{\gamma_m}\sum\limits_{t_2=\hat{\gamma}_m+1}^{\hat{\gamma}_m+G}g_{j}(X_{t_1,j},X_{t_2,j})+\dfrac{1}{G^{3/2}}\sum^{\hat{\gamma}_m}\limits_{t_1=\gamma_m}\sum\limits_{t_2=\hat{\gamma}_m+1}^{\hat{\gamma}_m+G}h(X_{t_1,j},X_{t_2,j})\\
		\end{array}
	\end{equation*}
	Note that by {Lemma \ref{lemma: big error bound of initial estimators}}, we have $|\hat{\gamma}_m-\gamma_m|\leq G/2$. Moreover, by Lemma \ref{lemma: concentration for maxsimum sub-exponential}, we have  with probability at least $1-C_1(nd)^{-C_2}$, we have
	\begin{equation*}
		\begin{array}{ll}
			&	\max\limits_{1\leq j \leq d}|\dfrac{1}{\sqrt{G}}\sum\limits_{t_1=\hat{\gamma}_m-G}^{\gamma_{m}}h_{1,j}(X_{t_1,j})|\\
			&\leq_{(1)} C_2\dfrac{1}{G^{1/2}}\max(\sqrt{(\gamma_m-\hat{\gamma}_m+G)\log nd},\log(nd))\\
			&\leq_{(2)} C_2\dfrac{1}{G^{1/2}}\max(\sqrt{G\log nd},\log(nd))\\
			&\leq_{(3)}  C_2\dfrac{1}{G^{1/2}}\sqrt{G\log nd}=C_3\sqrt{\log(nd)}.
		\end{array}
	\end{equation*}
	where $(1)$ comes from Lemma \ref{lemma: concentration for maxsimum sub-exponential}, $(2)$ is due to the fact that $|\hat{\gamma}_m-\gamma_m|\leq G/2$, and $(3)$ uses the assumption that $G\gg \log^2(nd)$. Simiarly, we have 
	\begin{equation*}
		\begin{array}{ll}
			&	\max\limits_{1\leq j\leq d}	\Big|\dfrac{\gamma_{m}-\hat{\gamma}_m+G}{G^{3/2}}\sum\limits_{t_2=\hat{\gamma}_m+1}^{\hat{\gamma}_m+G}h_{2,j}(X_{t_2,j})\Big|\\
			&\leq C_2\dfrac{\gamma_{m}-\hat{\gamma}_m+G}{G^{3/2}}\max(\sqrt{G\log(nd)},\log(nd))\\
			&\leq C_2\dfrac{G}{G^{3/2}}\sqrt{G\log(nd)}\leq C_3\sqrt{\log(nd)}
		\end{array}
	\end{equation*}
	Moreover, using Lemma \ref{lemma: Hoeffding's residual}, with probability at least $1-C_1(nd)^{-C_2}$, we have
	\begin{equation*}
		\begin{array}{ll}
			\max\limits_{1\leq j\leq d}\Big|\dfrac{1}{G^{3/2}}\sum\limits_{t_1=\hat{\gamma}_m-G}^{\gamma_m}\sum\limits_{t_2=\hat{\gamma}_m+1}^{\hat{\gamma}_m+G}g_{j}(X_{t_1,j},X_{t_2,j})\Big|\\
			\leq_{(1)} C_2\dfrac{1}{G^{3/2}}\log(nd)\sqrt{\gamma_{m}-\hat{\gamma}_m+G}\sqrt{G}\Big(\sigma^2+\sqrt{\dfrac{\log^2(nd)}{\gamma_{m}-\hat{\gamma}_m+G}}+\sqrt{\dfrac{\log^2(nd)}{G}}\Big)\\
			\leq_{(2)} C_2\log(nd)\dfrac{G}{G^{3/2}}\Big(\sigma^2+\sqrt{\dfrac{\log^2(nd)}{G/2}}+\sqrt{\dfrac{\log^2(nd)}{G}}\Big)\\
			\leq_{(3)} C_3\dfrac{\log(nd)}{G^{1/2}}\ll_{(4)} \sqrt{\log(nd)}
		\end{array}
	\end{equation*}
	where $(1)$ comes from Lemma \ref{lemma: Hoeffding's residual}, $(2)$ comes from the fact that $|\hat{\gamma}_m-\gamma_m|\leq G/2$, and $(3)$ and $(4)$ uses the assumption that $G\gg \log^2(nd)$. Lastly, using a similar proof, we can prove that 
	\begin{equation*}
		\begin{array}{ll}
			\max\limits_{1\leq j \leq d}\Big|	\dfrac{1}{G^{3/2}}\sum^{\hat{\gamma}_m}\limits_{t_1=\gamma_m}\sum\limits_{t_2=\hat{\gamma}_m+1}^{\hat{\gamma}_m+G}h(X_{t_1,j},X_{t_2,j})\Big|=O_p(\sqrt(\log(nd))).
		\end{array}
	\end{equation*}
	The above results impliy that with probability at least $1-C_1(nd)^{-C_2}$, it holds that 
	\begin{equation*}
		\max_{1\leq j\leq d}\Big|T_{j}(\hat{\gamma}_m)-\dfrac{\gamma_m-\hat{\gamma}_m+G}{G^{1/2}}\theta_{j}^m\Big|\leq C^*\sqrt{\log(nd)}:=K^*
	\end{equation*}
	for some large enough $C^*>0$.  Note that for any sequences $\{a_i\}$ and $\{b_i\}$, we have
	$|\max (a_i)-\max (b_i)|\leq \max_{i}|a_i-b_i|$. Hence, it holds that 
	\begin{equation}\label{equ: lower bound of max}
		\begin{array}{ll}
			&\max_{1\leq j\leq d}T_{j}(\hat{\gamma}_m)\\
			&\geq \max_{1\leq j\leq d}\dfrac{\gamma_m-\hat{\gamma}_m+G}{G^{1/2}}\theta_{j}^m-K^*\\
			&\geq_{(1)} \dfrac{G/2}{G^{1/2}}\|\btheta^{(m)}\|_{\infty}-K^*\\
			&\geq_{(2)} K^*
		\end{array}
	\end{equation}
	where $(1)$ comes from the fact that $|\hat{\gamma}_m-\gamma_m|\leq G/2$ and $(2)$ uses the assumption that $G\|\btheta^{(m)}\|^2_{\infty}\gg \log(nd)$.  Moreover, note that for any sequences $\{a_i\}$ and $\{b_i\}$, we have 
	\begin{equation*}
		\min(a_i)-\min (b_i)\geq \min (a_i-b_i)\geq \min(-|a_i-b_i|)=-\max (|a_i-b_i|).
	\end{equation*}
	Let $T_{j}(\hat{\gamma}_m)=a_j$ and $\dfrac{\gamma_m-\hat{\gamma}_m+G}{G^{1/2}}\theta_{j}^m=b_j$, we have
	\begin{equation}\label{equ: lower bound of min}
		\min_{1\leq j\leq d}T_{j}(\hat{\gamma}_m)\geq -K^*.
	\end{equation}
	Combining (\ref{equ: lower bound of max}) and (\ref{equ: lower bound of min}), we have $\big|\max_{j\in\{1,\ldots,d\}}T_j(\hat{\gamma}_m)\big|\geq \big|\min_{j\in\{1,\ldots,d\}}T_j(\hat{\gamma}_m)\big|$. Moreover, (\ref{equ: lower bound of max}) further implies that $\max_{1\leq j\leq d}T_{j}(\hat{\gamma}_m)\geq K^*>0$, which completes the proof.

	Proof of $\cH_2$. Note that the proof of $\cH_2$ is similar and easier, to save space, we omit the details.

\end{proof}

\subsection{Proof of Lemma \ref{lemma: big error bound of initial estimators}}\label{sec: proof of lemma of intial estimator}
\begin{proof}
	Recall $\hat{\gamma}_m$ defined in (\ref{equ: definition of gamma-hat}). We give the proof by contradiction. Specifically, we will prove that  with probability at least $1-C_1(nd)^{-C_2}$, it holds that 
	\begin{equation*}
		\|\bT(\gamma_m)\|_{\infty}> \sup_{k: |k-\gamma_m|\geq G/2}\|\bT(k)\|_{\infty}. 
	\end{equation*}
	Note that $\hat{\gamma}_m$ is the maximizer of $\|\cT(k)\|_{\infty}$, we must have $|\hat{\gamma}_m-\gamma_m|\leq G/2$, which finishes the proof.  Without loss of generality, we assume $\hat{\gamma}_m\geq \gamma_m$. On one hand, by (\ref{equ: lower bound of the statistics at the true cpt}), at the true change point $\gamma_m$, we have
	\begin{equation}\label{equ: lower bound of statistics at gamma-m}
		\|\bT({\gamma}_m)\|_\infty\geq \sqrt{G}(1-o_p(1))\|\btheta^m\|_{\infty}
	\end{equation}
	On the other hand, similar to the analysis in (\ref{equ: upper bound of statistics at k}),  uniformly in   $k\in[\gamma_m+G/2,\gamma_m+G]$, with probability at least $1-C_1(nd)^{-C_2}$, we have
	\begin{equation}\label{equ: upper bound of statistics at kk}
		\begin{array}{ll}
			\max_{1\leq j\leq d}|T_{j}(k)|\\
			\leq_{(1)} \max\limits_{1\leq j\leq d}|\dfrac{\gamma_m-k+G}{G^{1/2}}|\theta_{j}^m|+	 \max\limits_{1\leq j\leq d}\Big|\dfrac{1}{\sqrt{G}}\sum\limits_{t_1=k-G}^{\gamma_{m}}h_{1,j}(X_{t_1,j})\Big|\\+	 \max\limits_{1\leq j\leq d}\Big|\dfrac{\gamma_{m}-k+G}{G^{3/2}}\sum\limits_{t_2=k+1}^{k+G}h_{2,j}(X_{t_2,j})\Big|\\
			+	 \max\limits_{1\leq j\leq d}\Big|\dfrac{1}{G^{3/2}}\sum\limits_{t_1=k-G}^{\gamma_m}\sum\limits_{t_2=k+1}^{k+G}g_{j}(X_{t_1,j},X_{t_2,j})\Big|\\+	 \max\limits_{1\leq j\leq d}\Big|\dfrac{1}{G^{3/2}}\sum^{k}\limits_{t_1=\gamma_m}\sum\limits_{t_2=k+1}^{k+G}h(X_{t_1,j},X_{t_2,j})\Big|\\
			\leq_{(2)} \dfrac{\sqrt{G}}{2}\|\btheta^m\|_{\infty}+C_3G^{-1/2}\max\Big(\sqrt{(\gamma_m+G-k)\log(nd)},\log(nd)\Big)\\+\dfrac{\gamma_{m}-k+G}{G^{3/2}}\max\Big(\sqrt{G\log(nd)},\log(nd)\Big)+o_p(\sqrt{\log(nd)})\\
			\leq_{(3)} \dfrac{\sqrt{G}}{2}\|\btheta^m\|_{\infty}+C_3G^{-1/2}\max\Big(\sqrt{\dfrac{G}{2}\log(nd)},\log(nd)\Big)+\dfrac{G/2}{G^{3/2}}\sqrt{\dfrac{G}{2}\log(nd)}+o(\log(nd))\\
			\leq_{(4)}\dfrac{\sqrt{G}}{2}\|\btheta^m\|_{\infty}+C_4\sqrt{\log(nd)}
		\end{array}
	\end{equation}
	where $(2)$ comes from Lemmas \ref{lemma: concentration for maxsimum sub-exponential1} and \ref{lemma: Hoeffding's residual}, and the fact that $k\in[\gamma_m+G/2,\gamma_m+G]$. This implies that 
	\begin{equation*}
		\max_{k\in[\gamma_m+G/2,\gamma_m+G]}\max_{1\leq j\leq d}|T_{j}(k)|\leq \dfrac{\sqrt{G}}{2}\|\btheta^m\|_{\infty}+O_p(\sqrt{\log(nd)}).
	\end{equation*}
	Considering (\ref{equ: lower bound of statistics at gamma-m}), we have 
	\begin{equation*}
		\begin{array}{ll}
			&	\|\bT(\gamma_m)\|_{\infty}-\sup_{k: |k-\gamma_m|\geq G/2}\|\bT(k)\|_{\infty}\\
			&\geq \sqrt{G}(1-o_p(1))\|\btheta^m\|_{\infty}-\dfrac{\sqrt{G}}{2}\|\btheta^m\|_{\infty}-O_p(\sqrt{\log(nd)})\\
			&\geq \sqrt{G}(1/2-o_p(1))\|\btheta^m\|_{\infty}-O_p(\sqrt{\log(nd)})\\
			&>0
		\end{array}
	\end{equation*}
	which finishes the proof.
\end{proof}

\end{document}